\def\ShowComment{True} %
\definecolor{ForestGreen}{rgb}{0.1333,0.5451,0.1333}
\definecolor{DarkRed}{rgb}{0.65,0,0}
\definecolor{Red}{rgb}{1,0,0}
\newcommand{\NULL}{\ensuremath{\mathsf{null}}}
\newcommand{\hatP}{\hat{P}}
\newcommand{\T}{\mathcal{T}}
\newcommand{\C}{\mathcal{C}}
\newcommand{\CC}{\mathcal{C}^{\parallel}}
\newcommand{\CI}{\mathcal{C}^{\perp}}
\newcommand{\LCA}{\mathrm{LCA}}
\newcommand{\Maximal}{H}
\newcommand{\cP}{\mathcal{P}}
\newcommand{\cL}{\mathcal{L}}
\newcommand{\AddP}{\texttt{AddPath}}
\newcommand{\MinP}{\texttt{MinPath}}
\newcommand{\MinPDown}{\ensuremath{\texttt{MinPath}^\downarrow}}
\newcommand{\MinPUp}{\ensuremath{\texttt{MinPath}^\uparrow}}
\newcommand{\MinTree}{\texttt{MinTree}}
\newcommand{\MinTreeDown}{\ensuremath{\texttt{MinTree}^\downarrow}}
\newcommand{\MinTreeUp}{\ensuremath{\texttt{MinTree}^\uparrow}}
\newcommand{\Link}{\texttt{Link}}
\newcommand{\Cut}{\texttt{Cut}}
\newcommand{\MaxNonPath}{\texttt{MaxNonPath}\xspace}
\newcommand{\MinNonPath}{\texttt{MinNonPath}\xspace}
\newcommand{\LocalUpdate}{\texttt{LocalUpdate}\xspace}
\newcommand{\cut}{\texttt{Cut}}
\newcommand{\link}{\texttt{Link}}
\newcommand{\upath}[2]{\overline{{#1}{#2}}}
\newcommand{\PreOrder}{\texttt{pre\_order}}
\newcommand{\InsertBST}{\texttt{InsertBST}}
\newcommand{\ExtractBST}{\texttt{ExtractBST}}
\global\long\def\highest{\mathrm{top}}
\global\long\def\lca{\textsc{lca}}
\def\namedlabel#1#2{\begingroup
    #2%
    \def\@currentlabel{#2}%
    \phantomsection\label{#1}\endgroup
}
\declaretheorem[numberwithin=section]{theorem}
\declaretheorem[numberlike=theorem]{lemma}
\declaretheorem[numberlike=theorem,name=Lemma]{lem}
\declaretheorem[numberlike=theorem]{fact}
\declaretheorem[numberlike=theorem]{corollary}
\declaretheorem[numberlike=theorem,style=definition]{definition}
\declaretheorem[numberlike=theorem,name=Definition,style=definition]{defn}
\newcommand{\collection}{\mathscr{C}}
\Crefname{ALC@unique}{Line}{Lines}
\def\thatchaphol#1{\marginpar{$\leftarrow$\fbox{T}}\footnote{$\Rightarrow$~{\sf\textcolor{purple}{#1 --Thatchaphol}}}}
\def\zhongtian#1{\marginpar{$\leftarrow$\fbox{Z}}\footnote{$\Rightarrow$~{\sf\textcolor{blue}{#1 --Zhongtian}}}}
\def\thatchaphol#1{}
\def\zhongtian#1{}
\title{Cactus Representation of Minimum Cuts:\\ Derandomize and Speed up}
\author{
Zhongtian He\\
Princeton University
\and
Shang-En Huang\thanks{Supported by NSF Grant No. CCF-2008422.}\\
Boston College
\and
Thatchaphol Saranurak\thanks{Supported by NSF CAREER grant 2238138.}\\
University of Michigan
}
\date{}
\begin{document}

\begin{titlepage}
    \thispagestyle{empty}
    \maketitle
    \begin{abstract}
        \thispagestyle{empty}
        Given an undirected weighted graph with $n$ vertices and $m$ edges, we give the first
deterministic $m^{1+o(1)}$-time algorithm for constructing the cactus representation of \emph{all} global minimum cuts.
This improves the current 
$n^{2+o(1)}$-time state-of-the-art deterministic algorithm,
which can be obtained by combining ideas implicitly from three papers
\cite{karger2000minimum, li2021deterministic, gabow2016minset}.
The known explicitly stated deterministic algorithm has a runtime of
$\tilde{O}(mn)$ 
\cite{fleischer1999building,nagamochi2000fast}.
Using our technique, we can even speed up the fastest randomized algorithm of
\cite{karger2009near} whose running time is at least $\Omega(m\log^4 n)$ to $O(m\log^3 n)$.

    \end{abstract}
\end{titlepage}

\renewcommand{\baselinestretch}{0.7}\normalsize
\tableofcontents
\renewcommand{\baselinestretch}{1.0}\normalsize
\thispagestyle{empty}
\newpage
\addtocontents{toc}{\protect\thispagestyle{empty}} 
\setcounter{page}{1}

\section{Introduction}

The global minimum cut problem has been studied for decades. For an undirected, weighted graph $G = (V,E,w)$, the global mincut of $G$ is a minimum weight subset of edges that disconnecting the graph by removing them. Lots of beautiful works on this problem appeared in the last century \cite{gomory1961multi, hao1992faster, nagamochi1992computing, nagamochi1992linear, stoer1997simple}, and then a series of work utilizing randomization \cite{karger1993global, karger1996new} finally led to an near-linear time \emph{Monte Carlo} algorithm by Karger \cite{karger2000minimum} in 1996. %

It turns out that there is a cactus representation of \emph{all} (possibly $\Theta(n^2)$) minimum cuts using an $O(n)$-edge cactus graph introduced by \cite{dinits1976structure} (see also \cite{Fleiner2009AQP}). The cactus representation of global mincuts has found several algorithmic applications: it is a key subroutine for several edge connectivity augmentation algorithms \cite{gabow1991applications,naor1997fast,cen2022augmenting, ravi2023approximation} and also in several dynamic mincut algorithms \cite{henzinger1995approximating,goranci2018incremental}. 
Many algorithms were designed to find the cactus representation of minimum cuts. \cite{karzanov1986efficient} outlined the first algorithm for constructing the cactus that takes $\Theta(n^3)$ time.
Their algorithm was parallelized by \cite{naor1991representing} and refined by \cite{Nagamochi1994CanonicalCR}. Later on, faster algorithms were developed by \cite{gabow1991applications,karger1996new,nagamochi2000fast} and \cite{fleischer1999building} where the latter two algorithms running in $\Tilde{O}(mn)$ are the fastest explicitly stated deterministic algorithms for computing cactus representation.
Finally, \cite{karger2009near} showed that the cactus representation problem is near-linear time computable by randomized algorithms as well.

All these near-linear time algorithms mentioned so far have one drawback: they are \emph{Monte Carlo} meaning that they can err.
It was a big open problem if there are near-linear (or almost-linear\footnote{As a convention from the literature~\cite{karger2000minimum,li2021deterministic}, we say a function that is $\tilde{O}(m)$ to be \emph{near-linear} and $m^{1+o(1)}$ to be \emph{almost-linear}.}) time deterministic algorithms for computing global mincuts and even computing cactus.
After a series of works \cite{kawarabayashi2018deterministic,henzinger2020local,saranurak2021simple} and \cite{li2020deterministic}, Li recently showed %
 an $m^{1+o(1)}$-time deterministic
algorithm for computing a global minimum cut \cite{li2021deterministic}, by derandomizing the construction of the skeleton graph which is the single randomized procedure in Karger's near-linear time mincut algorithm \cite{karger2000minimum}.

A natural question is that, given that we can deterministically compute a global mincut, whether we can also compute a cactus representation for all global mincuts as well.
It turns out that Li's deterministic skeleton graph
 construction~\cite{li2021deterministic} applied to Karger and Panigrahi's algorithm~\cite{karger2009near} is not sufficient.
Instead, one may replace the procedure of \cite{karger2009near} by Gabow's algorithm~\cite{gabow2016minset} together with Karger's dynamic programming technique~\cite{karger2000minimum}, achieving a deterministic $n^{2+o(1)}$-time.
Recently, Kawarabayashi and Thorup~\cite{kawarabayashi2018deterministic} and Lo, Schmidt, and Thorup~\cite{lo2020compact} successfully showed how to construct cactus deterministically in near-linear time, \emph{assuming} simple graphs. Nonetheless, the general problem of whether such an algorithm exists for general weighted graphs is still open. 

In this work, we improve the quadratic $n^{2+o(1)}$-time 
barrier by
showing the first almost-linear deterministic algorithm for computing the cactus representation of minimum cuts for undirected weighted graph, 
which positively answers the open question raised by \cite{karger2009near}\footnote{Actually, they asked if there exists an efficient algorithm to (deterministically) compute a \emph{certificate} of a cactus representation, which turns their Monte Carlo algorithm into a Las Vegas  algorithm.}. Furthermore, using our technique, we also speed up the best previous randomized algorithm \cite{karger2009near} by a logrithmic factor.

\begin{theorem}
    \label{thm:main}
    There are algorithms for computing cactus representation of all (global) minimum cuts in an undirected weighted graph with the following guarantees
    \begin{itemize}[itemsep=0pt]
        \item Deterministic algorithm in $m^{1+o(1)}$ time. 
        \item Randomized Monte-Carlo algorithm in $O(m\log^{3}n)$ time.     
    \end{itemize}
\end{theorem}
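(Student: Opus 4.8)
The plan is to follow the high-level structure of Karger and Panigrahi's near-linear time cactus algorithm~\cite{karger2009near}, but replace every randomized ingredient by a deterministic (or faster randomized) subroutine, and to reorganize the combinatorial core so that it runs in almost-linear time rather than quadratic time. Concretely, I would proceed in the following stages. First, recall that a cactus representation is built by (i) computing the value $\lambda$ of the global mincut, (ii) identifying, for a suitable collection of ``source'' vertices or contracted sets, the family of all minimum cuts separating them, and (iii) assembling these cut families into the cactus via the classical laminar/cross-free structure theory of Dinits–Karzanov–Lomonosov. The first step is already available deterministically in $m^{1+o(1)}$ time by Li~\cite{li2021deterministic}. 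The crux is step (ii): one needs, for each of $n^{o(1)}$ (or $O(\log n)$, in the randomized version) trees in a tree packing / skeleton decomposition, to enumerate the minimum cuts that ``2-respect'' or ``1-respect'' the tree. Here I would invoke Li's deterministic skeleton construction to obtain, in $m^{1+o(1)}$ time, a weighted skeleton graph together with a spanning tree packing in which every global mincut 2-respects some tree; in the randomized setting, Karger's original packing already gives this with $O(\log n)$ trees.

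The second main stage — and where I expect most of the work and the main obstacle to lie — is turning the 2-respecting mincut enumeration and the subsequent cactus assembly into an almost-linear, purely combinatorial procedure. For a single tree, the set of tree edges (or pairs of tree edges) whose removal gives a mincut can be found by a dynamic-programming / link-cut-tree computation in $\tilde O(m)$ time, essentially as in~\cite{karger2000minimum}; the output is a compact encoding of possibly $\Theta(n^2)$ cuts. The real difficulty is that the cactus is a \emph{global} object: cuts coming from different trees, and cuts that cross, must be merged consistently, and a naive merge is quadratic. I would handle this by building the cactus incrementally over a recursion on the tree structure, maintaining at each node a partially-built cactus together with a mapping from graph vertices to cactus nodes, and using a batched ``local update'' primitive (the \LocalUpdate{} / \MinNonPath{} style operations foreshadowed by the macros in the preamble) that, given a new family of tree-respecting mincuts, refines the current cactus in time near-linear in the size of the \emph{change} rather than the size of the cactus. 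Amortizing these refinements across the whole recursion is the technical heart; the key structural fact I would need is that cuts from a fixed tree interact with the current cactus only along a bounded number of cactus paths, so each refinement touches $\tilde O(1)$ cactus structure per affected tree edge.

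Finally, I would argue correctness by the standard characterization: the constructed graph is a cactus, every cut of the cactus maps to a mincut of $G$, and every mincut of $G$ is represented, using the fact that the trees in the packing collectively 2-respect all mincuts so no mincut is missed, and that the laminar structure of same-tree cuts plus the cross-free structure of crossing cuts is exactly what the cactus encodes. For the running time, the deterministic bound is $m^{1+o(1)}$ because Li's skeleton step contributes the only super-near-linear factor and everything else is $\tilde O(m)$; for the randomized bound, replacing the skeleton step by Karger's randomized $O(\log n)$-tree packing and a careful implementation of the $2$-respecting computation and the incremental cactus merge using link-cut trees yields $O(m\log^3 n)$, shaving a $\log n$ factor off the $\Omega(m\log^4 n)$ of~\cite{karger2009near} by avoiding one of their logarithmic-depth recursions. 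I expect the bottleneck analysis — showing the incremental merge is truly near-linear in total and that no hidden $\log$ factor creeps back in — to be the most delicate part of the argument.
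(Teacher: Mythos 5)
Your outline matches the paper only at the outermost layer (compute $\lambda$, get a tree packing via Li deterministically or Karger randomly, then handle 2-respecting cuts per tree), but the core of your plan is not the paper's argument and, as written, does not constitute a proof. The step you yourself flag as ``the technical heart'' --- incrementally refining a partially built cactus with a batched local-update primitive, relying on an unproven claim that cuts from one tree touch only $\tilde O(1)$ cactus structure per affected tree edge --- is exactly where the difficulty lives, and you give no mechanism or structural lemma that makes it work. Enumerating a compact encoding of all (possibly $\Theta(n^2)$) 2-respecting mincuts per tree and then ``merging consistently'' is the naive formulation of the problem, not a solution; nothing in your proposal explains why crossing cuts coming from different trees can be reconciled in near-linear total time.

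The missing idea is the reduction used by Karger--Panigrahi and sharpened in this paper: instead of enumerating all 2-respecting mincuts, one computes, for \emph{every vertex and every edge}, its unique \emph{minimal} mincut (uniqueness follows from the crossing lemma), stored as an $O(\log n)$-bit cut label $(\mathit{type},v,w,T)$. This makes the intermediate representation linear-size, and the cactus is then assembled from these $O(m+n)$ minimal mincuts in $O(m\alpha(m,n)+n|\T|)$ time via a laminar hierarchy built from Gabow-style chain certificates --- no incremental cactus surgery or cross-tree merging of cut families is ever needed. The actual technical bottleneck, and the paper's key contribution, is a deterministic $O(m\log^2 n)$ per-tree algorithm computing minimal 2-respecting mincut candidates for \emph{edges} (the comparable case via lower/upper vertices and the incomparable case via witness sets, using path decomposition, dynamic trees and top trees); KP09 could only do this randomly, and making it deterministic and doing it for all edges directly is also what removes a $\log$ factor in the randomized bound (KP09's extra $\log$ comes from $\Theta(\log n)$ randomly contracted graph copies, not from a recursion one can simply avoid by careful implementation). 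Without the minimal-mincut labeling step and the hierarchy construction from those labels, your plan has no route to either the $m^{1+o(1)}$ deterministic bound or the $O(m\log^3 n)$ randomized bound.
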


This is the first almost-linear time deterministic algorithm for computing
cactus. Previously, the fastest deterministic algorithm takes $\Omega(mn)$ by \cite{gabow2016minset,fleischer1999building,nagamochi2000fast}. 
The $m^{o(1)}$ factor in our running time is solely from the overhead of $n^{o(1)}$ in the deterministic tree packing algorithm by Li \cite{li2021deterministic}. If this factor was  improved to $O(\text{polylog}(n))$, our running time would be $\tilde{O}(m)$ too.

By plugging a faster randomized tree packing \cite{karger2000minimum} into our new approach, 
we obtain a randomized algorithm that is even faster than the fastest known algorithm by \cite{karger2009near} which takes time at least $\Omega(m\log^{4}n)$. We discuss this in more details in \Cref{sec:comparison-with-KP09}.

\paragraph{New Development.} After we have published the manuscript, a new result by Henzinger, Li, Rao, and Wang \cite{HLRW2024} shows how to deterministically construct in $O(m\,\text{polylog}(n))$ time a collection of $\text{polylog}(n)$ trees that every minimum cut 2-respects one of the trees. By our reduction, this immediately that we can deterministically compute a cactus in $O(m\,\text{polylog}(n))$ time.

{\paragraph{Application.} 
Cactus construction has an immediate application to the \emph{$+1$-edge-connectivity augmentation problem} defined as follows: given an undirected integer-weighted graph $G=(V, E, w)$ where $w: E\to \mathbb{N}$, compute an edge set $E'$ of the minimum size such that the minimum cut on $G'=(V, E\cup E')$ has value $\lambda(G)+1$, where $\lambda(G)$ is the value of the minimum cut of $G$.
To solve this problem, one can simply compute a cactus representation of $G$ and then apply in $O(n)$ time a DFS traversal algorithm from Naor, Gusfield, and Martel on top of the cactus~\cite[Section 3]{naor1997fast}.
Therefore, \Cref{thm:main} gives immediately the first $m^{1+o(1)}$-time deterministic algorithm and a $O(m\log^3 n)$-time randomized algorithms for this problem, improving the previous bound of $\Omega(m \log^4 n)$ implied by \cite{karger2009near}.\footnote{We remark that if $G$ is an unweighted graph, the $+1$-edge-connectivity augmentation problem can be solved in $O(m\log^2 n(\log\log n)^2)$ time by \cite{henzinger2020local, lo2020compact}. In addition, if the goal is to increase the edge connectivity to a particular quantity $\tau$ in a weighted graph, this variant can also be solved in $\tilde{O}(m)$ time by~\cite{CLP22b}.
}

\subsection{Related Works}

\paragraph{Deterministic Algorithms.}
For decades, significant effort has been devoted to devising deterministic algorithms that is as fast as their randomized counterparts.
Examples include the line of work on deterministic minimum edge cut algorithms \cite{kawarabayashi2018deterministic,henzinger2020local,saranurak2021simple,li2020deterministic,li2021deterministic}, deterministic minimum vertex cut algorithms \cite{gabow2006using,SaranurakY23}, and deterministic Laplacian solvers and approximate max flow \cite{chuzhoy2020deterministic}. Each of these deterministic algorithms usually deepens insight on the problems. Our result extends this line of research and reveals deeper structural understanding on 2-respecting mincuts (defined later in \Cref{sec:2respecting}).

\paragraph{Faster Randomized Global Mincut Algorithms.}
Recently, \cite{gawrychowski2019minimum} and \cite{Mukhopadhyay2020WeightedMS} gave faster randomized algorithms for computing a \emph{single} global minimum cut in an undirected weighted graph. Their algorithms run in $O(m\log^2 n)$ and $O(m\log^2 n + n\log^6 n)$ respectively, which improved Karger's long-standing $O(m\log^3 n)$ time algorithm. But their approaches are difficult to generalize to find all global mincuts. Our algorithm leaves an $O(\log n)$ gap between finding one mincut with all mincuts in randomized setting, which is left for future work.

\section{Preliminaries}

Let $G=(V, E, w)$ be an undirected weighted graph, with $n$ vertices and $m$ nonnegatively weighted edges.
A cut $(X, V\setminus X)$ 
(or the cut \emph{induced by} $X$, or simply denoted by $X$ if the context is clear)
is a proper partition of $V$,
and the edges across the cut are called \emph{cut edges}.
The \emph{weight} of the cut is defined to be the sum of all edge weights across the cut, denoted by $\C(X) := \C(X, V\setminus X)$.
By extending the above notation we define for any two (not necessarily disjoint) subsets $X$ and $Y$, let $\C(X, Y)$ be the sum of weights of edges with one endpoint being in $X$ and another endpoint being in $Y$. Notice that an edge with both endpoints in $X\cap Y$ will be counted twice.

\paragraph{Global Mincuts.} The \emph{global minimum cut} or simply \emph{mincut} is a cut whose weight is the smallest among all cuts.
Throughout the paper we use $\lambda$ to denote the weight of any global minimum cut. We also assume that the value  $\lambda$ is already precomputed~\cite{li2021deterministic,karger2000minimum,gawrychowski2019minimum,Mukhopadhyay2020WeightedMS}.

\paragraph{Minimal Mincuts.} Since a cut separates some vertices from  any given vertex, throughout the paper, we will designate an arbitrary but fixed \emph{root vertex} $r\in V$. After fixing the root, we are able to characterize the mincuts that separate a vertex or an edge from the root.
The \emph{size} of a cut $(X, V\setminus X)$ where $r\notin X$ is then defined to be the number of vertices in $X$.
Intuitively, for each vertex $v$ (or an edge $e=(u, v)$), the most relevant global mincut would be the one that minimizes the number of the vertices ``on the $v$ (or $e$) side'' of the cut.
Thus, we have the following definition:

\begin{definition}[Minimal mincuts]
    The \emph{minimal mincut} of a vertex $v$ is the mincut of the least size separating $v$ from $r$. If $v$ is not separated from $r$ by any mincut, then its minimal mincut is \NULL. The minimal mincut of an edge $(u,v)$ is defined similarly except that the mincut must separate both $u$ and $v$ from $r$.
\end{definition}

\subsection{Crossing Mincuts, Uniqueness of Minimal Mincuts}

The most important property of minimum cuts along the history should be the \emph{submodularity} of crossing cuts.
Two cuts $(X, V\setminus X)$ and $(Y, V\setminus Y)$ are said to be \emph{crossing} if each of $X\cap Y, X\setminus Y, Y\setminus X, (V\setminus X)\cap (V\setminus Y)$ is non-empty. With submodularity one can show that:

\begin{lemma} [\cite{dinits1976structure}]
    \label{lem:crossing-mincuts}
    If $(X, V\setminus X)$ and $(Y, V\setminus Y)$ are crossing mincuts, then each of the cuts induced by $X\cap Y$, $X\setminus Y$, $Y\setminus X$,  and $X\cup Y$ are also mincuts. Furthermore, we have $\C(X\cap Y, (V\setminus X)\cap (V\setminus Y)) = \C(X\setminus Y, Y\setminus X) = 0$.
\end{lemma}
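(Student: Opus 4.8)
The plan is to run the classical submodularity/posi-modularity argument via the four ``quadrants'' determined by $X$ and $Y$. Write $A = X\cap Y$, $B = X\setminus Y$, $C = Y\setminus X$, and $D = V\setminus(X\cup Y) = (V\setminus X)\cap(V\setminus Y)$. By the definition of crossing cuts, all four of $A,B,C,D$ are nonempty, so in particular each of $A$, $B$, $C$, $A\cup B\cup C$ (in addition to $X$ and $Y$ themselves) is a proper nonempty subset of $V$ and hence induces a genuine cut, of weight at least $\lambda$. Let me denote by $w(\cdot,\cdot)$ the total weight of edges running between two of these parts; there are six such quantities, $w(A,B), w(A,C), w(A,D), w(B,C), w(B,D), w(C,D)$, and every edge that is not contained inside a single part contributes to exactly one of them.

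Next I would express each of the six cut weights of interest as a sum of these six terms. For instance, since $X = A\cup B$ and $V\setminus X = C\cup D$, we get $\C(X) = w(A,C)+w(A,D)+w(B,C)+w(B,D)$, and analogously $\C(Y) = w(A,B)+w(A,D)+w(B,C)+w(C,D)$, $\C(X\cap Y) = w(A,B)+w(A,C)+w(A,D)$, $\C(X\cup Y) = w(A,D)+w(B,D)+w(C,D)$, $\C(X\setminus Y) = w(A,B)+w(B,C)+w(B,D)$, and $\C(Y\setminus X) = w(A,C)+w(B,C)+w(C,D)$. A direct term-by-term comparison then yields the two identities
\[
\C(X)+\C(Y) \;=\; \C(X\cap Y)+\C(X\cup Y) + 2\,w(B,C) \;=\; \C(X\setminus Y)+\C(Y\setminus X) + 2\,w(A,D).
\]

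Finally, I would invoke minimality: $\C(X)=\C(Y)=\lambda$, while each of $\C(X\cap Y),\C(X\cup Y),\C(X\setminus Y),\C(Y\setminus X)$ is at least $\lambda$ by the previous paragraph. Plugging into the first identity gives $2\lambda \ge 2\lambda + 2\,w(B,C)$, which forces $w(B,C)=0$ and also forces both $\C(X\cap Y)=\lambda$ and $\C(X\cup Y)=\lambda$; the second identity symmetrically forces $w(A,D)=0$ and shows $X\setminus Y$ and $Y\setminus X$ are mincuts. Since $w(B,C) = \C(X\setminus Y, Y\setminus X)$ and $w(A,D) = \C(X\cap Y,(V\setminus X)\cap(V\setminus Y))$, this is exactly the claimed statement. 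There is no genuine obstacle here beyond the bookkeeping; the only two points requiring care are (i) checking, from the crossing hypothesis, that all six subsets are nonempty and proper so that the ``$\ge\lambda$'' bound legitimately applies to them, and (ii) accounting for edges correctly — in particular noting that the $\C(\cdot,\cdot)$ notation that double-counts edges inside the intersection is not actually needed here, since $A,B,C,D$ are pairwise disjoint.
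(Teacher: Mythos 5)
Your proof is correct. The paper states this lemma without proof, citing \cite{dinits1976structure} and noting it follows from submodularity of cuts; your four-quadrant accounting (deriving $\C(X)+\C(Y)=\C(X\cap Y)+\C(X\cup Y)+2\,\C(X\setminus Y,Y\setminus X)=\C(X\setminus Y)+\C(Y\setminus X)+2\,\C(X\cap Y,(V\setminus X)\cap(V\setminus Y))$ and then invoking minimality and nonnegativity of edge weights) is exactly the standard submodularity/posimodularity argument the paper is alluding to, and your care about nonemptiness of all four parts and disjointness (so no double counting arises) is the right bookkeeping.
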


With the crossing property above, we are able to deduce the uniqueness of the minimal mincut of a vertex (or an edge). This uniqueness property plays an important role in many applications (closest mincut, left-most mincuts, ...etc) as well as constructing a cactus representation of mincuts.

\begin{lemma}[\cite{karger2009near}]
    \label{lem:uniqueness-minimal}
    If a minimal mincut of a vertex or edge exists, then it is unique.
\end{lemma}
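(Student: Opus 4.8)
The plan is to argue by contradiction using \Cref{lem:crossing-mincuts}. Suppose $v$ admits two distinct minimal mincuts, induced by $X$ and $Y$; by definition $r\notin X\cup Y$, $v\in X\cap Y$, and $|X|=|Y|=k$ where $k$ is the least size of any mincut separating $v$ from $r$. I would then split into two cases according to whether the cuts $(X,V\setminus X)$ and $(Y,V\setminus Y)$ cross.

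First, suppose they do not cross. Since $v\in X\cap Y$ this intersection is non-empty, and since $r\in (V\setminus X)\cap(V\setminus Y)$ that piece is non-empty as well; hence non-crossing forces $X\setminus Y=\emptyset$ or $Y\setminus X=\emptyset$, i.e.\ $X\subseteq Y$ or $Y\subseteq X$. Without loss of generality $X\subseteq Y$, and then $|X|\le |Y|$ with equality (both equal $k$), so $X=Y$, contradicting distinctness.

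Second, suppose they cross. Then by \Cref{lem:crossing-mincuts} the cut induced by $X\cap Y$ is also a mincut. It still separates $v$ from $r$ because $v\in X\cap Y$ while $r\notin X\cap Y$. But crossing means $X\setminus Y\neq\emptyset$, so $X\cap Y\subsetneq X$ and therefore $|X\cap Y|<|X|=k$, contradicting the minimality of $k$. This exhausts all cases, so the minimal mincut of $v$ is unique. The edge case is identical: the minimal mincut of $(u,v)$ puts both $u$ and $v$ on the non-root side, so both $u,v\in X\cap Y$ and the same two-case argument applies verbatim (in the crossing case $X\cap Y$ still separates both $u$ and $v$ from $r$).

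I do not expect a real obstacle here; the only point needing a little care is the non-crossing case, where one must observe that failure to cross — given that the "$v$-side" and the "$r$-side" intersections are automatically non-empty — is precisely a nesting relation, and that two nested sets of equal cardinality coincide. Everything else is an immediate application of the fact that intersections of crossing mincuts are mincuts.
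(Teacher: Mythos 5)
Your proof is correct and is the standard argument: the paper itself defers this lemma to \cite{karger2009near} without giving a proof, and your two-case analysis (nesting of equal-size sets in the non-crossing case, and a strictly smaller mincut $X\cap Y$ via \Cref{lem:crossing-mincuts} in the crossing case) is exactly the intended reasoning, carrying over verbatim to edges since both endpoints lie in $X\cap Y$. No gaps.
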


\subsection{Cactus: Representation of all Global Mincuts}
Dinitz et al~\cite{dinits1976structure} showed that there exists a cactus graph $H$ (every edge belongs to at most one cycle) with $O(n)$ edges that represents all global mincuts on $G$.
The representation has the following properties.
Each vertex of $G$ is mapped to a node on $H$. (This mapping could be neither surjective or injective.)
For every edge in $H$ or two edges in the same cycle of $H$ that split all nodes in $H$ into two parts, the corresponding partition of vertices in $G$ forms a global mincut.
Conversely, for any global mincut in $G$ one can also find a corresponding edge or a pair of edges in $H$ that represents this mincut.

Although the size of the cactus representation is $O(n)$, it is highly nontrivial to compute such a representation from $G$.
Karger and Panigrahi~\cite{karger2009near} gave the first randomized Monte Carlo algorithm that computes a cactus representation in $\tilde{O}(m)$ time.  Since there can be $\Omega(n^2)$ many mincuts (in a cycle, for example), we note that this cactus has to be computed without explicitly listing all mincuts.

\section{The Framework of \cite{karger2009near} and Our Improvement}

Since our improvement is mainly based on the framework of Karger and Panigrahi~\cite{karger2009near}, in this section we introduce the framework and describe our contribution in details.

\subsection{2-Respecting Mincuts and Tree Packing}
\label{sec:2respecting}
Similar to most of the fastest exact minimum cut algorithms \cite{karger2000minimum,karger2009near,gawrychowski2019minimum,Mukhopadhyay2020WeightedMS}, our algorithm is based on computing global minimum cuts that 2-respect a spanning tree. 
Let $T$ be any spanning tree on $G$.
A cut is said to be \emph{$k$-respecting} $T$ if the spanning tree $T$ contains at most $k$ cut edges. A cut is said to \emph{strictly $k$-respect} a spanning tree of a graph if the spanning tree contains exactly $k$ cut edges.
Karger~\cite{karger2000minimum} first showed that there exists a collection $\T$ of $O(\log n)$ spanning trees in $G$ such that every global mincut 2-respects some spanning tree in $\T$.
Such collection $\T$ is also called a \emph{tree packing}.
In the same paper Karger also gave a randomized Monte Carlo algorithm that in $O(m+n\log^3n)$ time computes a tree packing with high probability.\footnote{Later on, 
Gawrychowski, Mozes and Weimann~\cite{gawrychowski2019minimum}
give another time bound $O(m\log^2 n)$, which is faster on sparse graphs, but we do not exploit this new bound in our paper.}
Recently,
Li~\cite{li2021deterministic} gave the first \emph{deterministic} algorithm that computes a tree packing of size $n^{o(1)}$ in $m^{1+o(1)}$ time.

Once a tree packing $\T$ is found, the task of searching for a global minimum cut can be reduced to checking all global minimum cuts that 2-respect a spanning tree $T\in \T$.
We summarize these useful algorithms computing tree packings in \Cref{thm:tree-packing}.

\begin{theorem}
    \label{thm:tree-packing}
    Given an undirected weighted graph $G$, there are algorithms that compute a tree packing $\cal T$ consisting of~\vspace*{-6pt}
    \begin{itemize}[itemsep=-3pt]
        \item $n^{o(1)}$ spanning trees by a deterministic algorithm in $m^{1+o(1)}$ time,
        or
        \item $O(\log n)$ spanning trees by a randomized Monte Carlo algorithm\footnote{success with high probability $1-n^{-\Theta(1)}$.} in $O(m+n\log^3 n)$ time, %
    \end{itemize}
    \vspace*{-6pt}such that each global mincut 2-respects some tree in $\T$.
\end{theorem}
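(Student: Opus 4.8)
The plan is to establish the two bullets separately; in each case the heart of the matter is a packing of (near-)edge-disjoint spanning trees combined with a single averaging argument of Karger, while the packing constructions themselves are quoted from prior work.

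\textbf{Randomized bullet.} Since $\lambda$ is assumed precomputed, I would first invoke Karger's skeleton construction together with a Gabow-style packing routine~\cite{karger2000minimum} to compute, in $O(m+n\log^{3}n)$ time, a weighted family $\{(T_i,z_i)\}_i$ of spanning trees that is fractionally edge-disjoint --- $\sum_{i:\,e\in T_i}z_i\le w(e)$ for every edge $e$ --- with total weight $W:=\sum_i z_i=\lfloor(1-\epsilon)\lambda/2\rfloor$ for a small constant $\epsilon>0$. The key step is then an averaging argument: fix any global mincut $X$ and let $\partial X$ be its set of cut edges; every spanning tree must cross $X$, so with $c_i:=|T_i\cap\partial X|\ge 1$ we get $\sum_i z_i c_i=\sum_{e\in\partial X}\sum_{i:\,e\in T_i}z_i\le\sum_{e\in\partial X}w(e)=\C(X)=\lambda$, hence $\sum_i z_i(c_i-1)\le\lambda-W$; since $c_i\ge 3$ forces $c_i-1\ge 2$, the total $z$-weight of trees that fail to $2$-respect $X$ is at most $(\lambda-W)/2$. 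Consequently a tree drawn with probability proportional to $z_i$ $2$-respects $X$ with probability at least $1-(\lambda-W)/(2W)\ge c_0$ for a universal constant $c_0>0$, once $\epsilon$ is small enough (when the packing is computed on the skeleton rather than on $G$ one uses the near-mincut form of this inequality, exactly as in~\cite{karger2000minimum}). Finally let $\T$ consist of $\Theta(\log n)$ independent such trees: for a fixed mincut $X$, the probability that no tree of $\T$ $2$-respects $X$ is at most $(1-c_0)^{\Theta(\log n)}\le n^{-4}$, and since a graph has at most $\binom{n}{2}$ distinct global mincuts, a union bound shows that every global mincut is $2$-respected by some tree of $\T$ with probability at least $1-n^{-2}$.

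\textbf{Deterministic bullet.} For this I would simply cite Li~\cite{li2021deterministic}: derandomizing the skeleton construction --- the only randomized step above --- yields a deterministic $m^{1+o(1)}$-time algorithm producing $n^{o(1)}$ spanning trees such that every global mincut $2$-respects one of them, which is precisely the statement of the first bullet. Nothing beyond quoting this result with its stated parameters is required.

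\textbf{Main obstacle.} The genuinely hard ingredient, Li's deterministic skeleton/tree-packing construction, is used as a black box, so within the scope of this theorem the only points needing care are bookkeeping ones: (i) the bound of $\binom{n}{2}$ on the number of distinct global mincuts, which is what makes the union bound succeed with high probability over the (possibly $\Theta(n^2)$ many) mincuts; (ii) choosing the approximation parameter $\epsilon$ small enough that the success constant $c_0$ stays bounded away from $0$; and (iii) the running-time accounting, where one must sparsify $G$ to a skeleton of polylogarithmic connectivity before packing so that the packing step costs $O(n\,\mathrm{polylog}(n))$ rather than an amount depending on $\lambda$.
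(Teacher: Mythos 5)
Your proposal is correct and takes essentially the same route as the paper, which states \Cref{thm:tree-packing} purely as a summary of prior work---Karger's randomized packing~\cite{karger2000minimum} for the second bullet and Li's deterministic construction~\cite{li2021deterministic} for the first---without reproving anything. Your reconstruction of Karger's averaging argument (every tree crosses the cut, so trees crossing three or more times carry at most about half the packing weight) is the standard one and is sound; the only cosmetic difference is that Karger obtains the $O(\log n)$ trees directly from the packing on the skeleton, where the constant-fraction guarantee already holds for every mincut simultaneously, rather than by sampling from a larger packing and union-bounding over the $\binom{n}{2}$ mincuts as you do.
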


Karger and Panigrahi's algorithm~\cite{karger2009near} reduces the cactus construction problem to computing the minimal mincuts of all the vertices and edges.
Besides designing an efficient algorithm which computes these minimal mincuts, 
it is also important to store these minimal mincuts in a succinct way.
Using~\Cref{thm:tree-packing}, each minimal mincut 2-respects some tree in $\T$.
As long as there is a way to represent a 2-respecting mincut on a tree using $O(\log n)$ bits, 
each minimal mincut can be represented efficiently in a total of $O(m\log n)$ bits.
We refer to these representations as \emph{cut labels}.

\subsection{Cut Labels and Three Types of 2-Repecting Mincuts}

Fix a spanning tree $T\in \T$ with root $r$.
The set of descendants of a vertex $v$ is in a spanning tree $T$ is denoted by $v_T^\downarrow$, and the set of strict descendants of $v$ is denoted by $v_T^\Downarrow$, i.e. $v_T^\Downarrow = v_T^\downarrow \setminus \{v\}$. Similarly, the set of ancestors of $v$ in $T$ is denoted by $v_T^\uparrow$, and we define $v_T^\Uparrow = v_T^\uparrow\setminus \{v\}$.
We may drop the subscript and simply denote the sets by $v^\downarrow$, $v^\Downarrow$, $v^\uparrow$, and $v^\Uparrow$ if there is no confusion. 

Consider any 2-respecting cut on $T$. This cut must intersect with at most two edges on $T$.
In particular, this cut can be either 1-respecting $T$ or strictly 2-respecting $T$.
In the case where there are two edges across the cut on $T$, these two edges may or may not be on the same path from the root $r$ to some vertex on $T$. Hence, we can classify any 2-respecting cut as one of the following three types:
\begin{description}[itemsep=0pt]
\item[Type 1.] The cut 1-respects $T$. In this case there exists a vertex $v$ so that $v_T^\downarrow$ induces the cut.
\item[Type 2-Comparable.]
The cut strictly 2-respects $T$, and the two tree edges across the cut belong to the same path from the root $r$.
Let $v$ and $w$ be the lower endpoints to the tree edges across the cut.
Then, we must have $v\in w^\downarrow$ or $w\in v^\downarrow$ and we say that $v$ and $w$ are \emph{comparable} (denoted by $v\parallel w$). Moreover, without loss of generality let $w\in v^\downarrow$, then this cut must be induced by $v_T^\downarrow\setminus w_T^\downarrow$. In this case we say that $w$ is the \emph{lower vertex} and $v$ is the \emph{upper vertex}.
\item[Type 2-Incomparable.] The cut strictly 2-respects $T$, and the two tree edges across the cut belong to different paths from $r$. Let $v$ and $w$ be the lower endpoints to the tree edges across the cut.
Then we must have $v\notin w^\downarrow$ and $w\notin v^\downarrow$. In this case we say that $v$ and $w$ are \emph{incomparable} (denoted by $v\perp w$).
Again, this cut must be induced by $v_T^\downarrow\cup w_T^\downarrow$.
\end{description}

With the above classification of 2-respecting mincuts, one immediately sees that any minimal mincut of a vertex (or an edge) can be represented by an $O(\log n)$-bit cut label of the form $(\mathit{type}, v, w, T)$. Once we obtain cut labels of minimal mincuts for all vertices and all edges, a cactus representation can be constructed efficiently:

\begin{restatable}{lemma}{cactusconstructionlemma}
\label{thm:reduction}
    Given a graph $G=(V, E)$, a tree packing $\T$ and the set of cut labels representing minimal mincuts of each vertex $v\in V$ and each edge $e\in E$, 
    there exists a deterministic algorithm that computes a cactus representation in $O(m\alpha(m, n) + n|\T|)$ time, where $\alpha(m, n)$ is the inverse Ackermann function.   
\end{restatable}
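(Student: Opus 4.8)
The plan is to build the cactus incrementally, processing the tree-structured information one spanning tree $T \in \mathcal{T}$ at a time, and within each tree exploiting the nested/laminar-like structure of minimal mincuts so that all the necessary set operations reduce to operations on rooted trees. First I would recall the standard fact (implicit in \cite{karger2009near}) that the minimal mincuts of vertices, together with the minimal mincuts of edges, form a structured family: by \Cref{lem:crossing-mincuts} and \Cref{lem:uniqueness-minimal}, if two minimal mincuts cross then their intersection is a strictly smaller mincut separating the same vertex from $r$, contradicting minimality; hence any two minimal mincuts are either nested or have disjoint "$\neq r$ sides," i.e.\ the collection of minimal mincuts is \emph{laminar} when viewed from the root $r$. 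A laminar family of cuts over $V$ already has a natural tree representation on $O(n)$ nodes; the job of the cactus is additionally to represent the "cycle" structure coming from Type 2-Incomparable cuts that are pairwise non-nested yet all minimal. So the construction has two layers: (i) extract the laminar skeleton, and (ii) detect maximal "chains" of cuts that must be placed around a common cactus cycle.

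For step (i), I would, for each $T \in \mathcal{T}$, group the cut labels by $T$ and sort the associated vertices/edges by their depth-first order on $T$; using an offline LCA structure and a union-find data structure on $T$, I can determine for each pair of relevant subtrees $v^\downarrow$, $w^\downarrow$ whether one contains the other, in total $O(m\,\alpha(m,n))$ time over all labels and all trees (this is where the inverse-Ackermann term enters — it is the classic offline tree-ancestor / union-find bottleneck, exactly as in Gabow-style cactus constructions). From the Type 1 and Type 2-Comparable labels I obtain the parent/child relations of the laminar forest; each laminar node is identified by its "witness" edge(s) on $T$, and collapsing each maximal laminar node to a cactus node gives the tree edges of the cactus. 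For step (ii), I handle Type 2-Incomparable labels: a maximal set of cuts of the form $v_1^\downarrow \cup w_1^\downarrow,\ v_2^\downarrow \cup w_2^\downarrow, \ldots$ that are pairwise incomparable but all equal to complements within a common enclosing mincut corresponds to a cycle of the cactus; I would sort the "legs" of these cuts cyclically (again by DFS order on $T$) and glue them into a cycle node, then recurse inside each leg. Because every cut label is touched a constant number of times and each operation is an LCA query, a union-find operation, or a pointer update, the running time is $O(m\,\alpha(m,n))$ for the per-tree work plus $O(n|\mathcal{T}|)$ to allocate and stitch together the $O(n)$-size partial cactus produced for each of the $|\mathcal{T}|$ trees (the partial cacti from different trees must be merged, and the merge of two $O(n)$-node cacti representing overlapping families of mincuts takes $O(n)$ time by a simultaneous traversal).

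The main obstacle I anticipate is step (ii): correctly identifying which Type 2-Incomparable minimal mincuts belong on the \emph{same} cactus cycle, and in which cyclic order, without ever materializing the (possibly $\Theta(n^2)$) mincuts on that cycle. The key structural input is \Cref{lem:crossing-mincuts}: two incomparable minimal mincuts $X = v_1^\downarrow \cup w_1^\downarrow$ and $Y = v_2^\downarrow \cup w_2^\downarrow$ lie on a common cycle iff they cross and $X \cup Y$ is a (the) enclosing mincut, and one shows the cross-edge-weight equalities $\mathcal{C}(X \setminus Y, Y \setminus X) = 0$ force a transitive "consecutive on a cycle" relation; I would formalize this as: the cuts around one cycle, ordered cyclically, have consecutive symmetric differences carrying all the cycle's edge weight, and this ordering is recoverable from the DFS positions of the legs $v_i, w_i$ on $T$. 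A secondary obstacle is the cross-tree merge: a single mincut may 2-respect several trees in $\mathcal{T}$, so the laminar/cycle structures extracted from different trees are not independent and must be unified; I would resolve this by first taking the union of all laminar families across all trees (still laminar, by the same crossing argument), building one global laminar forest, and only then overlaying the cycle information, checking consistency in $O(n|\mathcal{T}|)$ time. Everything else — sorting by DFS order, batched LCA, union-find contractions, and the final $O(n)$-node assembly — is routine, and I would not spell out those calculations.
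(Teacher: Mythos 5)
There is a genuine gap, and it sits at the foundation of your construction. Your first step asserts that the family of \emph{all} minimal mincuts (of vertices and edges together) is laminar, arguing that if two of them crossed, the intersection would be a smaller mincut ``separating the same vertex from $r$.'' That argument only works when the witness of minimality (the vertex, or both endpoints of the edge) lands in the intersection, which need not happen for minimal mincuts of two \emph{different} edges. Concretely, take a unit-weight $4$-cycle $r,a,b,c$: the minimal mincut of edge $(a,b)$ is $\{a,b\}$ and that of $(b,c)$ is $\{b,c\}$, and these two cross. Such crossings are not an obstacle to be argued away --- they are exactly where the cactus cycles come from. The paper's proof handles them via Gabow-style \emph{chains}: a cycle of the cactus is a sequence $(C_0,\dots,C_\ell)$ of disjoint sets in which each consecutive union $C_i\cup C_{i+1}$ is the minimal mincut of some edge (\Cref{def:chain}, \Cref{lem:chain-property}), every mincut has a vertex, edge, or chain certificate (\Cref{lem:certificate char}), and chain certificates are essentially unique (\Cref{lem:chain-uniqueness}). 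The algorithm then processes the given minimal mincuts in increasing order of size, maintains a disjoint collection with a union--find structure (this is the sole source of the $\alpha(m,n)$ term), and creates/extends/concatenates chains whenever an edge's minimal mincut equals or straddles the union of two current sets; the $n|\T|$ term comes from size computations and the nesting tree $\hat T$ over the (genuinely laminar) minimal mincuts of \emph{vertices} only.

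Your step (ii) has a second, related problem: it tries to read the cycle structure off the \emph{type} of the cut labels, grouping Type 2-Incomparable cuts into cycles and ordering them by DFS position. But the type (1 / 2-Comparable / 2-Incomparable) describes how a cut meets one particular spanning tree $T\in\T$, not its role in the cactus: a cut lying on a cactus cycle can perfectly well be Type 1 or Type 2-Comparable in its right tree, and a Type 2-Incomparable minimal mincut need not lie on any cycle. Moreover, consecutive cuts of the same cactus cycle may be labeled with respect to different trees of $\T$, so no per-tree grouping followed by an $O(n)$ merge can recover the cycle order; the paper's size-ordered, label-agnostic chain construction avoids this entirely. To repair your proposal you would need essentially the chain machinery of \Cref{sec:gabow-chain} and the hierarchy $H$ of \Cref{sec:cactus-hierarchy}, at which point the argument coincides with the paper's.
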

    
In \cite{karger2009near}, a similar statement was given. However, only an imprecise $\tilde{O}(m)$ time bound was given and some description of the algorithms and argument of the proofs were omitted (e.g., the proof to Lemma 3.12 in \cite{karger2009near} and the cases analysis on their last page). In this paper, we give a formal detailed proof of 
\Cref{thm:reduction}. Since the technical contribution is mainly to complete (and simplify) the argument of \cite{karger2009near} by incorporating formal concepts from \cite{gabow2016minset}, we defer the proof to \Cref{app:reduction-algorithm}.

\subsection{Computing Cut Labels Efficiently}

With the help of \Cref{thm:tree-packing} and \Cref{thm:reduction}, the task of computing a cactus representation reduces to
computing cut labels of the minimal mincuts for each vertex and each edge.
Since the number of trees in the tree packing $\T$ is small, 
it suffices to compute
a \emph{minimal 2-respecting mincut candidate} on each tree $T$ for each vertex and each edge.
That is, whenever the minimal mincut of an edge or a vertex 2-respects $T$, the returned candidate must be this mincut.

Notice that for a particular vertex $v$ (or an edge $e$), the candidate may not exist in every tree.
Now, if we are able to compute minimal 2-respecting mincut candidates in almost-linear time, we are able to compute the cactus representation in almost-linear time as well:

\begin{lemma}\label{lem:reduction-label-tree}
Suppose there is a deterministic algorithm that, given a spanning tree $T$, computes a label for every vertex $v$ and edge $e$ representing its minimal
2-respects mincut candidate in $t_{tree}$ total time.
Then, to compute the cactus representation of a graph, there is a deterministic algorithm that runs in  $t_{tree}\cdot n^{o(1)}+m^{1+o(1)}$ time, and there is also a randomized Monte Carlo algorithm that runs in 
$O(t_{tree}\log n+m\log^{2}n)$ time.
\end{lemma}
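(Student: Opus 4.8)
The plan is to simply compose the three ingredients already assembled in the paper. We are given an algorithm that, on a single spanning tree $T$, produces in $t_{tree}$ time a cut label for every vertex and every edge representing its minimal $2$-respecting mincut candidate on that tree. We will run this algorithm once per tree in a tree packing $\T$ obtained from \Cref{thm:tree-packing}, aggregate the candidates across trees to recover the true minimal mincut of each vertex and edge, and then invoke \Cref{thm:reduction} to build the cactus from those labels.

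First I would invoke \Cref{thm:tree-packing} to obtain the tree packing $\T$. In the deterministic regime this costs $m^{1+o(1)}$ time and gives $|\T| = n^{o(1)}$ trees; in the randomized regime it costs $O(m + n\log^3 n)$ time and gives $|\T| = O(\log n)$ trees, succeeding with high probability. By the guarantee of \Cref{thm:tree-packing}, every global mincut — in particular the minimal mincut of every vertex and every edge, which exists whenever that object is separated from $r$ by some mincut (\Cref{lem:uniqueness-minimal}) — $2$-respects at least one $T \in \T$. Next, for each $T \in \T$ I would run the assumed single-tree algorithm, spending $t_{tree}$ time per tree, for a total of $t_{tree}\cdot|\T|$. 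This yields, for each vertex $v$ (resp. each edge $e$), a list of at most $|\T|$ candidate labels, one per tree, with the key property that whenever $v$'s (resp. $e$'s) minimal mincut $2$-respects $T$, the candidate returned on $T$ is exactly that mincut.

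The aggregation step is to take, over the $|\T|$ candidates for a fixed vertex or edge, the one of smallest size (number of vertices on the non-root side), breaking ties arbitrarily; since the true minimal mincut appears among the candidates (on whichever tree it $2$-respects) and has the least size among all mincuts separating the object from $r$, and every candidate that is an actual cut separating the object from $r$ has size at least that of the minimal mincut, the minimum-size candidate is precisely the minimal mincut. Computing the size of the cut encoded by a label $(\mathit{type}, v, w, T)$ takes $O(1)$ time given subtree-size arrays for each tree (precomputed in $O(n|\T|)$ time by a traversal of each tree), so the aggregation over all $n$ vertices and $m$ edges costs $O((n+m)|\T|)$. Finally I feed the resulting set of minimal-mincut cut labels, together with $G$ and $\T$, into \Cref{thm:reduction}, obtaining a cactus in $O(m\alpha(m,n) + n|\T|)$ additional time. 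Summing: deterministically we get $m^{1+o(1)} + t_{tree}\cdot n^{o(1)} + O((n+m)n^{o(1)}) + O(m\alpha(m,n) + n\cdot n^{o(1)}) = t_{tree}\cdot n^{o(1)} + m^{1+o(1)}$; in the randomized case we get $O(m + n\log^3 n) + t_{tree}\cdot O(\log n) + O((n+m)\log n) + O(m\alpha(m,n) + n\log n) = O(t_{tree}\log n + m\log^2 n)$, absorbing the $n\log^3 n$ term into $m\log^2 n$ when $m \ge n$ (and noting the graph may be assumed connected, so $m \ge n-1$).

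The argument is essentially bookkeeping, so there is no deep obstacle; the one point needing a little care is the correctness of the aggregation, namely verifying that taking the smallest-size candidate across trees actually recovers the minimal mincut rather than some spurious smaller object. This rests on two facts: (i) the single-tree subroutine is only promised to return the correct answer on a tree that the minimal mincut $2$-respects — on other trees it may return an arbitrary $2$-respecting mincut candidate (or none) — and (ii) any $2$-respecting mincut that separates the object from $r$ is, in particular, a mincut separating it from $r$, hence has size at least that of the minimal mincut by definition; so the minimum over candidates is attained exactly at the true minimal mincut. One should also confirm that "no candidate on this tree" is handled gracefully (treat it as $+\infty$ size), which is immediate.
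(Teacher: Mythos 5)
Your proposal is correct and follows essentially the same route as the paper's proof: obtain the tree packing via \Cref{thm:tree-packing}, run the assumed per-tree labeling algorithm on each tree, select the best candidate for each vertex and edge, and pass the resulting minimal-mincut labels to \Cref{thm:reduction}, with the same runtime bookkeeping. The only nuance is that the aggregation should compare not just sizes but also verify that each candidate has weight $\lambda$ and actually separates the vertex or edge from $r$ (the paper's ``comparing the weight and size''), since on a wrong tree the returned label need not be a mincut containing the object; this is an $O(1)$-per-candidate check and does not affect the stated bounds.
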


\begin{proof}
We first invoke \Cref{thm:tree-packing} and obtain a tree packing ${\cal T}$. Then, for
each spanning tree $T\in{\cal T}$, we compute a cut label for every
vertex $v$ and edge $e$ representing its minimal 2-respects mincut.
Since every mincut 2-respects one of the tree from ${\cal T}$, one
of the cut label of each vertex $v$ and edge $e$ corresponds to
its minimal mincut (and we can obtain it by comparing the weight and size
in $O(m)$ total time). In total, this takes either $t_{tree}\cdot n^{o(1)}+m^{1+o(1)}$
time deterministically or $O(t_{tree}\log n+m\log^{2}n)$ time Monte-Carlo
randomized. 

Given labels representing minimal mincuts of all vertices
and edges, we obtain a cactus representation in $O(m\log^{2}n)$ additional
time by \Cref{thm:reduction}.
\end{proof}

\paragraph{Computing Minimal 2-Respecting Mincut Candidates.} 
The remaining task is to compute cut labels of the minimal 2-respecting mincut candidates on every tree $T\in \T$, for each vertex and each edge.
Karger and Panigrahi~\cite{karger2009near} provided a deterministic algorithm (see~\Cref{lem:labeling-vertices}) for computing these cut labels for vertices --- they partially bypassed the challenge for computing cut labels of the edges via randomization. 
However, there is a missing case in~\cite{karger2009near} when computing cut labels for vertices.

Regarding the gap in~\cite{karger2009near}, we believe that the approach is not wrong, but the fix seems to require more than changing some typos. 
The following \Cref{lem:labeling-vertices,lem:labeling-edges} summarizes the tasks for obtaining cut labels to minimal 2-respecting mincut candidates for vertices and edges on a given tree $T$.

\begin{lemma}[\cite{karger2009near}]%
    \label{lem:labeling-vertices}
    Given a spanning tree $T$, we can  deterministically compute, for each vertex $v$, a cut label representing a minimal 2-respecting mincut candidate of $v$ in $O(m\log^2 n)$ total time.
\end{lemma}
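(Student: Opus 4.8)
The plan is to handle the three types of 2-respecting cuts separately, with the bulk of the work being a divide-and-conquer over the tree $T$ to evaluate cut weights fast. For a fixed spanning tree $T$ rooted at $r$, recall that the weight of the cut induced by $v^\downarrow$ is $\C(v^\downarrow, V\setminus v^\downarrow)$, and can be computed for all $v$ simultaneously in $O(m\log n)$ time by a standard bottom-up accumulation (each edge $(x,y)$ contributes $+w(x,y)$ to the ``crossing value'' of every vertex on the tree path between $x$ and $y$ except the LCA; this is a path-increment plus subtree-query, doable with Euler-tour plus BIT or with link-cut trees). This immediately yields the Type-1 candidates: for each $v$, check whether $\C(v^\downarrow, \cdot) = \lambda$, and among all such $v$ keep the one of smallest subtree size. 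The subtree size is monotone along any root-to-leaf path, so ``minimal'' here means: for each vertex $u$, among its tree-ancestors $v$ that induce a $\lambda$-cut, the deepest such $v$ — this can be propagated downward in $O(n)$ time after the cut weights are known.

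The harder part is Type-2. For strictly 2-respecting cuts one needs, for a pair $(v,w)$, the quantity $\C(v^\downarrow, w^\downarrow)$ (for incomparable $v\perp w$, the cut value is $\C(v^\downarrow,\cdot)+\C(w^\downarrow,\cdot)-2\C(v^\downarrow,w^\downarrow)$; for comparable $w\in v^\downarrow$, the cut value of $v^\downarrow\setminus w^\downarrow$ is $\C(v^\downarrow,\cdot)+\C(w^\downarrow,\cdot)-2\C(w^\downarrow, V\setminus v^\downarrow)$, and $\C(w^\downarrow, V\setminus v^\downarrow)$ is itself computable from cross-values on the path segment between $w$ and $v$). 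The standard device (going back to Karger, refined by Gawrychowski–Mozes–Weimann) is a recursive centroid-style decomposition of $T$: repeatedly pick a vertex $c$ whose removal splits $T$ into pieces each of size at most half; handle all pairs $(v,w)$ that straddle $c$ by ``rerooting'' the contribution of each non-tree edge at $c$ so that the two sides interact through a path structure, then recurse into each piece. Using a link-cut tree (or a balanced path decomposition) that supports path-min / subtree-min with lazy path-add, one processes a piece of size $s$ in $O(s\log^2 s)$ amortized time, for $O(m\log^2 n)$ total. The only twist relative to ``find the minimum 2-respecting cut'' is that we must also track, for each vertex and edge, \emph{the minimal such cut}: so rather than a single global minimum we maintain, per query point, the best (smallest-size) candidate achieving value exactly $\lambda$. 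Concretely, for each vertex $u$ we want the minimal mincut separating $u$ from $r$; this is the $\lambda$-cut of the form $v^\downarrow$, $v^\downarrow\setminus w^\downarrow$, or $v^\downarrow\cup w^\downarrow$ containing $u$ with the fewest vertices, and among those of equal weight we always prefer the one of smaller size, which by \Cref{lem:uniqueness-minimal} (uniqueness of the minimal mincut) is well defined.

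The accounting is as follows. After the Type-1 pass and the centroid recursion, each vertex $u$ has at most $O(\log n)$ candidate labels from Type-1 (ancestors) and at most $O(\log n)$ candidate labels collected during the $O(\log n)$ recursion levels where $u$'s query was resolved — actually the clean way is to let the recursion, whenever it discovers a $\lambda$-valued 2-respecting cut $C$, report $(\mathit{type},v,w,T)$ together with a succinct description of which side each query point lies on, and then do a single postprocessing sweep that, for every vertex and every edge, selects the reported $\lambda$-cut of minimum size that places it on the non-root side. Since there are $O(n+m)$ query points and $O(n)$ relevant distinct $\lambda$-cuts get reported (a $\lambda$-cut is determined by $(v,w)$ and the centroid decomposition produces $O(n\log n)$ of them, but only $O(n)$ are distinct and we can dedupe), the sweep costs $O((m+n)\log n)$. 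Membership of a vertex $u$ in $v^\downarrow$ is an ancestor test ($O(1)$ with Euler-tour indices); membership of an edge $(x,y)$ is ``both endpoints in the set'', i.e. $u\in v^\downarrow$ iff... — two ancestor tests — and similarly for $v^\downarrow\setminus w^\downarrow$ and $v^\downarrow\cup w^\downarrow$. Everything is deterministic. The main obstacle I expect is getting the comparable (Type-2-Comparable) bookkeeping right inside the centroid recursion: when the two cut edges lie on one root-to-leaf path, the ``straddle the centroid'' structure is a single path rather than two disjoint subtrees, so the path-min data structure must be queried over a \emph{window} of the path (between $w$ and $v$), and one has to be careful that the contribution of each non-tree edge is added to exactly the right sub-path so that $\C(v^\downarrow\setminus w^\downarrow,\cdot)$ is read off correctly; this is precisely where the case analysis in \cite{karger2009near} was terse, and where we follow \cite{gawrychowski2019minimum}-style interval queries on a link-cut tree to make it rigorous and stay within the $O(m\log^2 n)$ budget.
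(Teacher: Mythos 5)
Your Type-1 pass is fine and matches the paper. The gap is in the Type-2 plan. The centroid/GMW-style recursion is engineered to find \emph{the} minimum 2-respecting cut: it maintains running minima and never has to enumerate the pairs $(v,w)$ achieving a given value. Your proposal instead asks the recursion to ``report'' every $\lambda$-valued 2-respecting cut it encounters and then dedupe to ``only $O(n)$ distinct'' cuts. That claim is false: the number of distinct 2-respecting mincuts can be $\Theta(n^{2})$ (a cycle already realizes this, which is exactly why the paper stresses that cacti must be built without listing all mincuts), so the reporting step is unbounded and the dedup premise collapses. Even if you somehow had a small family of reported cuts, the ``single postprocessing sweep'' that, for each of the $O(n+m)$ query points, picks the smallest reported cut containing it is not an $O((m+n)\log n)$ operation: 2-respecting mincuts containing a fixed vertex $u$ need not be nested, and you give no data structure for this stabbing-with-minimum-size problem. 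So the proposal as written does not achieve the per-vertex minimality requirement within the budget.

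What is missing is precisely the structural reduction that drives the paper's proof: a way to shrink the candidate set to $O(n)$ cuts \emph{before} any minimization. For the incomparable type, the paper observes that any incomparable 2-respecting mincut containing $u$ contains all of $u^{\downarrow}$, hence the minimal incomparable candidate of $u$ is either inherited from $u$'s parent or equals $u^{\downarrow}\cup r_{u}^{\downarrow}$, where $r_{u}$ is the incomparable mincut partner of $u$ with smallest subtree (\Cref{lem:computing-incomparable-vertex-partner}); $r_u$ is found with a top-tree query ($\MinTreeDown$, tie-broken by subtree size) over precut values $\C(w^{\downarrow})-2\C(u^{\downarrow},w^{\downarrow})$ maintained along a balanced path decomposition, and a single DFS then propagates the minimum downward (\Cref{lem:labeling-vertex-incomparable}). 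For the comparable type, the paper does not attack vertices directly at all: it adds a self-loop at each vertex and invokes the edge algorithm (\Cref{cor:labeling-vertex-comparable} via \Cref{lem:labeling-edges-comparable}), whose correctness rests on the lower-vertex pinpointing lemmas rather than on window queries inside a centroid recursion. Your write-up contains the right cut-value identities and cites uniqueness (\Cref{lem:uniqueness-minimal}), but without a candidate-pruning lemma of this kind the per-vertex ``minimal mincut'' selection is exactly the part of \cite{karger2009near} that was problematic, and your proposal does not close it.
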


In this paper, we give a simpler and complete algorithm of \Cref{lem:labeling-vertices} in \Cref{sec:labeling-vertices}.

\begin{lemma}[Key Lemma]
    \label{lem:labeling-edges}
    Given a spanning tree $T$, we can  deterministically compute, for each edge $e$, a cut label representing a minimal 2-respecting mincut candidate of $e$ in $O(m\log^2 n)$ total time.
\end{lemma}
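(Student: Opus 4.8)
The plan is to compute, for every edge $e=(a,b)$, a label of the form $(\mathit{type},v,w,T)$ describing the minimal cut that 2-respects $T$ and separates both $a$ and $b$ from the root $r$. First I would reduce the problem to a shape that mirrors \Cref{lem:labeling-vertices}: a 2-respecting mincut separating $e$ from $r$ is exactly a 2-respecting cut of $G$ whose induced vertex set $X$ contains $a$ and $b$. Classifying by the three types (Type 1, Type 2-Comparable, Type 2-Incomparable), Type 1 means $X=v^\downarrow$ with $a,b\in v^\downarrow$, i.e. $v$ is an ancestor of $\mathrm{LCA}(a,b)$; Type 2-Comparable means $X=v^\downarrow\setminus w^\downarrow$ with $w\in v^\downarrow$ and $a,b\in v^\downarrow\setminus w^\downarrow$; Type 2-Incomparable means $X=v^\downarrow\cup w^\downarrow$ with $v\perp w$ and $a,b$ each lying in $v^\downarrow$ or $w^\downarrow$ (possibly on the same side, or split one each). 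The plan is to handle each type separately, always maintaining the rule that among all valid mincuts we return the one of least size (break ties arbitrarily but consistently).

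For each type I would cast the search as a weighted tree-path data-structure problem of the same flavor used in the $O(m\log^2 n)$ 2-respecting-mincut machinery (the $\MinPath$/$\MinTree$/$\AddP$ operations on a link-cut-tree-style structure, where inserting the graph edges $f$ one at a time contributes their weight to the appropriate tree paths and we query for the pair of tree edges minimizing the cut value). The one new ingredient relative to the vertex case is the extra constraint "$a,b\in X$", which in each type becomes a restriction on where $v$ and $w$ may live: for Type 1, $v$ ranges over ancestors of $\ell:=\mathrm{LCA}(a,b)$, so we query only the root-to-$\ell$ path; for Type 2-Comparable, $v$ ranges over ancestors of $\ell$ while $w$ must avoid the subtrees containing $a$ and $b$, i.e. $w$ lies off the $r$–$a$ and $r$–$b$ paths (in particular off the two branches hanging below $\ell$ that lead to $a$ and $b$); for Type 2-Incomparable the two cases are (i) both $a,b$ on one side — this collapses to a constraint like Type 2-Comparable's "outer" query — and (ii) $a$ on the $v$-side and $b$ on the $w$-side, forcing $v$ to be an ancestor of $a$ but not of $b$ and symmetrically for $w$. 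I would process all $m$ edges $e$ together by a single sweep: root the computations at the structure built from inserting graph edges, and for each $e$ read off the constrained minimum using range queries on the relevant root-to-vertex paths, then among the (at most constantly many) type-candidates keep the true mincut of least size. The size-minimality is enforced exactly as in the vertex algorithm — augment each path-query structure to also track, among entries achieving the minimum cut value, the one giving the smallest $|X|$, which for each fixed type is a monotone function of the depths of $v,w$ and hence supported by the same balanced-BST/link-cut augmentation.

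The hard part will be the Type 2-Incomparable "split" case, where $a$ and $b$ sit in different subtrees $v^\downarrow$ and $w^\downarrow$: unlike the vertex case (one subtree) or the comparable case (nested subtrees), here the two choice variables $v$ and $w$ are genuinely independent and each is constrained to a different root-to-leaf path, so the naive search is a product over two paths and risks an extra $\log n$ or even a quadratic blowup. I expect to resolve this by the standard decomposition trick: fix the side of $e$, observe that once we commit "$a\in v^\downarrow$, $b\in w^\downarrow$" the cut value $\C(v^\downarrow\cup w^\downarrow)=\C(v^\downarrow)+\C(w^\downarrow)-2w(v^\downarrow,w^\downarrow)$ separates into a part depending only on $v$ and a part depending only on $w$ plus a cross term, and the cross term is handled exactly the way \cite{karger2009near} handle incomparable pairs for vertices — by iterating over one endpoint of each non-tree edge and charging to its ancestors via $\AddP$, so that after the sweep a single $\MinPath$ query on the $r$–$a$ path (resp. $r$–$b$ path) with the other variable's contribution already folded in returns the answer; care is needed so that $v$ is forced to be a proper ancestor of $a$ and a non-ancestor of $b$ (and vice versa), which amounts to truncating the queried path at $\ell=\mathrm{LCA}(a,b)$. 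Everything else is bookkeeping: there are $O(1)$ types, each query path has length $O(n)$ but is handled by $O(\log n)$-time link-cut operations, each of the $m$ graph edges is inserted $O(1)$ times into $O(1)$ structures, and each of the $m$ query edges $e$ issues $O(1)$ queries, for the claimed $O(m\log^2 n)$ total (the extra $\log n$ over $m\log n$ coming, as in \Cref{lem:labeling-vertices}, from the nested balanced-BST inside the link-cut structure needed to support the size-minimality tie-break). I would also double-check the degenerate configurations — $a$ or $b$ equal to $\ell$, $e$ a tree edge, $a$ an ancestor of $b$ — since these are exactly the kinds of boundary cases whose omission is the "missing case" the paper attributes to \cite{karger2009near}.
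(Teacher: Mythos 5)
Your top-level plan (split by Type 1 / Type 2-Comparable / Type 2-Incomparable and keep the smallest-size candidate) matches the paper's outer structure, but the actual content of the Key Lemma lives in the two strictly-2-respecting cases, and there the proposal has a genuine gap: both the cut value and the cut size couple the two tree vertices, so neither case reduces to "constrained range queries on root-to-vertex paths with a monotone tie-break." In the comparable case the value $\C(u^\downarrow\setminus l^\downarrow)=\C(u^\downarrow)-\C(l^\downarrow)+2(\C(l^\downarrow,u^\downarrow)-\C(l^\downarrow,l^\downarrow))$ contains the cross term $\C(l^\downarrow,u^\downarrow)$, and the quantity you must minimize is the \emph{size} $|u^\downarrow|-|l^\downarrow|$ over all pairs $(u,l)$ that actually form a mincut containing $e$; this is a joint optimization over pairs, with the lower vertex ranging over an entire subtree-minus-two-paths region (not a path), so it cannot be read off by per-edge path queries with contributions "folded in." The paper's resolution is a two-step decoupling that your proposal lacks: first prove structurally (via the crossing lemma) that the lower vertex of the minimal mincut of $e$ is the \emph{unique} top element of the valid lower vertices below $x_e=\LCA(l_e,\lca_e)$ avoiding the paths through $e$'s endpoints (\Cref{lem:uniqueness-of-lower-vertex}, \Cref{cor:uniqueness-of-lower-vertices}), then recover $u_e$ from $l_e$ by a single $\MinPDown$ query (\Cref{thm:reduction-to-lower-vertices}); and even then, finding $l_e$ for all edges in one post-order pass needs the precomputed highest partners $H(v)$ and the "packaging" argument (\Cref{lem:general-case-property}) to keep the number of top-tree queries linear, since querying every ancestor of $\lca_e$ per edge is already super-linear.

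The incomparable split case has the same problem in your sketch: "charge the cross term to ancestors via $\AddP$ and answer each edge with one $\MinPath$ query" presupposes that the correct $v$ for edge $e$ is known when the query is issued, but the folded-in term $-2\C(v^\downarrow,\cdot)$ is specific to one $v$, and the right $v$ (the deepest vertex on its decomposition path having an outer mincut partner above $e$'s other endpoint) is discovered only during the sweep; querying every pending edge at every vertex of the path is exactly the blow-up you flag, and the "standard decomposition trick" from \cite{karger2009near} does not remove it — that is precisely where Karger–Panigrahi resorted to randomization for edges. The paper's fix is the witness-set invariant (\Cref{lem:invariant-of-w}) together with a pre-order-keyed BST so that pending edges are extracted by range queries only when their partner is guaranteed present, charging $O(1)$ data-structure operations per edge (\Cref{lem:label-incomparable-path}). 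Finally, your same-side incomparable subcase is not a comparable-style "outer" query; the paper reduces it to the minimal incomparable 2-respecting mincut of $\LCA(a,b)$, which itself needs a dedicated top-tree algorithm with subtree-size tie-breaking (\Cref{lem:labeling-vertex-incomparable,lem:computing-incomparable-vertex-partner}). Without these structural lemmas and the accompanying amortization, the proposed algorithm is either incorrect (it optimizes a coupled objective coordinate-wise) or does not meet the $O(m\log^2 n)$ bound.
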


The algorithm and the proof to~\Cref{lem:labeling-vertices} will be in \Cref{sec:labeling-vertices}.
The rest of our paper is devoted to proving \Cref{lem:labeling-edges}.
By plugging \Cref{lem:labeling-vertices,lem:labeling-edges} into \Cref{lem:reduction-label-tree}, we can conclude \Cref{thm:main}. 

\begin{proof}[Proof of~\Cref{thm:main}]
By \Cref{lem:labeling-vertices,lem:labeling-edges}, we have that $t_{tree}=O(m\log^2 n)$.
Thus, by \Cref{lem:reduction-label-tree} we obtain a deterministic algorithm computing a cactus respresentation in $m^{1+o(1)}$ time, and also a randomized Monte Carlo algorithm that runs in $O(m\log^3 n)$ time.
\end{proof}

\paragraph{Toward the Proof of \Cref{lem:labeling-edges}.}
We show a path to prove our key lemma. The goal is to find a minimal 2-respecting mincut candidate for each edge $e$ on a given tree $T$.
Since there are three types of cuts that 2-respects a tree $T$, it is natural to split the task into three subproblems, with each of them focusing on Type 1 cuts, Type 2-Comparable cuts, and Type 2-Incomparable cuts respectively.

\paragraph{Minimal 1-respecting mincut candidate for vertices and edges} Computing the minimal 1-respecting mincut candidate for all vertices and edges is relatively simple, and can be derived from~\cite{karger2000minimum}. For the sake of completeness we include the proof below.

The following basic lemma from~\cite{karger2000minimum} can be implemented by dynamic program.

\begin{lemma}[Lemma 5.1 in \cite{karger2000minimum}]
\label{lem:computing-all-1-respecting}
The values of all cuts that 1-respect a given spanning tree $T$ can be determined in $O(m+n)$ time.
\end{lemma}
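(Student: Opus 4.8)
The plan is to set up a standard tree dynamic program on $T$ rooted at $r$. For each tree edge $e$, write the edge as the parent edge of its lower endpoint $v$, so that removing $e$ separates the vertex set $v^\downarrow$ from $V\setminus v^\downarrow$. The value of the $1$-respecting cut induced by $v$ is then $\C(v^\downarrow)$, which by definition equals the total weight of all edges with exactly one endpoint in $v^\downarrow$. The key quantity I would compute, bottom-up, is $\C(v^\downarrow)$ for every vertex $v$, together with two auxiliary sums per vertex: $d(v)$, the weighted degree of $v$ in $G$, and $b(v)$, the total weight of edges of $G$ whose \emph{both} endpoints lie in $v^\downarrow$ (equivalently, $2b(v)$ counts edges inside $v^\downarrow$ with multiplicity). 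Then $\C(v^\downarrow) = \sum_{u \in v^\downarrow} d(u) - 2b(v)$, and both $\sum_{u\in v^\downarrow} d(u)$ and $b(v)$ satisfy easy recurrences over the children of $v$.

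The main step is to evaluate these recurrences in linear time. For $s(v) := \sum_{u \in v^\downarrow} d(u)$ we simply have $s(v) = d(v) + \sum_{c \text{ child of } v} s(c)$, computed in one post-order pass in $O(n)$ total time given the values $d(v)$, which themselves are obtained in $O(m)$ time by scanning the edge list. For $b(v)$, observe that an edge $(x,y)$ of $G$ has both endpoints in $v^\downarrow$ exactly when $v$ is an ancestor (in $T$) of both $x$ and $y$, i.e.\ when $v$ is an ancestor of $\lca_T(x,y)$. So I would, for each non-tree edge (and tree edges handled the same way), add its weight to a counter at the vertex $\lca_T(x,y)$; call the resulting per-vertex total $c(v)$. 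Then $b(v) = \sum_{u \in v^\downarrow} c(u)$, again a single post-order summation. Computing all the LCAs takes $O(m\,\alpha(m,n))$ with a classical offline LCA algorithm, or $O(m+n)$ with Tarjan's offline union-find LCA / the Harel--Tarjan linear-time LCA; either way this is within the claimed $O(m+n)$ bound. Finally, for each $v \neq r$ output $\C(v^\downarrow) = s(v) - 2b(v)$ as the value of the corresponding $1$-respecting cut; there are exactly $n-1$ such cuts, one per tree edge.

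The only mild subtlety — and the part I would be most careful about — is the bookkeeping to make every aggregation truly linear: one must be sure that the $\lca$ computation is done in the offline/batched mode so its cost is $O(m+n)$ rather than $O(m\log n)$, and one must avoid recomputing $s$ and $b$ from scratch at each vertex by doing a single simultaneous post-order traversal that accumulates $d(v)+\sum_c s(c)$ and $c(v)+\sum_c b(c)$ as it unwinds. There are no real obstacles beyond this; correctness is immediate from the identity $\C(v^\downarrow) = (\text{weighted degrees summed over } v^\downarrow) - 2\cdot(\text{weight of edges internal to } v^\downarrow)$, which just says each internal edge is double-counted in the degree sum and each cut edge is counted once. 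Hence all $n-1$ tree-respecting cut values are produced in $O(m+n)$ time, as claimed.
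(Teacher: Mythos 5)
Your proof is correct, and it is essentially the argument the paper relies on: the paper does not reprove this lemma but cites Karger's Lemma~5.1, whose proof is exactly this dynamic program (subtree degree sums minus twice the weight of edges internal to the subtree, aggregated via LCAs computed in linear time, as the paper itself does elsewhere via \cite{gabow1985linear}). No gaps; the identity $\C(v^\downarrow)=s(v)-2b(v)$ and the single post-order accumulation give the claimed $O(m+n)$ bound.
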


In the second step, we show how to compute the minimal 1-respect mincut candidate for vertices.

\begin{lemma}\label{lem:labeling-vertex-1-respecting}
    Given a graph $G$ and a spanning tree $T$,
    there is an algorithm such that, in $O(m+n)$ time computes
    the minimal 1-respecting mincut candidates for all vertices.
\end{lemma}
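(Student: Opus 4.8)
\textbf{Proof proposal for \Cref{lem:labeling-vertex-1-respecting}.}

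The plan is to build on \Cref{lem:computing-all-1-respecting}: in $O(m+n)$ time we can compute, for every tree edge $(\mathrm{parent}(v),v)$, the value $\C(v^\downarrow)$ of the unique $1$-respecting cut it induces. A $1$-respecting mincut candidate of a vertex $v$ is then a cut of the form $u^\downarrow$ with $v \in u^\downarrow$ (i.e.\ $u \in v^\uparrow$) whose value equals $\lambda$; among all such $u$ we want the one of least size, i.e.\ the deepest (lowest) ancestor $u$ of $v$ with $\C(u^\downarrow)=\lambda$, since going deeper only shrinks $u^\downarrow$. So the task reduces to: for each $v$, find the lowest ancestor $u$ (possibly $u=v$) on the root-to-$v$ path with $\C(u^\downarrow)=\lambda$, or report $\NULL$ if no ancestor qualifies. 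The cut label returned is $(\mathrm{Type\ 1}, u, \cdot, T)$.

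First I would run the DP of \Cref{lem:computing-all-1-respecting} to mark each vertex $u$ as ``good'' iff $\C(u^\downarrow)=\lambda$. Next I would compute, for each $v$, the desired deepest good ancestor by a single pre-order (root-to-leaves) DFS traversal of $T$: maintain as we descend the identity of the deepest good vertex seen so far on the current root-path; when we enter a vertex $v$, first update this running value using $v$ itself (if $v$ is good, it becomes the new deepest good vertex), then record it as the answer $u(v)$ for $v$, then recurse into the children, and on returning restore the previous value. This is $O(n)$ time and $O(n)$ space (the restore step is just popping a stack frame). Finally, assemble the labels: if $u(v)$ is defined we output the Type-1 label with upper/lower vertex $u(v)$, otherwise $\NULL$; total cost $O(m+n)$ as claimed. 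One subtlety worth stating explicitly: the cut $u^\downarrow$ genuinely separates $v$ from $r$ precisely when $u \neq r$, and since $\C(r^\downarrow)=\C(V)=0 \neq \lambda$ the root is never good, so every good ancestor automatically gives a valid separating cut; and $u^\downarrow$ being a mincut (value exactly $\lambda$, which is the global minimum) is immediate from goodness.

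I do not expect a real obstacle here — the only things to get right are the direction of the ``minimal'' optimization (deeper $u$ $\Rightarrow$ smaller $u^\downarrow$ $\Rightarrow$ smaller size, so we want the \emph{lowest} good ancestor) and the book-keeping that the traversal touches each vertex $O(1)$ times. If one prefers to avoid recursion, the same deepest-good-ancestor query can be answered with a union-find / pointer-jumping pass over $T$, but the plain DFS already meets the $O(m+n)$ bound, so I would present that.
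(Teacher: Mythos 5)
Your proposal is correct and follows essentially the same route as the paper: compute all 1-respecting cut values via \Cref{lem:computing-all-1-respecting}, mark those equal to $\lambda$, and run a single DFS maintaining the deepest marked ancestor on the current root-to-vertex path, which is exactly the minimal 1-respecting mincut candidate of each vertex. Your extra remarks (deeper ancestor $\Rightarrow$ smaller cut, the root is never marked) just make explicit what the paper's shorter argument leaves implicit.
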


\begin{proof}[Proof of \Cref{lem:labeling-vertex-1-respecting}.]
First, we invoke \Cref{lem:computing-all-1-respecting} to compute the value of all the 1-respecting cut. By comparing the value of each 1-respecting cut with the value of mincut $\lambda$, we can identify all the 1-respecting mincuts. If the 1-respecting cut $v^\downarrow$ seperating $u$ from $r$, then $u$ must in the subtree of $v$. Therefore, we can run a DFS and maintain the minimal 1-respecting mincut containing the current vertex $u$, which can be done in linear time.
\end{proof}

Finally, we get the minimal 1-respecting mincut candidate for all edges.
\begin{restatable}[]{lemma}{lemOneRespecting}\label{lem:1-respecting}
    Given a graph $G$ and a spanning tree $T$,
    there is an algorithm such that, in $O(m+n)$ time computes
    the minimal 1-respecting mincut candidates for all edges.
\end{restatable}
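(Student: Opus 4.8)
The plan is to reduce the edge version to the vertex version (\Cref{lem:labeling-vertex-1-respecting}) by a single bottom-up pass over $T$. Recall that a Type~1 mincut is induced by $v^\downarrow$ for some vertex $v$ with $\C(v^\downarrow)=\lambda$. An edge $e=(x,y)$ is separated from $r$ by the cut $v^\downarrow$ precisely when both $x\in v^\downarrow$ and $y\in v^\downarrow$, i.e.\ when $v$ is a (non-strict) common ancestor of $x$ and $y$; equivalently $v$ lies on the root-to-$\LCA(x,y)$ path. Among all such $v$ whose cut $v^\downarrow$ is a mincut, the minimal 1-respecting mincut candidate of $e$ is the one minimizing $|v^\downarrow|$. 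Since the sets $\{v^\downarrow : v \text{ ancestor of } \LCA(x,y)\}$ are nested, the candidate for $e$ is simply the \emph{lowest} ancestor $v$ of $\LCA(x,y)$ (including $\LCA(x,y)$ itself) with $\C(v^\downarrow)=\lambda$.

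First I would run \Cref{lem:computing-all-1-respecting} to get $\C(v^\downarrow)$ for every $v$ in $O(m+n)$ time, and mark each vertex $v$ as ``good'' if $\C(v^\downarrow)=\lambda$. Next, for every vertex $u$ define $\mathrm{anc}(u)$ to be the lowest good ancestor of $u$ (with $\mathrm{anc}(u)=\NULL$ if none exists); this is exactly the minimal 1-respecting mincut candidate of the \emph{vertex} $u$, which \Cref{lem:labeling-vertex-1-respecting} already supplies, or can be recomputed in one DFS carrying down a pointer to the nearest good ancestor seen so far. Then I would compute $\LCA(x,y)$ for each edge $e=(x,y)$; these $m$ LCA queries take $O(m+n)$ time after $O(n)$ preprocessing using an off-line LCA algorithm (Tarjan's union–find based method, or any standard $\langle O(n),O(1)\rangle$ scheme). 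Finally, for each edge $e=(x,y)$ I set its candidate to $\mathrm{anc}(\LCA(x,y))$: if this is $\NULL$ there is no Type~1 mincut separating $e$ from $r$, and otherwise the cut label is $(\mathit{type}=1, v=\mathrm{anc}(\LCA(x,y)), \cdot, T)$. Correctness is immediate from the nestedness observation above: any good ancestor of both $x$ and $y$ is a good ancestor of $\LCA(x,y)$, the smallest such is $\mathrm{anc}(\LCA(x,y))$, and conversely $\mathrm{anc}(\LCA(x,y))^\downarrow$ does separate $e$ from $r$.

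Every ingredient runs in $O(m+n)$ time: \Cref{lem:computing-all-1-respecting}, a linear-time DFS to build $\mathrm{anc}(\cdot)$, off-line LCA for all $m$ edges, and a final $O(m)$ scan, giving the claimed bound. The only mildly delicate point is ensuring the LCA queries are answered in total linear time rather than $O(m\log n)$; invoking an off-line or $\langle O(n),O(1)\rangle$-preprocessing LCA data structure handles this, so there is no real obstacle here — the lemma is essentially a bookkeeping reduction to the vertex case plus batched LCA.
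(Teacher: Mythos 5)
Your proposal is correct and follows essentially the same route as the paper: the key observation that a 1-respecting cut contains $e=(u_1,u_2)$ exactly when it contains $\LCA(u_1,u_2)$, combined with the vertex-case result of \Cref{lem:labeling-vertex-1-respecting} and batched linear-time LCA computation. You simply spell out the nestedness argument and the linear-time LCA machinery in more detail than the paper does.
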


\begin{proof}
Observe that a 1-respecting cut contains $e = (u_1,u_2)$ if and only if it contains $\lca_e = \LCA(u_1,u_2)$. By \Cref{lem:labeling-vertex-1-respecting}, we compute the minimal 1-respecting mincut for all the vertices in $O(m+n)$ time. In addition, computing $\lca_e$ for all the edges can be done in linear time. Therefore, we get the minimal 1-respecting mincut candidate for all the edges in $O(m+n)$ time.
\end{proof}

Therefore, the remaining challenges are finding
a strictly comparable 2-respecting mincut candidate (see \Cref{lem:labeling-edges-comparable}) and a strictly incomparable 2-respecting mincut candidate (see \Cref{lem:labeling-edges-incomparable}) for each edge.

\subsection{Technical Contribution}
\label{sec:comparison-with-KP09}

\paragraph{New Algorithm for Minimal 2-Respecting Mincuts.}
\Cref{lem:labeling-edges} is the key technical contribution. We give the first deterministic algorithm for computing minimal mincuts for edges in almost-linear time, which leads to the deterministic algorithm for computing cactus representation.
With \Cref{lem:labeling-edges}, we also obtain an $O(m\log^3n)$ randomized algorithm for computing a cactus representation, while the previous fastest (randomized) algorithm by  \cite{karger2009near} requires $\Omega(m\log^4 n)$ runtime (see \Cref{sec:kp-needs-more-time}.)

In fact, our algorithm is more modular than the algorithm by Karger and Panigrahi~\cite{karger2009near} in the following sense: their algorithm only computes minimal mincuts for only \emph{some} random edges, but they show that this set of edges is sufficient. It requires more intricate proof to show that these mincuts suffice for constructing a correct cactus representation. The approach makes the overall framework less modular.

\paragraph{Structural Properties for 2-Respecting Mincuts.}
What enables us to achieve \Cref{lem:labeling-edges} are new structural lemmas about 2-respecting mincuts. 
2-respecting mincuts are not esoteric objects.
In fact, fast algorithms related to 2-respecting mincuts have been the only known pathway for obtaining global mincuts in general weighted graphs in near-optimal complexity in many models of computation (sequential \cite{karger2000minimum}, parallel \cite{GeissmannG21},
distributed \cite{DoryEMN21},
streaming and cut queries \cite{Mukhopadhyay2020WeightedMS}.
We are hopeful that these structural lemmas are promising for these models too. 

For example, when the minimal mincut of an edge $e$ is a comparable 2-respecting cut,
a non-trivial observation is that the task is reduced to computing only the \emph{lower vertex} $l_e$ of the two vertices that define the 2-respecting cut.
This allows us to focus on just ``half of the problem'' and hence the algorithm can be greatly simplified.
Structural insights of \Cref{lem:uniqueness-of-lower-vertex}, \Cref{cor:uniqueness-of-lower-vertices}, and \Cref{lem:general-case-property} allow us to design a DFS procedure that obtains these lower vertices in \emph{just one pass}. 
\Cref{lem:invariant-of-w} plays a similarly important role when the minimal mincut of an edge $e$ is an incomparable 2-respecting cut.

\paragraph{A Full Detailed Proof to Cactus Construction.}
We also provide a not only comprehensive but also simplified algorithm in \Cref{app:reduction-algorithm} for constructing a cactus from the labels of minimal mincuts of vertices and edges.
Some correctness proofs in the last section of Karger and Panigrahi's paper~\cite{karger2009near} were missing, and unfortunately there was no full version.
By revisiting the work of Gabow's~\cite{gabow2016minset} and Karger and Panigrahi's~\cite{karger2009near}, we believe \Cref{app:reduction-algorithm} helps the readers and the community understand Karger and Panigrahi's algorithm with a  much higher confidence.

\paragraph{Other Technical Contributions.}
Besides the key technical contribution and the simplified cactus construction algorithm,
we also introduce a new algorithm for computing minimal mincut of vertices, which is simpler and fixes a gap in \cite{karger2009near}. This algorithm serves as an alternative proof to~\Cref{lem:labeling-vertices} and is described in \Cref{sec:labeling-vertices}.
Last but not least, we formalize the reduction to path using path decomposition in \Cref{lem:reduc to path}, which allows us to focus on paths. This reduction simplifies the description of the algorithm, makes the analysis more modular, and can become handy in other applications.

\section{Useful Tools}\label{sec:useful-tools}

\paragraph{Reduction to Paths via Path Decomposition.}
To facilitate our algorithm throughout the paper, we utilize the reduction that reduces
a problem on a tree to several problems on a collection of paths.
Similar techniques have been developed in order to solve problems related to minimum cuts that 2-respecting trees. 
Two specific ways of decomposing a tree into paths were used: \emph{bough decomposition}~\cite{karger2000minimum} and \emph{heavy path decomposition}~\cite{gawrychowski2019minimum, BhardwajLovettSandlund2020} (see also \cite{Mukhopadhyay2020WeightedMS, GeissmannG21}
for more discussions).

It turns out that all we need is a balanced property for any decomposition of a tree into a collection of paths.
Let $T$ be a tree rooted at $r$. An \emph{oriented path} on $T$ is a path $P$ with the vertex closest to $r$ being an endpoint.
A \emph{path decomposition} $\mathcal{P}$ of $T$
is a collection of oriented paths on $T$ so that each vertex of $T$ belongs to exactly one path.
We say that a path decomposition $\cP$ is \emph{balanced} if for any vertex $v$,
the path from the root of the tree to $v$ intersects with $O(\log n)$ paths in $\mathcal{P}$.
Both bough decomposition and heavy path decomposition of a tree are balanced, and can be computed in linear time.

With the balanced property, it is straightforward to see that there will be only an $O(\log n)$ overhead if we are allowed to process each path $P\in\cP$ with a runtime related to the size of the subtree rooted at the highest vertex of $P$ (e.g., perform a DFS).
Specifically, given a path $P$, we define $P^\downarrow$ to be the set of all vertices with at least one ancestor in $P$.
Let $E(P^\downarrow)$ be the set of edges incident to at least one vertex in $P^\downarrow$ and let $d(P^\downarrow)=\sum_{v\in P^{\downarrow}} \deg(v)$ be the \emph{unweighted volume} of the subtree.
\Cref{lem:reduc to path} below describes how we will bound the total running time using a balanced path decomposition in this paper.

\begin{lemma}
\label{lem:reduc to path}
If there exists an algorithm that preprocess $G$ and a spanning tree $T$ in $t_p$ time such that, for any path $P$ in a balanced path decomposition $\mathcal{P}$ of $T$, and a specific function $g(e,P)$, computes $g(e, P)$ for all $e\in E(P^\downarrow)$ in total time $O(d(P^\downarrow)\log n)$.
Then, we can compute in $t_p + O(m\log^2 n)$ time  $g(e, P)$ for all $e\in E$ and for all $P$ where $e \in E(P^\downarrow)$.

\end{lemma}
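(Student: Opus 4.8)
The plan is to apply the given algorithm once for every path $P$ in a fixed balanced path decomposition $\mathcal{P}$ of $T$, and then argue that the total cost telescopes to $O(m \log^2 n)$ on top of the one-time preprocessing cost $t_p$. First I would compute a balanced path decomposition $\mathcal{P}$ of $T$ in linear time (e.g. heavy path decomposition), and run the assumed preprocessing once, paying $t_p$. Then, for each $P \in \mathcal{P}$, I invoke the assumed subroutine to compute $g(e,P)$ for all $e \in E(P^\downarrow)$ in time $O(d(P^\downarrow) \log n)$. Note that for an edge $e=(x,y)$, the pairs $(e,P)$ with $e \in E(P^\downarrow)$ are exactly those where $P$ contains an ancestor of $x$ or of $y$; hence summing over all $P \in \mathcal{P}$ really does cover every pair $(e,P)$ required by the statement, so correctness is immediate from the correctness of the subroutine.

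The main work is bounding $\sum_{P \in \mathcal{P}} d(P^\downarrow) \log n = \log n \cdot \sum_{P \in \mathcal{P}} \sum_{v \in P^\downarrow} \deg(v)$. I would swap the order of summation: a vertex $v$ contributes $\deg(v)$ once for each path $P \in \mathcal{P}$ such that $v \in P^\downarrow$, i.e. such that $P$ contains at least one ancestor of $v$. The number of such paths is exactly the number of distinct paths of $\mathcal{P}$ that the root-to-$v$ path intersects, which is $O(\log n)$ by the balanced property of $\mathcal{P}$. Therefore $\sum_{P} d(P^\downarrow) = \sum_{v} \deg(v) \cdot O(\log n) = O(m \log n)$, and multiplying by the per-path $\log n$ factor gives $O(m \log^2 n)$. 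Adding the preprocessing term yields the claimed $t_p + O(m \log^2 n)$ bound.

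The one subtlety — and the only place that needs care — is matching the quantity $d(P^\downarrow)$ as defined ($\sum_{v \in P^\downarrow}\deg(v)$, the unweighted volume of the vertices having an ancestor in $P$) against the charging argument: I must make sure that "$v \in P^\downarrow$" is equivalent to "the root-to-$v$ path meets $P$", and that each path of $\mathcal{P}$ met by that root-to-$v$ path is counted at most once, so that the balanced bound of $O(\log n)$ applies cleanly without any double counting. Since the paths in $\mathcal{P}$ are vertex-disjoint and oriented, "the root-to-$v$ path intersects $P$" is the same as "$P$ contains some ancestor of $v$ (including $v$ itself)", which is exactly the defining condition of $v \in P^\downarrow$; and intersecting each such $P$ contributes one unit to the $O(\log n)$ count. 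With that identification in hand, the double-summation swap and the two $\log n$ factors combine exactly as above, and there is nothing further to check. I do not anticipate any real obstacle here; it is essentially a clean application of the balanced-decomposition property, so the proof is short.
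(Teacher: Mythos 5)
Your proposal is correct and matches the paper's proof: both run the subroutine on every path of a balanced decomposition and bound $\sum_{P\in\mathcal{P}} d(P^\downarrow)=O(m\log n)$ via the $O(\log n)$-paths-per-element property, yielding $t_p+O(m\log^2 n)$. Your charging per vertex degree rather than per edge is only a cosmetic variation of the same counting argument.
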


\begin{proof}
By the property of balanced path decomposition, for each edge $e$, there are at most $O(\log n)$ paths $P$ such that $e\in E(P^\downarrow)$. Therefore, we have
\begin{equation}
\label{eqn:sum of paths}
    \sum_{P^\in \cP} d(P^\downarrow) = O(m\log n) ~.
\end{equation}
For each $P\in \cP$, since the algorithm computes $g(e,P)$ values every $e\in E(P^\downarrow)$ in $O(d(P^\downarrow))$ time. We can compute $g(e,P)$ for all $e\in E$ and all $P$ where $e\in E(P^\downarrow)$ in total time $O(\sum_{P^\in \cP} d(P^\downarrow)\log n) = O(m\log^2 n)$. Adding the preprocessing time $t_p$, the algorithm runs in $t_p+O(m\log^2 n)$ time.
\end{proof}

The usages of \Cref{lem:reduc to path} in this paper are quite similar in the taste: suppose we would like to compute some information (e.g., a minimal incomparable 2-respecting mincut candidate) of an edge $g(e)$, and realizes that $g(e)$ can be computed efficiently from the set $\{g(e, P)\}$ where $e\in P^\downarrow$ (e.g., a minimal 2-respecting mincut candidate with one crossing edge on $P$). Then by applying~\Cref{lem:reduc to path} we can focus on computing $g(e, P)$ values for each specific path $P\in\cP$.
We apply~\Cref{lem:reduc to path} in many cases in \Cref{sec:label-comparable}, \Cref{sec:label-incomparable}, and \Cref{sec:labeling-vertices}. 

The path decomposition $\cP$ we use throughout in this paper will be assumed to be balanced.

\paragraph{Data Structures on Trees.}
The second tool that are extensively used are dynamic tree data structures (and top-tree data structures).
These data structures maintain values associated with vertices, with the list of operations supported in~\Cref{lem:data structures}.

\begin{lemma}
\label{lem:data structures}
There exists a data structure over a dynamic forest of $n$ vertices, supporting the following operations in the worst case $O(\log n)$ time:
\textup{
\begin{itemize}[itemsep=0pt]
    \item $\Link(v,w)$: where $v, w$ are in different trees, links these trees by adding the edge $(v,w)$ to our dynamic forest.
    \item $\Cut(e)$: remove edge $e$ from our dynamic forest.
    \item $\AddP(u,x)$: add $x$ to the value of every vertices on the path from $u$ to the root.
    \item $\MinP^\downarrow(u)$: return argmin of the value of vertex on the path from $u$ to the root, and break tie by finding the deepest one.
    \item $\MinP^\uparrow(u)$: the same as $\MinP^\downarrow(u)$, but break tie by finding the highest one.
\end{itemize}
}
All these operations can be supported with a dynamic tree \cite{sleator1983data}. Besides, we need the following operations, which can be implemented using top-tree \cite{alstrup2005maintaining}.
\textup{\begin{itemize}
    \item $\MinTreeDown(u)$: returns a vertex $v$ with minimum value in the tree $T$ that contains $u$, breaking tie by finding the one with the smallest subtree size $|v^\downarrow|$. %
    \item $\MinTreeUp(u)$: the same as $\MinTreeDown(u)$, %
    but break tie by finding the one with the largest subtree size $|v^\downarrow|$.
    \item $\MinNonPath(v,w)$%
    : where $v, w$ are in the same tree $T$, return the vertex $u$ with the minimum value such that $u\in T$ but $u$ is not in the path between $v$ and $w$.
\end{itemize}}
\end{lemma}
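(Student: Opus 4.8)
The plan is to obtain every operation as a standard augmentation of two well-known dynamic-tree structures --- Sleator--Tarjan link--cut trees \cite{sleator1983data} for the five path operations, and the top trees of Alstrup et al.\ \cite{alstrup2005maintaining} for the three whole-tree operations --- so that nothing beyond bookkeeping is required. For $\Link$, $\Cut$, $\AddP$, $\MinPDown$, $\MinPUp$ I would use the worst-case $O(\log n)$ variant of link--cut trees, which partitions the rooted forest into preferred paths, stores each preferred path as a balanced BST keyed by depth (so that an in-order traversal visits the path from its root end to its leaf end), and makes the entire root-to-$u$ path into one such BST via an $O(\log n)$-time $\texttt{expose}(u)$. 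I would augment each BST node with (i) the minimum vertex value over its BST-subtree and (ii) a lazy ``pending add'' tag that, when pushed down, increments all stored values and subtree-minima below it. Then $\AddP(u,x)$ is $\texttt{expose}(u)$ followed by placing a pending-add tag $x$ at the BST root; $\MinPDown(u)$ is $\texttt{expose}(u)$ followed by a descent from the BST root that, pushing tags as it goes, steers toward whichever of \{right subtree, current node, left subtree\} realizes the global minimum, always preferring the deeper option in that order; and $\MinPUp(u)$ is the same with the shallower option preferred. Since the BST order is depth order this returns the deepest resp.\ highest minimizer, and since $\Link$ and $\Cut$ are the native operations and every augmented field recombines in $O(1)$ time, each runs in $O(\log n)$ worst-case time.

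For $\MinTreeDown$, $\MinTreeUp$, $\MinNonPath$ I would use top trees, which maintain a rooted tree under $\Link$/$\Cut$ and support an $\texttt{expose}$ of up to two vertices in $O(\log n)$ worst-case time, every cluster carrying $O(1)$ words of user data combined by constant-time merge rules. In each cluster $C$ I would store: the number of vertices of $C$; the minimum vertex value over all vertices of $C$ together with, among the vertices attaining that minimum, the extremal subtree size $|v^\downarrow|$ and a witness vertex (smallest $|v^\downarrow|$ for the ``$\downarrow$'' flavor, largest for the ``$\uparrow$'' flavor, one such record per flavor); and the same minimum restricted to the vertices of $C$ that are not on the cluster's boundary path. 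Then $\MinTreeDown(u)$ and $\MinTreeUp(u)$ are read directly off the root cluster of $u$'s tree, while $\MinNonPath(v,w)$ is read off the ``not-on-the-boundary-path'' field of the root cluster after $\texttt{expose}(v,w)$, because exposing makes the $v$--$w$ path the boundary path of that cluster and the vertices avoiding it are exactly the ones that field tracks.

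I expect the only genuinely delicate step to be maintaining the subtree sizes $|v^\downarrow|$ across the top-tree merges: a cluster is a path between its two boundary vertices together with everything hanging off that path, so the subtree size of a vertex interior to a cluster depends on material outside the cluster, and the $\texttt{compress}$/$\texttt{rake}$ merge rules must therefore additionally carry, for each boundary vertex, the number of not-yet-incorporated vertices destined to lie below it. I would check that this extra information is still $O(1)$ per cluster, that it composes correctly under both merge types, and that it lets us recover $|v^\downarrow|$ for the witness vertices --- the familiar ``subtree aggregates on top trees'' pattern. Everything else reduces mechanically to textbook augmentations, and with this piece in place all eight operations meet the stated $O(\log n)$ worst-case bound.
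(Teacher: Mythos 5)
Your overall decomposition is the same as the paper's (Sleator--Tarjan trees for the path operations, Alstrup et al.'s top trees for the whole-tree operations), and the path part of your plan is fine. The real difference is how ties are broken, and this is exactly where your proposal has a weak spot that the paper sidesteps. The paper does not build any new cluster invariants at all: it perturbs each vertex value by $\pm\epsilon|v^\downarrow|$ with $\epsilon \ll 1/n$ in preprocessing. On a root-to-$u$ path all candidates are comparable, so tied values become distinct and the standard \texttt{mincost}-style primitive already returns the highest (or, flipping the sign, the deepest) minimizer, giving \MinPUp{} and \MinPDown{}; for \MinTreeDown{}/\MinTreeUp{} the tie-break is \emph{by} subtree size, so the same perturbation makes the vanilla tree-minimum primitive of top trees (Theorem~4 of Alstrup et al.) return the right witness, and \MinNonPath{} is just the \MaxNonPath{} primitive from the proof of that theorem with signs flipped. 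No augmented subtree-size bookkeeping is needed anywhere.

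The step you yourself flag as delicate --- carrying $|v^\downarrow|$ through \texttt{compress}/\texttt{rake} merges so that clusters can store an argmin-by-(value, size) witness --- is more than delicate if the sizes are taken with respect to the current dynamic forest: a single \Link{} or \Cut{} changes $|v^\downarrow|$ for all ancestors of the affected edge, i.e.\ for up to $\Theta(n)$ vertices at once, so witnesses chosen by this key inside $\Theta(n)$ clusters can go stale and cannot be repaired with $O(1)$ work per cluster in the obvious way (the usual ``subtree aggregates on top trees'' pattern computes an aggregate of a subtree at query time; it does not support comparisons inside clusters keyed on per-vertex quantities that change globally). The rescue is the observation the paper exploits implicitly: the tie-breaking sizes refer to the fixed spanning tree $T$, and every use in the algorithm only cuts and immediately re-links edges of $T$ above the queried region, so the relevant $|v^\downarrow|$ never change and can be precomputed and folded into the values once. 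With that fix your construction goes through, but it then collapses to the paper's argument. One further small point: since the same values are updated by \AddP{} and queried by \MinTreeUp{}/\MinNonPath{} (e.g.\ in the incomparable-case preprocessing), a two-structure design must mirror the path updates in the top tree as well (top trees support lazy path additions, as in Theorem~4), or you should simply run everything on the one top tree; your write-up omits this synchronization.
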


\begin{proof}
    The operations $\Link(v,w)$, $\Cut(e)$ and $\AddP(u,x)$ are basic primitives of dynamic tree \cite{sleator1983data}. To implement $\MinP^\uparrow(u)$ and $\MinP^\downarrow(u)$, we just need to show how to break tie, since finding the argmin along the path is also a primitive. WLOG we consider implementing $\MinP^\uparrow(u)$. In the preprocessing step, we add  $-\epsilon|u^\downarrow|$ to the value of vertex $u$ where $\epsilon \ll 1/n$. Then for the values of two vertices equals before, they will become difference since the two vertices are comparable for they are on the path from $u$ to the root, and the value of the higher one will become smaller. By reversing the sign the same approach works for $\MinP^\downarrow(u)$.
    
    Since $\MinTree(u)$ is a primitive of top-tree\footnote{Theorem~4 in \cite{alstrup2005maintaining}.}, we can use the same approach as above to break tie. Finally $\MinNonPath(v,w)$ can be implemented by the \MaxNonPath{} primitive of a top-tree, which appears in the proof of Theorem~4 in \cite{alstrup2005maintaining}.
\end{proof}

\section{Comparable 2-respecting Minimal Mincuts of Edges}

\label{sec:label-comparable}

In this section, we present the algorithm computing the minimal mincut of edge when it is a comparable 2-respecting mincut of $T$. Henceforth, for every edge $e\in E$ we call $T$ the \emph{right} tree for $e$ if the minimal mincut of $e$ is a comparable 2-respecting mincut of $T$. (In this case we also call $e$ a \emph{right edge} in $T$.)
In particular, we can represent this minimal mincut using a vertex pair $(u_e, l_e)$ on the tree, indicating that $u_e^\downarrow\setminus l_e^\downarrow$ is the comparable 2-respecting mincut we found for the edge $e$.
For convenience, for any comparable 2-respecting mincut $w^\downarrow\setminus v^\downarrow$, we call $v$ the \emph{lower vertex} and $w$ the \emph{upper vertex} in the cut $w^\downarrow\setminus v^\downarrow$.

\Cref{lem:labeling-edges-comparable} summarizes the algorithm that computes such a vertex pair $(u_e, l_e)$ for every edge $e$ on any given spanning tree $T\in \T$ in $O(m\log^2 n)$ time.
Notice that when we analyze the correctness of the algorithm on a spanning tree $T$, it suffices to focus on the edges where $T$ is the right tree.
In the case that $T$ is \emph{not} the right tree for an edge $e$, it could be that the returned vertex pair $(u_e, l_e)$ be either some arbitrary mincut or it could be $(\NULL, \NULL)$.

\begin{lemma}
    \label{lem:labeling-edges-comparable}
    There is an algorithm that, given a graph $G=(V,E)$ and a spanning tree $T\in \T$, in total time $O(m\log^2 n)$ computes, for every edge $e \in E$, a vertex pair $(u_e,l_e)$ where $u_e,l_e \in V \cup \{ \NULL \}$
    with the following guarantee: if $T$ is the right tree for $e$, then
    $u_e^\downarrow \setminus l_e^\downarrow$ is the minimal mincut of $e$.
\end{lemma}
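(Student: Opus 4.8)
The plan is to use the path decomposition machinery of \Cref{lem:reduc to path}: fix a balanced path decomposition $\cP$ of $T$, and for each oriented path $P\in\cP$ compute, for every edge $e\in E(P^\downarrow)$, the best comparable 2-respecting mincut whose \emph{upper} tree edge has its lower endpoint on $P$. By \Cref{lem:reduc to path}, if each such per-path computation runs in $O(d(P^\downarrow)\log n)$ time, the total is $t_p + O(m\log^2 n)$, and the preprocessing $t_p$ will clearly be $O(m\log^2 n)$ (building dynamic-tree/top-tree structures, LCA, and the path decomposition). The first key observation — which I expect the paper to prove in one of the structural lemmas it advertises (\Cref{lem:uniqueness-of-lower-vertex}, \Cref{cor:uniqueness-of-lower-vertices}, \Cref{lem:general-case-property}) — is that once the upper vertex $u_e$ is pinned down to lie on a fixed path $P$, the lower vertex $l_e$ is uniquely determined, and moreover the whole cut is ``half-decoupled'': finding $l_e$ amounts to a single minimization of a cut-value function over the relevant subtree. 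So the per-path task reduces to: for each candidate upper vertex $u$ on $P$ (in order, say from the deep end up), maintain under dynamic-tree operations the function $f_u(v)=\C(u^\downarrow\setminus v^\downarrow)$ for $v$ ranging over $u^\Downarrow$, and read off the argmin with the correct tie-breaking (deepest, via $\MinP^\downarrow$ / $\MinTreeDown$) to get the \emph{minimal} mincut; then propagate to each edge $e$ via its LCA.

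The concrete algorithm I would run per path $P$: process $P^\downarrow$ by a DFS-style sweep, using the dynamic-tree primitives of \Cref{lem:data structures} to maintain, for the current upper vertex $u\in P$, the value $\C(u^\downarrow\setminus v^\downarrow)$ as a function of the lower vertex $v$. When we move $u$ one step down the path to its child $u'$ on $P$, the cut function changes only by the edges incident to vertices in $u^\downarrow\setminus (u')^\downarrow$ (the ``hanging'' subtrees of $u$ other than the one containing $u'$), so each such update is an $\AddP$ along a bounded number of affected root-paths; amortized over the sweep the number of such $\AddP$ calls is $O(d(P^\downarrow))$, giving $O(d(P^\downarrow)\log n)$ time. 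Simultaneously, each non-tree edge $e=(x,y)$ with $e\in E(P^\downarrow)$ gets ``activated'' when $u$ passes $\mathrm{LCA}(x,y)$; at that moment we query $\MinP^\downarrow$ (or $\MinTreeDown$, depending on whether the lower vertex is forced onto a subpath) to obtain the best lower vertex $l_e$ for this $u$, record the candidate $(u,l_e)$ if its value equals $\lambda$, and keep across all activations of $e$ the one of least size (deepest $u$, then deepest $l_e$). The output pair $(u_e,l_e)$ is the best recorded candidate over all $P$; if none is recorded we output $(\NULL,\NULL)$.

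For correctness, the key points are: (i) if $T$ is the right tree for $e$, the true minimal mincut is some $u_e^\downarrow\setminus l_e^\downarrow$ with $l_e\Perp u_e$ and $\mathrm{LCA}(e)\in l_e^\downarrow$, hence $e\in E(P^\downarrow)$ where $P\ni u_e$, so the candidate is considered on that path; (ii) by \Cref{lem:uniqueness-minimal} the minimal mincut is unique, and the uniqueness-of-lower-vertex structural lemma guarantees that, among all cuts of value $\lambda$ of this shape with upper vertex $u_e$, the one our tie-breaking selects is exactly the minimal one, so we neither miss it nor get confused by ties; (iii) anything we output in the ``wrong tree'' case is by construction either a genuine mincut or $(\NULL,\NULL)$, as the lemma permits. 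The main obstacle, I expect, is step (ii) combined with the amortization in the sweep: making the dynamic-tree updates for moving the upper vertex down $P$ correct \emph{and} bounded by $O(d(P^\downarrow))$ total, while arguing that the argmin-with-tie-breaking returned by $\MinP^\downarrow$/$\MinTreeDown$ coincides with the unique minimal lower vertex — this is precisely where the advertised structural lemmas (\Cref{lem:uniqueness-of-lower-vertex}, \Cref{cor:uniqueness-of-lower-vertices}, \Cref{lem:general-case-property}) do the real work, and the bookkeeping of which subtree-edges enter/leave the maintained function at each step is the delicate part of the time analysis.
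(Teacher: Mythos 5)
There is a genuine gap, and it is precisely the ``number of queries'' obstacle that the paper's two-phase structure is designed to circumvent. In your per-path sweep you fix the candidate \emph{upper} vertex $u\in P$ and query each edge $e$ once, at the moment $u$ first becomes an ancestor of $\lca_e$. But the true upper vertex $u_e$ of the minimal mincut of $e$ may lie strictly higher on the same path than that activation vertex: at the activation vertex there may be no comparable mincut containing $e$ at all (or only a non-minimal one), and since $e$ is never queried again on $P$, the minimal mincut is missed; the other $O(\log n)$ paths cannot recover it because $u_e$ does not lie on them. The obvious fix --- re-querying $e$ at every ancestor of $\lca_e$ on $P$ --- destroys the time bound (it is $\Theta(mk)$ on a single long path). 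You cite \Cref{lem:uniqueness-of-lower-vertex}, \Cref{cor:uniqueness-of-lower-vertices}, and \Cref{lem:general-case-property} as ``doing the real work,'' but your algorithmic skeleton has no place where they apply: \Cref{lem:uniqueness-of-lower-vertex} pinpoints the \emph{lower} vertex as the unique top element of $L_{x_e}\setminus Y_e$ when querying at the specific vertex $x_e=\LCA(l_e,\lca_e)$ during a traversal of the whole tree, and \Cref{lem:general-case-property} is the packaging statement that lets one query serve an entire batch of edges forwarded up the traversal --- neither is about decoupling the problem per path of upper vertices.

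The paper's proof is organized quite differently. It first computes the lower vertex $l_e$ of every right edge (\Cref{lem:computing-lower-vertex}) by (i) precomputing the highest partner $H(v)$ of every vertex with a per-path sweep maintaining the comparable precut values $\CC_v(w)$, and (ii) running a single post-order traversal of $T$ with a top-tree storing the valid lower vertices $L_u$, where unassigned edges are grouped into packages so that one $\MinNonPath$/$\MinTreeUp$ query per package suffices --- correctness of this batching is exactly \Cref{lem:general-case-property}, and correctness of the query answer is \Cref{lem:uniqueness-of-lower-vertex} together with \Cref{cor:uniqueness-of-lower-vertices}. Only afterwards does it run a per-path sweep (\Cref{thm:reduction-to-lower-vertices,lem:labeling-upper-vertex}), and that sweep is over \emph{lower} vertices $v$ with queries $\MinPDown(x_e)$ answered on the root path, which is unproblematic because by then $l_e$, and hence the single query point $x_e$, is known. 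Your proposal inverts this order (sweep over upper vertices, minimize over lower vertices in the subtree); the dynamic-tree maintenance you describe is feasible, but without an analogue of the packaging/pinpointing mechanism the query schedule is either incorrect or superlinear, so the proof as proposed does not go through.
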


Our algorithm is divided into two main steps. In the first step the lower veritces $l_e$ are computed. Then based on $l_e$, in the second step the algorithm finds their correponding upper vertices $u_e$.

It turns out that the second step becomes simpler once we have computed the lower vertices $l_e$ for all edge $e$.
This reduction is presented in \Cref{sec:reduc to lower}.
Surprisingly, there is a deterministic algorithm that guarantees to find lower vertices $l_e$ for each right edge $e$ efficiently. We present the most important structural property supplemented with the algorithm in \Cref{sec:compute lower vertex}.

\subsection{Reduction to Computing Lower Vertices}
\label{sec:reduc to lower}

Fix a spanning tree $T\in \T$ and suppose that we have already obtained all lower vertices $l_e\in V\cup \{\NULL\}$ for each edge.
\Cref{thm:reduction-to-lower-vertices} states that there exists an efficient algorithm that guarantees to find corresponding upper vertices $u_e$ for all right edges on $T$.

\begin{lemma}[Reduction to lower vertices]
\label{thm:reduction-to-lower-vertices}
There is an algorithm that, given a graph $G=(V,E)$, a spanning tree
$T$, and a lower vertex $l_{e}\in V \cup \{\NULL\}$ for every edge $e \in E$, in time $O(m\log^2 n)$ computes an upper vertex $u_{e}\in V\cup \{\NULL\}$
for every edge $e\in E$ with the following guarantee: 
If $T$ is the right tree for $e$ and $l_{e}$ is the lower vertex of the minimal mincut of edge $e$, then $u_{e}$ is the upper vertex of the minimal mincut.
\end{lemma}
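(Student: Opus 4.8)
The plan is to first determine, given the lower vertex $l_e$, exactly which vertex $u_e$ must be, and then realize the resulting search with the primitives of \Cref{lem:data structures}, using the balanced path decomposition to stay inside the time budget. \textit{The characterization.} Let $\lca_e=\LCA(a,b)$ where $a,b$ are the endpoints of $e$, set $z_e:=\LCA(l_e,\lca_e)$, and put $u_e:=\NULL$ whenever $l_e=\NULL$. I claim that, whenever the hypothesis of the lemma holds for $e$, the vertex $u_e$ is the \emph{deepest} vertex $w$ on the tree path from $r$ to $z_e$ with $\C(w^\downarrow\setminus l_e^\downarrow)=\lambda$. The reasoning has two parts. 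First, in any comparable $2$-respecting cut $w^\downarrow\setminus l_e^\downarrow$ that contains both $a$ and $b$, the upper vertex $w$ must be a common ancestor of $a$ and $b$ and a strict ancestor of $l_e$; since $a,b\notin l_e^\downarrow$ forces $l_e$ not to be an ancestor of $\lca_e$, the set of such $w$ is exactly the $r$-to-$z_e$ path, so the genuine minimal mincut of $e$ equals $(u_e)^\downarrow\setminus l_e^\downarrow$ for some $u_e$ on this path with $\C((u_e)^\downarrow\setminus l_e^\downarrow)=\lambda$. Second, if a strictly deeper $w'$ on the same path also had $\C((w')^\downarrow\setminus l_e^\downarrow)=\lambda$, then $(w')^\downarrow\setminus l_e^\downarrow$ would be a mincut separating both endpoints of $e$ from $r$ but of strictly smaller size (as $(w')^\downarrow\subsetneq(u_e)^\downarrow$ and $l_e^\downarrow$ lies in both), contradicting the minimality --- hence, by \Cref{lem:uniqueness-minimal}, uniqueness --- of the minimal mincut. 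So $u_e$ is the deepest such $w$, as claimed.

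\textit{Turning the characterization into queries.} I will use the identity $\C(w^\downarrow\setminus v^\downarrow)=\C(w^\downarrow)+\C(v^\downarrow)-2\,\mathrm{cov}(v,w)$ for $v\in w^\downarrow$, where $\mathrm{cov}(v,w):=\C(v^\downarrow,V\setminus w^\downarrow)$ is the total weight of graph edges with one endpoint in $v^\downarrow$ and the other outside $w^\downarrow$. The values $\C(w^\downarrow)$ are precomputed in $O(m+n)$ time via the dynamic program of \Cref{lem:computing-all-1-respecting}, and all the $\lca_e,z_e$ in $O(m)$ time by offline $\LCA$. Fix a lower vertex $v$: a graph edge $(x,y)$ with $x\in v^\downarrow$, $y\notin v^\downarrow$ contributes $w(x,y)$ to $\mathrm{cov}(v,w)$ precisely for the vertices $w$ on the path from $v$ up to (but excluding) $\LCA(v,y)=\LCA(x,y)$; hence, starting from a dynamic tree whose value at each $w$ is $\C(w^\downarrow)$, two $\AddP$ operations per such edge --- $-2w(x,y)$ issued at $v$ and $+2w(x,y)$ issued at $\LCA(x,y)$ to cancel it above --- make the stored value at $w$ equal $\C(w^\downarrow)-2\,\mathrm{cov}(v,w)$ for every ancestor $w$ of $v$. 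Since $\C(w^\downarrow\setminus v^\downarrow)\ge\lambda$ for every $w$ on the $r$-to-$z_e$ path (each such cut is nonempty and proper, as every such $w$ is a strict ancestor of $l_e$), one $\MinPDown$ query along that path returns the deepest minimizer, and we accept it as $u_e$ iff the value there equals $\lambda-\C(v^\downarrow)$; the deepest-tie-break built into $\MinPDown$ is exactly what the characterization requires.

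\textit{Staying within the budget --- the main obstacle.} Installing the $\mathrm{cov}$ terms afresh per distinct lower vertex is too slow, since one graph edge can be ``one endpoint inside $v^\downarrow$'' for $\Theta(n)$ different $v$, giving a naive cost of $\Omega(mn)$. Instead I will invoke \Cref{lem:reduc to path} with $g(e,P)$ defined as the deepest vertex of $P$ that lies on the $r$-to-$z_e$ path and for which $w^\downarrow\setminus l_e^\downarrow$ is a mincut, or $\NULL$ if none; then $u_e$ is the deepest of the $O(\log n)$ values $g(e,P)$ over the paths $P$ meeting the $r$-to-$z_e$ path, and each such $P$ satisfies $e\in E(P^\downarrow)$ --- a valid upper vertex in $P$ is an ancestor of $a$, which forces the top of $P$ to be an ancestor of $a$, so $a\in P^\downarrow$. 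It remains to compute $g(\cdot,P)$ for all of $E(P^\downarrow)$ in $O(d(P^\downarrow)\log n)$ time for one path $P$. Using that $P^\downarrow$ is exactly the subtree rooted at $h(P)$, I would temporarily $\Cut$ off that subtree (so $\MinPDown$ queries, truncated at the precomputed vertex of $P$ closest to $z_e$, see exactly the part of the $r$-to-$z_e$ path lying in $P$), bucket the edges $e\in E(P^\downarrow)$ with $h(P)$ an ancestor of $z_e$ by their lower vertex, run a DFS of the subtree that keeps the stored value at each $w$ equal to $\C(w^\downarrow)-2\,\mathrm{cov}(u,w)$ for the current DFS vertex $u$ and answers bucket $u$ when the DFS reaches $u$, and finally undo all $\AddP$ operations and $\Link$ the subtree back. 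The crux --- and the step I expect to be the real work --- is to show that this invariant can be maintained with only $O(d(P^\downarrow))$ $\AddP$ operations over the whole DFS: each graph edge of $E(P^\downarrow)$ must enter and leave the maintained $\mathrm{cov}$-sum only $O(1)$ times via DFS enter/leave events, which needs a careful incremental rewriting of $\mathrm{cov}(u,\cdot)$ as $u$ descends (to cope with the ``moving base'' of the $\AddP$ calls), possibly leaning on a further refinement of the decomposition to keep the amortization tight. Granting this, summing $O(d(P^\downarrow)\log n)$ over the decomposition $\cP$ gives $O(m\log^2 n)$ by \Cref{lem:reduc to path}, and adding the $O(m+n)$ preprocessing proves the lemma.
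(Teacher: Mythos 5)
Your characterization of $u_e$ (the deepest ancestor $w$ of $z_e=\LCA(l_e,\lca_e)$ with $\C(w^\downarrow\setminus l_e^\downarrow)=\lambda$) and the way you turn it into $\MinPDown$ queries against stored values $\C(w^\downarrow)-2\,\mathrm{cov}(l_e,w)$ are correct and equivalent to what the paper does. The genuine gap is exactly the step you flagged as ``the real work'' and left open: maintaining, during a DFS of the subtree $P^\downarrow$, the value $\C(w^\downarrow)-2\,\mathrm{cov}(u,w)$ for the \emph{current} DFS vertex $u$ using only $O(d(P^\downarrow))$ $\AddP$ operations. This cannot be achieved by enter/leave events: an edge $(x,y)$ with $x\in P^\downarrow$ belongs to the maintained sum precisely while the current DFS vertex is an ancestor of $x$, and that predicate toggles once for every descent into an off-path child at every ancestor of $x$ inside $P^\downarrow$, so the total number of insertions/removals is on the order of $\sum_x \deg(x)\cdot(\text{number of branchings above }x)$, which is $\Theta(mn)$ in the worst case (e.g.\ a caterpillar-shaped subtree). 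No further refinement of the decomposition fixes this, because the obstruction is the non-monotone motion of the base set $u^\downarrow$ under DFS, not the shape of $\cP$.

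The paper's proof avoids this by flipping the orientation you chose. Instead of fixing the path $P$ that must contain the \emph{upper} vertex and letting the lower vertex roam over $P^\downarrow$, it buckets every edge at its lower vertex ($Q_v=\{e\mid l_e=v\}$) and, for the decomposition path $P=(v_1,\dots,v_k)$ containing $v$, scans $v_1,\dots,v_k$ bottom-up while maintaining $val[w]=\CC_{v_i}(w)$ for all ancestors $w$ of the current $v_i$; since the base set $v_i^\downarrow$ only grows along the scan, each edge incident to $P^\downarrow$ triggers $O(1)$ $\AddP$ calls per path, and when the scan reaches $v$ it answers all of $Q_v$ with a single $\MinPDown(x_e)$ per edge over the whole root path, so there is no need to restrict queries to $P$ or to combine a deepest answer over $O(\log n)$ paths. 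The time accounting then goes through the variant of the decomposition lemma with the extra $\sum_{v\in P}|Q_v|$ term (\Cref{lem:amortized-path-decomposition}) rather than \Cref{lem:reduc to path}, since each edge is queried on one path only. If you reorganize your per-path routine this way, your characterization and acceptance test (accept $w$ iff the stored value equals $\lambda-\C(l_e^\downarrow)$) do complete the proof.
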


Intuitively, for any lower vertex $l_e$ of $e=(u_1, u_2)$, if we obtain the list of \emph{comparable partners} who form comparable 2-respecting mincuts with $l_e$, then the upper vertex we are looking for must be the lowest ancestor of $\LCA(l_e, \LCA(u_1, u_2))$ which appears in the list.
However, obtaining the list is inefficient.
Fortunately, based on the path decomposition,
we can maintain all such candidates of upper vertices on-the-fly and answer all the queries using a dynamic data structure. Specifically,  
whenever the algorithm processes a vertex $v$ on a path $P\in\cP$, this data structure finds all upper vertices for all edges in $Q_v=\{e\in E \ |\ l_e=v\}$, which is 
summarized in \Cref{lem:labeling-upper-vertex}.
Hence, by applying a very similar path decomposition framework as in \Cref{lem:reduc to path}, we can compute all upper vertices efficiently, which is summarized in \Cref{lem:amortized-path-decomposition}.

\begin{lemma}
\label{lem:labeling-upper-vertex}
    Let $\cP$ be a path decomposition of $T$. We can preprocess the graph $G$ and the spanning tree $T$ in $O(m\log n)$ time so that, given any path $P \in \cP$, we can compute the upper vertex $u_e$ for \Cref{thm:reduction-to-lower-vertices} for every edge $e\in \bigcup_{v\in P} Q_v$ in $O(d(P^\downarrow)\log n + \sum_{v\in P}|Q_v|\log n)$ time.
\end{lemma}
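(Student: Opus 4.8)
The plan is to reduce each upper vertex $u_e$ to a single argmin query on a root path in a dynamic tree whose vertex values encode weights of comparable cuts $w^\downarrow\setminus l_e^\downarrow$, maintaining those values incrementally while scanning the vertices of $P$ from the top down.

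First I would pin down $u_e$. Write $\lca_e:=\LCA(u_1,u_2)$ for $e=(u_1,u_2)$, and let $a_e:=\LCA(l_e,\lca_e)$. A comparable cut $w^\downarrow\setminus l_e^\downarrow$ (so $l_e\in w^\downarrow$) contains $e$ iff $w$ is also an ancestor of $\lca_e$ and neither endpoint of $e$ lies in $l_e^\downarrow$; the latter is automatic when $l_e$ is the lower vertex of the minimal mincut of $e$, because $e$ then lies inside $u_e^\downarrow\setminus l_e^\downarrow$ and hence outside $l_e^\downarrow$. Along the root path the subtree sizes strictly decrease, so among all such $w$ with $\C(w^\downarrow\setminus l_e^\downarrow)=\lambda$ the one inducing the minimal mincut is the deepest. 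Therefore $u_e$ is the deepest vertex on the path from $a_e$ to the root whose cut weight equals $\lambda$ (and $\NULL$ if no such vertex exists).

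Next I would set up the value maintenance. Put $f(x):=\C(x^\downarrow)$; all $f(x)$ are obtained in $O(m+n)$ from \Cref{lem:computing-all-1-respecting}. Using the identity $\C(w^\downarrow\setminus v^\downarrow)=f(w)+f(v)-2\,\C(v^\downarrow,V\setminus w^\downarrow)$ (valid for $v\in w^\downarrow$), it suffices to keep a dynamic tree over $T$ (\Cref{lem:data structures}) maintaining the invariant: when the scan of $P=(p_1,\dots,p_k)$ is at $v=p_i$, the value at every ancestor $w$ of $v$ equals $f(w)-2\,\C(v^\downarrow,V\setminus w^\downarrow)$, so the true cut weight is the stored value plus $f(v)$. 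Initialize all values to $f(\cdot)$, then install the term for $v=p_1$: for every $x\in p_1^\downarrow$ and every incident edge $(x,y)$ with $y\notin x^\downarrow$, call $\AddP(x,-2w(x,y))$ and $\AddP(\LCA(x,y),2w(x,y))$ (here $w(x,y)$ denotes the weight of edge $xy$); this pair adds $-2w(x,y)$ exactly along the portion of the $x$-to-root path strictly below $\LCA(x,y)$, which is precisely where $(x,y)$ contributes to $\C(\cdot^\downarrow,V\setminus w^\downarrow)$.

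To advance from $p_i$ to $p_{i+1}$, let $R_i:=p_i^\downarrow\setminus p_{i+1}^\downarrow$; for every $w$ above $p_i$ one has $\C(p_i^\downarrow,V\setminus w^\downarrow)=\C(R_i,V\setminus w^\downarrow)+\C(p_{i+1}^\downarrow,V\setminus w^\downarrow)$ with no double counting, since both $R_i$ and $p_{i+1}^\downarrow$ lie inside $w^\downarrow$. So it is enough to cancel the $R_i$-term: for every $x\in R_i$ and incident edge $(x,y)$ with $y\notin x^\downarrow$, replay $\AddP(x,2w(x,y))$ and $\AddP(\LCA(x,y),-2w(x,y))$; a short induction shows the invariant is restored for all ancestors of $p_{i+1}$. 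When the scan reaches $v=p_i$, for each $e\in Q_v$ compute $a_e=\LCA(v,\lca_e)$ (output $\NULL$ if $a_e=v$), call $\MinPDown(a_e)$ to obtain the deepest argmin $w^\star$ on the path from $a_e$ to the root and its value, and set $u_e:=w^\star$ if that value equals $\lambda-f(v)$, else $u_e:=\NULL$; this is correct by the first step because every $w^\downarrow\setminus v^\downarrow$ with $w$ a proper ancestor of $v$ is a genuine cut, hence of weight $\ge\lambda$. For the running time, the sets $R_1,\dots,R_{k-1}$ are pairwise disjoint subsets of $P^\downarrow$, so the initialization together with all advances issue $O(\sum_{x\in P^\downarrow}\deg(x))=O(d(P^\downarrow))$ calls to $\AddP$, each costing $O(\log n)$; the $\sum_{v\in P}|Q_v|$ queries cost $O(\log n)$ each; and after finishing $P$ we undo the recorded $\AddP$ calls to reset every value to $f(\cdot)$, again within $O(d(P^\downarrow)\log n)$. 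Preprocessing — the dynamic tree with initial values $f(\cdot)$, an LCA oracle and the values $\lca_e$, and the buckets $Q[\cdot]$ and adjacency lists — fits in $O(m\log n)$. I expect the main obstacle to be precisely this amortization: an advance must touch only edges incident to $R_i$, not all of $p_i^\downarrow$, which hinges on the additivity of $\C(\cdot^\downarrow,V\setminus w^\downarrow)$ over the partition $p_i^\downarrow=R_i\cup p_{i+1}^\downarrow$ (valid exactly because $w$ lies above $p_i$) together with the two-$\AddP$ realization of the ``capped'' path update.
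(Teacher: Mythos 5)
Your proposal is correct and takes essentially the same route as the paper: scan each path while maintaining, in a dynamic tree, values at ancestors that differ from $\C(w^\downarrow\setminus v^\downarrow)$ only by a quantity depending on the current vertex $v$, answer each $e\in Q_v$ with a single $\MinPDown$ query from $\LCA(l_e,\lca_e)$, verify the candidate against $\lambda$, and roll back, for $O(d(P^\downarrow)\log n+\sum_{v\in P}|Q_v|\log n)$ per path after $O(m\log n)$ preprocessing. The only deviations are cosmetic: you scan top-down and cancel contributions of $p_i^\downarrow\setminus p_{i+1}^\downarrow$ while the paper scans bottom-up and adds contributions, and you maintain $\C(w^\downarrow)-2\C(v^\downarrow,V\setminus w^\downarrow)$ via paired \AddP{} calls instead of the paper's precut value $\CC_v(w)=\C(w^\downarrow)+2\C(v^\downarrow,w^\downarrow)$, which differ only by a constant depending on $v$.
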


\begin{lemma}[A variant of path decomposition]
\label{lem:amortized-path-decomposition}
 Let $g$ be a function of $e\in E$.
    If there exists an algorithm that preprocess $G$ and a spanning tree $T$ in $t_p$ time such that, for any path $P$ in a balanced path decomposition $\mathcal{P}$ of $T$, any partition of subset of $E'\subseteq E$ into $\{Q_v\}_{v\in V}$, computes $g(e)$ for all $e\in \bigcup_{v\in P} Q_v$
    in total time $O((d(P^\downarrow) + \sum_{v\in P}|Q_v|)\log n)$, then we can compute in $t_p + O(m\log^2 n)$ time $g(e)$ for all $e\in E'$.
\end{lemma}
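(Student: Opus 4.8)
The plan is to mimic the proof of \Cref{lem:reduc to path} almost verbatim, the only new ingredient being that the per-path running time now also charges a term $\sum_{v\in P}|Q_v|\log n$ coming from the sizes of the buckets, and we must argue that these terms also sum to $O(m\log^2 n)$ (in fact to $O(m\log n)$). First I would recall that, since $\mathcal{P}$ is balanced, every edge $e\in E$ lies in $E(P^\downarrow)$ for at most $O(\log n)$ paths $P\in\mathcal{P}$, and so $\sum_{P\in\mathcal{P}} d(P^\downarrow) = O(m\log n)$, exactly as in \eqref{eqn:sum of paths}. The second observation, which is the genuinely new point, is that the buckets $\{Q_v\}_{v\in V}$ partition the subset $E'\subseteq E$: each edge $e\in E'$ belongs to exactly one bucket $Q_{v}$, namely the one indexed by its designated vertex $v$, and that vertex $v$ lies on exactly one path $P\in\mathcal{P}$. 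Hence
\begin{equation}
\label{eqn:sum of buckets}
\sum_{P\in\mathcal{P}}\ \sum_{v\in P} |Q_v| \;=\; \sum_{v\in V} |Q_v| \;=\; |E'| \;\le\; m~.
\end{equation}

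Next I would run the assumed per-path algorithm on every path $P\in\mathcal{P}$. By hypothesis, for a fixed $P$ it takes $O\bigl((d(P^\downarrow)+\sum_{v\in P}|Q_v|)\log n\bigr)$ time and outputs $g(e)$ for every $e\in\bigcup_{v\in P}Q_v$. Summing over all $P\in\mathcal{P}$ and using \eqref{eqn:sum of paths} together with \eqref{eqn:sum of buckets}, the total time over all paths is
\begin{equation}
O\!\left(\log n\cdot\Bigl(\textstyle\sum_{P\in\mathcal{P}} d(P^\downarrow) + \sum_{P\in\mathcal{P}}\sum_{v\in P}|Q_v|\Bigr)\right)
= O\bigl(\log n\cdot (m\log n + m)\bigr) = O(m\log^2 n)~.
\end{equation}
Adding the preprocessing cost $t_p$ gives the claimed $t_p + O(m\log^2 n)$ bound. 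For correctness, since $\{Q_v\}$ partitions $E'$ and each $v$ lies on a unique path of $\mathcal{P}$, every edge $e\in E'$ is handled by exactly the path $P\ni v$ where $v$ is the index of the bucket containing $e$; thus each $g(e)$ is computed (exactly once), which is all that is required.

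I do not expect a serious obstacle here: the whole content is the bookkeeping identity \eqref{eqn:sum of buckets}, which relies only on the fact that the buckets form a partition and that a path decomposition assigns each vertex to a unique path. The one point to state carefully is that the lemma does \emph{not} require the buckets to respect $E(P^\downarrow)$ in any way — an edge $e$ may be placed in a bucket $Q_v$ for a vertex $v$ far from either endpoint of $e$ — and correctness of the global statement is insensitive to this, because the only guarantee we need is that each $e\in E'$ is processed by the path carrying its bucket index. (In the intended application of \Cref{lem:labeling-upper-vertex}, $v=l_e$ does lie below both endpoints of $e$, but that is used in the per-path algorithm, not in this reduction.)
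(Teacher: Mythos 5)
Your proposal is correct and matches the paper's proof: both run the assumed per-path algorithm on every $P\in\mathcal{P}$, bound $\sum_{P}d(P^\downarrow)=O(m\log n)$ via the balanced property (\Cref{eqn:sum of paths}), and use the fact that $\{Q_v\}$ partitions $E'$ while each vertex lies on a unique path so that $\sum_{P}\sum_{v\in P}|Q_v|\le m$, yielding $t_p+O(m\log^2 n)$ total time. Your explicit statement of the bucket-counting identity and the remark that buckets need not respect $E(P^\downarrow)$ only make explicit what the paper leaves implicit.
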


\begin{proof}
For each $P\in \cP$, the algorithm computes $g(e)$ values for all $e\in \bigcup_{v\in P} Q_v$
in total time $O((d(P^\downarrow) + \sum_{v\in P}|Q_v|)\log n)$. Since $\bigcup_{v\in V} Q_v = E'$ and $\bigcup_{P\in \cP}P = V$, we can compute $g(e)$ for all $e\in E'$ by running the algorithm on every path $P\in \cP$. Summing up over $P\in \cP$, we can compute $g(e)$ for all $e\in E'$ in total time $O(\sum_{P^\in \cP} (d(P^\downarrow) + \sum_{v\in P}|Q_v|)\log n) = O(m\log^2 n + m\log n) = O(m\log^2 n)$ by \Cref{eqn:sum of paths}. Adding the preprocessing time $t_p$, the algorithm runs in $t_p+O(m\log^2 n)$ time.
\end{proof}

By plugging in \Cref{lem:labeling-upper-vertex} to \Cref{lem:amortized-path-decomposition}, we obtain \Cref{thm:reduction-to-lower-vertices}.

\begin{proof}[Proof of \Cref{thm:reduction-to-lower-vertices}]
First, for each edge $e$ with $l_e=\NULL$, we set $u_e=\NULL$.
    Then, we partition the set of the remaining edges into $\{Q_v\}_{v\in V}$ such that $Q_v$ contains all the edges $e$ with the  lower vertex $l_e = v$. 
    By \Cref{lem:labeling-upper-vertex}, there exists an algorithm that preprocess $G$ and tree $T$ in $O(m\log n)$ time such that, given a path $P$ in a path decomposition ${\cal P}$, compute $u_e$ for every edge $e\in \bigcup_{v\in P} Q_v$ in total time $O((d(P^\downarrow)+\sum_{v\in P}|Q_v|)\log n)$.
    Since it satisfies the condition of \Cref{lem:amortized-path-decomposition}, we can compute the upper vertex $u_e$ for every edge in $O(m\log^2 n)$ time.
\end{proof}

Now we describe an algorithm that achieves \Cref{lem:labeling-upper-vertex}.

\begin{proof}[Proof of \Cref{lem:labeling-upper-vertex}]

Fix a spanning tree $T\in \T$.
For any vertex $v\in V$ and its ancestor $w\in v^\Uparrow$, the weight of the comparable cut $w^\downarrow\setminus v^\downarrow$ is given by:
\begin{equation}\label{eqn:comparable-2-respecting-mincut}
    \C(w^\downarrow \setminus v^\downarrow) = \C(w^\downarrow) - \C(v^\downarrow) + 2(\C(v^\downarrow,w^\downarrow) - \C(v^\downarrow,v^\downarrow)) ~.
\end{equation}

Suppose $v$ is the lower vertex of the minimal mincut of $e$. %
Then by factoring out the terms only related to $v$, it suffices to compute the following \emph{comparable precut values} for all $w\in v^\Uparrow$.

\begin{definition}[Comparable precut value]
    The \emph{comparable precut value} of $v$ at $w$, is defined by 
    \[
        \CC_v(w) := \C(w^\downarrow) + 2\C(v^\downarrow,w^\downarrow) ~.
    \]
    Note that the value is only defined for $w\in v^\Uparrow$.
    We say that $w$ is a \emph{comparable partner} or just a partner of $v$ if $w$ is a minimizer of comparable precut value at $v$.
\end{definition}

To see the high level idea, we first show how to compute the upper vertex $u_e$ for every edge $e\in Q_v$ assuming that the comparable precut value of $v$ at $w$ has already been computed for every $w\in v^\Uparrow$.
Suppose $e=(u_1, u_2)\in Q_v$ is a right edge on $T$. Let $\lca_e$ denote $\LCA(u_1,u_2)$.
The minimal mincut of $e$ must contain the vertex $x_e = \LCA(v, \lca_e)$ since the minimal mincut is a comparable 2-respecting mincut in $T$ with the lower vertex $l_e=v$. 
As the minimal mincut $u_e^\downarrow\setminus l_e^\downarrow$ of edge $e$ is the mincut satisfying the condition above with the minimal size, we have that $u_e$ must be the lowest partner of $v$ such that $u_e\in x_e^\uparrow$, which can be found using $\MinPDown(x_e)$ (see \Cref{fig:ue-by-minpath}). 
Below we show how to remove the assumption that comparable precut values of $v$ at all $w\in v^\Uparrow$ have been precomputed.%

\begin{figure}[h]
\centering
\includegraphics[]{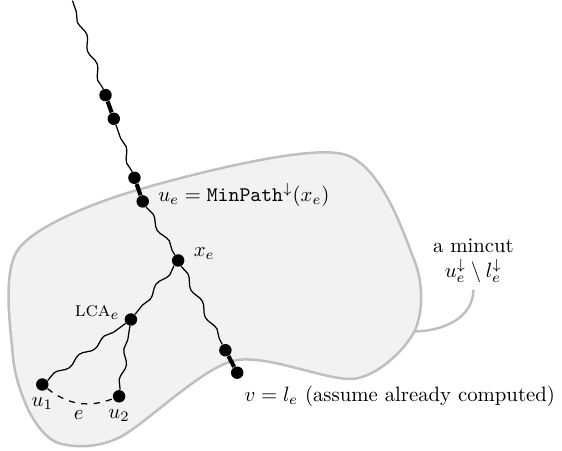}
\caption{High level idea: once the lower vertex of an edge $l_e$ has found, we may locate the upper vertex $u_e$ along the path via a $\MinPDown$ query.}\label{fig:ue-by-minpath}
\end{figure}

Given a path $P=(v_1, v_2, \ldots, v_k)$ from the path decomposition with $v_1$ being the deepest vertex, our algorithm will process $v_i$  starting from $i=1, 2, \ldots, k$.
We will maintain the invariant that once we process the vertex $v_i$ the values $\CC_{v_i}(w)$ for all $w \in v_i^\Uparrow$ can be accessed via $val[w]$.

Next we show how to maintain the invariant. 
In the preprocessing step before the path $P$ was given, we set $val[w] = \C(w^\downarrow)$ for each vertex $w$ and create a dynamic tree on $T$ (\Cref{lem:data structures}), which can be done in $O(m\log n)$ time.\footnote{\label{note:preprocess}We first compute $\lca_e=\LCA(u_1, u_2)$ for each edge $e=(u_1, u_2)$ in linear time~\cite{gabow1985linear}. Then it suffices for obtaining $val[w]=\C(w^\downarrow)$ by invoking $\AddP(u_1, w(e))$, $\AddP(u_2, w(e))$, and $\AddP(\lca_e, -2w(e))$ for every edge $e$ in $O(m\log n)$ total time.}
Now we start from the deepest vertex $v_1$, the algorithm needs to add $2\C(v_1^\downarrow, w^\downarrow)$ to each $val[w]$ so that $val[w] = \CC_{v_1}(w)$.
For each edge $(u,u')$ where $u\in v_1^\downarrow$,
we invoke $\AddP(u', 2\C(u,u'))$ so that 
two times the weight of the edge $(u,u')$ is added to $val[w]$ for each $w\in u'^\uparrow$.
The total time to recover the invariant is $O(d(v_1^\downarrow)\log n)$.

After obtaining $\CC_{v_1}(w)$ values, for every $e\in Q_{v_1}$,
we first compute $x_e = \LCA(v_1,\lca_e)$ and then compute $\tilde{u}_e = \MinPDown(x_e)$.
Finally, we check if $\tilde{u}_e$ is a partner of $v_1$. If so, then we know that the upper vertex $u_e = \tilde{u}_e$ because $\MinPDown(x_e)$ the lowest ancestor of $x_e$ that is a partner of $v_1$ if there exists some partners.
For running time, by \Cref{lem:data structures}, this requires $O(\log n)$ time for each $e\in Q_{v_1}$. So the total time for finding the upper vertices $u_e$ such that their corresponding lower vertex is $l_e = v_1$ is $O(|Q_{v_1}|\log n)$ time.
Therefore, the whole process on vertex $v_1$ can be done in $O(d(v_1^\downarrow)\log n + |Q_{v_1}|\log n)$ time.

Then, the algorithm scans through the rest of vertices $v_2,v_3,\cdots v_k$ on the path one by one. Suppose the algorithm reaches $v_i$ now. The algorithm is similar to what we did at $v_1$. 
With the invariant after processing $v_{i-1}$,
it suffices to add $2\C(v_i^\downarrow\setminus v_{i-1}^\downarrow, w)$ to $val[w]$ for each $w\in v_i^\uparrow$ by invoking $\AddP(u', 2\C(u, u'))$ for each edge $(u, u')$ where $u\in v_i^\downarrow\setminus v_{i-1}^\downarrow$. (These edges can be found in $O(d(v_i^{\downarrow}\setminus v_{i-1}^\downarrow))$ time using a DFS from $v_i$ without searching the subtree rooted at $v_{i-1}$.) 
Therefore, in $O(d(v_i^\downarrow\setminus v_{i-1}^\downarrow)\log n)$ time,  $val[w]$ are updated to $\CC_{v_i}(w)$ for all $w\in v_i^\Uparrow$, and then the algorithm uses $\MinPDown(x_e)$ to find $u_e$ for all $e\in Q_{v_i}$ in $O(|Q_{v_i}|\log n)$ time.

After finishing all the process on the path $P$, we need to roll back to the initial state $val[w] = \C(w^\downarrow)$ in order to process other paths. Therefore, the algorithm computing $u_e$ for every $e\in \bigcup_{v\in P} Q_v$ takes $O(d(P^\downarrow)\log n + \sum_{v\in P} |Q_v|\log n)$ time in total, which proves \Cref{lem:labeling-upper-vertex}.
\end{proof}

\subsection{Computing the Lower Vertex for each Edge}
\label{sec:compute lower vertex}

In the rest of this section, we show how to compute the lower vertex for each edge.
Fix a tree $T\in \T$. If the minimal mincut of edge $e$ is a comparable 2-respecting mincut of $T$, then the algorithm will find the lower vertex $l_e$ for $e$. \Cref{lem:computing-lower-vertex} summarizes the algorithm:

\begin{lemma}
\label{lem:computing-lower-vertex}
    There is an algorithm that, given a graph $G = (V,E)$ and a spanning tree $T$, in $O(m\log^2 n)$ time computes a vertex $\hat{l}_e$ for every edge $e$, such that if $e$ is a right edge in $T$ where the minimal mincut of $e$ is $u_e^\downarrow\setminus l_e^\downarrow$, then $\hat{l}_e = l_e$.
\end{lemma}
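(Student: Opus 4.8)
## Proof Proposal for \Cref{lem:computing-lower-vertex}

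\textbf{Overview of the approach.} The plan is to express, for a fixed right edge $e = (u_1, u_2)$ with $\lca_e = \LCA(u_1, u_2)$, the weight of the comparable cut $w^\downarrow \setminus v^\downarrow$ (with $v \in w^\Downarrow$, $v \in \lca_e^\downarrow$, $w \in \lca_e^\uparrow$) using \Cref{eqn:comparable-2-respecting-mincut}, and then to \emph{separate the roles} of the lower vertex $v$ and the upper vertex $w$ so that the search for the lower vertex can be carried out obliviously to $w$. Concretely, I would rewrite $\C(w^\downarrow \setminus v^\downarrow)$ as a sum of a term depending only on $w$, a term depending only on $v$, and the cross term $2\C(v^\downarrow, w^\downarrow)$. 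The key structural claim I expect to need (analogous to \Cref{lem:uniqueness-of-lower-vertex} / \Cref{cor:uniqueness-of-lower-vertices} advertised in the introduction) is that, among all comparable 2-respecting mincuts of $e$, the one of minimal size has a lower vertex $l_e$ that is \emph{uniquely determined} and, moreover, is characterized by a ``local'' minimization along a single root-to-leaf path — so that a DFS maintaining a dynamic tree can detect it in one pass.

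\textbf{Key steps, in order.} First, I would set up the dynamic-tree machinery of \Cref{lem:data structures} over $T$, with $val[\cdot]$ initialized so that a $\MinP^\downarrow$/$\MinP^\uparrow$ query along the ancestor path of $v$ returns the candidate upper vertex and its induced cut value; this mirrors the preprocessing in the proof of \Cref{lem:labeling-upper-vertex}. Second, I would invoke the reduction to paths: apply \Cref{lem:reduc to path} (or its amortized variant \Cref{lem:amortized-path-decomposition}) to a balanced path decomposition $\cP$, reducing the task to the following per-path subproblem — given a path $P = (v_1, \dots, v_k)$ from deepest to shallowest, compute, for every edge $e \in E(P^\downarrow)$ whose (conjectured) lower vertex lies on $P$, the correct $\hat l_e$ in $O(d(P^\downarrow)\log n)$ time. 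Third, on a single path $P$ I would run a DFS from the top vertex of $P$; as the DFS visits each $v_i$ I would update $val[\cdot]$ along ancestor paths via $\AddP$ exactly as in \Cref{sec:reduc to lower} (adding $2\C(v_i^\downarrow \setminus v_{i-1}^\downarrow, \cdot)$), so that after processing $v_i$ the array encodes the comparable precut values $\CC_{v_i}(\cdot)$. Fourth — and this is the crux — I would use the structural lemma to argue that, for each edge $e$ with $\lca_e \in v_i^\downarrow$ (equivalently $e$'s endpoints straddling below $v_i$), if $v_i$ is indeed the true lower vertex $l_e$ then a single query (e.g. a $\MinP^\downarrow(\lca_e)$ combined with a comparison of the resulting cut value against $\lambda$) both certifies this and recovers the matching upper vertex; and conversely, if $v_i \ne l_e$ the query fails the $= \lambda$ test, so no false positive is recorded. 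The candidate $\hat l_e$ is then the $v_i$ at which the test first succeeds during the pass over all paths. Finally, sum the running time: each path costs $O(d(P^\downarrow)\log n)$ for the $\AddP$ updates plus $O(|E(P^\downarrow)|\log n)$ for the queries, and by \Cref{eqn:sum of paths} the total is $O(m\log^2 n)$.

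\textbf{Main obstacle.} The hard part will be the fourth step: proving that the true lower vertex $l_e$ of the minimal comparable mincut is detectable by a \emph{purely local} one-pass criterion, i.e. that when the DFS sits at $v_i = l_e$ the current $val[\cdot]$ (which encodes $\CC_{v_i}$) already exposes a valid upper vertex, and that at \emph{no other} vertex $v_j$ on \emph{any} path does the test spuriously succeed with a cut of value $\lambda$ and smaller size. This requires the uniqueness-of-lower-vertex structural lemma plus a monotonicity/consistency argument showing that among the possibly many vertices that could serve as a lower vertex of \emph{some} comparable mincut of $e$, the minimal-size choice is the one the scan will lock onto first — and that ``first along the DFS over the path decomposition'' coincides with ``minimal size''. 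I expect to handle this by a case analysis on the relative position of two competing lower-vertex candidates (comparable vs. incomparable in $T$), invoking \Cref{lem:crossing-mincuts} to intersect/union the corresponding mincuts and derive a contradiction with minimality unless the candidates coincide; the incomparable subcase is where \Cref{lem:crossing-mincuts}'s ``$\C(X \setminus Y, Y \setminus X) = 0$'' clause does the real work.
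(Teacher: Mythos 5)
Your proposal diverges from the paper's actual route, and as sketched it has two genuine gaps — one of correctness and one of running time. First, correctness: you search for $l_e$ by scanning candidate lower vertices $v_i$ along each path $P\in\cP$ with the precut values $\CC_{v_i}(\cdot)$ loaded, and you test edge $e$ at $v_i$ under the condition ``$\lca_e\in v_i^\downarrow$''. But if $v_i=l_e$ then $\lca_e\notin l_e^\downarrow$ (otherwise both endpoints of $e$ would lie in $l_e^\downarrow$), so your selection criterion never fires at the correct vertex. More fundamentally, in the paper's ``general case'' the true lower vertex $l_e$ is \emph{incomparable} to $\lca_e$: then the path $P_l\in\cP$ containing $l_e$ may satisfy $e\notin E(P_l^\downarrow)$ at all (the endpoints of $e$ hang off a different child of $x_e=\LCA(l_e,\lca_e)$), so the pair $(l_e,e)$ is never tested on any path and $\hat l_e$ is never set correctly. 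Your ``no false positive'' claim is also false as stated: any non-minimal comparable mincut of $e$ passes the $=\lambda$ test, so ``first success along the scan'' need not coincide with minimal size; minimality has to be enforced structurally, not by scan order.

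Second, running time: even if the tests were placed correctly, the set of vertices $v_i$ at which a given edge could be tested is a whole suffix of the path (and, per edge, up to $\Theta(\log n)$ paths), so the number of (vertex, edge) queries is not $O(|E(P^\downarrow)|)$ per path; this is exactly the ``number of queries'' obstacle the paper names explicitly. The paper's proof of this lemma takes a different route that resolves both issues: it first precomputes the highest partner $\Maximal(v)$ of every vertex in $O(m\log^2 n)$ time (this is where the $\CC_v(\cdot)$/\MinPUp{} machinery and \Cref{lem:reduc to path} are used), and then runs a single post-order traversal of $T$ with a top-tree storing, for the current vertex $u$, the depths of all ``valid lower vertices below $u$'' (those $v$ with $\Maximal(v)\in u^\uparrow$). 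The structural facts \Cref{lem:uniqueness-of-lower-vertex} ($\highest(L_{x_e}\setminus Y_e)=\{l_e\}$) and \Cref{cor:uniqueness-of-lower-vertices} guarantee that querying via $\MinNonPath$/$\MinTreeUp$ at (and only at) $x_e$ pinpoints $l_e$, and the packaging argument of \Cref{lem:general-case-property} shows that all still-unresolved edges forwarded to a vertex share the same canonical path, so one query per package suffices, giving $O(m\log n)$ for the traversal on top of the $O(m\log^2 n)$ preprocessing. Your proposal correctly anticipates that a uniqueness/crossing argument (\Cref{lem:crossing-mincuts}) is needed, but the missing ingredients — the highest-partner precomputation, the characterization of $l_e$ as the unique top valid lower vertex at $x_e$, and the packaging device bounding the number of queries — are precisely what make the lemma work.
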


The algorithm described in \Cref{lem:computing-lower-vertex} consists of two parts.
In the first part the algorithm computes \emph{the highest partner} $H(v)$ (defined below) for each vertex $v\in V$ as a preprocessing step.
Then in the second part we apply a specialized depth first search that obtains $l_e$ values for all $e\in E$.

\subsubsection{Computing Highest Partner of each Vertex}
\label{sec:computing-highest-partner}

Fix a spanning tree $T\in \T$. For each vertex $v$, we denote $H_T(v)$ the highest (comparable) partner of $v$.
If there is no comparable 2-respecting mincut with lower vertex $v$, then $H_T(v):=\NULL$.
When there is no confusion, we shall drop the subscript $T$ and simply denote it by $H(v)$.
The goal for the algorithm is to compute $H(v)$ for all $v\in V$.

We will use the reduction to path from \Cref{lem:reduc to path}.
For any $P\in \cP$, define $g(e, P)=H(v)$ if $v\in P$ and $e$ is the tree edge with $v$ being the lower vertex, otherwise $g(e, P)=\NULL$.

Given a path $P=(v_1, v_2, \ldots, v_k)\in \cP$ with $v_1$ being the deepest vertex, the algorithm computes the highest partner for each $v_i\in P$ as follows.
Similar to the proof of \Cref{lem:labeling-upper-vertex}, the algorithm processes the vertices in the order $v_1,v_2,\ldots,v_k$.
A dynamic tree on $T$ is used and $val[w]$ is maintained such that after processing $v_i$ we obtain precut values $val[w]=\CC_{v_i}(w)$ for all $w\in v_i^\uparrow$.
Then, using a dynamic tree query $\MinPUp(v_i)$ the algorithm obtains a highest vertex $w$ with the minimum precut value $\CC_{v_i}(w)$.
Finally, we are able to assign
$g(e, P)=w$ (where $e$ is the tree edge with $v_i$ being the lower vertex, i.e., $H(v_i)=w$)
if the cut $w^\downarrow\setminus v^\downarrow$ is indeed a mincut. According to \Cref{eqn:comparable-2-respecting-mincut}, checking whether $\lambda = \C(w^\downarrow\setminus v^\downarrow)$ can be done in constant time as long as the value $\C(v^\downarrow) + 2\C(v^\downarrow, v^\downarrow)$ is precomputed.

From the discussion above, we have an algorithm that, given a path $P$, computes $g(e, P)$ for all $e \in E(P^\downarrow)$ in $O(d(P^\downarrow)\log n)$ time. The preprocessing step is the same with the one in \Cref{lem:labeling-upper-vertex}, which can be done in $O(m\log n)$ time.
By plugging in the path decomposition \Cref{lem:reduc to path}, 
we obtain an algorithm that computes $H(v)$ of all vertices $v\in V$ in $O(m\log^2 n)$ total time.

\subsubsection{Main Algorithm for Computing Lower Vertices}

\label{sec:compute lower main}

In this subsection we state the main algorithm for \Cref{lem:computing-lower-vertex}.
Recall that the goal is, for every right edge $e$ in $T$, we want
to compute its lower vertex $l_{e}$. Recall that when we say that
$l_{e}$ is a lower vertex, we means that there exists some
$u_{e}$ where $u_{e}^{\downarrow}\setminus l_{e}^{\downarrow}$ forms
a minimal mincut of some right edge $e$. For convenience, we denote
$\upath vw$ as the set of vertices on the path between any two vertices
$v$ and $w$ on $T$. For any right edge $e$, let $\hatP_{e}=\upath{u_{e}}{l_{e}}$
be its \emph{canonical path}. For any $e=(u_{1},u_{2})$, let $\lca_{e}=\LCA(u_{1},u_{2})$.

\paragraph{Motivation: high-level approach and the key structural lemma.}

At the highest level, the description of our algorithm is as follows.
We will perform a post-order traversal on the tree $T$ (i.e. if $a$
is an ancestor of $b$ then $a$ is visited after $b$). When we visit
$u$, we will be able to compute \emph{some} lower vertices $l_{e}$
of edges $e$ where $l_{e}$ is below $u$ (i.e.~$l_{e}\in u^{\Downarrow}$).
At the end, we make sure that we have computed all lower vertices
$l_{e}$ of all the right edges $e$ in $T$. To specify our algorithm
in more details, we start with this definition. 
\begin{defn}
[Valid lower vertices below $u$]We call a vertex $v$ a \emph{valid
lower vertex below $u$ }if $v\in u^{\Downarrow}$ and $\Maximal(v)\in u^{\uparrow}$. Let $L_{u}$ denote the set of
all valid lower vertices below $u$.
\end{defn}
In other words, $v\in L_{u}$ if there exists a comparable 2-respecting
mincut containing $u$ with the lower vertex $v$. Since $H(l_{e})$
must be an ancestor of $u_{e}$, we have following:
\begin{fact}
For every right edge $e$, $l_{e}\in L_{u}$ for any $u\in\hatP_{e}\setminus\{l_{e}\}$.
\end{fact}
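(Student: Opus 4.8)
The statement to prove is the Fact: for every right edge $e$ and every vertex $u \in \hatP_e \setminus \{l_e\}$, we have $l_e \in L_u$. Recall that $\hatP_e = \upath{u_e}{l_e}$ is the canonical path of the minimal mincut $u_e^\downarrow \setminus l_e^\downarrow$ of $e$, with $l_e$ the lower vertex and $u_e$ the upper vertex, so $l_e \in u_e^\downarrow$ and in fact $l_e \in u_e^\Downarrow$ (since the cut is a strict 2-respecting cut with two distinct tree edges). Unwinding the definition of $L_u$, I must show two things for any such $u$: first, that $v := l_e \in u^\Downarrow$ (i.e. $l_e$ is a strict descendant of $u$), and second, that $H(l_e) \in u^\uparrow$ (i.e. the highest comparable partner of $l_e$ is an ancestor of $u$, inclusive).

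\textbf{First claim: $l_e \in u^\Downarrow$.} Every vertex $u$ on the canonical path $\hatP_e = \upath{u_e}{l_e}$, other than $l_e$ itself, lies strictly between $u_e$ and $l_e$ on the root-to-$l_e$ path, so $u$ is a (not necessarily strict) ancestor of $l_e$ but $u \ne l_e$. Hence $l_e \in u^\Downarrow$. This is immediate from the fact that $u_e \parallel l_e$ with $l_e \in u_e^\Downarrow$, so the whole path $\hatP_e$ is a subpath of the root-to-$l_e$ path in $T$.

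\textbf{Second claim: $H(l_e) \in u^\uparrow$.} Here I use that the minimal mincut of $e$ is the comparable 2-respecting cut $u_e^\downarrow \setminus l_e^\downarrow$, which certifies that $u_e$ is a comparable partner of $l_e$ (the cut $u_e^\downarrow \setminus l_e^\downarrow$ has weight $\lambda$, and by \Cref{eqn:comparable-2-respecting-mincut} and the definition of comparable precut value, $u_e$ achieving a mincut means it is a minimizer of $\CC_{l_e}$, i.e.\ a partner). Since $H(l_e)$ is defined as the \emph{highest} comparable partner of $l_e$, and $u_e$ is a comparable partner of $l_e$, we get $H(l_e) \in u_e^\uparrow$, i.e.\ $H(l_e)$ is an ancestor of $u_e$. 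Now $u \in \hatP_e = \upath{u_e}{l_e}$ means $u$ is a descendant of $u_e$ (it lies on the path from $u_e$ down to $l_e$), so $u_e \in u^\uparrow$, and by transitivity of the ancestor relation $H(l_e) \in u^\uparrow$. Combining the two claims gives $l_e \in L_u$.

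\textbf{Anticipated obstacles.} This Fact is essentially a definitional unpacking, so I expect no real obstacle; the only subtlety worth stating carefully is the two small logical steps ``$u_e$ is a comparable partner of $l_e$'' (which needs the precut-value reformulation \eqref{eqn:comparable-2-respecting-mincut} and the fact that $l_e$ actually \emph{has} a partner since its minimal mincut is comparable) and ``$u$ a descendant of $u_e$ implies $u_e \in u^\uparrow$'' (just the transitivity and the orientation of the canonical path). One should also double-check the strictness: $u \ne l_e$ is exactly the hypothesis $u \in \hatP_e \setminus \{l_e\}$, which is precisely what upgrades $l_e \in u^\downarrow$ to $l_e \in u^\Downarrow$, and this is why the exclusion of $l_e$ is present in the statement.
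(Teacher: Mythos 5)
Your proof is correct and follows exactly the paper's (one-line) justification: the paper simply remarks that $H(l_e)$ must be an ancestor of $u_e$, which together with $l_e \in u^\Downarrow$ for $u\in\hatP_e\setminus\{l_e\}$ gives the Fact by definition of $L_u$. Your write-up is just a careful unpacking of that same argument, including the correct observation that $u_e$ attaining cut weight $\lambda$ makes it a minimizer of $\CC_{l_e}$ and hence a partner, so the highest partner $H(l_e)$ lies in $u_e^\uparrow \subseteq u^\uparrow$.
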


Suppose that, magically, there is a data structure that, given a vertex
$u$, can return the set $L_{u}$ of all valid lower vertices below
$u$. One idea would be that whenever the post-order traversal visits
$u$, we query the data structure with $u$. Then, whenever $u\in\hatP_{e}\setminus\{l_{e}\}$,
then $l_{e}$ would be reported. However, there is an obvious issue
in this approach: the total size of $L_{u}$ over all $u$
is simply too large to be reported quickly. 

Therefore, we should consider a small subset of $L_{u}$ that still
contains $l_{e}$. Which subset of $L_{u}$ satisfies this? Intuitively, since  $u_{e}^{\downarrow}\setminus l_{e}^{\downarrow}$
is a minimal mincut,
$l_{e}$ should be ``as high as possible'' (and $u_{e}$ should
be ``as low as possible''). This motivates the following definition: for
any vertex set $S$, the set of \emph{top vertices}
of $S$, denoted by $\highest(S)$, contains all vertices $v\in S$ where there is no other $v'\in S\cap v^{\Uparrow}$
strictly above $v$. It makes sense to hope that $\highest(L_{u})=\{l_{e}\}$
for any $u\in\hatP_{e}\setminus\{l_{e}\}$. This would be perfect
because, not only that the output size is small, the data structure even identifies
$l_{e}$ for us. Unfortunately, this cannot be true.
For example, let $x_{e}=\LCA(l_{e},\lca_{e})$, for any $u\in \upath{l_e}{x_e}\setminus \{l_e,x_e\}$ strictly between $l_e$ and $x_e$, $\highest(L_{u})$ might not contain $l_e$ because there might exist another mincut $u'^{\downarrow}\setminus l'^{\downarrow}$ such that 
$l_{e},l',u,u',x_{e}$ are ancestors of each other in this order and so $l_{e}\notin\highest(L_{u})$
because of $l'$.
Similarly, for any $u\in \upath{u_e}{x_e}\setminus \{u_e,x_e\}$ strictly between $u_e$ and $x_e$,  $\highest(L_{u})$ might not contain $l_e$ as well (see \Cref{fig:query-xe-must-find-le}).
So we could only hope to guarantee that $\highest(L_{u})=\{l_{e}\}$
when $u = x_{e}.$

\begin{figure}[h]
\centering
\includegraphics[]{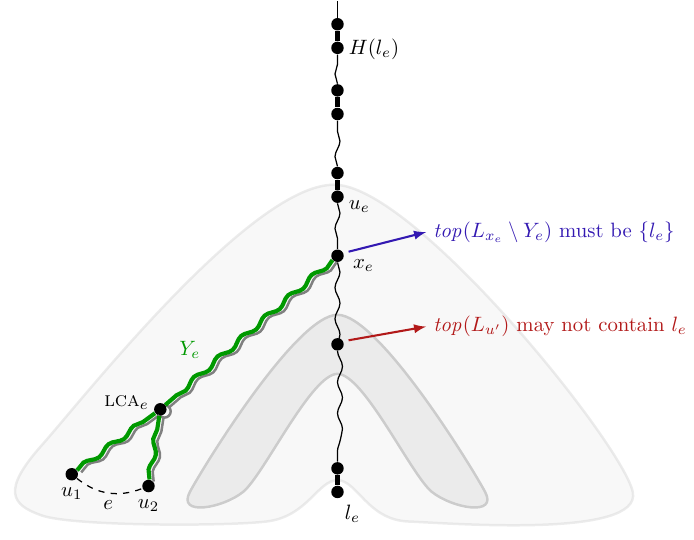}
\caption{Querying $x_e$ must find $l_e$. Querying at other vertices might not find $l_e$.}\label{fig:query-xe-must-find-le}
\end{figure}

Surprisingly, something very close to the above wishful claim is actually
true:
\begin{lem}
[Pinpoint the lower vertex] \label{lem:uniqueness-of-lower-vertex}
Suppose $e=(u_{1},u_{2})$ is a right edge whose minimal mincut is
$u_{e}^{\downarrow}\setminus l_{e}^{\downarrow}$. For vertex $x_e=\LCA(l_{e},\lca_{e})$, we have $\highest(L_{x_e}\setminus Y_e)=\{l_{e}\}$, where $Y_e := \upath{u_{1}}{u_{2}} \cup\upath{x_e}{\lca_e}$ is the \emph{avoiding} set of $e$.
\end{lem}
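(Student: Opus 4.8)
The plan is to prove the two containments $l_e\in\highest(L_{x_e}\setminus Y_e)$ and $\highest(L_{x_e}\setminus Y_e)\subseteq\{l_e\}$ separately. For the first, I would first check $l_e\in L_{x_e}$: by definition $l_e\in x_e^{\Downarrow}$ (since $x_e=\LCA(l_e,\lca_e)$ is a strict ancestor of $l_e$ — it is an ancestor, and it cannot equal $l_e$ because $x_e$ is also an ancestor of $\lca_e$, which is an ancestor of both endpoints of $e$, whereas $l_e$ must be separated from $r$ together with $u_1,u_2$, so $l_e\in\lca_e^{\downarrow}$ forces $l_e=x_e$ only in a degenerate case that I'd rule out), and $\Maximal(l_e)=H(l_e)$ is an ancestor of $u_e$, hence of $x_e$ since $u_e\in x_e^{\uparrow}$; so $H(l_e)\in x_e^{\uparrow}$, giving $l_e\in L_{x_e}$. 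Then I must check $l_e\notin Y_e=\upath{u_1}{u_2}\cup\upath{x_e}{\lca_e}$: $l_e$ is a strict descendant of $\lca_e$ and of $x_e$, but it is below $\lca_e$ in a way that puts it off the $u_1$–$u_2$ path (any vertex on $\upath{u_1}{u_2}$ is either $\lca_e$ or lies in $u_1^{\downarrow}$ or $u_2^{\downarrow}$ above $\lca_e$'s children, none of which can be $l_e$ since $l_e\in\lca_e^{\Downarrow}$ would be below one of those; I'd make this precise using that the minimal mincut $u_e^{\downarrow}\setminus l_e^{\downarrow}$ contains both $u_1,u_2$, so $u_1,u_2\in l_e^{\downarrow}$, hence $l_e$ is an ancestor of all of $\upath{u_1}{u_2}$ and distinct from it), and $l_e\notin\upath{x_e}{\lca_e}$ because $l_e$ is a strict descendant of $x_e$ off the branch toward $\lca_e$. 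Finally, to see $l_e\in\highest(L_{x_e}\setminus Y_e)$ I need that no strict ancestor of $l_e$ lies in $L_{x_e}\setminus Y_e$ — but any $v\in L_{x_e}$ with $v\in l_e^{\Uparrow}$ is an ancestor of $l_e$ and a strict descendant of $x_e$, so $v\in\upath{x_e}{l_e}$; I'd argue such a $v$ forms a comparable mincut $H(v)^{\downarrow}\setminus v^{\downarrow}\supseteq\{u_1,u_2\}$ strictly larger than $u_e^{\downarrow}\setminus l_e^{\downarrow}$ (it contains $l_e^{\downarrow}\ni u_1,u_2$ plus the sibling branches), contradicting minimality of $l_e$'s mincut — OR that $v$ lies on $\upath{x_e}{\lca_e}$ hence in $Y_e$. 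This case split is the heart of showing $l_e$ is a top vertex.

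For the reverse containment, take any $v\in\highest(L_{x_e}\setminus Y_e)$; I must show $v=l_e$. Since $v\in L_{x_e}$ there is a comparable 2-respecting mincut $C_v=H(v)^{\downarrow}\setminus v^{\downarrow}$ containing $x_e$; since $v\in x_e^{\Downarrow}$ and $v\notin Y_e\supseteq\upath{x_e}{\lca_e}$, the vertex $v$ hangs off the $x_e$–$\lca_e$ path, and because $v\notin\upath{u_1}{u_2}$ while $\lca_e\in\upath{x_e}{\lca_e}$, the branch containing $v$ is disjoint from the branch containing $\lca_e$ (and hence from $u_1,u_2$). Now I compare $C_v$ with the minimal mincut $C_e=u_e^{\downarrow}\setminus l_e^{\downarrow}$ of $e$: both are mincuts, and I want to apply \Cref{lem:crossing-mincuts} (the crossing/submodularity lemma). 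If $C_v$ and $C_e$ cross, the lemma produces a smaller mincut of $e$ — either $C_v\cap C_e$ or a similar combination that still separates $u_1,u_2$ from $r$ — contradicting minimality of $C_e$, unless the combination is exactly $C_e$, which pins down the relationship between $v$ and $l_e$. If they don't cross, then one contains the other or they're disjoint; disjointness is impossible since both contain $x_e$... wait, $C_v$ contains $x_e$ but does $C_e$? Actually $C_e=u_e^{\downarrow}\setminus l_e^{\downarrow}$ and $x_e=\LCA(l_e,\lca_e)\in u_e^{\downarrow}$ (as $u_e\in x_e^{\uparrow}$) but $x_e\notin l_e^{\downarrow}$ since $x_e$ is a strict ancestor of $l_e$; so yes $x_e\in C_e$, so $C_v\cap C_e\neq\emptyset$. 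Hence non-crossing forces nesting. I'd then show nesting plus the fact that $u_1,u_2\in l_e^{\downarrow}\subseteq$ (something) and the top-vertex maximality of $v$ together force $v=l_e$: if $C_e\subsetneq C_v$ then $l_e\in C_v$ with $l_e$ a descendant of $v$, but also one checks $H(v)$ vs $u_e$ to see $v$ would be a valid lower vertex strictly above $l_e$ unless $v=l_e$; if $C_v\subsetneq C_e$ then $v\in l_e^{\downarrow}$, but $v\notin\upath{u_1}{u_2}$ combined with $v$ being a lower vertex of a mincut inside $C_e$ that must contain... here I'd use uniqueness of the minimal mincut (\Cref{lem:uniqueness-minimal}) applied carefully, or argue $v$ cannot be strictly below $l_e$ and still have $H(v)$ above $x_e$ while avoiding $Y_e$. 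The bookkeeping of which of the four crossing-lemma pieces survives is where the real work lies.

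The main obstacle I anticipate is the crossing analysis: showing that whenever $v\neq l_e$, the interaction of $C_v$ with $C_e$ via \Cref{lem:crossing-mincuts} either yields a strictly smaller mincut separating $u_1$ and $u_2$ from $r$ (contradicting that $C_e$ is minimal for $e$) or violates the $Y_e$-avoidance or the top-vertex property of $v$. The $Y_e$ set is evidently engineered precisely so that the "bad" crossing configurations are excluded: removing $\upath{u_1}{u_2}$ kills the cases where $v$ sits on the edge's own fundamental path, and removing $\upath{x_e}{\lca_e}$ kills the cases where $v$ is a higher lower-vertex sharing the branch toward $\lca_e$ (the $l'$ phenomenon described before the lemma). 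I would organize the proof by first nailing down, using that $u_e^{\downarrow}\setminus l_e^{\downarrow}$ contains $u_1$ and $u_2$, the precise tree location of $v$ relative to $x_e$, $\lca_e$, $l_e$, and the $u_1$–$u_2$ path, and only then invoke submodularity; getting the location right up front should make each crossing subcase short.
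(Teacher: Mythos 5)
Your overall strategy is the right one and matches the paper's in spirit: compare the competing comparable cut $C_v=H(v)^{\downarrow}\setminus v^{\downarrow}$ of a candidate $v\in L_{x_e}\setminus Y_e$ with $C_e=u_e^{\downarrow}\setminus l_e^{\downarrow}$, use $Y_e$ to ensure $C_v$ still contains both endpoints of $e$, and invoke \Cref{lem:crossing-mincuts} together with minimality of $C_e$. But as written the proposal has concrete errors that break the argument. First, you repeatedly assert $u_1,u_2\in l_e^{\downarrow}$ ("the minimal mincut $u_e^{\downarrow}\setminus l_e^{\downarrow}$ contains both $u_1,u_2$, so $u_1,u_2\in l_e^{\downarrow}$"): this is exactly backwards, since membership in $u_e^{\downarrow}\setminus l_e^{\downarrow}$ means $u_1,u_2\notin l_e^{\downarrow}$. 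The downstream geometric claims built on it are false as well: $l_e$ is \emph{not} an ancestor of $\upath{u_1}{u_2}$ (it is an ancestor of neither endpoint), $l_e$ need not be a descendant of $\lca_e$, a candidate $v\notin Y_e$ need not hang off the branch of $\lca_e$ (it may sit inside $\lca_e^{\downarrow}$, even below $u_1$ -- the correct and sufficient fact is only that $v$ is not an ancestor of $u_1$ or $u_2$, because such an ancestor below $x_e$ would lie on $\upath{x_e}{\lca_e}\cup\upath{u_1}{u_2}=Y_e$), and for a strict ancestor $v$ of $l_e$ the cut $H(v)^{\downarrow}\setminus v^{\downarrow}$ does \emph{not} contain $l_e^{\downarrow}$ (it excludes $v^{\downarrow}\supseteq l_e^{\downarrow}$). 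Second, the contradiction you invoke in the maximality step -- that the competitor's cut is ``strictly larger than $u_e^{\downarrow}\setminus l_e^{\downarrow}$, contradicting minimality'' -- is not a contradiction at all: larger mincuts containing $e$ are perfectly consistent with $C_e$ being minimal. The contradiction must come from a strictly \emph{smaller} mincut containing $e$.

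That smaller mincut is exactly what the intended argument produces, and it also closes the cases you leave open in the reverse inclusion. For any $v\in L_{x_e}\setminus Y_e$ with $v\neq l_e$ and $v\notin l_e^{\downarrow}$ (which holds for every top vertex other than $l_e$, and also for every strict ancestor of $l_e$), one has $v\in C_e\setminus C_v$, so $C_e\subseteq C_v$ is impossible; hence either $C_v\subsetneq C_e$, which is immediately a smaller mincut containing $e$ (you instead drift toward \Cref{lem:uniqueness-minimal} here, missing that this case is the easy one), or $C_v$ and $C_e$ cross, in which case \Cref{lem:crossing-mincuts} makes $C_v\cap C_e$ a mincut containing $u_1,u_2$ that is properly contained in $C_e$; both contradict minimality. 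The paper runs this as one short contradiction argument over an arbitrary top vertex $l'\neq l_e$ (after noting $l_e\in L_{x_e}\setminus Y_e$, so the top set is nonempty), with no need for your separate ``$l_e$ is a top vertex'' step. So the gap is not in the choice of tools but in the execution: fix the inverted membership $u_1,u_2\notin l_e^{\downarrow}$ and the location claims that depend on it, replace the ``larger cut'' contradiction by the ``smaller cut'' one, and resolve the two nesting cases as above.
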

Before showing the proof, let us discuss the purpose of \Cref{lem:uniqueness-of-lower-vertex}. It helps us pinpoint the lower vertex $l_{e}$ because it says that, %
the lower vertex $l_{e}$ is exactly the unique top vertex of $L_{x_e}\setminus Y_e$,
the set of valid lower vertices below $x_e$ excluding $Y_e$.
Note that it is very natural to exclude $Y_e$ because,
for any right edge $e=(u_{1},u_{2})$, $l_{e}$ cannot be on $Y_e$,
otherwise $u_{e}^{\downarrow}\setminus l_{e}^{\downarrow}$ would
not contain $e$.

\begin{proof}[Proof of \Cref{lem:uniqueness-of-lower-vertex}]
Suppose for contradiction that $\highest(L_{x_e}\setminus Y_e)\neq\{l_{e}\}$.
First, observe that $\highest(L_{x_e}\setminus Y_e)$ is not empty since the set $L_{x_e}\setminus Y_e$ contains the vertex $l_e$. Now we arbitrarily select a vertex $l'\in \highest(L_{x_e}\setminus Y_e)$ different from $l_e$. By the definition of valid lower vertex below $x_e$, there exists $u'\in x_e^\uparrow$ that $u'^\downarrow\setminus l'^\downarrow$ forms a mincut. Since $l'\in L_{x_e}\setminus Y_e$ and $u'\in x_e^\uparrow$, the mincut $u'^\downarrow\setminus l'^\downarrow$ also contains edge $e$.
    
    Next we consider the relative position of these two mincuts $u_e^\downarrow\setminus l_e^\downarrow$ and $u'^\downarrow\setminus l'^\downarrow$. Since $l'$ is a top vertex of $L_{x_e}\setminus Y_e$, we have $l'\notin l_e^\downarrow$. So there are only two cases: $u'^\downarrow\setminus l'^\downarrow\subset u_e^\downarrow\setminus l_e^\downarrow$ or these two mincuts cross with each other.
    For the first case, $u'^\downarrow\setminus l'^\downarrow$ is a smaller mincut containing $e$, contradicts to minimality. For the second case, by the crossing property of mincuts (\Cref{lem:crossing-mincuts}), the intersection of these two mincuts is a smaller mincut containing $e$, again contradicts to the minimality of $u_e^\downarrow\setminus l_e^\downarrow$.
\end{proof}

Let $Y_{e,u} := \upath{u_{1}}{u_{2}} \cup\upath{u}{\lca_e}$. Note that $Y_e = Y_{e,{x_e}}$. Recall that our plan is to perform a post-order traversal. By \Cref{lem:uniqueness-of-lower-vertex}, whenever we arrives at $u = x_e$ and query $\highest(L_{u}\setminus Y_{e,u})$, the lower vertex $l_e$ would be returned for us and we are done for $e$. 
But we do not know $x_e$. Then, from which vertex $u$ should we query $\highest(L_{u}\setminus Y_{e,u})$ for finding $l_e$? Since we know that at least $x_e \in \lca_e^\uparrow$, we can query for finding $l_e$ when $u \in \lca_e^\uparrow$.
Now, the corollary below will be helpful because it says that for all $u \in \upath{x_e}{\lca_e}$, whenever we query for finding $l_e$, we will actually see nothing before the traversal actually reaches $x_e$.

\begin{corollary}\label{cor:uniqueness-of-lower-vertices}
For any vertex $u\in \upath{x_e}{\lca_e}\setminus \{x_e\}$, we have $\highest(L_{u}\setminus Y_{e,u})=\emptyset$.
\end{corollary}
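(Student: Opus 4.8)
The plan is to prove the contrapositive-flavored statement by contradiction, after first reducing to a cleaner claim. Since $\highest(S)=\emptyset$ exactly when $S=\emptyset$ for any finite vertex set $S$ (walk upward in the tree to reach a top vertex), it suffices to show $L_u\setminus Y_{e,u}=\emptyset$. So I would fix $u\in\upath{x_e}{\lca_e}\setminus\{x_e\}$, suppose some $l'\in L_u\setminus Y_{e,u}$ exists, set $u':=H(l')$, and aim for a contradiction. Because $l'\in L_u$, we have $u'\in u^\uparrow$ and $u'\neq\NULL$, so $u'^\downarrow\setminus l'^\downarrow$ is a comparable $2$-respecting mincut; it contains $u$ since $u'\in u^\uparrow$ and $l'\in u^\Downarrow$. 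First I would record the geometry around $u$: as $x_e$ is an ancestor of $\lca_e$, every vertex of $\upath{x_e}{\lca_e}$ is a descendant of $x_e$ and an ancestor of $\lca_e$, so $u$ is a strict descendant of $x_e$ and an ancestor of $\lca_e$; hence $u$ — and therefore $l'\in u^\Downarrow$ — lies in the child-subtree of $x_e$ containing $\lca_e$, while $l_e$ lies in the child-subtree containing $l_e$. Since $x_e=\LCA(l_e,\lca_e)$ these two children are distinct, so $l'$ and $l_e$ are incomparable and $l_e^\downarrow\cap l'^\downarrow=\emptyset$. I would also check that $u'^\downarrow\setminus l'^\downarrow$ contains the edge $e$: both $u_1,u_2\in\lca_e^\downarrow\subseteq u^\downarrow\subseteq u'^\downarrow$, and neither $u_1$ nor $u_2$ can lie in $l'^\downarrow$, since otherwise $l'$ would be a strict descendant of $u$ that is an ancestor of, say, $u_1$, hence would lie on $\upath{u}{u_1}$, which passes through $\lca_e$; thus $l'\in\upath{u}{\lca_e}$ or $l'\in\upath{u_1}{\lca_e}\subseteq\upath{u_1}{u_2}$, both of which are subsets of $Y_{e,u}$, contradicting $l'\notin Y_{e,u}$.

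Now I would split on the position of $u'$, which is comparable with $x_e$ because both are ancestors of $u$. In the case $u'\in x_e^\Downarrow$ (strictly below $x_e$), $u'$ lies on $\upath{x_e}{u}\subseteq\upath{x_e}{\lca_e}$, so it is an ancestor of $\lca_e$ sitting in the $\lca_e$-side subtree of $x_e$; hence $u'^\downarrow\cap l_e^\downarrow=\emptyset$, and since $u_e$ is an ancestor of $x_e$ (being a common ancestor of $l_e$ and $\lca_e$), we get $u'^\downarrow\subseteq u_e^\downarrow$. Combining these gives $u'^\downarrow\setminus l'^\downarrow\subseteq u_e^\downarrow\setminus l_e^\downarrow$, and the inclusion is strict because $u_e$ belongs to the right side but not to $u'^\downarrow$ ($u'$ being a strict descendant of $u_e$). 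Thus $u'^\downarrow\setminus l'^\downarrow$ is a mincut separating both $u_1$ and $u_2$ from $r$ of strictly smaller size than $u_e^\downarrow\setminus l_e^\downarrow$, contradicting that the latter is the minimal mincut of $e$. In the remaining case $u'\in x_e^\uparrow$, we have $l'\in u^\Downarrow\subseteq x_e^\Downarrow$ and $H(l')=u'\in x_e^\uparrow$, so $l'\in L_{x_e}$ by definition; moreover $l'\notin Y_e=\upath{u_1}{u_2}\cup\upath{x_e}{\lca_e}$, since $l'\notin\upath{u_1}{u_2}$ and $l'\notin\upath{u}{\lca_e}$ (both $\subseteq Y_{e,u}$) and $l'$, being a strict descendant of $u$, cannot lie on $\upath{x_e}{u}$; hence $l'\in L_{x_e}\setminus Y_e$. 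Then \Cref{lem:uniqueness-of-lower-vertex} forces $l'$ to be a descendant of $l_e$, contradicting the incomparability established above. Both cases being impossible, $L_u\setminus Y_{e,u}=\emptyset$, which proves the corollary.

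I expect the main obstacle to be precisely the first case, $u'=H(l')\in x_e^\Downarrow$: there $l'$ fails to lie in $L_{x_e}$ because its highest partner dips below $x_e$, so one cannot simply reduce to \Cref{lem:uniqueness-of-lower-vertex} and must instead hand-build a mincut that beats the minimal mincut of $e$. Making that argument go through hinges on the branch bookkeeping around $x_e$ (that any ancestor of $u$ lying strictly below $x_e$ is on the $\lca_e$-side and hence has subtree disjoint from $l_e^\downarrow$) and on identifying exactly which tree segments $l'$ is forbidden from — the hypothesis $l'\notin Y_{e,u}$ is what simultaneously keeps $e$ inside $u'^\downarrow\setminus l'^\downarrow$ in the first case and keeps $l'$ outside $Y_e$ in the second. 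The rest is routine manipulation of ancestor/descendant relations, for which a picture in the style of \Cref{fig:query-xe-must-find-le} is a useful guide.
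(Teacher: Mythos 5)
Your proof is correct and follows essentially the same route as the paper's: assume some $l'\in L_u\setminus Y_{e,u}$, and split on whether $H(l')$ lies in $x_e^\uparrow$ (where \Cref{lem:uniqueness-of-lower-vertex} forces $l'\in l_e^\downarrow$, contradicting that $l'$ sits in the $\lca_e$-side subtree of $x_e$) or strictly below $x_e$ (where $H(l')^\downarrow\setminus l'^\downarrow$ is a strictly smaller mincut containing $e$, contradicting minimality) — the paper compresses both cases into two sentences, and you have simply spelled out the ancestor/descendant bookkeeping (including the easy unstated fact that $l_e\neq x_e$, which follows since the minimal mincut of $e$ contains $u_1,u_2$).
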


\begin{proof}
Suppose for contradiction that there exists vertex $u\in \upath{x_e}{\lca_e}\setminus \{x_e\}$ such that $\highest(L_{u}\setminus Y_{e,u})\neq \emptyset$. Since $l_e\notin u^\downarrow$, the set $\highest(L_{u}\setminus Y_{e,u})$ contains other vertex $l'$ different from $l_e$. Since $\highest(L_{x_e}\setminus Y_e) = \{l_e\}$ by \Cref{lem:uniqueness-of-lower-vertex}, $H(l')\in x_e^\downarrow$. But this implies that $H(l')^\downarrow\setminus l'^\downarrow$ is a mincut containing $e$ with smaller size than $u_e^\downarrow\setminus l_e^\downarrow$, contradicts to minimality. %
\end{proof}

Equipped with this insight, now we are ready to move our attention on how to implement
our high-level approach efficiently.

\paragraph{Implementation.}

There are two main challenges in implementing the above high-level
approach. 
\begin{enumerate}
\item \textbf{(Efficiency of queries):} For any fixed vertex $u$, how can
we return $\highest(L_{u}\setminus Y_{e,u})$ quickly
given an edge $e$ as a query? Furthermore, as we perform
a post-order traversal, the vertex $u$ is not fixed. We need a dynamic
data structure where $u$ can be updated too.
\item \textbf{(The number of queries):} For any fixed edge $e$, if we query for $\highest(L_u \setminus Y_{e,u})$ on all $u$ or even just on all ancestors of $\lca_e$, then the total number of queries would be already super linear in $m$. We will exploit a structural lemma (\Cref{lem:general-case-property}) to reduce the number of queries.
\end{enumerate}

In order to describe our algorithm and address how do we cope with
both challenges, we first describe an algorithm that solves a simpler
case when $x_e=\lca_e$. That is, if we have a data structure that supports the queries to $\highest(L_{x_e}\setminus Y_e)$ then all lower vertices will be found by~\Cref{lem:uniqueness-of-lower-vertex}.
After we describe the algorithm that solves the simpler case, we generalize
the algorithm and solve both challenges in the general case.

\subsubsection*{Simple Case: {\normalfont{$\protect\lca_e$}} is always on the path
$\protect\hatP_{e}$.}

Let us assume here that, for every right edge $e$, $\protect\lca_e$ is on the path
$\protect\hatP_{e}$. Equivalently, $x_e={\lca_e}$. Even with this assumption, we will need to deal with the first challenge above.
We will remove this assumption in the next part.

We perform a post-order traversal on $T$. Suppose that $u$ is the current vertex. 
There are two main tasks: (1) we will show how to maintain
all the valid lower vertices below $u$, (2) we will show how to find the lower vertex $l_{e}$ for every right edge $e$
with $\lca_e = u$ via queries to the top-tree data structure.

To help solving the first task, we will exploit top-tree as follows.
The top-tree we are maintaining is always a subgraph (forest) of $T$, and each vertex is associated with a value that satisfying the following invariant.
Suppose the current vertex is $u$. For each vertex $v\in u^{\downarrow}$,
if $v$ is a valid lower vertex below $u$, the value of $v$
in the top-tree should be $depth_{v}$ (the depth of $v$ in $T$).
Otherwise, if $v$ is not a valid lower vertex below $u$, the value of $v$ is $\infty$.

Now we solve the first task.
The algorithm will maintain the invariant while running the
post-order traversal. Initially, the top-tree is the same as the spanning
tree with a super large value $\infty\gg n$ assigned to each vertex.
When the traversal reaches $u$, the algorithm will do the following
updates on the top-tree. First, the algorithm checks for each child $v$ of $u$:
if $H(v)\neq\NULL$, then assign value $depth_{v}$ to vertex $v$,
otherwise leave the value of $v$ unchanged (which is $\infty$).
Second, for any vertex $v\in u^{\downarrow}$ such that $H(v)$ is
a child of $u$, the algorithm assigns $\infty$ to $v$.
These vertices can be preprocessed once $H(v)$ is found.
Furthermore, we can safely assign $\infty$ to $v$ because that 
$v$ will no longer be a valid lower vertex below $u$ or below any vertex reached later in the post-order traversal.
Since
for each vertex $v$ the value of $v$ is changed at most twice in
the algorithm, by~\Cref{lem:data structures} the top-tree can be maintained in $O(n\log n)$ total
time.

Now we solve the second task.
We show how to use the top-tree to find $l_{e}$ for each edge $e=(u_{1},u_{2})$ where $\lca_e=u$ upon the reaching $u$ in the post-order traversal.
Recall that by~\Cref{lem:uniqueness-of-lower-vertex}, the lower vertex $l_{e}$ is the unique vertex in $\highest(L_{x_e}\setminus Y_e) = \highest(L_u\setminus \upath{u_1}{u_2})$.
It implies that $l_e$ is the unique vertex with smallest value among all vertices in $u^\downarrow\setminus\upath{u_1}{u_2}$ stored in the top-tree.
Therefore, we first apply $\cut(u,\mathrm{parent}(u))$ to separate the subtree rooted
at $u$. Then the lower vertex can be found by $l_{e}=\MinNonPath(u_{1},u_{2})$.
Finally, we apply $\link(u,\mathrm{parent}(u))$ to restore the tree. See \Cref{fig:le-by-minnonpath} for an illustration.

\begin{figure}[h]
\centering
\includegraphics[]{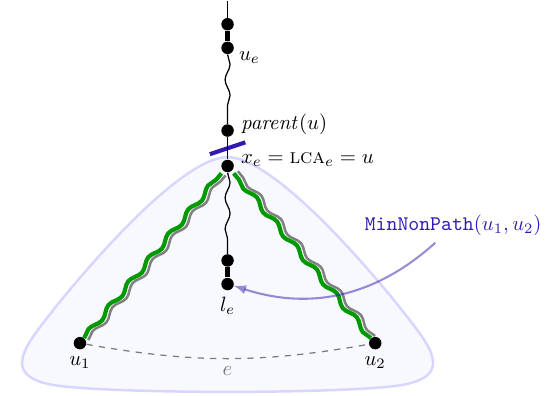}
\caption{Simple Case: $l_e$ can be found by $\MinNonPath(u_1, u_2)$ after separating the subtree rooted at $u=\lca_e$.}\label{fig:le-by-minnonpath}
\end{figure}

\subsubsection*{General Case: {\normalfont{$\protect\lca_e$}} may not be on the path
$\protect\hatP_{e}$.}

In the general case, we no longer have the assumption $\lca_e\in\hatP_{e}$.
By \Cref{lem:uniqueness-of-lower-vertex} and \Cref{cor:uniqueness-of-lower-vertices}, 
for each right edge $e$, 
it is natural to consider climbing up the tree from $\lca_e$ toward $x_e$.
The first time where the algorithm climbs up to a vertex $u$ such that $\highest(L_u\setminus Y_{e,u})\neq \emptyset$ implies that $u=x_e$.
However, the time
cost of performing such multiple queries per edge is unacceptable.

To deal with the challenge, we establish a key observation (\Cref{lem:general-case-property}) that leads to the following ``packaging'' idea. 
Initially every edge $e$ is individually packed and is assigned to the vertex $\lca_e$.
Upon reaching a vertex $v$ in the post-order traversal,
the data structure checks for each package whether or not a lower vertex can be assigned.
If a lower vertex $l$ is found, then all edges $e$ in the same package get the same lower vertex $l_e=l$. Otherwise, all packages will be combined into one large package and sent to the parent of $v$.
Our key observation states that, for all right edges $e$ so that
$v\in \hatP_{e}^\Leftrightarrow := \hatP_{e}^\downarrow \setminus 
(\hatP_e \cup l_e^\downarrow)$,
these right edges will be in the combined package and they all have the same minimal mincut. In particular, their lower vertices will be the same (see \Cref{fig:lemma-for-package}).%

\begin{figure}[h]
\centering
\includegraphics[]{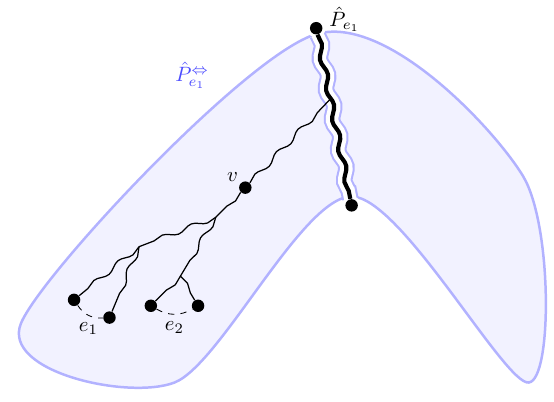}
\caption{An illustration to \Cref{lem:general-case-property}: if the algorithm arrives at vertex $v$ but has not found any lower vertex for $e_1$ and $e_2$ yet, then $\hat{P}_{e_1}=\hat{P}_{e_2}$.}\label{fig:lemma-for-package}
\end{figure}

\begin{lem}
\label{lem:general-case-property} Fix vertex $v$. For all right
edges $e_{1},e_{2}$ where their endpoints are in $v^{\downarrow}$
and $v\in\hatP_{e_{1}}^{\Leftrightarrow}\cap\hatP_{e_{2}}^{\Leftrightarrow}$
(i.e. $l_{e_{1}},l_{e_{2}}\not\in v^{\downarrow}$ and $u_{e_1},u_{e_2}\in v^\Uparrow$), we have that $\hatP_{e_{1}}=\hatP_{e_{2}}$.
\end{lem}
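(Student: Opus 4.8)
Here is the proof I would give.

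I would prove the stronger claim that $e_1$ and $e_2$ have the \emph{same} minimal mincut. This suffices, because a comparable $2$-respecting cut presented as a vertex set $X=u^{\downarrow}\setminus l^{\downarrow}$ with $l\in u^{\Downarrow}$ determines both $u$ and $l$: the upper vertex $u$ is the unique element of $X$ that is a (weak) ancestor of all of $X$, and then $l$ is the root of the subtree $u^{\downarrow}\setminus X=l^{\downarrow}$. Hence $M_1=M_2$ as sets forces $u_{e_1}=u_{e_2}$, $l_{e_1}=l_{e_2}$, and therefore $\hatP_{e_1}=\hatP_{e_2}$. Throughout, write $M_i:=u_{e_i}^{\downarrow}\setminus l_{e_i}^{\downarrow}$ for the minimal mincut of the right edge $e_i$.

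The first step is to extract from the hypothesis, for each $i\in\{1,2\}$, the two facts I actually use: \textbf{(a)} $u_{e_i}$ is a strict ancestor of $v$, so $v^{\downarrow}\subseteq u_{e_i}^{\downarrow}$; and \textbf{(b)} $v$ and $l_{e_i}$ are incomparable in $T$, so $v^{\downarrow}\cap l_{e_i}^{\downarrow}=\emptyset$. Since $\hatP_{e_i}^{\downarrow}=u_{e_i}^{\downarrow}$, the assumption $v\in\hatP_{e_i}^{\Leftrightarrow}$ unfolds to $v\in u_{e_i}^{\downarrow}$, $v\notin\hatP_{e_i}$, $v\notin l_{e_i}^{\downarrow}$; the first two give (a). For (b): $v\notin l_{e_i}^{\downarrow}$ rules out $l_{e_i}$ being a weak ancestor of $v$; and $v$ cannot be a weak ancestor of $l_{e_i}$ either, since together with $v\in u_{e_i}^{\downarrow}$ that would place $v$ on the path $\hatP_{e_i}=\upath{u_{e_i}}{l_{e_i}}$, contradicting $v\notin\hatP_{e_i}$. (Equivalently, if $l_{e_i}$ were a weak ancestor of $v$ it would be a weak ancestor of both endpoints of $e_i$, which lie in $v^{\downarrow}$ by hypothesis, putting those endpoints into $l_{e_i}^{\downarrow}$ and contradicting that $M_i$ separates them from $r$.)

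The main step is the observation that $M_2$ also separates both endpoints of $e_1$ from the root $r$: by (a) these endpoints lie in $v^{\downarrow}\subseteq u_{e_2}^{\downarrow}$, by (b) they miss $l_{e_2}^{\downarrow}$, so they lie in $M_2=u_{e_2}^{\downarrow}\setminus l_{e_2}^{\downarrow}$, while $r\notin M_2$ since $u_{e_2}\neq r$ (the parent tree edge of $u_{e_2}$ is one of the two tree edges crossing the strictly $2$-respecting cut $M_2$). Hence $M_2$ is a mincut separating $e_1$'s endpoints from $r$, so by minimality of $M_1$ we get $|M_1|\leq|M_2|$; the completely symmetric argument (swapping the roles of $e_1,e_2$ and using (a),(b) for $i=1$ with the endpoints of $e_2$) gives $|M_2|\leq|M_1|$, so $|M_1|=|M_2|$. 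Thus $M_2$ is a mincut of the least possible size separating $e_1$'s endpoints from $r$, i.e.\ a minimal mincut of $e_1$; by uniqueness of minimal mincuts (\Cref{lem:uniqueness-minimal}), $M_2=M_1$, and the conclusion follows as in the first paragraph.

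I expect the only delicate point to be the bookkeeping in the first step — cleanly deriving (a) and (b) from ``$v\in\hatP_{e_i}^{\Leftrightarrow}$ and both endpoints of $e_i$ lie in $v^{\downarrow}$'' and matching the set notation $\hatP_{e_i}^{\Leftrightarrow}$ to ancestor relations. Everything after that is only a couple of lines and uses just the definition of the minimal mincut of an edge together with \Cref{lem:uniqueness-minimal}; in particular \Cref{lem:crossing-mincuts} is not invoked directly here.
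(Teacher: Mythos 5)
Your proposal is correct and follows essentially the same route as the paper: both show that the hypothesis forces each minimal mincut $u_{e_i}^{\downarrow}\setminus l_{e_i}^{\downarrow}$ to contain the whole subtree $v^{\downarrow}$, hence each contains the other edge, and then conclude via uniqueness of minimal mincuts (\Cref{lem:uniqueness-minimal}) that the two minimal mincuts, and thus the canonical paths, coincide. You merely spell out in more detail the size-comparison step and the fact that the cut as a vertex set determines the pair $(u_e, l_e)$, which the paper leaves implicit.
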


\begin{proof} %
First, we show that the minimal mincut of $e_1$ also contains $e_2$, and vice versa. Since $v\in \hatP_{e_1}^\Leftrightarrow$, the minimal mincut of $e_1$ contains the whole subtree $v^\downarrow$. Since both the endpoints of $e_2$ are in $v^\downarrow$, the minimal mincut of $e_1$ contains $e_2$. Symmetrically, the arguments also hold if we swap $e_1$ and $e_2$.
By~\Cref{lem:uniqueness-minimal}, the minimal mincut of $e_1$ and $e_2$ are the same and hence $\hatP_{e_1}=\hatP_{e_2}$.
\end{proof}

Now we implement the algorithm in the general case via top-tree.
When the traversal reaches vertex $u$, there are several packages at $u$. For the single edge packages that are directly created in preprocessing, the same procedure from simple case works: we cut off the subtree rooted at $u$ and query $\MinNonPath$ for each of these edges.
For any combined package that was delivered from a child $v$,
since there is no valid lower vertex found in $v^\downarrow$,
we claim that it suffices to check $\highest(L_u\setminus v^\downarrow)$.%

To see why, first observe that $l_e \notin L_v$ because $l_e$ is never in $Y_{e,v}$ and we also know that $l_e \notin \highest(L_v \setminus Y_{e,v})$ (since $e$ is forwarded from $v$). 
Now, suppose that $l_e \in \highest(L_u)$. We claim that $l_e$ actually is in $\highest(L_u \setminus v^\downarrow)$. This is because if $l_e \in v^\downarrow$, then $l_e \in L_v$. But we already concluded above that $l_e \notin L_v$. This completes the claim that if $l_e \in \highest(L_u)$, then $l_e \in \highest(L_u \setminus v^\downarrow)$. Now, by \Cref{lem:uniqueness-of-lower-vertex}, if $l_e \in \highest(L_u)$, then $\{l_e\}\subseteq \highest(L_u \setminus v^\downarrow)\subseteq \highest(L_u \setminus Y_{e,u}) = \{l_e\}$. Therefore, $\highest(L_u \setminus v^\downarrow) = \{l_e\}$. %

Therefore, the algorithm separates the subtree induced from $u^\downarrow\setminus v^\downarrow$ by cutting off the edges $(u, \mathrm{parent}(u))$ and $(v, u)$. Then $\highest(L_u\setminus v^\downarrow)$ is obtained by querying $\MinTreeUp(u)$.
If $\MinTreeUp(u)$ returns a vertex $l$ with a value not equal to $\infty$, then all edges within the package from $v$ has their lower vertex assigned to $l$ by \Cref{lem:general-case-property}.
After processing all packages at $u$, the algorithm combines all edges where their lower vertices are not found yet into one package, which can be efficiently implemented with a linked list. Back to a high-level explanation, observe that although there can be many edges in the package from $v$ forwarded to its parent $u$ that we want to query. We only need to query \emph{once} for each package. This is how we resolve the second implementation challenge about the number of queries.

To analyze the runtime, we notice that the number of top-tree queries made is linear to the total number of packages that has ever created, which is $O(n+m)$
top-tree queries. 
Therefore, the algorithm
for computing $l_{e}$ for every edge $e$ can be done in $O(m\log n)$
total time.
However, from \Cref{sec:computing-highest-partner} we know that preprocessing $H(v)$ values takes $O(m\log^2 n)$ time.
Hence, the total time computing lower vertices for each edge takes $O(m\log^2 n)$ time,
which proves \Cref{lem:computing-lower-vertex}.

\section{Incomparable 2-respecting Minimal Mincuts of Edges}

\label{sec:label-incomparable}

In this section, we present the algorithm computing the minimal mincut of edge when it is a incomparable 2-respecting mincut of $T$. Similar to the comparable case, for every edge $e\in E$ we call $T$ the \emph{right} tree for $e$ if the minimal mincut of $e$ is a incomparable 2-respecting mincut of $T$. (In this case we also call $e$ a \emph{right edge} in $T$.)

\begin{lemma}
    \label{lem:labeling-edges-incomparable}
    There is an algorithm that, given a graph $G=(V,E)$ and a spanning tree $T\in \T$, in total time $O(m\log^2 n)$ computes, for every edge $e \in E$, an unordered vertex pair $g(e)=(v_e,w_e)$ or $\NULL$ where $v_e,w_e \in V$
    with the following guarantee: if $T$ is the right tree for $e$, then
    $v_e^\downarrow \cup w_e^\downarrow$ is the minimal mincut of $e$.
\end{lemma}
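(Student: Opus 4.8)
The plan is to mirror \Cref{sec:label-comparable}: I would first record the closed form of an incomparable cut and split the analysis by the position of $\lca_e=\LCA(u_1,u_2)$, then reduce the problem to finding a single designated vertex of the cut per edge (the analog of the lower vertex), and finally recover the second vertex by a path-decomposition argument in the spirit of \Cref{sec:reduc to lower}. Concretely, for incomparable $v\perp w$ the weight is $\C(v^\downarrow\cup w^\downarrow)=\C(v^\downarrow)+\C(w^\downarrow)-2\C(v^\downarrow,w^\downarrow)$, and $e=(u_1,u_2)$ is contained in the cut iff $u_1,u_2\in v^\downarrow\cup w^\downarrow$; since $v^\downarrow\cap w^\downarrow=\emptyset$, exactly one of two cases occurs: (\emph{one-sided}) one of $v,w$, say $w$, satisfies $w\in\lca_e^\uparrow$, so $u_1,u_2\in\lca_e^\downarrow\subseteq w^\downarrow$ and $e$ is covered by $w^\downarrow$ alone; or (\emph{split}) neither vertex lies in $\lca_e^\uparrow$, and then $v\in\upath{u_1}{\lca_e}\setminus\{\lca_e\}$ and $w\in\upath{u_2}{\lca_e}\setminus\{\lca_e\}$ lie on the two distinct branches leaving $\lca_e$. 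I would produce the best candidate in each case and keep the smaller; this is correct whenever $T$ is the right tree for $e$.

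Given a designated vertex — I would let $w_e$ be the vertex in $\lca_e^\uparrow$ in the one-sided case and (say) the branch-$u_2$ vertex in the split case — the second vertex $v_e$ is the incomparable partner of $w_e$ of minimum subtree size for which $v_e^\downarrow\cup w_e^\downarrow$ is a mincut \emph{and} contains $e$; by \Cref{lem:uniqueness-minimal} a smaller such partner would contradict minimality, so this characterization is exact. In the one-sided case every incomparable-mincut partner of $w_e$ already gives a cut containing $e$, so $v_e$ is simply a per-vertex quantity, precomputable for all vertices with the path-decomposition-plus-dynamic-tree machinery used in \Cref{sec:label-comparable} and \Cref{lem:labeling-vertices}: slide a vertex up each balanced path, update $\C(v^\downarrow)-2\C(v^\downarrow,w^\downarrow)$ by $\AddP$ over the $O(d(P^\downarrow))$ incident edges, and read off the best partner with a $\MinNonPath$-type query over the vertices off the root path of $w$, breaking ties by subtree size via \Cref{lem:data structures}. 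In the split case the ``contains $e$'' constraint forces $v_e$ onto the branch $\upath{u_1}{\lca_e}$; I would handle it exactly as \Cref{thm:reduction-to-lower-vertices} handles upper vertices — process a balanced path decomposition and, at each vertex, answer an $O(\log n)$-time query for all edges whose designated vertex sits there — cutting subtrees at $\lca_e$ as in \Cref{sec:compute lower vertex} to enforce the branch restriction.

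The crux, and the step I expect to be the main obstacle, is computing the designated vertex $w_e$ itself. Querying per edge over every candidate position up the tree is too expensive, just as in the comparable lower-vertex problem, and the coupling with the second vertex makes it worse. The resolution should be a structural lemma — this is the role of \Cref{lem:invariant-of-w}, the incomparable analog of \Cref{lem:uniqueness-of-lower-vertex} and \Cref{lem:general-case-property} — that pins down $w_e$ precisely enough that a single post-order traversal of $T$, maintaining a top-tree and using the ``packaging'' idea of \Cref{sec:compute lower main} (edges for which no designated vertex has yet been found are merged and forwarded to the parent, and one query per package suffices), computes all the $w_e$'s in one pass. I expect its proof to be a crossing argument in the spirit of \Cref{lem:uniqueness-of-lower-vertex}: take two competing incomparable mincuts that both contain $e$ and share a structural anchor, intersect them using \Cref{lem:crossing-mincuts}, and invoke \Cref{lem:uniqueness-minimal} to force them (and their designated vertices) to coincide.

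Putting it together, each of the two cases costs $O(m\log n)$ preprocessing (LCAs, subtree weights $\C(v^\downarrow)$, building dynamic and top trees) plus $O(d(P^\downarrow)\log n)$ per path of a balanced decomposition with an $O(\log n)$ summation overhead (\Cref{eqn:sum of paths}), for a total of $O(m\log^2 n)$; returning the better of the one-sided and split candidates yields the unordered pair $g(e)=(v_e,w_e)$, which equals the minimal mincut of $e$ whenever that mincut is incomparable and 2-respects $T$.
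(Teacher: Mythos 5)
Your case split and your treatment of the same-subtree (``one-sided'') case essentially match the paper: there the minimal mincut of $e$ is the minimal incomparable 2-respecting mincut of $\lca_e$, obtained by computing for every vertex its incomparable mincut partner of smallest subtree (path decomposition, dynamic/top-tree queries with subtree-size tie-breaking, as in \Cref{lem:labeling-vertex-incomparable} and \Cref{lem:computing-incomparable-vertex-partner}) and then propagating down the tree with one DFS --- no packaging machinery is needed for this case, so folding it into your ``crux'' is harmless overkill. Your characterization of $v_e$ as the smallest-subtree partner of a known $w_e$ whose union is a mincut containing $e$ is also fine.

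The genuine gap is the different-subtrees (``split'') case, which is the actual content of \Cref{lem:label-incomparable-path}, and your proposed route for it would fail. The packaging idea of \Cref{sec:compute lower main} rests on \Cref{lem:general-case-property}, whose proof needs \emph{both} endpoints of every pending edge to lie in the subtree of the current traversal vertex: only then does the minimal mincut of one pending edge contain the other edge, so that all edges in a package share a single minimal mincut and one query per package suffices. In the split case only one endpoint of each pending edge is below the vertex you are standing at; two edges pending there generally have different outer endpoints under different partners, hence different minimal mincuts, so no analog of \Cref{lem:general-case-property} holds and the ``one query per package'' accounting collapses into exactly the per-edge querying you ruled out as too expensive. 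A crossing argument via \Cref{lem:crossing-mincuts} and \Cref{lem:uniqueness-minimal} cannot rescue this, because there is no shared structural anchor forcing the competing cuts to coincide. The paper instead climbs each path $P$ of the balanced decomposition maintaining \emph{$P$-outer} minprecut values (meaningful because \Cref{lem:outer property} guarantees that for each edge at least one of its two sides has its partner outside $P^\downarrow$ --- note this also breaks your fixed ``always the $u_2$-side'' designation; the achievable guarantee is only that the pair is found on the path of one of the two sides), maintains a witness set $W$ of $P$-outer minprecut partners satisfying the invariant of \Cref{lem:invariant-of-w}, stores pending edges in a binary search tree keyed by the pre-order index of their outer endpoint, and, whenever the current vertex acquires a $P$-outer mincut partner, extracts precisely the pending edges whose outer endpoint lies under some witness and answers each with its own $\MinPDown$ query (\Cref{lem:incomparable-key-invariant}). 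That witness-plus-range-extraction mechanism, and the correctness guarantee (3) of \Cref{lem:invariant-of-w}, is the missing idea in your plan; ``I expect a structural lemma'' does not supply it, and the structural lemma you expect (a pinpointing statement in the spirit of \Cref{lem:uniqueness-of-lower-vertex}) is not what makes the incomparable case go through.
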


Let $e = (u_1,u_2)$ be a right edge in $T$ whose minimal mincut is $v^\downarrow \cup w^\downarrow$ where $v\neq w$. 
To prove \Cref{lem:labeling-edges-incomparable},
there are two main cases to consider: (1) the endpoints are in different subtrees: $u_1\in v^\downarrow$ and $u_2\in w^\downarrow$, or
(2) both endpoints are in the same subtree, e.g. both $u_1,u_2\in v^\downarrow$. 

For the first case, we solve it based on the main lemma below. We will devote to most of this section for proving it.

\begin{lemma}
    \label{lem:label-incomparable-path}
    Let ${\cal P}$ be a path decomposition of $T$. There is an algorithm that preprocesses $G=(V,E)$
    and spanning tree $T$ in $O(m\log n)$ time so that, given any path $P\in \cP$, in time
    $O(d(P^{\downarrow})\log n)$ the algorithm computes, for every edge $e=(u_{1},u_{2})$
    such that $e\in E(P^\downarrow)$,
    an unordered vertex pair $g(e, P) =(v,w)$ or $\NULL{}$ with the following guarantee: 
    
    Suppose $v^{\downarrow}\cup w^{\downarrow}$ is the incomparable minimal mincut for $e$ such that $u_1\in v^\downarrow$ and $u_2\in w^\downarrow$. Let $P_v, P_w\in \mathcal{P}$ be the paths that contains $v$ and $w$ respectively.
    Then, either $g(e, P_v)=(v, w)$ or $g(e, P_w)=(v, w)$.
\end{lemma}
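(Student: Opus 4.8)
The plan is to handle the given path $P$ by a single traversal of the subtree $P^{\downarrow}$ that sweeps an \emph{anchor vertex} upward along $P$, maintaining in a dynamic tree the weight of every incomparable cut whose ``left'' side is the anchor's subtree, and then reading off, for each edge, the correct ``right'' vertex by one dynamic-tree query. The starting point is the identity, for incomparable $v\perp w$,
\[
  \C(v^{\downarrow}\cup w^{\downarrow})=\C(v^{\downarrow})+\C(w^{\downarrow})-2\,\C(v^{\downarrow},w^{\downarrow}),
\]
valid because $v^{\downarrow}$ and $w^{\downarrow}$ are disjoint and every edge between them is double-counted on the right. Factoring out the part that depends only on the anchor, it suffices to maintain the \emph{incomparable precut values} $\CI_{v}(w):=\C(w^{\downarrow})-2\,\C(v^{\downarrow},w^{\downarrow})$; then the weight of $v^{\downarrow}\cup w^{\downarrow}$ is the scalar $\C(v^{\downarrow})$ plus the per-vertex value $\CI_{v}(w)$, in exact analogy with the comparable case in the proof of \Cref{lem:labeling-upper-vertex}. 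For preprocessing I would, in $O(m\log n)$ time, compute $\lca_{e}$ for every edge together with a level-ancestor/LCA structure, compute $\C(x^{\downarrow})$ for all $x$ with $\AddP$ calls exactly as in the comparable case, and build a dynamic tree on $T$ with $val[w]=\C(w^{\downarrow})$.

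Next, on a path $P=(v_{1},\dots,v_{k})$ with $v_{1}$ deepest, I would process $v_{1},\dots,v_{k}$ in order while keeping the invariant that after processing $v_{i}$ one has $val[w]=\CI_{v_{i}}(w)$ for all $w$, together with a scalar $s_{i}=\C(v_{i}^{\downarrow})$: passing from $v_{i-1}$ to $v_{i}$, a DFS from $v_{i}$ that avoids the subtree of $v_{i-1}$ enumerates $v_{i}^{\downarrow}\setminus v_{i-1}^{\downarrow}$, and for each incident edge $(a,b)$ with $a$ in that set a call $\AddP(b,-2w(a,b))$ (plus a scalar update) restores the invariant. By the balanced property of $\cP$ the total cost of these updates over all of $P$ is $O(d(P^{\downarrow})\log n)$.

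The core of the correctness is the following. Let $e=(u_{1},u_{2})$ be a right edge whose minimal mincut is the incomparable cut $v^{\downarrow}\cup w^{\downarrow}$ with $u_{1}\in v^{\downarrow}$, $u_{2}\in w^{\downarrow}$. Then $v$ and $w$ lie strictly below $\lca_{e}$ on the $u_{1}$- and $u_{2}$-branches (a vertex on the shared $r$-to-$\lca_{e}$ segment would make $v$ and $w$ comparable), so $v\in\upath{u_{1}}{\lca_{e}}\setminus\{\lca_{e}\}$, $w\in\upath{u_{2}}{\lca_{e}}\setminus\{\lca_{e}\}$, and moreover $w$ is exactly the \emph{deepest} minimizer of $\CI_{v}(\cdot)$ over $\upath{u_{2}}{\lca_{e}}\setminus\{\lca_{e}\}$: for any $w'$ on that sub-path the cut $v^{\downarrow}\cup w'^{\downarrow}$ contains $e$, hence has weight $\C(v^{\downarrow})+\CI_{v}(w')\ge\lambda=\C(v^{\downarrow})+\CI_{v}(w)$, so $w$ is a minimizer, while a strictly deeper minimizer $w'$ would give a mincut $v^{\downarrow}\cup w'^{\downarrow}$ of $e$ properly inside $v^{\downarrow}\cup w^{\downarrow}$, contradicting minimality (one may also argue via \Cref{lem:crossing-mincuts} exactly as in the proof of \Cref{lem:uniqueness-of-lower-vertex}). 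Consequently, once the anchor sweep on $P_{v}$ is exactly at $v$, the pair $(v,w)$ is recovered by: looking up $\lca_{e}$ and the child $c$ of $\lca_{e}$ toward $u_{2}$; temporarily $\Cut$ting the tree edge between $c$ and $\lca_{e}$; calling $\MinPDown(u_{2})$, which returns the deepest minimizer of $val$ on $\upath{u_{2}}{c}=\upath{u_{2}}{\lca_{e}}\setminus\{\lca_{e}\}$, namely $w$; $\Link$ing the edge back; and verifying $s_{i}+val[w]=\lambda$. By symmetry the same holds on $P_{w}$ with the roles of $(u_{1},v)$ and $(u_{2},w)$ exchanged, and one such query costs $O(\log n)$.

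The hard part, and what still has to be supplied, is to guarantee that for each right edge $e$ this query is actually issued at the correct anchor on at least one of $P_{v},P_{w}$, while issuing only $O(1)$ queries per edge so that the per-path total stays $O(d(P^{\downarrow})\log n)$. Querying na\"ively at the deepest ancestor of $u_{1}$ on $P_{v}$ does not suffice, because that vertex may be a proper descendant of $v$ that fails the validity check. This is the incomparable analogue of the difficulty that, in the comparable case, forced the ``top vertex'' lemma (\Cref{lem:uniqueness-of-lower-vertex}) together with the packaging/post-order trick; the resolution here is a structural lemma on incomparable $2$-respecting mincuts (the ``invariant of $w$'' of this section) that pins down, for each edge $e$ and each of $P_{v},P_{w}$, a single sweep position whose anchor equals the correct cut vertex, and that assigns each edge of $E(P^{\downarrow})$ to $O(1)$ sweep positions so the work telescopes to $O(d(P^{\downarrow}))$ queries. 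Granting that lemma, the stated guarantee follows (with $g(e,\cdot)=\NULL$ or arbitrary on all other paths), the running-time bounds are exactly as accounted above, and aggregating over all paths via a variant of \Cref{lem:reduc to path} gives the $O(m\log^{2}n)$ bound behind \Cref{lem:labeling-edges-incomparable} (after combining with the separately handled case where both endpoints of $e$ lie in the same subtree).
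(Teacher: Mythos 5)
Your setup reproduces the easy half of the paper's argument and is correct as far as it goes: the precut identity, the $O(m\log n)$ preprocessing, the sweep along $P$ maintaining $val[w]=\CI_{v_i}(w)$ via \texttt{AddPath} updates in $O(d(P^{\downarrow})\log n)$ total, and the local characterization that, \emph{given} the correct anchor $v$, the partner $w$ is the deepest minimizer of $\CI_v(\cdot)$ on $\upath{u_2}{\lca_e}\setminus\{\lca_e\}$ and is recoverable by one $\MinPDown$ query with a validity check. All of this matches the paper's implementation of \texttt{LocalUpdate}.

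However, there is a genuine gap, and you name it yourself: you never establish at which sweep position each edge should be queried, nor how to issue those queries within the stated time budget; you simply posit a "structural lemma" that pins down a single correct sweep position per edge and assigns each edge to $O(1)$ positions, and then grant it. That lemma is precisely the hard content of this proof in the paper, and it does not take the form you conjecture. The paper's resolution has three interlocking pieces: (i) \Cref{lem:outer property}, which lets one assume WLOG that $w$ is a $P_v$-outer vertex, so the sweep only tracks $P$-outer minprecut values; (ii) a crossing/minimality argument showing no mincut containing $e$ has its $P$-side vertex strictly below $v$, so $v$ is the \emph{deepest} vertex of $P$ having a $P$-outer mincut partner whose subtree contains $u_2$ — this is what makes "assign at the first opportunity and never revisit" correct, and it is a different statement from your observation about deeper minimizers $w'$ on the $u_2$-branch for a fixed anchor; and (iii) the witness-set invariant (\Cref{lem:invariant-of-w}, used via \Cref{lem:incomparable-key-invariant}) together with a global BST over pending edges keyed by pre-order of their outer endpoints: when the current anchor has an outer mincut partner, the algorithm does \emph{not} query per pending edge (it cannot afford to, since only a tiny fraction of pending edges may be resolvable at this anchor) but instead range-extracts exactly the edges whose outer endpoints lie under some witness $w\in W$, charging each extraction to the permanent removal of that edge and each witness insertion to an edge insertion. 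Your plan of issuing a per-edge query (cut at the child of $\lca_e$, $\MinPDown(u_2)$, link back) has no mechanism for deciding, at anchor time, \emph{which} pending edges to touch, and the missing lemma you invoke is exactly this scheduling-plus-correctness machinery rather than a purely structural fact about 2-respecting cuts. Without it the proof of \Cref{lem:label-incomparable-path} is incomplete.
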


For the second case, observe that $\LCA(u_1,u_2)\in v^\downarrow$ since $u_1, u_2\in v^\downarrow$. Therefore, the minimal mincut of the right edge $e$ is also the \emph{minimal incomparable 2-respecting mincut} of vertex $\LCA(u_1,u_2)$, which can be computed efficiently using \Cref{lem:labeling-vertex-incomparable}. The proof of \Cref{lem:labeling-vertex-incomparable} is deferred to \Cref{sec:labeling-vertices}, since it also serves as a building block of \Cref{lem:labeling-vertices}.

\begin{lemma}
    \label{lem:labeling-vertex-incomparable}
    There is an algorithm that, given a spanning tree $T$ of $G=(V,E)$,
    in total time
    $O(m\log^2 n)$ computes, for every vertex $u\in V$ the minimal incomparable 2-respecting mincut candidate $f(u) = (v_u,w_u)$ or $\NULL$ with the following guarantee:
    
    If there exists an incomparable 2-respecting cut that separating $u$ from root $r$, then $f(u)\neq \NULL{}$ and $v_u^\downarrow\cup w_u^\downarrow$ is such a mincut with smallest size.
\end{lemma}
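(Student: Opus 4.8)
The plan is to reduce the computation of the minimal incomparable 2-respecting mincut of a vertex $u$ to a collection of per-path problems via the path decomposition framework of \Cref{lem:reduc to path}, mirroring the structure used throughout \Cref{sec:label-comparable}. Fix the spanning tree $T$. For an incomparable cut $v^\downarrow\cup w^\downarrow$ with $v\perp w$, its weight decomposes as
\[
\C(v^\downarrow\cup w^\downarrow)=\C(v^\downarrow)+\C(w^\downarrow)-2\C(v^\downarrow,w^\downarrow),
\]
so if we think of $v$ as ``fixed'' and define an \emph{incomparable precut value} $\CI_v(w):=\C(w^\downarrow)-2\C(v^\downarrow,w^\downarrow)$, then the cut is a mincut iff $\C(v^\downarrow)+\CI_v(w)=\lambda$, and among such we want the one of smallest size $|v^\downarrow|+|w^\downarrow|$. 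The vertex $u$ lies in $v^\downarrow\cup w^\downarrow$ iff $u\in v^\downarrow$ or $u\in w^\downarrow$; by symmetry it suffices, for each vertex $z$ with $u\in z^\downarrow$, to find the best incomparable partner $w\perp z$ (i.e.\ the $w$ minimizing $\CI_z(w)$ over all $w$ incomparable to $z$, restricted to those giving value $\lambda-\C(z^\downarrow)$, breaking ties by subtree size), and then take the best result over all $z\in u^\uparrow$.

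First I would handle the per-path subproblem. Given a path $P=(v_1,\dots,v_k)\in\cP$ with $v_1$ deepest, process $v_1,\dots,v_k$ in order while maintaining a dynamic tree on $T$ whose value array satisfies the invariant that after processing $v_i$, $\mathit{val}[w]=\CI_{v_i}(w)$ for every $w$ \emph{incomparable} to $v_i$ (and $\infty$, or is simply ignored, for $w$ comparable to $v_i$, which we enforce by cutting off the path from $v_i$ to the root). The preprocessing initializes $\mathit{val}[w]=\C(w^\downarrow)$ in $O(m\log n)$ time exactly as in footnote~\ref{note:preprocess}; moving from $v_{i-1}$ to $v_i$ we must subtract $2\C(v_i^\downarrow\setminus v_{i-1}^\downarrow,w)$ from $\mathit{val}[w]$ for the relevant $w$ by calling $\AddP(u',-2\C(u,u'))$ for each edge $(u,u')$ with $u\in v_i^\downarrow\setminus v_{i-1}^\downarrow$, found by a local DFS, costing $O(d(v_i^\downarrow\setminus v_{i-1}^\downarrow)\log n)$. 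To query for the best incomparable partner of $v_i$, cut the edge from $v_i$ to its parent (removing all strict ancestors and $v_i$'s own subtree from consideration of incomparability issues — more carefully, cut both $(v_i,\mathrm{parent}(v_i))$ and the child edges along $P$ so that the remaining forest is exactly the set of vertices incomparable to $v_i$), use $\MinTreeUp$ (or $\MinTreeDown$) to get the minimizer of $\mathit{val}$ with the appropriate subtree-size tie-break, check in $O(1)$ whether it achieves value $\lambda-\C(v_i^\downarrow)$ using precomputed $\C(v_i^\downarrow)$, and record the candidate pair. Crucially, when $e=(u_1,u_2)$ later needs its answer, any vertex $z$ on $u_1^\uparrow$ (in particular, the relevant $v$) will be processed on some path, so collecting, for each vertex $z$, the best incomparable partner found while $z$ was the ``active'' vertex suffices; we then propagate these up the tree. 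This per-path routine runs in $O(d(P^\downarrow)\log n)$ time, and by \Cref{lem:reduc to path} the total cost is $t_p+O(m\log^2 n)=O(m\log^2 n)$.

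The remaining piece is to combine, for each vertex $u$, the candidate pairs recorded at all $z\in u^\uparrow$ and select the one of smallest size. Because the candidate at $z$ is the pair $(z,w)$ with $w$ the best incomparable partner of $z$, and $u$ may be a descendant of many such $z$, I would do a single root-to-leaves DFS on $T$ maintaining the best pair seen so far along the current root-to-$u$ path (ordered by cut size, i.e.\ $|z^\downarrow|+|w^\downarrow|$, with a tie-break that is irrelevant since any two incomparable mincuts through $u$ of equal size must coincide by \Cref{lem:uniqueness-minimal}); this costs $O(n)$ after the per-path phase. Correctness of the per-path routine rests on: the value identity above, the tie-breaking justification (minimizing $|w^\downarrow|$ given $v$, then the DFS minimizing over $v$, yields the globally smallest incomparable mincut containing $u$, and uniqueness forbids distinct minimizers), and the fact that $\MinTreeUp$ / $\MinNonPath$-style primitives from \Cref{lem:data structures} correctly realize ``minimum value over vertices incomparable to $v_i$'' once the forest is trimmed. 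The main obstacle I anticipate is getting the dynamic-forest bookkeeping exactly right so that a single $\MinTreeUp$ query returns the minimizer over precisely the set of vertices incomparable to the active vertex $v_i$ — i.e.\ correctly excluding both $v_i^\downarrow$ and $v_i^\Uparrow$ (and the portions of $P$ already processed or yet to be processed) by the right sequence of $\Cut$/$\Link$ operations — and arguing that the subtree-size tie-break inside the primitive, combined with the outer DFS tie-break, indeed produces the \emph{minimal} (smallest-size) mincut rather than merely \emph{a} mincut; the uniqueness lemma \Cref{lem:uniqueness-minimal} is what ultimately makes this tie-breaking argument go through.
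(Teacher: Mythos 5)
Your proposal is correct and follows essentially the same route as the paper: the paper first computes, for every vertex $v$, its smallest-subtree incomparable mincut partner $r_v$ (\Cref{lem:computing-incomparable-vertex-partner}) by exactly your per-path scheme (maintain $val[w]=\CI_{v_i}(w)$ via $\AddP$ updates, check $\C(v_i^\downarrow)+val^{\min}=\lambda$, tie-break by subtree size), and then combines over ancestors with a single top-down DFS, just as you do. The one bookkeeping point you flagged is handled in the paper simply by initializing $val[w]=\infty$ for every $w$ comparable to the deepest path vertex $v_1$ (all ancestors of any $v_i$ are among these), so that a single $\Cut(v_i,\mathrm{parent}(v_i))$ followed by $\MinTreeDown(\mathrm{parent}(v_i))$ and a $\Link$ already queries exactly the vertices incomparable to $v_i$.
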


Now we prove \Cref{lem:labeling-edges-incomparable} using \Cref{lem:label-incomparable-path,lem:labeling-vertex-incomparable}:

\begin{proof}[Proof of \Cref{lem:labeling-edges-incomparable}]
    Suppose $v^{\downarrow}\cup w^{\downarrow}$ is the minimal mincut for a right edge $e = (u_1,u_2)$. 
    Depending on whether the endpoints of $e$ are in the same subtree (rooted in either $v$ or $w$), or in the different subtrees, 
    we consider the following  two cases:
    \begin{description}[itemsep=0pt]
        \item[Case 1: (different subtrees)] WLOG, assume $u_1\in v^\downarrow$ and $u_2\in w^\downarrow$. Given a balanced path decomposition $\cP$, for each path $P\in \cP$, the algorithm in \Cref{lem:label-incomparable-path} computes a candidate vertex pair $g(e,P)=(v',w')$ or $\NULL{}$ for every edge $e = (u_1,u_2)$ such that $e\in E(P^\downarrow)$.
        Plugging in the path decomposition framework \Cref{lem:reduc to path}, we can compute $g(e,P)$ for all edge $e = (u_1,u_2)\in E$ and all $P\in \cP$ such that $e\in E(P^\downarrow)$ in $O(m\log^2 n)$ time.%
        
        Next, for each edge $e$, 
        the algorithm compares the size of all candidate mincuts $g(e,P)$
        for each path $P$ such that $e\in E(P^\downarrow)$
        and select the smallest one to be $(v_e,w_e)$.
        This step takes $O(\log n)$ time for each edge $e$ since there are at most $O(\log n)$ paths $P$ such that $e\in E(P^\downarrow)$ by the balanced property of $\cP$.
        Since \Cref{lem:label-incomparable-path} guarantees that either $g(e,P_v)$ or $g(e,P_w)$ equals to $(v,w)$, our algorithm find the minimal mincut of  $e$.
        
        The total time for this case is $O(m\log^2 n) + O(m\log n) = O(m\log^2 n)$.

        \item[Case 2: (same subtree)] WLOG, assume $u_1, u_2\in v^\downarrow$. This implies $\LCA(u_1,u_2)\in v^\downarrow$. Therefore, the minimal mincut of $e$ will also be the minimal incomparable 2-respecting mincut of $\LCA(u_1,u_2)$, which has been computed via the algorithm from \Cref{lem:labeling-vertex-incomparable} in $O(m\log^2 n)$ time.
        
    \end{description}
    Since it takes $O(m\log^2 n)$ time for both cases, the whole algorithm runs in $O(m\log^2 n)$ total time.
\end{proof}

The rest of the section is devoted for proving \Cref{lem:label-incomparable-path}.
In~\Cref{sec:min-v-precut-and-p-outer-min-v-precut} we introduce several essential concepts that allow us to describe and prove the algorithm in a precise way.
In~\Cref{sec:incomparable-alg-desc} we describe the high-level idea to the algorithm that finds all incomparable minimal 2-respecting mincut candidate for each edge, and in \Cref{sec:incomparable implementation} we complete the implementation details.
Finally in \Cref{sec:incomparable correctness} and \Cref{sec:incomparable time analysis} we prove the correctness of the algorithm and analyze the runtime, concluding the proof of~\Cref{lem:label-incomparable-path}.

\subsection{Minimum $v$-Precuts and $P$-Outer Minimum $v$-Precuts}
\label{sec:min-v-precut-and-p-outer-min-v-precut}

Fix a spanning tree $T\in {\cal T}$. 
Our algorithm for \Cref{lem:label-incomparable-path} requires implementation on efficiently identifying incomparable 2-respecting mincuts $v^\downarrow\cup w^\downarrow$ with one vertex $v$ on the path $P\in\cP$.
In this section we first recall the concepts of \emph{minprecut values} from~\cite{karger2000minimum} and then we 
introduce \emph{outer minprecut values} that help us to describe the algorithm with clarity.

Observe that for any pair of incomparable vertices $v\perp w$, the weight of the cut induced by $v^\downarrow\cup w^\downarrow$ can be expressed as
\[
    \C(v^\downarrow\cup w^\downarrow) = \C(v^\downarrow) + \C(w^\downarrow) - 2\C(v^\downarrow,w^\downarrow) ~.
\]

Suppose $v^\downarrow$ is one \emph{part} (the other part is $w^\downarrow$) %
of the minimal incomparable 2-respecting mincut of $e$.
Then by factoring out the terms only related to $v$, it suffices to compute the following \emph{incomparable precut values} for all $w\perp v$.

\begin{definition}[Incomparable precut value]
    The \emph{incomparable precut value} of $v$ at $w$, is defined by 
    \[
        \CI_v(w) := \C(w^\downarrow) - 2\C(v^\downarrow,w^\downarrow) ~.
    \]
    Note that the value is only defined in the incomparable scenario $w\perp v$.%
\end{definition}

The incomparable precut values are defined analogously to the comparable precut values. Actually, they share the same functionality in the sense that given a vertex $v$, an algorithm can be designed to find a \emph{partner} $w$ such that $v$ and $w$ together idetifies a incomparable (resp. comparable) 2-respecting mincut $v^\downarrow\cup w^\downarrow$ (resp. $w^\downarrow\setminus v^\downarrow$). Such partner should be a minimizer of the \emph{minpercut value} defined below.

\begin{definition}
    The \emph{incomparable minprecut value} of $v$, is defined by
    \[
        \CI_v := \min\{ \CI_v(w)\mid w\perp v \} ~.
    \]
\end{definition}

A vertex $w$ is called an \emph{incomparable mincut partner} of $v$ if $v^\downarrow\cup w^\downarrow$ is a mincut. For brevity, we will omit the word ``comparable/incomparable'' and simply call $\CI_v(w)$ as precut value of $v$ (at $w$) and call $w$ a \emph{mincut partner} of $v$ if the context is clear. Besides, we call $w$ a \emph{minprecut partner} of $v$ if $\CI_v(w) = \CI_v$. Note that a mincut partner must be a minprecut partner, but it may not be correct conversely.

Given a path $P\in \cP$, our algorithm will maintain the minprecut value of the current vertex $v$ when climbing up the path $P$. It turns out that if we further exclude all the candidates of partner within $P^\downarrow$, the algorithm can maintain minprecut values and partners of $v$ in a more efficient way, which leads to the following definitions.

\begin{definition}
    We call $w$ an outer vertex of path $P$ if $w\not\in P^\downarrow$.
\end{definition}

The \emph{outer minprecut value} is defined similar to the minprecut value except that only the outer vertices are taken in consideration.

\begin{definition}
    \label{def:outer-minprecut}
    The \emph{$P$-outer minprecut value} of $v$ is defined as the minprecut value of $v$ such that the minprecut partner $w$ is an outer vertex of $v$. Specifically,
    \[
        \CI_{v,P} = \min\{\CI_v(w)\mid w\notin P^\downarrow, w\perp v\}
    \]
\end{definition}

We call $w$ a \emph{$P$-outer minprecut partner} of $v$ if $w$ is an outer vertex of $P$ and $\CI_{v,P}(w) = \CI_{v,P}$.
Now we are ready to describe the algorithm.

\subsection{Algorithm Description}
\label{sec:incomparable-alg-desc}
Now, we are ready to describe the algorithm for \Cref{lem:label-incomparable-path}.

The high-level description of the algorithm is as follow: 
Given a path $P = (v_1,v_2\cdots,v_k)\in \cP$ with $v_1$ being the deepest vertex, our algorithm visits $v_i$ in the order of $i=1, 2, \ldots, k$.
At iteration $i$, 
the algorithm visits $v_i$ and maintains the invariant 
such that the values $\CI_{v_i}(w)$ for all $P$-outer vertex $w$ can be accessed via $val[w]$, and the $P$-outer minprecut value $\CI_{v_i, P}$ is stored in $val^*$.

Whenever the algorithm reaches the vertex $v_i$, it first calls a subroutine called $\LocalUpdate(v_i)$ that will recover the invariant on $val[w]$. %
Once the invariant holds, if $v_i$ has a $P$-outer mincut partner $w$,
then any edge $e = (u_1,u_2)$ where $u_1\in v_i^\downarrow$ and $u_2\in w^\downarrow$ which has not been assigned a minimal mincut candidate yet should obtain an incomparable 2-respecting mincut $(v_i, w')$ for some specific choice of $w'\in w^\downarrow$ as the minimal mincut candidate $g(e,P)$.

To implement this high-level plan,  our algorithm will maintain a set $\hat{E}\subseteq E(v_i^\downarrow)$
that contains all edge $e$ whose $g(e, P)$ values is not assigned yet,
and a \emph{witness} set $W$ of vertices that will be helpful for extracting the correct minimal mincut candidates. We summarize the invariant for $W$ as follows, and we defer the proof to the end of \Cref{sec:incomparable correctness}.

\begin{lemma}[Invariant for $W$ and $\hat{E}$]\label{lem:invariant-of-w}
The witness set $W$ and the edge set $\hat{E}$ satisfies the following invariant whenever the algorithm returns from \textup{$\LocalUpdate(v_i)$} when visiting $v_i\in P$:
\begin{enumerate}[itemsep=0pt]
    \item[(1)] $W$ is always a subset of $P$-outer minprecut partners of the current visiting vertex $v_i$.
    \item[(2)] Any edge $e=(u_1, u_2) \in \hat{E}$ satisfies that $u_1\in v_i^\downarrow$ and $u_2\notin P^\downarrow$.
    \item[(3)] \textbf{(Correctness Guarantee):} for any edge $e=(u_1,u_2)\in \hat{E}$ such that $u_1\in v_i^\downarrow$, if there exists a $P$-outer mincut partner of $v_i$ that is an ancestor of $u_2$, then there exists a witness $w\in W$ that is an ancestor of $u_2$.
\end{enumerate}
\end{lemma}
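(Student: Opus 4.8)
The plan is to prove the three properties by induction on the iteration index $i$, following the flow of $\LocalUpdate$. The base case is $i=1$: here $\hat{E}$ starts out as the set of non-tree edges of $G$ with at least one endpoint in $v_1^\downarrow$ and the other endpoint outside $P^\downarrow$ (edges with both endpoints in $v_1^\downarrow$ or both outside are never relevant for a mincut of the form $v_1^\downarrow \cup w^\downarrow$ with $w$ a $P$-outer vertex), and $W$ starts out as the set of deepest $P$-outer vertices attaining the $P$-outer minprecut value $\CI_{v_1,P}$, recomputed from scratch after the first call to $\LocalUpdate(v_1)$. Properties (1) and (2) are immediate from these definitions; property (3) needs the structural observation below.

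For the inductive step, I would carefully track what $\LocalUpdate(v_i)$ does when moving from $v_{i-1}$ to $v_i$. First it updates $val[w]$ from $\CI_{v_{i-1}}(w)$ to $\CI_{v_i}(w)$ for every $P$-outer $w$; by the formula $\CI_v(w) = \C(w^\downarrow) - 2\C(v^\downarrow, w^\downarrow)$ this amounts to adjusting for the edges incident to $v_i^\downarrow \setminus v_{i-1}^\downarrow$, exactly as in the comparable case of \Cref{lem:labeling-upper-vertex}, and it recomputes $val^* = \CI_{v_i,P}$. Then it refreshes $W$: since the minprecut value can only change monotonically as $v$ climbs, any $w$ that was a $P$-outer minprecut partner of $v_{i-1}$ but no longer attains $\CI_{v_i,P}$ is evicted from $W$, and any newly-qualifying $P$-outer vertex is inserted. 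Property (1) is then immediate by construction. For property (2): edges of $\hat{E}$ whose $v_i^\downarrow$-endpoint is actually in $v_{i-1}^\downarrow$ are inherited, new edges with an endpoint in $v_i^\downarrow \setminus v_{i-1}^\downarrow$ and the other endpoint $P$-outer are added, and whenever $g(e,P)$ gets assigned the edge $e$ is removed from $\hat{E}$; so every surviving $e=(u_1,u_2)$ still has $u_1 \in v_i^\downarrow$ and $u_2 \notin P^\downarrow$.

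The heart of the argument is property (3), the correctness guarantee, and this is the step I expect to be the main obstacle. Fix $e=(u_1,u_2)\in \hat E$ with $u_1 \in v_i^\downarrow$, and suppose some $P$-outer mincut partner $w^*$ of $v_i$ is an ancestor of $u_2$; I must exhibit a witness $w\in W$ that is also an ancestor of $u_2$. Since $w^*$ is a mincut partner, $\CI_{v_i}(w^*) = \CI_{v_i} \le \CI_{v_i,P} = val^*$, and as $w^*$ is $P$-outer we in fact have $\CI_{v_i}(w^*) = \CI_{v_i,P}$, i.e.\ $w^*$ attains the $P$-outer minprecut value. The danger is that $w^*$ itself is not in $W$ (only a bounded-size subset of minprecut partners is stored). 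The resolution should use the crossing-mincut structure: among all $P$-outer minprecut partners of $v_i$ that are ancestors of $u_2$, consider the deepest one — call it $w'$. I claim $w' \in W$, which is what $\LocalUpdate$ is designed to maintain, essentially because the ancestors-of-$u_2$ that are minprecut partners form a chain closed under "take the deepest," and $W$ is maintained to contain exactly such deepest representatives along each relevant root-to-leaf direction; any minprecut partner strictly below $w'$ on the $u_2$-to-root path would itself be a deeper valid choice, contradicting maximality. The uniqueness of minimal mincuts (\Cref{lem:uniqueness-minimal}) together with \Cref{lem:crossing-mincuts} is what forces these partners to nest coherently rather than cross badly. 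I would then verify that the eviction/insertion rules in $\LocalUpdate$ preserve precisely this "deepest partner in each direction" property across the transition from $v_{i-1}$ to $v_i$: an evicted vertex ceases to be a minprecut partner at all (so it cannot be the required $w'$), and any vertex that becomes a new minprecut partner and is the deepest along some $u_2$-direction is inserted. Chasing through these cases cleanly — especially confirming that a partner can never "reappear" after eviction and that the path-decomposition bookkeeping does not drop a needed witness — is the delicate part, and is where I would spend the bulk of the write-up.
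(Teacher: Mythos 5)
There is a genuine gap, and it sits exactly where you predicted the difficulty would be: property (3). Your sketch rests on a description of how $W$ is maintained that does not match the algorithm. \Cref{alg:local update} never evicts individual stale elements and never inserts ``all newly-qualifying'' minprecut partners; it resets $W$ wholesale whenever $val^*$ strictly drops (and again, in \Cref{alg:computing incomparable}, whenever a mincut is found at the current vertex), and it inserts only one vertex per newly scanned edge $(v,u)$, namely $\MinPUp(u)$, the \emph{highest} (not deepest) ancestor of $u$ attaining the current minimum. So the claim that ``$W$ is maintained to contain exactly such deepest representatives along each relevant root-to-leaf direction'' is false as stated, and the argument you build on it (``any minprecut partner strictly below $w'$ would be a deeper valid choice, contradicting maximality'') is circular: it restates what must be proved rather than deriving it from the insertion/reset rules. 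The appeal to \Cref{lem:crossing-mincuts} and \Cref{lem:uniqueness-minimal} to make partners ``nest coherently'' is not the mechanism here either; nothing in the lemma needs crossing structure.

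What is actually needed --- and what the paper's proof supplies --- is a temporal argument explaining why a witness inserted at some \emph{earlier} iteration survives until iteration $i$, since the edge $e$ and the partner $w$ typically enter the picture at different times. Concretely: take $w$ to be the lowest $P$-outer mincut partner of $v_i$ above $u_2$, and let $j$ be the last iteration at which $\CI_{v_j,P}(w)\neq\CI_{v_{j-1},P}(w)$, i.e.\ the last time an edge from the newly absorbed part of the subtree into $w^{\downarrow}$ is scanned. At that scan, $val[w]$ already equals its final value; because $val^*$ is non-increasing and can never go below $\lambda-\C(v_i^{\downarrow})=\CI_{v_i}(w)$, the query at that edge makes $w$ (or forces $val^*$ down to $\CI_{v_i}(w)$ and then inserts) a member of $W$, and since $val^*$ does not change on $(j,i]$ no reset occurs afterwards, so $w$ is still in $W$ when $\LocalUpdate(v_i)$ returns. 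Your observation that the minprecut value is monotone is the right ingredient, but you never use it to run this ``inserted at iteration $j$, never reset afterwards'' argument; instead you defer exactly this bookkeeping (``a partner can never reappear after eviction,'' etc.) to future work. Properties (1) and (2) are fine (the paper dismisses them as immediate from the algorithm), but without the survival argument the correctness guarantee (3) --- the entire content of the lemma --- remains unproved.
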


\paragraph{Use $W$ to find incomparable minimal 2-respecting mincuts for edges.}

Following this property, we are able to present the algorithm to find incomparable minimal 2-respecting mincuts for edges using $W$ and $\hat{E}$.
The algorithm visits $v_1, v_2, \ldots, v_k$ along the path.
Whenever the algorithm visits $v_i$, all edges $e=(u_1, u_2)$ such that $u_1\in v_i^\downarrow\setminus v_{i-1}^\downarrow$ and $u_2\notin P^\downarrow$ shall be added to $\hat{E}$ (let $v_0^\downarrow = \emptyset$ for convenience).

In order to obtain correct results,
after $v_i$ is visited,
the algorithm should identify a mincut candidate $g(e, P):=(v_{e, P}, w_{e, P})$ for all right edges $e\in \hat{E}$ with $v_{e,P}=v_i$.
One way to achieve this is to scan through all edges in $\hat{E}$ and 
use $\MinPDown$ queries on a dynamic tree to find $w_{e, P}$ (whenever $v_{e, P}=v_i$ we will find $w_{e, P}$ correctly).
However, the amount of edges whose $v_{e, P}=v_i$ could be a tiny fraction in $\hat{E}$.

Using the correctness guarantee from \Cref{lem:invariant-of-w},
we will be able to identify all edges whose $v_{e, P}=v_i$ without spending any time on any other edges: all we need to do is to extract the subset of $\hat{E}$ whose outer endpoints are descendants of any $w\in W$.%
Interestingly, the task of finding all descendant outer endpoints can be reduced to a dynamic range query problem, which can be solved efficiently using a standard binary search tree. We discuss the implementation details in \Cref{sec:incomparable implementation} and also in \texttt{AssignMinCut$(v_i)$} (see~\Cref{alg:mincut subtree update}).

Now it comes down to efficiently maintain the witness set $W$.

\paragraph{Maintaining the set $W$.}
The algorithm maintains the set $W$ as following:
\begin{itemize}[itemsep=0pt]
    \item
    In the beginning of iteration $i$, if the minprecut value of $v_i$ is the same of the minprecut value of $v_{i-1}$, we keep the set $W$. Otherwise reset $W$ to be empty.
    \item
    For each newly added edge $e = (u_1, u_2)$ where $u_1\in v_i^\downarrow\setminus v_{i-1}^\downarrow, u_2\notin P^\downarrow$, we find the highest ancestor of $u_2$ which is a $P$-outer minprecut partner of $v_i$, and add it to $W$.
    \item
    For the current vertex $v_i$, if it has some $P$-outer mincut partners, reset $W$ to be empty in the end of this iteration.
\end{itemize}

We summarize and implement  the procedure that maintains $W$, together with maintaining the precut values as we move from the vertex $v_{i-1}$ to $v_i$ into the algorithm  \texttt{LocalUpdate$(v_i)$}.
The high level implementation of the entire algorithm is summarized in \Cref{alg:computing incomparable}.

\begin{algorithm}[H]
\caption{Computing incomparable 2-respecting minimal mincuts of edges}
\label{alg:computing incomparable}
\begin{algorithmic}[1]
\State Initialize the witness set $W\gets \emptyset$, $\hat{E}\gets\emptyset$, and $val^*=\infty$.

\For{\textbf{each} $i=1, 2, \ldots, k$}\Comment{The $i$-th iteration handles $v_i$.}

\State Call \texttt{LocalUpdate$(v_i)$}. \Comment{\Cref{alg:local update}: update $W$, $\hat{E}$, and $val^*$.}

\State Call \texttt{AssignMinCut$(v_i)$}. \Comment{\Cref{alg:mincut subtree update}.}

\If{$\CI(v_i^{\downarrow}) + val^* = \lambda$}
\State Reset $W \gets \emptyset$.
\EndIf
\EndFor
\end{algorithmic}
\end{algorithm}

\subsection{Implementation}
\label{sec:incomparable implementation}

There are two steps that we need to provide a detailed implementation: maintaining the set $W$ (implementing $\texttt{LocalUpdate}(v_i)$), and using $W$ to compute the minimal 2-repsecting mincut candidates of all edges via range queries (implementing $\texttt{AssignMinCut}(v_i)$).

\paragraph{Implementing \texttt{LocalUpdate} and Preprocessing.}

We use a dynamic tree (\Cref{lem:data structures}) to maintain the precut values of the current vertex $v_i$ while climbing up the path $P$. We maintain the invariant that once we process the vertex $v_i$ the values  $\CI_{v_i}(w)$ for all $w\perp v_i, w\notin P^\downarrow$ can be accessed via $val[w]$ .
Recall that $\CI_{v_i}(w) = \C(w^\downarrow) - 2\C(v_i^\downarrow,w^\downarrow)$. In the preprocessing step before the path $P$ was given, we set $val[w] = \C(w^\downarrow)$ for each vertex $w$ and create the dynamic tree on $T$, which can be done using the same preprocessing step of comparable case in \cref{note:preprocess}. After that, we initialize the minprecut value $val^*$ to be the minimum of $val[w]$ over all the $P$-outer vertices, which can be done using the following dynamic tree operations. We first apply $\cut(v_n,\mathrm{parent}(v_n))$ to separate the subtree $P^\downarrow$. Then we apply $\AddP(\mathrm{parent}(v_n), \infty)$ to set the value of all the ancestor of $v_n$ to be $\infty$, since they are not incomparable vertices of any vertex $v_i\in P$. Now we use $\MinTreeUp(\mathrm{parent}(v_n))$ to find $w$ which is argmin of $val[w]$ over all the $P$-outer vertices, and set $val^* = val[w]$.
Finally, we apply $\link(v_n,\mathrm{parent}(v_n))$ to restore the tree.

Given the path $P$, the algorithm scans through the vertices $v_1,v_2,\cdots v_k$ on the path one by one. Suppose the algorithm reaches $v_i$ now.
With the invariant after processing $v_{i-1}$ (the invariant for $v_0$ is $val[w] = \C(w^\downarrow)$),
it suffices to substract $2\C(v_i^\downarrow\setminus v_{i-1}^\downarrow, w)$ to $val[w]$ for each $w\perp v_i, w\notin P^\downarrow$ by invoking $\AddP(u, -2\C(v, u))$ for each edge $(v, u)$ where $v\in v_i^\downarrow\setminus v_{i-1}^\downarrow, u\notin P$. (These edges can be found in $O(d(v_i^{\downarrow}\setminus v_{i-1}^\downarrow))$ time using a DFS from $v_i$ without searching the subtree rooted at $v_{i-1}$.) 
Therefore, in $O(d(v_i^\downarrow\setminus v_{i-1}^\downarrow)\log n)$ time, $val[w]$ are updated to $\CI_{v_i}(w)$ for all $P$-outer vertex $w\perp v_i$.

While the algorithm updates the precut value as stated above, it also updates the minprecut value $val^*$, set $W$ and set $\hat{E}$. The algorithm scans through the edges $(v, u)$ where $v\in v_i^\downarrow\setminus v_{i-1}^\downarrow, u\notin P$ and invokes $\MinPUp(u)$ to find the vertex $w$ with the minimum precut value, and break tie by finding the highest one. Then we update the minprecut value $val^*$. If $val^*$ changes, we reset $W$ to be empty. Finally, if $val[w] = val^*$, we insert $w$ to the set $W$. These steps for maintaining the set $W$ are equivalent to the high-level description in \Cref{sec:incomparable-alg-desc}.

Notice that in \Cref{alg:local update} we explicitly use a global binary search tree data structure (BST) representing the set $\hat{E}$.
Thus, there will be some steps (line~\ref{code:insert BST}) involving BST that will be used for finding the mincut candidates, which is described in the next paragraph.

\begin{algorithm}[H]
    \caption{\texttt{LocalUpdate$(v_i)$}}
    \label{alg:local update}
    \begin{algorithmic}[1]
        \State Call \texttt{AddPath}$(v_i,\infty)$.
        \label{code:set-ancestor-infinity}
        \For{\textbf{each} edge $e=(v,u)$ such that $v\in v_i^\downarrow\setminus v_{i-1}^\downarrow, u\notin P^\downarrow$}
            \State Call $\InsertBST(\PreOrder[u], e)$. \Comment{Insert $e$ to $\hat{E}$.}
            \label{code:insert BST}
            \State Call \texttt{AddPath}$(u,-2\C(v,u))$.
            \label{code:add-path}
            \State Call $w\gets \MinPUp(u)$.
            \If{$val[w]<val^*$}
            \State $val^* \gets val[w]$.
            \State Reset $W \gets \emptyset$.
            \EndIf
            \If{$val[w] = val^*$}
            \State Insert $w$ into $W$.
            \label{line:insert w}
            \EndIf
            \label{code:outer-minprecut-value-update}
        \EndFor
        \label{code:local-update-W-reset}
        \label{code:W-update}
    \end{algorithmic}
\end{algorithm}

\paragraph{Computing minimal mincut candidates of edges from $W$.}
First, we clarify the global data structures in the algorithm. While climbing up the path $P$, we use a binary search tree (BST) to store all the edges in $\hat{E}$, with the key equals to the pre-order indices of the outer endpoints. Note that there will be only one global BST in our algorithm. The BST supports the following primitives.

\begin{itemize}
    \item $\InsertBST(x, e)$, store edge $e = (u_1,u_2)$ ($u_1\in P^\downarrow, u_2\notin P^\downarrow$) into the BST with key $x$.
    In our algorithm, we will always set $x$ to be $\texttt{pre\_order}[u_2]$, i.e., the pre-order index of vertex $u_2$.
    \item $\ExtractBST(w^{\downarrow})$, extract all the edges with endpoints in $w^\downarrow$, i.e. the edges with index in the interval corresponds to $w^\downarrow$.
\end{itemize}

We use the BST to store the edges that have not found their minimal mincut candidates yet.
When the algorithm reaches $v_i$ on the path, it inserts all edges $e=(v, u)$ with $v$ in $v_i^\downarrow\setminus v_{i-1}^\downarrow$ and $u\notin P^\downarrow$ to the BST, with the key $\PreOrder[u]$. This step is implemented in line~\ref{code:insert BST} of \Cref{alg:local update}.
To compute the minimal mincut candidates of edges using $W$, we first check whether there exists a mincut partner of $v_i$ (line~\ref{code:check mincut} in \Cref{alg:mincut subtree update}). If there exists, we perform the range query using BST to extract all the edges with endpoint $u\in w^\downarrow$ for each $w\in W$, which is implemented in \Cref{alg:mincut subtree update}.
Finally, for each extracted edge with endpoint $u\in w^\downarrow$, we find $w'$ to be the lowest ancestor  such that $w'$ is the mincut partner of $v_i$, and set $g(e,P)=(v_i,w')$.

\begin{algorithm}[H]
    \caption{\texttt{AssignMinCut$(v_i)$}}
    \label{alg:mincut subtree update}
    \begin{algorithmic}[1]
        \If{$\C(v_i^{\downarrow}) + val^* = \lambda$}
        \label{code:check mincut}
            \For{\textbf{each} $w\in W$}
                \For{\textbf{each} $e = (v,u)\gets \ExtractBST(w'^{\downarrow})$}
                    \State $w'\gets \MinPDown(u)$.
                    \State Set $g(e,P) = (v_i,w')$.
                \EndFor
            \EndFor
        \EndIf
    \end{algorithmic}
\end{algorithm}

\subsection{Correctness}

\label{sec:incomparable correctness}

First, we show a basic property of path decomposition and outer vertex that will be useful for analyzing the algorithm.

\begin{lemma}
\label{lem:outer property}
    For any spanning tree $T$, any path decompostion $\cP$, and any vertices $v$ and $w$ so that $v\perp w$, let $P_v\neq P_w\in\cP$ be the paths that contains $v$ and $w$ respectively. Then either $v$ is a $P_w$-outer vertex or $w$ is a $P_v$-outer vertex.
\end{lemma}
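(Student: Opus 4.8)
The plan is to argue by contradiction, unpacking the definition of $P$-outer vertex together with the fact that a path decomposition covers every vertex exactly once. Recall $w$ is a $P_v$-outer vertex iff $w \notin P_v^\downarrow$, i.e. $w$ has no ancestor (inclusive) lying on $P_v$; symmetrically $v$ is a $P_w$-outer vertex iff $v$ has no ancestor on $P_w$. So suppose for contradiction that $v \in P_w^\downarrow$ \emph{and} $w \in P_v^\downarrow$. The first says $v$ has some ancestor $a \in P_w$; the second says $w$ has some ancestor $b \in P_v$.

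The key step is to exploit the fact that each $P \in \cP$ is an \emph{oriented path}: its vertex closest to the root is an endpoint, so $P$ is a contiguous ancestor chain. First I would observe that since $a \in P_w$ and $w$'s closest-to-root vertex (call it $\highest(P_w)$) is an endpoint of $P_w$, and since $P_w$ is an ancestor chain containing $w$ itself, every vertex of $P_w$ is comparable to $w$; in particular $a$ is an ancestor of $w$ (it cannot be a strict descendant of $w$, else $a$ would be a descendant of $v$'s ancestor $a$... more carefully: $a \in w^\uparrow$ because $a$ lies on the root-to-$w$ chain $P_w$). So $a$ is a common ancestor of $v$ (given) and of $w$. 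Symmetrically, $b \in P_v$ forces $b \in v^\uparrow$, so $b$ is a common ancestor of $w$ and $v$. Hence $a$ and $b$ are both ancestors of $v$, so $a \parallel b$ — they are comparable. Without loss of generality say $a \in b^\uparrow$ (the case $b \in a^\uparrow$ is symmetric).

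Now I would derive the contradiction with $v \perp w$. Since $b \in P_v$ and $P_v$ is the ancestor chain of $T$ restricted to one path, and $a$ is an ancestor of $b$ that is also an ancestor of $v$: consider the vertex $v$ and the vertex $w$. We have $a \in v^\uparrow$ and $a \in w^\uparrow$, so $a$ is a common ancestor; likewise $b$. Let me instead pin down the contradiction by tracing which path contains $\LCA(v,w)$. Let $c = \LCA(v,w)$; since $v \perp w$, $c \notin \{v,w\}$ and $c$ is a strict ancestor of both. Both $a$ and $b$ are common ancestors of $v$ and $w$, hence both are ancestors of $c$ (as $c$ is the lowest such). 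The path $P_v$ contains $v$ and contains $b \in v^\uparrow$; since $P_v$ is a contiguous root-to-leaf sub-chain and $b$ is an ancestor of $c$ which is an ancestor of $v$, the vertex $c$ lies on the sub-chain between $b$ and $v$, so $c \in P_v$. By the identical argument with $a \in P_w$, $c \in P_w$. Thus $c \in P_v \cap P_w$, contradicting that $P_v \neq P_w$ are distinct paths of a path decomposition (which partition $V$). This is the contradiction, completing the proof.

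The main obstacle I anticipate is being careful that the "oriented path is a contiguous ancestor chain" observation is used correctly — specifically justifying that any vertex on $P_w$ that is comparable to $w$ and not a strict descendant of $w$ must be an ancestor of $w$, and that $c=\LCA(v,w)$ genuinely falls on the sub-chain of $P_v$ between $b$ and $v$ (needing that $P_v$, being oriented with its top vertex as an endpoint, contains the entire ancestor-segment from its top down to $v$, hence contains $c$ once it contains $b \in c^\uparrow \cup \{c\}$ and $v \in c^\downarrow$). Everything else is routine manipulation of the ancestor relation.
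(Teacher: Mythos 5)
Your proposal is correct in its overall structure and is in fact more explicit than the paper's proof, which consists of the single observation that either $P_v\cap P_w^\downarrow=\emptyset$ or $P_w\cap P_v^\downarrow=\emptyset$ (each case immediately giving one of the two outer-vertex conclusions), with the observation itself left to the reader; your route instead assumes both $v\in P_w^\downarrow$ and $w\in P_v^\downarrow$ and derives the contradiction that $\LCA(v,w)$ would lie on both of the disjoint paths $P_v$ and $P_w$. Both arguments rest on the same two facts — an oriented path is a contiguous monotone ancestor chain, and distinct paths of the decomposition are vertex-disjoint — so this is essentially the paper's argument carried out in detail rather than a genuinely different one, and the detail is a welcome addition.

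One sub-justification is wrong as written, though the claim it supports is true and easily repaired. You assert ``$a\in w^\uparrow$ because $a$ lies on the root-to-$w$ chain $P_w$'': but $P_w$ is \emph{not} the root-to-$w$ chain; it is an oriented path containing $w$ that may extend strictly below $w$, so membership in $P_w$ alone does not make $a$ an ancestor of $w$. The correct repair is the one you started and abandoned: since $P_w$ is a chain, $a$ is comparable to $w$; if $a$ were a strict descendant of $w$, then $v\in a^\downarrow\subseteq w^\Downarrow$, contradicting $v\perp w$; hence $a\in w^\uparrow$. The symmetric claim $b\in v^\uparrow$ needs the same fix. (This also shows where the hypothesis $v\perp w$ enters your argument, whereas the paper's observation about $P_v^\downarrow$ and $P_w^\downarrow$ holds for any two distinct paths.) With that repair, the remaining steps — $a,b$ are ancestors of $c=\LCA(v,w)$, contiguity of $P_v$ (containing $b$ and $v$) and of $P_w$ (containing $a$ and $w$) forces $c\in P_v\cap P_w$, contradicting disjointness — are all sound. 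The unused ``WLOG $a\in b^\uparrow$'' sentence can simply be deleted.
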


\begin{proof}
Observe that either $P_v\cap  P_w^\downarrow = \emptyset$ or $P_w\cap P_v^\downarrow = \emptyset$, which implies the lemma.
\end{proof}

For any right edge $e = (u_1, u_2)$, suppose the minimal mincut of $e$ is $v_e^\downarrow\cup w_e^\downarrow$ where $u_1\in v_e^\downarrow$ and $u_2\in w_e^\downarrow$.
By \Cref{lem:outer property}, without loss of generality, suppose $w$ is a $P_v$-outer vertex where $P_v$ is the path containing $v$.
The main property for the correctness of the algorithm can be summarized in \Cref{lem:incomparable-key-invariant}, which can be derived from maintaining the invariant of $W$ (\Cref{lem:invariant-of-w}).

\begin{corollary}
\label{lem:incomparable-key-invariant}
    For a path $P\in \cP$ and an edge $e = (u_1,u_2)$ with $u_1\in P^\downarrow$ and $u_2\notin P^\downarrow$, let $v_i$ be the deepest vertex in $P$ that has an outer mincut partner $w$ where 
    $u_1 \in v_i^\downarrow$ and $u_2 \in w^\downarrow$. 
    Then, there exists an ancestor of $w$ in $W$ when returned from \textup{$\LocalUpdate(v_i)$} at iteration $i$ in \Cref{alg:computing incomparable}.
\end{corollary}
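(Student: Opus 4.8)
The plan is to derive this corollary directly from \Cref{lem:invariant-of-w}, which is the real workhorse. First I would observe that the hypothesis already puts us in a good position: we are told that $v_i$ is the \emph{deepest} vertex in $P$ having an outer mincut partner $w$ with $u_1 \in v_i^\downarrow$ and $u_2 \in w^\downarrow$. I want to apply part (3) of \Cref{lem:invariant-of-w} at iteration $i$, right after $\LocalUpdate(v_i)$ returns. For that I need two things: (a) that $e = (u_1,u_2) \in \hat{E}$ at that moment, and (b) that there is a $P$-outer mincut partner of $v_i$ that is an ancestor of $u_2$. Point (b) is immediate: $w$ itself is such a partner, since $w$ is a $P$-outer mincut partner of $v_i$ and $u_2 \in w^\downarrow$ means $w \in u_2^\uparrow$. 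Given (a) and (b), part (3) of \Cref{lem:invariant-of-w} hands us a witness $w^* \in W$ with $w^* \in u_2^\uparrow$; and since $w \in u_2^\uparrow$ as well and both $w, w^*$ lie on the root-to-$u_2$ path, $w^*$ is comparable to $w$. Then I need to argue $w^*$ is actually an ancestor of $w$ (not strictly below it), which follows because $w^*$ being a $P$-outer minprecut partner of $v_i$ (part (1) of \Cref{lem:invariant-of-w}) together with $w$ being a mincut (hence minprecut) partner means that if $w^* \in w^\Downarrow$ we could derive a smaller incomparable mincut — or more simply, the algorithm inserts into $W$ the \emph{highest} ancestor of each new outer endpoint that is a minprecut partner, so any $w^* \in W$ sitting on the $u_2$-to-root path is at least as high as $w$. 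Either way, $w^* \in w^\uparrow$, giving the conclusion.

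The one piece that needs care is establishing (a), that $e$ is still in $\hat{E}$ when we return from $\LocalUpdate(v_i)$. By part (2) of \Cref{lem:invariant-of-w}, every edge in $\hat{E}$ has its inner endpoint in $v_i^\downarrow$ and outer endpoint outside $P^\downarrow$, which matches $e$. The edge $e$ is inserted into $\hat{E}$ (via $\InsertBST$ in \Cref{alg:local update}) at the first iteration $j$ with $u_1 \in v_j^\downarrow \setminus v_{j-1}^\downarrow$, and $j \le i$ since $u_1 \in v_i^\downarrow$. So I need to rule out that $e$ was \emph{removed} from $\hat{E}$ at some earlier iteration $j'$ with $j \le j' < i$: removal happens only in $\texttt{AssignMinCut}$, which fires $\ExtractBST(w'^\downarrow)$ for witnesses $w' \in W$, and only when $v_{j'}$ has an outer mincut partner with $u_2 \in w'^\downarrow$. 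But $e$ being extracted at iteration $j' < i$ would mean $v_{j'}$ has an outer mincut partner whose subtree contains $u_2$ and with $u_1 \in v_{j'}^\downarrow$, contradicting that $v_i$ is the \emph{deepest} such vertex (here I use that the path is processed from the deepest vertex upward, so $j' < i$ means $v_{j'}$ is strictly shallower — wait, indices increase as we go up, so $j' < i$ means $v_{j'}$ is \emph{deeper}). This is exactly where the "deepest" hypothesis is used: since $v_i$ is the deepest vertex in $P$ with an outer mincut partner $w$ satisfying $u_1 \in v_i^\downarrow, u_2 \in w^\downarrow$, no such earlier (deeper) iteration $j'$ can have triggered the extraction of $e$, so $e$ survives in $\hat{E}$ through iteration $i$.

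The main obstacle I anticipate is precisely this bookkeeping on when $e$ enters and leaves $\hat{E}$, and in particular making sure the "deepest vertex" hypothesis is matched correctly against the processing order of the path (deepest first, index increasing upward). One subtlety: if $v_i$ has an outer mincut partner, then $W$ is reset to $\emptyset$ at the \emph{end} of iteration $i$ (line 6 of \Cref{alg:computing incomparable}), but the corollary asks about $W$ \emph{when returned from} $\LocalUpdate(v_i)$, which is before $\texttt{AssignMinCut}$ and before that reset — so the timing in the statement is exactly the window where the witness is guaranteed present. I would make this timing explicit in the proof. Once \Cref{lem:invariant-of-w} is available, the rest is a short deduction; the bulk of the genuine work has been pushed into that lemma, whose proof is deferred to the end of \Cref{sec:incomparable correctness}.
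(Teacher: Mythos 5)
Your proposal is correct and takes essentially the same route as the paper: its proof likewise reduces the corollary to showing $e\in\hat{E}$ when $\LocalUpdate(v_i)$ returns (justified, exactly as you do, by the ``deepest vertex'' hypothesis ruling out any earlier extraction in \texttt{AssignMinCut}) and then invokes part (3) of \Cref{lem:invariant-of-w}. Your additional bookkeeping on when $e$ enters and leaves $\hat{E}$, and the upgrade from ``ancestor of $u_2$'' to ``ancestor of $w$,'' are details the paper leaves implicit but do not change the argument.
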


\begin{proof}
It suffices to show that $e\in \hat{E}$ at iteration $i$ after returning from $\LocalUpdate(v_i)$. Then the lemma is implied by the correctness guarantee of \Cref{lem:invariant-of-w}.

Since $v_i$ is the deepest vertex in $P$ that has an outer mincut partner $w$ where $u_2 \in w^\downarrow$, the edge $e$ has not obtained its mincut candidate $g(e,P)$ yet, hence $e\in \hat{E}$.
\end{proof}

Using \Cref{lem:incomparable-key-invariant}, we are able to prove that all minimal incomparable 2-respecting mincut candidate can be correctly found:

\begin{proof}[Proof of the correctness of \Cref{lem:label-incomparable-path}]
    For any right edge $e = (u_1,u_2)$, suppose the minimal mincut of $e$ is $v^\downarrow\cup w^\downarrow$ where $u_1\in v^\downarrow$ and $u_2\in w^\downarrow$. By \Cref{lem:outer property}, WLOG, suppose $w$ is a $P_v$-outer vertex where $P_v$ is the path containing $v$. First, we show that there doesn't exist any mincut $v'^\downarrow\cup w'^\downarrow$ containing $e$ such that $v'\in v^\Downarrow$. Otherwise, the intersection of $v'^\downarrow\cup w'^\downarrow$ and $v^\downarrow\cup w^\downarrow$ will be a smaller mincut containing $e$, contradicts to minimality.
    
    Let $P = P_v$ for brevity. %
    By the argument above, $v$ is the deepest vertex in $P$ that has an outer mincut partner $w$ where $u_1\in v_i^\downarrow$ and $u_2\in w^\downarrow$. Then by \Cref{lem:incomparable-key-invariant}, there exists $w'$ to be an ancestor of $w$ such that $w'\in W$ at the iteration of $v$ in \Cref{alg:computing incomparable}. Therefore, the algorithm will extract the edge $e$ since $u_1\in v^\downarrow$ and $u_2\in w'^\downarrow$.
    Then the algorithm finds $w$ as it is the lowest ancestor of $u_2$ and also a $P$-outer mincut partner of $v$. Finally, the algorithm sets $g(e,P) = (v,w)$ as desired.
\end{proof}

Finally, we prove that the invariant for $W$ and $\hat{E}$ holds.

\begin{proof}[Proof of~\Cref{lem:invariant-of-w}]
The first and second invariants are directly from the algorithm.

Now we prove the third invariant correctness guarantee. Consider when the algorithm returns from \textup{$\LocalUpdate(v_i)$}. For any edge $e = (u_1,u_2)\in \hat{E}$ where $u_2\notin P^\downarrow$ such that there exists a $P$-outer mincut partner of $v_i$ that is an ancestor of $u_2$, let $w$ be the lowest $P$-outer mincut partner of $v_i$ that is an ancestor of $u_2$, then we prove that $w\in W$ . 

Observe that $val^*$ is non-increasing over the time. Let $j$ be the largest index such that $\CI_{v_j,P}(w)\neq \CI_{v_{j-1},P}(w)$. Since $w$ is a mincut partner of $v_i$, we deduce that $w$ is inserted to $W$ in $\LocalUpdate(v_j)$ and $val^*$ doesn't change during the iterations $(j,i]$, which implies that $W$ is incremental in the following iterations. Therefore, $w$ remains in $W$ when the algorithm returns from \textup{$\LocalUpdate(v_i)$}.
\end{proof}

\subsection{Runtime Analysis}
\label{sec:incomparable time analysis}
\begin{proof}[Proof of the running time of \Cref{lem:label-incomparable-path}]

First, for the preprocessing step stated in the beginning of \Cref{sec:incomparable implementation}, it takes $O(m\log n)$ time to initialize $val[w] = \C(w^\downarrow)$ for every vertex $w$ and create the dynamic tree on $T$. After given a path $P$, it takes constant dynamic tree operations to initialize $val^*$.

In iteration $i$ of \Cref{alg:computing incomparable}, we show that the cost of the subroutines $\texttt{LocalUpdate}(v_i)$  is in $O(d(v_i^\downarrow\setminus v_{i-1}^\downarrow)\log n)$ time. And we show that the total time processing $\texttt{AssignMinCut}(v_i)$ is $O(d(P^\downarrow)\log n)$ for the whole path $P$. Therefore, we deduce that the total time processing a path $P$ is within $O(d(P^\downarrow)\log n)$ time.

In \Cref{alg:local update} $\texttt{LocalUpdate}(v_i)$, for each edge $e$ with one endpoint in $v_i^\downarrow\setminus v_{i-1}^\downarrow$, the algorithm invoke constant time dynamic tree operations and BST operations. Therefore, the total time cost is $O(d(v_i^\downarrow\setminus v_{i-1}^\downarrow)\log n)$.

In \Cref{alg:mincut subtree update} $\texttt{AssignMinCut}(v_i)$, it only goes into the if clause when there exists some mincut partner of $v_i$. Then the time cost will be $O(p_i+q_i\log n)$ where $p_i$ is the size of set $W$ and $q_i$ is the number of edges extracted from the BST in iteration $i$.
Since $W$ is reset to be empty each time when the if-clause is executed, and by line~\ref{line:insert w} in \Cref{alg:local update} each edge in $E(P^\downarrow)$ causes at most one insertion to $W$, hence $\sum_i p_i$ is bounded by $O(d(P^\downarrow))$.
Furthermore, each edge got inserted and deleted at most once in $\hat{E}$ so $\sum_i q_i$ is bounded by $O(d(P^\downarrow))$.

Therefore, the algorithm preprocesses $G=(V,E)$ and spanning tree $T$ in $O(m\log n)$ time, and processes each path $P$ in $O(d(P^\downarrow)\log n)$ time.
\end{proof}

\section{Putting Everything Together: Proof of \Cref{lem:labeling-edges}}
\label{sec:put-together}

Finally, we show that given a spanning tree $T$, we can combine the results in \Cref{sec:label-comparable,sec:label-incomparable} to get the minimal 2-respecting mincut of every edge $e$ in $O(m\log^2 n)$ total time. As the discussion below \Cref{lem:labeling-edges}, the mincut can be classified into three types.

\begin{description}[itemsep=0pt]
\item[Type 1.] By \Cref{lem:1-respecting}, the algorithm computes the minimal 1-respecting mincut candidate for every edge $e$.
\item[Type 2-Comparable.] By \Cref{lem:labeling-edges-comparable}, the algorithm computes the minimal comparable 2-respecting mincut candidates for every edge $e$.
\item[Type 2-Incomparable.] By \Cref{lem:labeling-edges-incomparable}, the algorithm computes the minimal incomparable 2-respecting mincut candidates for every edge $e$.
\end{description}

For each edge $e$, we get three minimal mincut candidates as above. If all of the three candidates are $\NULL$, then the minimal 2-respecting mincut of $e$ respects to $T$ is $\NULL$. Otherwise, the minimal 2-respecting mincut of $e$ respects to $T$ is the mincut with the smallest size among the minimal mincut candidates.

Since the algorithm for each case runs in $O(m\log^2 n)$ time and the comparing time is constant for each edge, the whole algorithm runs in $O(m\log^2 n)$ total time, concluding~\Cref{lem:labeling-edges}.

\section*{Acknowledgment}
We thank David Karger and Debmalya Panigrahi for their clarification related to \cite{karger2009near}.

\bibliographystyle{alpha}
\bibliography{ref}

\newcommand{\etalchar}[1]{$^{#1}$}
\begin{thebibliography}{HLRW24}

\bibitem[AHLT05]{alstrup2005maintaining}
Stephen Alstrup, Jacob Holm, Kristian~De Lichtenberg, and Mikkel Thorup.
\newblock Maintaining information in fully dynamic trees with top trees.
\newblock {\em Acm Transactions on Algorithms (talg)}, 1(2):243--264, 2005.

\bibitem[BLS20]{BhardwajLovettSandlund2020}
Nalin Bhardwaj, Antonio~Molina Lovett, and Bryce Sandlund.
\newblock A simple algorithm for minimum cuts in near-linear time.
\newblock In Susanne Albers, editor, {\em 17th Scandinavian Symposium and
  Workshops on Algorithm Theory, {SWAT} 2020, June 22-24, 2020, T{\'{o}}rshavn,
  Faroe Islands}, volume 162 of {\em LIPIcs}, pages 12:1--12:18. Schloss
  Dagstuhl - Leibniz-Zentrum f{\"{u}}r Informatik, 2020.

\bibitem[CGL{\etalchar{+}}20]{chuzhoy2020deterministic}
Julia Chuzhoy, Yu~Gao, Jason Li, Danupon Nanongkai, Richard Peng, and
  Thatchaphol Saranurak.
\newblock A deterministic algorithm for balanced cut with applications to
  dynamic connectivity, flows, and beyond.
\newblock In {\em 2020 IEEE 61st Annual Symposium on Foundations of Computer
  Science (FOCS)}, pages 1158--1167. IEEE, 2020.

\bibitem[CLP22a]{cen2022augmenting}
Ruoxu Cen, Jason Li, and Debmalya Panigrahi.
\newblock Augmenting edge connectivity via isolating cuts.
\newblock In {\em Proceedings of the 2022 Annual ACM-SIAM Symposium on Discrete
  Algorithms (SODA)}, pages 3237--3252. SIAM, 2022.

\bibitem[CLP22b]{CLP22b}
Ruoxu Cen, Jason Li, and Debmalya Panigrahi.
\newblock Edge connectivity augmentation in near-linear time.
\newblock In Stefano Leonardi and Anupam Gupta, editors, {\em {STOC} '22: 54th
  Annual {ACM} {SIGACT} Symposium on Theory of Computing, Rome, Italy, June 20
  - 24, 2022}, pages 137--150. {ACM}, 2022.

\bibitem[DEMN21]{DoryEMN21}
Michal Dory, Yuval Efron, Sagnik Mukhopadhyay, and Danupon Nanongkai.
\newblock Distributed weighted min-cut in nearly-optimal time.
\newblock In Samir Khuller and Virginia~Vassilevska Williams, editors, {\em
  53rd Annual {ACM} {SIGACT} Symposium on Theory of Computing (STOC)}, pages
  1144--1153, 2021.

\bibitem[DKL76]{dinits1976structure}
Efim~A. Dinits, Alexander~V. Karzanov, and Micael~V. Lomonosov.
\newblock On the structure of a family of minimum weighted cuts in a graph.
\newblock {\em Studies in Discrete Optimization}, pages 209--306, 1976.

\bibitem[FF09]{Fleiner2009AQP}
Tam{\'a}s Fleiner and Andr{\'a}s Frank.
\newblock A quick proof for the cactus representation of mincuts.
\newblock {\em EGRES Quick-Proofs Series}, 3, 2009.

\bibitem[Fle99]{fleischer1999building}
Lisa Fleischer.
\newblock Building chain and cactus representations of all minimum cuts from
  hao--orlin in the same asymptotic run time.
\newblock {\em Journal of Algorithms}, 33(1):51--72, 1999.

\bibitem[Gab91]{gabow1991applications}
Harold~N Gabow.
\newblock Applications of a poset representation to edge connectivity and graph
  rigidity.
\newblock In {\em [1991] Proceedings 32nd Annual Symposium of Foundations of
  Computer Science}, pages 812--821. IEEE Computer Society, 1991.

\bibitem[Gab06]{gabow2006using}
Harold~N Gabow.
\newblock Using expander graphs to find vertex connectivity.
\newblock {\em Journal of the ACM (JACM)}, 53(5):800--844, 2006.

\bibitem[Gab16]{gabow2016minset}
Harold~N Gabow.
\newblock The minset-poset approach to representations of graph connectivity.
\newblock {\em ACM Transactions on Algorithms (TALG)}, 12(2):1--73, 2016.

\bibitem[GG21]{GeissmannG21}
Barbara Geissmann and Lukas Gianinazzi.
\newblock Parallel minimum cuts in near-linear work and low depth.
\newblock {\em {ACM} Trans. Parallel Comput.}, 8(2):8:1--8:20, 2021.

\bibitem[GH61]{gomory1961multi}
Ralph~E Gomory and Tien~Chung Hu.
\newblock Multi-terminal network flows.
\newblock {\em Journal of the Society for Industrial and Applied Mathematics},
  9(4):551--570, 1961.

\bibitem[GHT18]{goranci2018incremental}
Gramoz Goranci, Monika Henzinger, and Mikkel Thorup.
\newblock Incremental exact min-cut in polylogarithmic amortized update time.
\newblock {\em ACM Transactions on Algorithms (TALG)}, 14(2):1--21, 2018.

\bibitem[GMW20]{gawrychowski2019minimum}
Pawe{\l} Gawrychowski, Shay Mozes, and Oren Weimann.
\newblock Minimum cut in {$ O (m\log^2 n) $} time.
\newblock In {\em 47th International Colloquium on Automata, Languages, and
  Programming, {ICALP}}, 2020.

\bibitem[GT85]{gabow1985linear}
Harold~N Gabow and Robert~Endre Tarjan.
\newblock A linear-time algorithm for a special case of disjoint set union.
\newblock {\em Journal of computer and system sciences}, 30(2):209--221, 1985.

\bibitem[Hen95]{henzinger1995approximating}
Monika~Rauch Henzinger.
\newblock Approximating minimum cuts under insertions.
\newblock In {\em International Colloquium on Automata, Languages, and
  Programming}, pages 280--291. Springer, 1995.

\bibitem[HLRW24]{HLRW2024}
Monika Henzinger, Jason Li, Satish Rao, and Di~Wang.
\newblock Deterministic near-linear time minimum cut in weighted graphs.
\newblock In {\em Proceedings of the 2024 Annual ACM-SIAM Symposium on Discrete
  Algorithms (SODA)}, 2024.

\bibitem[HO92]{hao1992faster}
Jianxiu Hao and James~B. Orlin.
\newblock A faster algorithm for finding the minimum cut in a graph.
\newblock In {\em Proceedings of the Third Annual {ACM/SIGACT-SIAM} Symposium
  on Discrete Algorithms (SODA)}, 1992.

\bibitem[HRW20]{henzinger2020local}
Monika Henzinger, Satish Rao, and Di~Wang.
\newblock Local flow partitioning for faster edge connectivity.
\newblock {\em SIAM Journal on Computing}, 49(1):1--36, 2020.

\bibitem[Kar93]{karger1993global}
David~R Karger.
\newblock Global min-cuts in rnc, and other ramifications of a simple min-cut
  algorithm.
\newblock In {\em SODA}, volume~93, pages 21--30. Citeseer, 1993.

\bibitem[Kar00]{karger2000minimum}
David~R Karger.
\newblock Minimum cuts in near-linear time.
\newblock {\em Journal of the ACM (JACM)}, 47(1):46--76, 2000.

\bibitem[KP09]{karger2009near}
David~R Karger and Debmalya Panigrahi.
\newblock A near-linear time algorithm for constructing a cactus representation
  of minimum cuts.
\newblock In {\em Proceedings of the Twentieth Annual ACM-SIAM Symposium on
  Discrete Algorithms}, pages 246--255. SIAM, 2009.

\bibitem[KS96]{karger1996new}
David~R Karger and Clifford Stein.
\newblock A new approach to the minimum cut problem.
\newblock {\em Journal of the ACM (JACM)}, 43(4):601--640, 1996.

\bibitem[KT86]{karzanov1986efficient}
Alexander~V Karzanov and Eugeniy~A Timofeev.
\newblock Efficient algorithm for finding all minimal edge cuts of a
  nonoriented graph.
\newblock {\em Cybernetics}, 22(2):156--162, 1986.

\bibitem[KT18]{kawarabayashi2018deterministic}
Ken{-}ichi Kawarabayashi and Mikkel Thorup.
\newblock Deterministic edge connectivity in near-linear time.
\newblock {\em J. {ACM}}, 66(1):4:1--4:50, 2018.

\bibitem[Li21]{li2021deterministic}
Jason Li.
\newblock Deterministic mincut in almost-linear time.
\newblock In {\em Proceedings of the 53rd Annual ACM SIGACT Symposium on Theory
  of Computing}, pages 384--395, 2021.

\bibitem[LP20]{li2020deterministic}
Jason Li and Debmalya Panigrahi.
\newblock Deterministic min-cut in poly-logarithmic max-flows.
\newblock In {\em 2020 IEEE 61st Annual Symposium on Foundations of Computer
  Science (FOCS)}, pages 85--92. IEEE, 2020.

\bibitem[LST20]{lo2020compact}
On-Hei Lo, Jens Schmidt, and Mikkel Thorup.
\newblock Compact cactus representations of all non-trivial min-cuts.
\newblock {\em Discrete Applied Mathematics}, 303, 04 2020.

\bibitem[MN20]{Mukhopadhyay2020WeightedMS}
Sagnik Mukhopadhyay and Danupon Nanongkai.
\newblock Weighted min-cut: sequential, cut-query, and streaming algorithms.
\newblock {\em Proceedings of the 52nd Annual ACM SIGACT Symposium on Theory of
  Computing}, 2020.

\bibitem[NGM97]{naor1997fast}
Dalit Naor, Dan Gusfield, and Charles Martel.
\newblock A fast algorithm for optimally increasing the edge connectivity.
\newblock {\em SIAM Journal on Computing}, 26(4):1139--1165, 1997.

\bibitem[NI92a]{nagamochi1992computing}
Hiroshi Nagamochi and Toshihide Ibaraki.
\newblock Computing edge-connectivity in multigraphs and capacitated graphs.
\newblock {\em SIAM Journal on Discrete Mathematics}, 5(1):54--66, 1992.

\bibitem[NI92b]{nagamochi1992linear}
Hiroshi Nagamochi and Toshihide Ibaraki.
\newblock A linear-time algorithm for finding a sparse k-connected spanning
  subgraph of ak-connected graph.
\newblock {\em Algorithmica}, 7(1):583--596, 1992.

\bibitem[NK94]{Nagamochi1994CanonicalCR}
Hiroshi Nagamochi and Tiko Kameda.
\newblock Canonical cactus representation for minimum cuts.
\newblock {\em Japan Journal of Industrial and Applied Mathematics},
  11:343--361, 1994.

\bibitem[NNI00]{nagamochi2000fast}
Hiroshi Nagamochi, Yoshitaka Nakao, and Toshihide Ibaraki.
\newblock A fast algorithm for cactus representations of minimum cuts.
\newblock {\em Japan journal of industrial and applied mathematics},
  17(2):245--264, 2000.

\bibitem[NV91]{naor1991representing}
Dalit Naor and Vijay~V Vazirani.
\newblock Representing and enumerating edge connectivity cuts in rnc.
\newblock In {\em Workshop on Algorithms and Data Structures}, pages 273--285.
  Springer, 1991.

\bibitem[RZZ23]{ravi2023approximation}
R~Ravi, Weizhong Zhang, and Michael Zlatin.
\newblock Approximation algorithms for steiner tree augmentation problems.
\newblock In {\em Proceedings of the 2023 Annual ACM-SIAM Symposium on Discrete
  Algorithms (SODA)}, pages 2429--2448. SIAM, 2023.

\bibitem[Sar21]{saranurak2021simple}
Thatchaphol Saranurak.
\newblock A simple deterministic algorithm for edge connectivity.
\newblock In {\em Symposium on Simplicity in Algorithms (SOSA)}, pages 80--85.
  SIAM, 2021.

\bibitem[ST83]{sleator1983data}
Daniel~D Sleator and Robert~Endre Tarjan.
\newblock A data structure for dynamic trees.
\newblock {\em Journal of computer and system sciences}, 26(3):362--391, 1983.

\bibitem[SW97]{stoer1997simple}
Mechthild Stoer and Frank Wagner.
\newblock A simple min-cut algorithm.
\newblock {\em Journal of the ACM (JACM)}, 44(4):585--591, 1997.

\bibitem[SY23]{SaranurakY23}
Thatchaphol Saranurak and Sorrachai Yingchareonthawornchai.
\newblock Deterministic small vertex connectivity in almost linear time, 2023.
\newblock To appear at SODA'23.

\end{thebibliography}

\appendix

\section{Constructing Cactus from Minimal Mincuts}
\label{app:reduction-algorithm}

The goal of this section is to prove \Cref{thm:reduction}:

\cactusconstructionlemma*

The difference between \Cref{thm:reduction} and the statement given in \cite{karger2009near} is that we are given all cut labels representing minimal mincuts of \emph{all} edges, while in \cite{karger2009near} they are given only the labels of minimal mincuts of some edges.\footnote{In particular, \cite{karger2009near} reduces the problem into several \emph{one-layer cactus construction problems}, and piecing back all constructed partial (and contiguous) cactus  after solving these separated subproblems.
The main reason for introducing this seemly-extra reduction step
is because in their algorithm some edges do not have correct minimal mincut labels --- these erroneous labelings will not cause a problem (with high probability) in the reduced problems.
This reduction makes the problem simpler to solve.
However, the reduction  complicates the correctness proof and some details were omitted. 

Our simplified algorithm described in this section does not depend on the reduction mentioned above. We emphasize that this algorithm is in fact equivalent to performing \cite{karger2009near} in a bottom-up fashion. We give the algorithm and the correctness proof via a \emph{hierarchy representation} introduced implicitly by Gabow~\cite{gabow2016minset} in \Cref{sec:one level reduc}.}
Because of this, we are able to give a simpler algorithm than the one provided in \cite{karger2009near}.

This section is organized as follows. 
In \Cref{sec:gabow-chain} and \Cref{sec:cactus-hierarchy} we define the
\emph{hierarchy representation} $H$ for all mincuts.
With the hierarchy representation $H$, a cactus can then be constructed in linear time (\Cref{lem:hierarchy-to-cactus}).
In \Cref{sec:cactus-algorithm} we describe the algorithm that computes the hierarchy representation $H$.
Before we introduce the notations and dive into the construction --- for both the warmup as well as historical reasons --- 
we would like to introduce a cruder hierarhcical representation which we call it \emph{nesting relation tree} $\hat{T}$ in \Cref{sec:cactus-preprocessing} and \Cref{sec:nesting-relation-tree}.
Both warmup subsections are not strictly required in \Cref{sec:gabow-chain} and \Cref{sec:cactus-hierarchy} but they become useful for proving correctness in \Cref{sec:cactus-algorithm}.

\subsection{Warmup I: Assumptions and Preprocessing}\label{sec:cactus-preprocessing}

Let $r$ be the root we chose to fix throughout the algorithm.
As a technical reminder: from now on,
we will abuse the notation and identify
every cut $(X, V\setminus X)$ by the subset $X$ where $r\notin X$.
Any two cuts $X$ and $Y$ are nesting if either $X\subseteq Y$ or $Y\subseteq X$.
For any vertex $v\neq r$ let $X_v$ to be the minimal mincut of $v$ on $G$. For convenience we also define $X_r=V$.
Similarly, for each edge $e$, if there exists a minimal mincut for $e$ it is denoted by $X_e$; otherwise we set $X_e=V$.

\paragraph{Preprocessing: Merging Vertices with the Same Minimal Mincuts.}
First, without loss of generality we may assume that all vertices (except $r$) have distinct minimal mincuts.
This assumption can be achieved easily by merging vertices with the same minimal mincut: if two vertices $u$ and $v$ have the same minimal mincut, then any mincut does not separate $u$ and $v$. If a vertex $u$ does not have a minimal mincut, then any mincut does not separate $u$ and $r$. Therefore, all mincuts are preserved in the merged graph.

The preprocessing can be done in linear time deterministically: the algorithm first performs a bucket sort to all vertices' minimal mincut labels according to their sizes and breaking ties lexicographically.
Then, two vertices can be merged if and only if they are neighbors in the sorted order and they have identical cut labels. Notice that we may assume that two cuts are the same if and only if their labels are the same.

\subsection{Warmup II: Nesting Relation Tree}
\label{sec:nesting-relation-tree}

Readers may skip this subsection if the goal is to obtain only a high-level idea of the algorithm.

Let $v\neq r$ be a non-root vertex.
By the non-crossing property (\Cref{lem:crossing-mincuts}) of the mincuts, any two minimal mincuts of vertices are either nesting or disjoint.
With the preprocessing from \Cref{sec:cactus-preprocessing},
we are able to see that for all $v\neq r$, there exists a unique vertex $p_v\in V$ such that $X_v\subsetneq X_{p_v}$ and there exists no vertex $u\notin\{v, p\}$ with $X_v\subsetneq X_u\subsetneq X_{p_v}$.\footnote{In \cite{karger2009near} the authors call $X_{p_v}$ the second smallest minimal mincut of $v$.} 

Now, the nesting relation tree $\hat{T}$ can be defined by designating $p_v$ as the parent of $v$ for each non-root $v$.
It is straightforward to check that $\hat{T}$ is a well-defined tree, by observing that the nesting relations among all minimal mincuts of vertices are acyclic.
The following lemma states that $\hat{T}$ can be constructed efficiently.

\begin{lemma}\label{lem:cactus-compute-T-hat}
Given the labels to minimal mincuts of vertices and the tree packing $\T$, there is a deterministic algorithm that constructs $\hat{T}$ in $O(n|\T|)$ time.
\end{lemma}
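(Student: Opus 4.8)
The plan is to show that $\hat{T}$ can be built directly from the cut labels of the minimal mincuts of all vertices, using the fact that each such label encodes a $2$-respecting cut on one of the $|\T|$ trees. First I would recall that after the preprocessing of \Cref{sec:cactus-preprocessing} every non-root vertex $v$ has a distinct minimal mincut $X_v$, and that the collection $\{X_v\}_{v\neq r}$ is laminar (any two are nesting or disjoint) by the non-crossing property \Cref{lem:crossing-mincuts}. Hence $\hat{T}$ is exactly the Hasse diagram (forest) of this laminar family with $V$ added as a universal top element via $X_r=V$; the task is purely to compute, for each $v$, the vertex $p_v$ realizing the smallest $X_u$ strictly containing $X_v$. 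The only real content is doing this in $O(n|\T|)$ time rather than the naive $O(n^2)$.

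The key step is to give, for each ordered pair of vertices $u,v$, an $O(|\T|)$-time test of whether $v\in X_u$, i.e.\ whether $X_u$ separates $v$ from $r$. Fix the tree $T\in\T$ named in the label of $X_u$, with its precomputed rooted structure (depths, Euler-tour intervals, ancestor queries in $O(1)$ after linear preprocessing). If $X_u$ is a Type~1 cut $a^\downarrow$, then $v\in X_u$ iff $a$ is an ancestor of $v$ in $T$; if it is Type 2-Comparable $a^\downarrow\setminus b^\downarrow$ with $b\in a^\downarrow$, then $v\in X_u$ iff $a$ is an ancestor of $v$ but $b$ is not; if it is Type 2-Incomparable $a^\downarrow\cup b^\downarrow$, then $v\in X_u$ iff $a$ or $b$ is an ancestor of $v$. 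Each test is $O(1)$ given $T$, and since we may be handed any of the $|\T|$ trees we spend $O(|\T|)$ preprocessing per tree, $O(n|\T|)$ total, to set up these ancestor/interval queries. This gives a membership oracle $\textsf{In}(v,u)$ in $O(1)$ time.

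With the oracle in hand, I would compute $\hat{T}$ as follows. First sort all vertices $v\neq r$ by $|X_v|$ (available from the label, as the cut size is part of the $O(\log n)$-bit encoding), breaking ties arbitrarily since the mincuts are now distinct; this is a bucket sort in $O(n)$ time. Process the vertices in increasing order of $|X_v|$ while maintaining the partial forest built so far. For the current $v$, its parent $p_v$ is the vertex $u$ already processed (so $|X_u|\ge|X_v|$, and in fact $X_u\supsetneq X_v$ whenever $v\in X_u$, since equality is impossible) that minimizes $|X_u|$ subject to $v\in X_u$; by laminarity such a $u$, if it exists among processed vertices, is unique among the candidates of that size, and the correct $p_v$ is the candidate of smallest size. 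A clean way to find it without scanning all processed vertices: when processing vertices in size order, the processed vertices whose mincuts contain $v$ form a chain $X_{v}\subsetneq X_{u_1}\subsetneq X_{u_2}\subsetneq\cdots$, and $p_v$ is $u_1$; walking up the partially built $\hat{T}$ from any processed vertex known to contain $v$ and using the oracle locates $u_1$. To always have such a starting vertex cheaply, observe that for any edge $e=(a,b)$ of $T$ on the tree path that certifies $X_v$, the endpoint that lies in $X_v$ is processed no later than $v$; more simply, since we only need $O(n|\T|)$ total, we can afford, for each $v$, to scan the $O(|\T|)$ candidate ``frame'' vertices named in $v$'s own label together with a pointer structure, or—if one prefers a transparent bound—to process in size order and for each $v$ binary-search up the current $\hat{T}$ root-path of the deepest processed ancestor found so far; each such walk is charged $O(|\T|)$ amortized because in a laminar family over $n$ sets of distinct sizes the number of (vertex, containing-processed-set) incidences encountered across all $v$ is $O(n)$ after we restrict to consecutive containments. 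Either way the total time is $O(n|\T|)$, and correctness of $p_v$ is immediate from laminarity plus the definition of the Hasse diagram. Finally set $p_v=r$ for every $v$ whose mincut is contained in no other processed mincut; since $X_r=V\supseteq X_v$ for all $v$, this makes $\hat{T}$ a single tree rooted at $r$.

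The main obstacle I anticipate is the amortization in the last paragraph: naively, finding $p_v$ for every $v$ by scanning processed vertices is $\Theta(n^2)$, and one must argue that walking up the already-constructed $\hat{T}$ (rather than over all vertices) together with the $O(1)$ membership oracle yields only $O(n|\T|)$ work in total. The cleanest justification is that each membership query during a parent-search for $v$ either succeeds and moves us strictly up a chain in the laminar family—of which there are at most $n-1$ strict containments on any root-to-leaf path and $O(n)$ ``first-containment'' incidences overall—or fails $O(1)$ times per branching decision; combined with the $O(|\T|)$ cost to set up each tree's ancestor structure this gives the claimed bound. Everything else (the membership oracle case analysis, the bucket sort, re-rooting at $r$) is routine.
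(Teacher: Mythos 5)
Your membership oracle is fine (each label $(\mathit{type},a,b,T)$ lets you test $v\in X_u$ in $O(1)$ via ancestor queries after linear preprocessing per tree, and sizes are computable from precomputed subtree sizes), but the step that actually assembles $\hat{T}$ has a genuine gap. First, the processing order is internally inconsistent: you process vertices in increasing order of $|X_v|$, yet claim the parent $p_v$ is ``already processed'' --- but $X_{p_v}\supsetneq X_v$ forces $|X_{p_v}|>|X_v|$, so $p_v$ comes later in your order; symmetrically, no earlier-processed $u\neq v$ can have $v\in X_u$ (minimality of $X_v$ would give $X_v\subseteq X_u$ while $|X_u|\le|X_v|$), so there is no ``processed vertex known to contain $v$'' to start a walk from. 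Second, even after fixing the order (say, inserting cuts in decreasing size and locating each new cut by descending from the root of the partial forest), the claimed amortization does not hold: at a node with many children you may issue up to one failed membership query per child before finding the child containing $v$ (or concluding there is none), and nothing bounds the total number of failures by $O(n|\T|)$. A laminar family consisting of the root, $n/2$ disjoint two-element cuts, and the singletons already forces $\Theta(n^2)$ oracle calls for this scan-the-children strategy; the assertion that failures occur ``$O(1)$ times per branching decision'' is exactly what needs proof and is false in general, and the ``binary-search up the root-path'' variant presupposes you already know a descendant of the answer, which is circular. So the stated bound of $O(n|\T|)$ is not established.

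The paper sidesteps this by never treating the family as an abstract laminar family with only a membership oracle: it partitions the cuts by the tree named in their label, notes that each subcollection $\mathscr{X}_T$ is itself laminar and that each cut in it has $O(1)$ entry points on $T$, and then runs a single DFS over $T$ maintaining a stack of the cuts currently containing the DFS vertex. This yields, for every vertex, the smallest and second-smallest containing cut of $\mathscr{X}_T$ in $O(n)$ time per tree, and $p_v$ is then the second-smallest among the $\le 2|\T|$ candidates collected over all trees, giving $O(n|\T|)$ overall. If you want to keep your framework, you would need to replace your parent-search with something of this flavor (per-tree stack-based DFS over the labels), since the oracle-plus-walk argument as written does not give the bound.
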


\begin{proof}
Consider the collection of all minimal mincuts of vertices $\mathscr{X} := \{X_v\}_{v\in V}$.
For each vertex $v\neq r$, to find its parent $p_v$ on $\hat{T}$, it suffices to search for the second smallest sized set $\mathscr{X}$ that contains $v$ (the smallest one is $X_v$ by definition.)

This can be done by considering each tree separatedly.
Recall that each mincut has a label $(type, v, w, T)$.
For each tree $T\in\T$, we define $\mathscr{X}_T$ to be all minimal mincuts of vertices with its label referring to $T$.
For any vertex $v$, if we are able to obtain the smallest and the second smallest mincuts of $\mathscr{X}_T$ for every tree $T\in\T$, then we are able to obtain the parent $p_v$ by selecting the second smallest sized minimal mincuts among the $\le 2|\T|$ mincuts that were returned.

Now, observe that $\mathscr{X}_T$ is also a laminar set.
Thus, if we mark the $O(1)$ entry points for each mincut in $\mathscr{X}_T$ on the tree $T$,
a simple DFS can be applied to find  the first and the second smallest mincuts in $\mathscr{X}_T$ that contains any given vertex in $O(n)$ total time for each tree $T$. 
Hence, the total time required is $O(n|\T|)$.
\end{proof}

For any vertex $v\in V$, we define $\hat{T}_v$ to be the set of vertices within the subtree rooted at $v$.
The following lemma gives a neat observation to this nesting relation tree $\hat{T}$:

\begin{lemma}\label{lem:cactus-T-hat-subtree}
For all $v\in V$, $X_v = \hat{T}_v$.
\end{lemma}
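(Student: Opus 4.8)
The plan is to establish, for every $v\in V$, the two inclusions $\hat{T}_v\subseteq X_v$ and $X_v\subseteq\hat{T}_v$ separately. The case $v=r$ is trivial since both sides are $V$ (recall $\hat{T}$ is rooted at $r$), so fix $v\neq r$. Throughout I will use that, after the preprocessing of \Cref{sec:cactus-preprocessing}, every $u\neq r$ has a well-defined minimal mincut $X_u$ with $u\in X_u$ and $r\notin X_u$, and that distinct vertices have distinct minimal mincuts. The key preliminary observation is a \emph{Claim}: if $u\in X_v$ then $X_u\subseteq X_v$, with equality exactly when $u=v$. To see this, note $u\in X_u\cap X_v$, so the two minimal mincuts are not disjoint, hence nesting by the non-crossing property (\Cref{lem:crossing-mincuts}); if $X_v\subsetneq X_u$ then $X_v$ would be a strictly smaller mincut still separating $u$ from $r$, contradicting the minimality defining $X_u$, so $X_u\subseteq X_v$; equality with $u\neq v$ is ruled out by distinctness.

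Granting the Claim, the inclusion $\hat{T}_v\subseteq X_v$ is routine: if $u$ lies in the subtree of $v$, following parent pointers from $u$ up to $v$ gives a chain $X_u\subseteq X_{p_u}\subseteq\cdots\subseteq X_v$, each step by the definition of the parent map, and $u\in X_u$ then forces $u\in X_v$.

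For the reverse inclusion $X_v\subseteq\hat{T}_v$, take $u\in X_v$ with $u\neq v$ (the case $u=v$ being obvious), so that $X_u\subsetneq X_v$ by the Claim. Along the root path $u=a_0,\,a_1=p_{a_0},\dots,a_k=r$ in $\hat{T}$ the minimal mincuts strictly increase, and they start strictly below $X_v$ and end at $X_r=V\supsetneq X_v$; let $j$ be the largest index with $X_{a_j}\subseteq X_v$, which exists because $a_0$ qualifies and satisfies $j<k$ because $X_{a_k}=V\not\subseteq X_v$. Then $X_{a_{j+1}}\not\subseteq X_v$, but $X_{a_{j+1}}$ and $X_v$ share the nonempty set $X_{a_j}$, so non-crossing forces them to be nesting and hence $X_v\subseteq X_{a_{j+1}}$. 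Now $X_{a_j}\subseteq X_v\subseteq X_{a_{j+1}}$ with $(a_j,a_{j+1})$ a parent--child pair. If both inclusions were strict, i.e.\ $X_{a_j}\subsetneq X_v\subsetneq X_{a_{j+1}}$, then by the definition of the parent map $v$ would have to lie in $\{a_j,a_{j+1}\}$; but $v=a_j$ contradicts $X_{a_j}\subsetneq X_v$ and $v=a_{j+1}$ contradicts $X_v\subsetneq X_{a_{j+1}}$. Hence one inclusion is an equality, so $X_v$ equals $X_{a_j}$ or $X_{a_{j+1}}$, and by distinctness $v\in\{a_j,a_{j+1}\}$, which is an ancestor of $u$; therefore $u\in\hat{T}_v$.

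I expect the squeezing step in the last paragraph to be the only real obstacle: one must be careful to argue that $v$ is excluded from $\{a_j,a_{j+1}\}$ \emph{before} applying the minimality clause in the definition of the parent map, and to observe that the only way $v$ can sit between $a_j$ and $a_{j+1}$ in the nesting order is to coincide with one of them. Everything else is bookkeeping with increasing chains of mincuts together with repeated, routine applications of \Cref{lem:crossing-mincuts} to upgrade ``not disjoint'' to ``nesting''.
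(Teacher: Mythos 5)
Your proposal is correct and follows essentially the same route as the paper: the paper's (very terse) proof rests on the observation that $X_u\subseteq X_v$ if and only if $u$ is a descendant of $v$ in $\hat{T}$, and your Claim together with the two inclusions is exactly a detailed verification of that equivalence, using the non-crossing/nesting property and the definition of the parent map. The extra care you take in the squeezing step (excluding $v$ from strictly between $X_{a_j}$ and $X_{a_{j+1}}$ and then invoking distinctness of minimal mincuts) is precisely the bookkeeping the paper leaves implicit.
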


\begin{proof}
Observe that $X_u\subseteq X_v$ if and only if $u$ is a descendant of $v$. Then the statement follows.
\end{proof}

With the help of the nesting relation tree $\hat{T}$, the following lemma enables us to categorize any global mincut:

\begin{lemma}\label{lem:categorize-all-mincuts}
Let $X$ be any (proper) mincut on $G$. Then, there is a set of vertices $S\subseteq V$ such that all vertices in $S$ have the same parent on $\hat{T}$ and $X = \bigcup_{u\in S} \hat{T}_u$. 
\end{lemma}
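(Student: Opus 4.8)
The plan is to first establish that every mincut is ``downward closed'' in $\hat{T}$ (which pins down a canonical candidate for $S$), and then do the real work: show that the top vertices of that candidate all share a parent. For the first step I would show that for a proper mincut $X$ with $r\notin X$ and every $v\in X$ we have $X_v\subseteq X$. Assume for contradiction $X_v\not\subseteq X$. If $X\subseteq X_v$, then $X$ is a mincut separating $v$ from $r$ of size at most $|X_v|$, so by minimality $|X|=|X_v|$ and by \Cref{lem:uniqueness-minimal} $X=X_v$, contradicting $X_v\not\subseteq X$; hence $X\not\subseteq X_v$. Since also $v\in X\cap X_v$ and $r\in (V\setminus X)\cap(V\setminus X_v)$, the cuts $X$ and $X_v$ cross, so by \Cref{lem:crossing-mincuts} the set $X\cap X_v$ is a mincut that separates $v$ from $r$ and is strictly smaller than $X_v$ (it omits some vertex of $X_v\setminus X$), contradicting minimality of $X_v$. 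Therefore $X=\bigcup_{v\in X}X_v=\bigcup_{v\in X}\hat{T}_v$ by \Cref{lem:cactus-T-hat-subtree}, i.e.\ $X$ is closed under taking $\hat{T}$-descendants.

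Next I would take $S$ to be the set of ``top'' vertices of $X$: since $r\notin X$ and $r$ is the root of $\hat{T}$ (because $\hat{T}_r=X_r=V$), every $u\in X$ has a parent $p_u$, and I set $S:=\{u\in X:\ p_u\notin X\}$. Downward closedness immediately gives that $S$ is an antichain in $\hat{T}$, that $\hat{T}_u\subseteq X$ for every $u\in S$, and that $X=\bigcup_{u\in S}\hat{T}_u$ (walking up from any $v\in X$ to its highest ancestor still in $X$). So everything reduces to showing that all vertices of $S$ have the same parent. Suppose not; then $\{p_u:u\in S\}$ has at least two elements. The crucial choice is to pick $u_1\in S$ whose parent $p_{u_1}$ is \emph{deepest} among $\{p_u:u\in S\}$, and pick $u_2\in S$ with $p_{u_2}\neq p_{u_1}$; note $p_{u_1}\neq r$, since otherwise all these parents would be $r$. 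I claim $u_2\notin\hat{T}_{p_{u_1}}$: if $u_2$ were a descendant of $p_{u_1}$ it would be a strict descendant (as $p_{u_1}\notin X\ni u_2$), so $p_{u_2}$ would be a strict descendant of $p_{u_1}$, contradicting the ``deepest'' choice.

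Finally I would set $A:=X_{p_{u_1}}=\hat{T}_{p_{u_1}}$, which is a proper mincut, and check the four corners: $u_1\in X\cap A$, $p_{u_1}\in A\setminus X$, $u_2\in X\setminus A$, and $r\in(V\setminus X)\cap(V\setminus A)$, so $X$ and $A$ cross. By \Cref{lem:crossing-mincuts}, $A\setminus X$ is a mincut; it separates $p_{u_1}$ from $r$ and is strictly smaller than $A=X_{p_{u_1}}$ (since $u_1\in A$ but $u_1\notin A\setminus X$), contradicting the minimality of the minimal mincut $X_{p_{u_1}}$. This proves all vertices of $S$ share a parent, completing the lemma. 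I expect Step 1 (downward closedness) and the bookkeeping around $S$ to be routine; the delicate point is the last step, since a priori $X$ is only a union of $\hat{T}$-subtrees and not every such union is a mincut, so one genuinely has to use that $X$ is a mincut. The right move is \emph{not} to cross $X$ with the minimal mincuts $X_u$ of its own top vertices (those are nested inside $X$ and yield nothing) but with $X_{p_{u_1}}$ for the deepest top-vertex parent; the ``deepest'' choice is exactly what guarantees $u_2\notin\hat{T}_{p_{u_1}}$ and hence the crossing, after which the complementary piece $A\setminus X$ gives the contradiction.
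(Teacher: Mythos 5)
Your proof is correct and follows essentially the same route as the paper: first showing $X$ is a union of $\hat{T}$-subtrees via the crossing lemma, then crossing $X$ with the minimal mincut $X_p$ of a suitably chosen top-vertex parent $p\notin X$ so that $X_p\setminus X$ contradicts minimality. The only cosmetic difference is your selection device (the deepest parent, guaranteeing $u_2\notin\hat{T}_{p_{u_1}}$) where the paper takes two distinct parents $p,q$ and argues WLOG $p\notin\hat{T}_q$; your write-up is slightly more explicit about the nested case $X\subseteq X_v$ and about $p_{u_1}\neq r$, but the argument is the same.
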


\begin{proof}
First, observe that any mincut $X$ satisfies $X=\cup_{u\in X} X_u$. To see this, if $X$ crosses with some minimal mincut $X_v$ of vertex $v$, then by \Cref{lem:crossing-mincuts}, both $X_v\setminus X$ and $X_v\cap X$ are mincuts. Since $v$ lies in one of $X_v\setminus X$ and $X_v\cap X$, we have found a smaller mincut that contains $v$, contradicts to the definition of $X_v$.
By \Cref{lem:cactus-T-hat-subtree}, we further have that  $X=\cup_{u\in X} \hat{T}_u$ so $X$ is union of some subtrees in $\hat{T}$.

Now we would like to show that the roots of these subtrees have the same parent in $\hat{T}$.
Let $S\subseteq X$ be comprised of all vertices $x\in X$ such that the parent of $x$ is not in $X$. Suppose there are two vertices $x, y\in S$ so that their parent vertices $p$ and $q$ are different.
Since $p\neq q$, without loss of generality we may assume that $p\notin \hat{T}_{q}$
Now, $X$ must cross with $X_q(=\hat{T}_q)$ since $x\notin \hat{T}_q$ but $y\in \hat{T}_q$. In this case, one of $X_q\cap X$ or $X_q\setminus X$ is a smaller mincut that contains $q$, contradicting the definition of $X_q$.
Hence, all vertices in $S$ has the same parent and the statement is true.
\end{proof}

The next \Cref{sec:gabow-chain} gives important characterization to global mincuts, which leads to the actual cactus construction algorithm.
\footnote{
\cite{karger2009near} stopped at the above \Cref{lem:categorize-all-mincuts} and showed that by partitioning all mincuts according to the common parent vertex,
the cactus can be constructed layer-by-layer.
However, the correctness proof to the remaining part of \cite{karger2009near} dealing with one-layer cactus construction is not explicitly stated.
In the next subsection, we fill in the missing proof (and slightly simplify their algorithm) for Karger and Panigrahi's algorithm.}

\subsection{Chains and Chain Certificates}
\label{sec:gabow-chain}

The goal of this section is to show that every global mincut has a ``simple certificate'' corresponding to it. This will be  crucial for defining the hierarchical representation of all mincuts, which will be given in \Cref{sec:cactus-hierarchy}.
In fact, a generalization of the fact was already shown by Gabow \cite{gabow2016minset} but we give an alternative (and arguably simpler) presentation of this characterization using terminology specific to mincuts.

\begin{definition}[Mincut Certificates]
Let $X\subseteq V$ be any set of vertices. We say $X$ has a \emph{vertex certificate} if there is a vertex $v$ so that $X=X_v$.
Similarly, we say $X$ has an \emph{edge certificate} if $X$ is the minimal mincut for some edge $e$.
\end{definition}

By the end of this section, we will prove in \Cref{lem:certificate char} that every mincut must either has a vertex certificate, an edge certificate, or a \emph{chain certificate}. In order to understand the last object, we first give the definitions of chains and chain certificates below.   

\begin{definition}[inspired by \cite{gabow2016minset}, page 33--35]
\label{def:chain}
A \emph{chain} is a sequence of disjoint non-empty vertex subsets $(C_0, C_1, \ldots, C_\ell)$ where $\ell \ge 1$ and is defined recursively:
\begin{itemize}[itemsep=0pt]
    \item[1.] For each $i$, $C_i\subseteq V$ has either a vertex certificate, an edge certificate, or a \emph{chain certificate}.
    \item[2.] For each $0\le i < \ell$, $C_i\cup C_{i+1}$ has an edge certificate.
\end{itemize}
We say that a subset of vertices $X\subseteq V$ has a \emph{chain certificate} if there exists a chain $(C_0, C_1, \ldots, C_\ell)$ such that $X=\cup_{i=0}^{\ell} C_i$.
A \emph{subchain} is a consecutive subsequence of a chain.
A \emph{maximal chain} is a chain that is not a subchain of any longer chain.
\end{definition}

It is very important to note that \Cref{def:chain} is well-defined, in the sense that once all subsets with vertex or edge certificates are fixed, chains and maximal chains will be unambiguously defined.
Now, we explore some useful properties of chains. These properties can all be derived from the crossing lemma (\Cref{lem:crossing-mincuts}).

\begin{lemma}[basic properties of a chain]\label{lem:chain-property}
Let $(C_0, C_1, \ldots, C_\ell)$ be a chain and let $X=C_0\cup C_1\cup \cdots \cup C_\ell$. Then the following properties are true:
\begin{enumerate}[itemsep=0pt]
    \item[\namedlabel{prop:1}{(1)}] $X$ is either a mincut or the vertex set $V$.
    \item[\namedlabel{prop:chain-edges-allow-to-leave-only-at-ends}{(2)}] For any edge $e$ leaving $X$ (i.e., $e$ connects $X$ and $V\setminus X$), then $e$ has an endpoint in either $C_0$ or $C_\ell$.
    \item[\namedlabel{prop:chain-neighbor-degree}{(3)}] For any $0\le i < \ell$, $\C(C_i, C_{i+1})=\lambda / 2$; also, $\C(C_0, V\setminus X)=\C(C_\ell, V\setminus X)=\lambda / 2$. In particular, for any non-neighbor indices $i$ and $j$ such that $|i-j| > 1$ we have $\C(C_i, C_j)=0$.
    \item[\namedlabel{prop:components-are-maximal-chain}{(4)}] (Lemma 4.6 in \cite{gabow2016minset}) For any $i$, if $C_i$ has a chain certificate $(C'_0, C'_1, \ldots, C'_{\ell'})$, then this certificate must be a maximal chain.
    \item[\namedlabel{prop:extend-a-component-cannot-cross}{(5)}] For any edge $e$ leaving $C_0$ (resp. $C_\ell$) to $V\setminus X$, the minimal mincut of $e$ either contains  $C_0$ (resp. $C_\ell$) but nothing from $X\setminus C_0$ (resp. $X\setminus C_\ell$), or contains the entire $X$.
\end{enumerate}
\end{lemma}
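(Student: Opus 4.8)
The plan is to prove the five properties by simultaneous induction on $\ell$ (the length of the chain), since \Cref{def:chain} is recursive: the $C_i$'s may themselves have chain certificates. For the base case $\ell = 1$, a chain $(C_0, C_1)$ satisfies that $C_0 \cup C_1$ has an edge certificate, hence is a mincut, and each of $C_0, C_1$ is a mincut (or $V$) by whichever certificate it carries, invoking the inductive hypothesis for any sub-chain certificates. I would first establish \ref{prop:1}: since $C_0\cup\cdots\cup C_\ell = (C_0\cup C_1)\cup(C_1\cup C_2)\cup\cdots$, and $X$ is a union of overlapping-at-the-seam... — more carefully, the right statement is that $C_0\cup C_1$ and $(C_1\cup\cdots\cup C_\ell)$ are both mincuts (the latter by induction, treating $(C_1,\ldots,C_\ell)$ as a chain of length $\ell-1$), and their intersection contains $C_1\neq\emptyset$, so they do not cross trivially; I would argue $C_0\cup C_1$ and $C_0\cup\cdots\cup C_\ell$ are nested using \Cref{lem:crossing-mincuts}, deducing that the union $X$ is again a mincut (or $V$). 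The disjointness of the $C_i$'s plus the crossing lemma applied to consecutive prefixes is the engine here: prefixes $X_j := C_0\cup\cdots\cup C_j$ form a nested increasing family of mincuts, each obtained from the previous by a non-crossing union.

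Given \ref{prop:1}, property \ref{prop:chain-neighbor-degree} is the quantitative heart. I would compute $\C(X_{j})$ in terms of $\C(X_{j-1})$: since both are mincuts of value $\lambda$ and $X_j = X_{j-1}\sqcup C_j$, expanding $\lambda = \C(X_j) = \C(X_{j-1}) + \C(C_j) - 2\C(X_{j-1},C_j) = \lambda + \C(C_j,V\setminus X_{j-1}) - 2\C(X_{j-1},C_j)$... and combine with the fact that $C_{j-1}\cup C_j$ is also a mincut. Writing out $\C(C_{j-1}\cup C_j) = \lambda$ and using that $C_j$ only connects to $C_{j-1}$ within $X$ once we know \ref{prop:chain-edges-allow-to-leave-only-at-ends} and the $|i-j|>1 \Rightarrow \C(C_i,C_j)=0$ claim, one gets a linear system forcing each seam weight $\C(C_i,C_{i+1})$ and the two boundary weights $\C(C_0,V\setminus X)$, $\C(C_\ell,V\setminus X)$ to equal $\lambda/2$. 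The cleanest route: first prove \ref{prop:chain-edges-allow-to-leave-only-at-ends} and the ``far-apart $C_i$ don't interact'' part together by applying \Cref{lem:crossing-mincuts} to the two mincuts $X_{i} $ and $C_{i+1}\cup\cdots\cup C_\ell$ (which is a suffix mincut, nested-complement to a prefix), using the "furthermore" conclusion $\C(X\setminus Y, Y\setminus X) = 0$ of the crossing lemma to kill cross edges; then the degree accounting in \ref{prop:chain-neighbor-degree} follows by a short telescoping argument.

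For \ref{prop:components-are-maximal-chain}, suppose $C_i$ has a chain certificate $(C_0',\ldots,C_{\ell'}')$ that is a subchain of a strictly longer chain, so some $C_{-1}'$ or $C_{\ell'+1}'$ exists with $C_{\ell'}'\cup C_{\ell'+1}'$ (say) having an edge certificate. I would derive a contradiction with \ref{prop:chain-neighbor-degree} applied to the ambient chain $(C_0,\ldots,C_\ell)$: the vertices of $C_{\ell'+1}'$ lie in some $C_j$ with $j \ge i$ (they attach to $C_i = \bigcup_k C_k'$ via an edge of weight $\lambda/2$, and by the degree constraints that much weight can only be absorbed at a seam), and then $C_i\cup C_{i+1}$ would have to accommodate both the internal chain structure and this external attachment, over-counting the $\lambda/2$ budget at the seam $\C(C_i, C_{i+1})$. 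This is essentially Lemma~4.6 of \cite{gabow2016minset} and I would either reproduce that argument or cite it — I expect this to be the \textbf{main obstacle}, because it requires carefully tracking where the edges leaving a sub-chain-block can land, and ruling out that a maximal chain of $G$ could be ``absorbed'' as a proper sub-chain of a longer chain requires the degree bookkeeping of \ref{prop:chain-neighbor-degree} to be airtight first. Finally, \ref{prop:extend-a-component-cannot-cross} follows from \ref{prop:chain-edges-allow-to-leave-only-at-ends} and minimality: an edge $e$ leaving $C_0$ to $V\setminus X$ has its minimal mincut $X_e$; if $X_e$ crossed $X$, the crossing lemma would give $X_e\cap X$ or $X_e\setminus X$ as a smaller mincut still separating both endpoints of $e$ from $r$ — contradiction unless $X_e\subseteq X$ or $X\subseteq X_e$; in the former case $X_e$ must ``stop'' at $C_0$ because crossing the seam $C_0$–$C_1$ (weight exactly $\lambda/2$) while separating $e$ from $r$ again produces a strictly smaller mincut via \Cref{lem:crossing-mincuts}, using that $X_e$ already spends weight on the $C_0$–$(V\setminus X)$ boundary.
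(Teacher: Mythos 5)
Your plan for (1) and the weight counting in (3) follows essentially the same route as the paper (induction plus \Cref{lem:crossing-mincuts} on unions of subchains, then counting among the three mincuts $C_i$, $C_{i+1}$, $C_i\cup C_{i+1}$), but your instantiation for (2) and the ``far-apart'' claim does not work as written: you apply \Cref{lem:crossing-mincuts} to $X_i=C_0\cup\cdots\cup C_i$ and $C_{i+1}\cup\cdots\cup C_\ell$, which are \emph{disjoint}, so they do not cross and the lemma says nothing (and the conclusion you want, $\C(A\setminus B,B\setminus A)=0$, is false for this pair, since $\C(C_i,C_{i+1})=\lambda/2$). The working choices are the overlapping pairs: $C_0\cup\cdots\cup C_i$ with $C_i\cup\cdots\cup C_\ell$, whose intersection $C_i$ then has no edge to $V\setminus X$ (this is (2), via the other ``furthermore'' clause of the crossing lemma), and $C_i\cup\cdots\cup C_{j-1}$ with $C_{i+1}\cup\cdots\cup C_j$, which gives $\C(C_i,C_j)=0$ for $|i-j|>1$. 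This is an off-by-one you could repair, but as stated the key step fails; the intermediate identity you display for (3) also conflates $\C(C_j)$ with $\C(C_j,V\setminus X_{j-1})$, though the counting idea itself is sound.

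The genuine gaps are (4) and (5). For (4) you give only a budget heuristic that starts from the unjustified claim that the extending part $C'_{\ell'+1}$ lies inside some $C_j$; a priori the mincut $X_e=C'_{\ell'}\cup C'_{\ell'+1}$ could straddle several $C_j$'s and $V\setminus X$, and locating it is precisely the content of the proof, so deferring to Gabow leaves the property unproved. The paper's argument is short and self-contained: take the edge $f$ with $X_f=C_i\cup C_{i+1}$; properties (2) and (3) force both endpoints of $f$ to avoid $X_e$; then either $X_e$ crosses $X_f$, so $X_f\setminus X_e$ is a smaller mincut containing $f$, or $X_e\subseteq X_f$, in which case $Y=C_i\cup X_e$ is a mincut crossing $C_{i+1}$ and the edge $f$ joining $Y\setminus C_{i+1}$ to $C_{i+1}\setminus Y$ contradicts \Cref{lem:crossing-mincuts}. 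For (5) your argument is incorrect: since $e$ leaves $X$, one endpoint of $e$ is in $X$ and the other outside, so neither $X_e\cap X$ nor $X_e\setminus X$ contains both endpoints and no contradiction with minimality arises; moreover, in the legitimate first outcome of (5), where $X_e\cap X=C_0$, the set $X_e$ really does cross $X$, so the dichotomy ``$X_e\subseteq X$ or $X\subseteq X_e$'' is false (and $X_e\subseteq X$ is impossible outright). The paper instead first proves $C_0\subseteq X_e$ by playing $X_e$ against the edge $f$ with $X_f=C_0\cup C_1$ via the crossing lemma, and then crosses $X_e$ with the suffix $C_1\cup\cdots\cup C_\ell$ to conclude that $X_e$ either stops at $C_0$ or swallows all of $X$; some version of these two steps is needed, and the seam-budget reasoning you sketch does not substitute for them.
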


\begin{proof} We prove each property one by one below.
\begin{description}
    \item[Property \ref{prop:1}.] This can be done by induction on $|X|$ when $|X|<n$: we know that $C_\ell$ has a certificate so it is a mincut. Now, by the fact that $X\setminus C_\ell$ cross with $C_{\ell-1}\cup C_{\ell}$, by \Cref{lem:crossing-mincuts} we know that $X$ is a mincut.
    \item[Property \ref{prop:chain-edges-allow-to-leave-only-at-ends}.] Fix $i$ with $1\le i \le \ell-1$. 
    Since any subchain is a chain, by property~\ref{prop:1} we have $A:=C_0\cup\cdots \cup C_i$ and $B:=C_i\cup C_{i+1}\cup \cdots \cup C_\ell$ are two mincuts. By \Cref{lem:crossing-mincuts}, there is no edge going from $A\cap B=C_i$ to the outside $V\setminus X$.
    \item[Property \ref{prop:chain-neighbor-degree}.] For any $0\le i < \ell$, we focus on the term $\C(C_i, C_{i+1})$ by the following: let $A=V\setminus (C_i\cup C_{i+1}))$, since all of $C_i$, $C_{i+1}$, and $C_i\cup C_{i+1}$ are mincuts, we know that 
    any two terms of $\C(A, C_i)$, $\C(C_i, C_{i+1})$, and  $\C(C_{i+1}, A)$ add up to the value $\lambda$. Thus, each of them must be exactly $\lambda / 2$. To show that $\C(C_0, V\setminus X)=\lambda/2$ we apply the same argument to the three mincuts $C_0$, $C_1\cup\cdots \cup C_\ell$ and $X$. Similar argument for $\C(C_\ell, V\setminus X)=\lambda/2$.
    
    To prove the second statement, we assume $i+1<j$ and use \Cref{lem:crossing-mincuts} on mincuts $A:=C_i\cup \cdots \cup C_{j-1}$ and $B:=C_{i+1}\cup \cdots\cup C_j$. Since $|i-j|>1$, we know that $A$ crosses with $B$ and hence $\C(C_i, C_j)=\C(A\setminus B, B\setminus A) = 0$.
    \item[Property \ref{prop:components-are-maximal-chain}.]
    Without loss of generality assume $i < \ell$ (otherwise we simply reverse the chain.)
    Assume for the contradiction that the chain certificate of $C_i$ is not maximal.
    Without loss of generality there exists an edge $e$ such that $X_e = C'_{\ell'}\cup C'_{\ell'+1}$. Moreover, since $C_i$ is part of a chain, there exists an edge $f$ so that $X_f = C_{i}\cup C_{i+1}$.
    By property~\ref{prop:chain-edges-allow-to-leave-only-at-ends}, we know that one of $f$'s endpoints must be in $C'_0$ (i.e., cannot be leaving from $C'_{\ell'}$).
    By property~\ref{prop:chain-neighbor-degree}, the other endpoint of $f$ should not occur in $C'_{\ell'+1}$. So both endpoints of $f$ do not occur in $X_e$.
    
    If $X_e$ and $X_f$ cross, by the crossing lemma (\Cref{lem:crossing-mincuts}), $X_f\setminus X_e$ is a smaller mincut that contains $f$, a contradiction.
    
    If $X_e$ and $X_f$ do not cross, then $X_e\subseteq X_f$.
    Notice that now $Y:=C_i\cup X_e$ is a mincut that crosses with $C_{i+1}$.
    By the crossing lemma again (\Cref{lem:crossing-mincuts}) we know that $\C(Y\setminus C_{i+1}, C_{i+1}\setminus Y)=0$. However, $f$ is an edge that goes from $C_i\setminus X_e$ ($\subseteq Y\setminus C_{i+1}$) to $C_{i+1}\setminus X_e$ ($\subseteq C_{i+1}\setminus Y$), a contradiction too.
    \item[Property \ref{prop:extend-a-component-cannot-cross}.]
    We first show that $C_0\subseteq X_e$: suppose not, $C_0$ cross with $X_e$ (since $e$ has an endpoint in $C_0$). Let $f$ be an edge so that $X_f=C_0\cup C_1$. Notice that $X_e$ and $X_f$ cross and so $X_f\setminus X_e$ is a smaller mincut. Observe that $X_f \setminus X_e$
    contains both endpoints of $f$ because, by \Cref{lem:crossing-mincuts}, there is no edge from $X_e\cap C_0$ to $C_1\subseteq (V\setminus X_e\cup C_0)$. This contradicts to the fact that $X_f$ is the minimal mincut of $f$.
    
    Now that $C_0\subseteq X_e$, we assume for the contradiction that $X_e$ contains strictly more than $C_0$ but not entire $X$.
    Then, $X_e$ cross with $A:=C_1\cup C_2\cup\cdots\cup C_\ell$.
    By \Cref{lem:crossing-mincuts} $X_e\setminus A$ is also a mincut.
    Now $X_e\setminus A$ is a smaller mincut that contains $e$, a contradiction too.
    
    Therefore, $X_e$ contains either entire $X$ or just $C_0$ but none of vertices from $X\setminus C_0$.\qedhere
\end{description}
\end{proof}

As we will see, each chain actually corresponds to a further-from-root part of a cycle on a cactus.

Now, Gabow's characterization of mincuts shows that every mincut has a certificate:

\begin{lemma}[\cite{gabow2016minset}, Lemma 4.4 and Lemma 4.5]\label{lem:certificate char}
Every mincut on $G$ has either a vertex certificate, an edge certificate, or a chain certificate.
\end{lemma}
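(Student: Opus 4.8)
The plan is to prove \Cref{lem:certificate char} by strong induction on $|X|$. Given a mincut $X$, I will first apply \Cref{lem:categorize-all-mincuts} (using $\hat T_u=X_u$ from \Cref{lem:cactus-T-hat-subtree}) to write $X=\bigcup_{u\in S}X_u$ for a set $S$ of vertices sharing a common parent in $\hat T$. If $|S|=1$ then $X=X_u$ is the minimal mincut of a vertex, so $X$ has a vertex certificate and we are done. Otherwise $k:=|S|\ge2$; I set $A_i:=X_{u_i}$ for $1\le i\le k$ and $A_0:=V\setminus X$, so that $A_0,A_1,\dots,A_k$ partition $V$, and the task reduces to exhibiting an edge certificate or a chain certificate for $X$.

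Next I would record two facts. First, since each $A_i$ with $i\ge1$, and also $X$ (whose complement is $A_0$), is a mincut, a short computation in the weighted quotient graph $G'$ on $\{A_0,\dots,A_k\}$ with edge weights $\C(A_i,A_j)$ shows that every vertex of $G'$ has weighted degree exactly $\lambda$, and that the $G$-cut induced by $A_i\cup A_j$ has value $2\lambda-2\C(A_i,A_j)\ge\lambda$, so $\C(A_i,A_j)\le\lambda/2$ for all $i\ne j$; hence every vertex of $G'$ has at least two positive-weight neighbors, $G'$ is connected, and its min cut is $\lambda$. Second, by \Cref{lem:crossing-mincuts} any mincut of $G$ containing a vertex $v$ must contain all of $X_v$ (intersecting it with $X_v$ otherwise yields a strictly smaller mincut containing $v$, cf.\ \Cref{lem:uniqueness-minimal}); consequently, for an edge $e=(v,v')$ straddling two pieces, $A_i\ni v$, $A_j\ni v'$, we get $X_v\cup X_{v'}\subseteq X_e\subseteq X$, and $X_e$ is a union of $\hat T$-subtrees by \Cref{lem:categorize-all-mincuts}.

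With these in hand, I would analyze $G'$: its degree and weight constraints pin the arrangement of the pieces around $A_0$ into a cactus-like pattern, and reading off the portion of that pattern farthest from $A_0$ produces a candidate sequence $C_0,\dots,C_\ell$ with $\bigcup_i C_i=X$ in which each block $C_i$ is either one of the $A_j$'s or a smaller mincut straddling consecutive pieces; in either case $C_i$ has a certificate by the induction hypothesis. Applying \Cref{prop:chain-neighbor-degree} and \Cref{prop:chain-edges-allow-to-leave-only-at-ends} to the arcs between blocks shows $C_i\cup C_{i+1}$ is a mincut, and combining this with the non-crossing fact lets me pick a connecting edge whose minimal mincut is exactly $C_i\cup C_{i+1}$, giving the edge certificate that \Cref{def:chain} demands; so $(C_0,\dots,C_\ell)$ is a chain witnessing $X$. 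In the degenerate case where $X$ is a union of only two pieces with no further internal structure, $X$ itself instead receives an edge certificate directly from an edge straddling those two pieces.

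The hard part will be making this last paragraph rigorous. Concretely, one must (i) pin down the exact arrangement of the $A_i$ around $A_0$ from the weighted-degree data and determine which blocks $C_i$ are forced to be proper sub-mincuts rather than whole pieces; (ii) exhibit, for each adjacent pair, a connecting edge whose minimal mincut does not slip strictly inside $C_i\cup C_{i+1}$ --- a delicate crossing-lemma argument, and precisely the place where one recurses on the internal structure of a piece when no suitable edge is incident to that piece's top vertex; and (iii) check that the resulting sequence satisfies \Cref{def:chain} verbatim and that the recursive unwinding terminates, the analogue of Gabow's Lemma~4.6 (our \Cref{prop:components-are-maximal-chain}). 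Once (i)--(iii) are settled, the remaining bookkeeping --- orienting the chain so that the blocks meeting $A_0$ are its endpoints, and checking the blocks are pairwise disjoint --- is routine.
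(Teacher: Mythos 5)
Your proposal is a plan rather than a proof, and the parts you defer in items (i)--(iii) are precisely the mathematical content of the lemma. The central unproved claim is that the weighted quotient graph $G'$ on the pieces $A_0,A_1,\dots,A_k$ is ``pinned into a cactus-like pattern'' by the degree-$\lambda$ and $\C(A_i,A_j)\le\lambda/2$ constraints, from which the chain can be ``read off.'' This does not follow: those constraints hold for the maximal pieces of any mincut, yet the quotient on the pieces need not be a path or cycle at all --- a chain block $C_i$ is in general a union of many pieces whose internal arrangement is arbitrary (it is only recursively certificated), so the linear order exists only at the level of the blocks, and constructing that grouping of pieces into blocks is exactly what has to be proved. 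Likewise, step (ii) --- exhibiting, for each adjacent pair of blocks, an edge whose \emph{minimal} mincut is exactly $C_i\cup C_{i+1}$ rather than something strictly smaller or larger --- is the delicate point that \Cref{def:chain} requires, and your proposal only names it as ``a delicate crossing-lemma argument'' without supplying it. As written, the argument assumes the circular/linear structure of mincuts inside $X$, which is essentially the conclusion.

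For comparison, the paper's proof avoids the global decomposition entirely: it inducts on $|X|$, picks a single \emph{maximal} proper mincut $A\subsetneq X$ (nonempty since $X$ has no vertex certificate), takes any edge $e\in E(A,X\setminus A)$, and uses maximality plus \Cref{lem:crossing-mincuts} to show $X_e$ crosses $A$, $X_e\cup A=X$, and hence $X\setminus A=X_e\setminus A$ is a mincut. Then properties~\ref{prop:chain-edges-allow-to-leave-only-at-ends} and~\ref{prop:extend-a-component-cannot-cross} of \Cref{lem:chain-property}, applied to the certificate of $A$ given by induction, force $X_e=(X\setminus A)\cup C_0$ or $X_e=(X\setminus A)\cup C_\ell$, which is exactly the edge certificate needed to append $X\setminus A$ at one end of $A$'s chain. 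This ``peel off one maximal sub-mincut'' step sidesteps all three of your open points; your route, by contrast, would require re-deriving the circular arrangement of crossing mincuts from scratch (and your starting decomposition via \Cref{lem:categorize-all-mincuts} is at the wrong granularity for the chain blocks). So there is a genuine gap, and closing it along your lines would amount to a substantially harder argument than the one in the paper.
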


\begin{proof}
Assume $G$ is connected (so $\lambda > 0$). Let $X$ be a mincut (so the induced subgraph $G[X]$ is connected).
We prove by induction on the size of $X$.

\noindent\textbf{Base Case.} When $|X|=1$, $X$ trivially has a vertex certificate.

\noindent\textbf{Inductive Case.} Now suppose $|X|>1$. If $X$ has a vertex/edge certificate then we are done.

Suppose $X$ has no vertex/edge certificate.
Consider the collection $\mathcal{X}$ that consists of all maximal proper subsets of $X$ that are mincuts.
Then, since $X_v\subsetneq X$ for all $v\in X$, we know that $\mathcal{X}\neq \emptyset$.
Take any $A\in\mathcal{X}$.
By induction hypothesis, $A$ has a certificate denoted as $(C_0, \ldots, C_{\ell})$.
For convenience, if $A$ has a vertex certificate or an edge certificate, we use the same notation where $C_0:=A$ and $\ell=0$.

Finally, we claim that we can actually attach the entire $X\setminus A$ to either end of the certificate, forming an authentic chain $(C_0, \ldots, C_\ell, X\setminus A)$ or $(X\setminus A, C_0, \ldots, C_\ell)$ and conclude the proof.
Since $G[X]$ is connected, there is at least one edge $e\in E(A, X\setminus A)$.
Let $X_e$ to be the minimal mincut of $e$.
We know that $X_e\subsetneq X$ because $X$ has no edge certificate.
Since  $X_e \cap A\neq \emptyset$ but $X_e$ is not a superset of $A$ (since $A$ is maximal), we know that $X_e$ cross $A$. By \Cref{lem:crossing-mincuts}, both $X_e\setminus A$ and $X_e\cup A$ are mincuts.
Now, we notice that $X_e\cup A$ is a superset of $A$. This implies that $X_e\cup A=X$ and $X_e\setminus A=X\setminus A$ is a mincut.

By property~\ref{prop:chain-edges-allow-to-leave-only-at-ends} of \Cref{lem:chain-property}, $e$ must leave from $C_0$ or $C_\ell$. By property~\ref{prop:extend-a-component-cannot-cross} of \Cref{lem:chain-property}, with the clue $X_e\neq X$, we know that either $X_e=(X\setminus A)\cup C_0$ or $X_e=(X\setminus A)\cup  C_{\ell}$. Hence, by definition, one of $(X\setminus A, C_0, \ldots, C_\ell)$ or $(C_0, \ldots, C_\ell, X\setminus A)$ is a chain certificate of $X$.
\end{proof}

The following \Cref{lem:chain-uniqueness} shows that chain certificates of a mincut are basically unique. This is helpful in the following sense: 
suppose we have extended the chain certificate of a mincut $A$ to some longer chain (let $B\supset A$ be the associated mincut).
Then, we do not need to store $A$ in the memory because the mincut $A$ is ``preserved'' (as a subchain) in any chain certificate of $B$.

\begin{lemma}[Briefly mentioned in \cite{gabow2016minset}, page 36]\label{lem:chain-uniqueness}
Let $X$ be a mincut that has a chain certificate $(C_0, C_1, \ldots, C_\ell)$. Then this certificate is unique up to reversing the chain.
\end{lemma}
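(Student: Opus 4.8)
The plan is to reduce the statement to a single \emph{end-pieces claim}: if $X$ has chain certificates $(C_0,\dots,C_\ell)$ and $(D_0,\dots,D_m)$ with $\ell,m\ge 1$, then the unordered pair of outermost pieces $\{C_0,C_\ell\}$ equals $\{D_0,D_m\}$. Granting this, \Cref{lem:chain-uniqueness} follows by induction on $|X|$: if $\ell=1$ then $C_0\cup C_\ell=X$, so the claim forces $D_0\cup D_m=X$, hence $m=1$ and $\{D_0,D_1\}=\{C_0,C_1\}$; if $\ell\ge 2$, reverse $\mathcal D$ if necessary so that $C_0=D_0$ and $C_\ell=D_m$, and note that $X\setminus C_0=X\setminus D_0$ is a mincut (being the union of the subchain $(C_1,\dots,C_\ell)$, hence a mincut by property~\ref{prop:1} of \Cref{lem:chain-property}) carrying the two chain certificates $(C_1,\dots,C_\ell)$ and $(D_1,\dots,D_m)$; by the induction hypothesis these agree up to reversal, and the shared last piece $C_\ell=D_m$ together with disjointness of pieces rules out the reversed case whenever $\ell\ge 2$, so $(C_1,\dots,C_\ell)=(D_1,\dots,D_m)$ and thus $\mathcal C=\mathcal D$.

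To prove the end-pieces claim I would first pin down the boundary. Let $\partial$ denote the set of vertices of $X$ incident to an edge into $V\setminus X$; since $\lambda>0$ we have $\partial\neq\emptyset$. By property~\ref{prop:chain-edges-allow-to-leave-only-at-ends} of \Cref{lem:chain-property}, $\partial\subseteq C_0\cup C_\ell$ and $\partial\subseteq D_0\cup D_m$, and by property~\ref{prop:chain-neighbor-degree} each of $C_0,C_\ell,D_0,D_m$ meets $\partial$ with $\C(C_0,V\setminus X)=\C(C_\ell,V\setminus X)=\C(D_0,V\setminus X)=\C(D_m,V\setminus X)=\lambda/2$. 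The key sub-step is that $C_0\cap\partial$ cannot be split between $D_0$ and $D_m$: if it had vertices $p\in D_0$ and $q\in D_m$, then $D_0,D_m$ cannot both be contained in $C_0$ (together they meet $C_\ell$, which is disjoint from $C_0$), so say $D_0\not\subseteq C_0$; then $C_0$ crosses $D_0$, and \Cref{lem:crossing-mincuts} gives $\C(C_0\cap D_0,(V\setminus C_0)\cap(V\setminus D_0))=0$, contradicting that $p\in C_0\cap D_0\cap\partial$ has positive weight to $V\setminus X$. Hence, after orienting $\mathcal D$, $C_0\cap\partial\subseteq D_0$; comparing $\C(C_0\cap\partial,V\setminus X)=\C(D_0\cap\partial,V\setminus X)=\lambda/2$ forces $D_0\cap\partial\subseteq C_0$ as well, so $C_0\cap\partial=D_0\cap\partial$, and symmetrically $C_\ell\cap\partial=D_m\cap\partial$. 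Since $C_0$ and $D_0$ now share a boundary vertex, the same crossing-lemma argument shows they do not cross, so one contains the other; say $C_0\subseteq D_0$.

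The remaining step, upgrading $C_0\subseteq D_0$ to $C_0=D_0$, is the main obstacle, and I would handle it through the edge certificates of consecutive pairs and the degree identities of property~\ref{prop:chain-neighbor-degree}. Suppose $C_0\subsetneq D_0$. Using that there is no mincut strictly between the prefixes $P_i:=C_0\cup\dots\cup C_i$ and $P_{i+1}$ (such a mincut would cross $C_{i+1}$, and \Cref{lem:crossing-mincuts} together with property~\ref{prop:chain-neighbor-degree} would then force $\C(C_0,V\setminus X)=0$), one shows that $D_0$ must contain a full prefix $P_k$ with $k<\ell$ chosen maximal: the edge certificate $X_{f_k}=C_k\cup C_{k+1}$ of the pair $(C_k,C_{k+1})$ forces $D_0$ to absorb the endpoint of $f_k$ lying in $C_{k+1}$ whenever its other endpoint lies in $C_k\subseteq D_0$, and iterating either yields $D_0\supseteq X$ (contradicting $D_0\subsetneq X$) or leaves $D_0$ ``half-cutting'' $C_{k+1}$, whose total boundary $\lambda$ is split evenly between $C_k$ and $C_{k+2}$ by property~\ref{prop:chain-neighbor-degree}; that weight must then cross $D_0$'s cut, overflowing $\C(D_0,X\setminus D_0)=\lambda/2$. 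The delicate points are the case $k+1=\ell$ (where $C_{k+1}$ is itself an end piece carrying $\lambda/2$ out of $X$), and the use of property~\ref{prop:extend-a-component-cannot-cross} of \Cref{lem:chain-property} applied to an edge leaving the common boundary of $C_0$ and $D_0$ to $V\setminus X$, to compare $X_e$ from both sides; the sub-case $D_0=P_k$ is excluded by combining property~\ref{prop:components-are-maximal-chain} of \Cref{lem:chain-property} with the induction hypothesis applied to the smaller mincut $P_k$. Once $C_0=D_0$ (and, symmetrically, $C_\ell=D_m$ after the chosen orientation), the reduction in the first paragraph finishes the proof.
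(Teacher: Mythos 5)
Your overall scaffolding is sound and genuinely different in organization from the paper's proof: you reduce the lemma to identifying the two end pieces and then induct on $|X|$, whereas the paper anchors on a boundary edge into $C_0$, aligns the two chains at the first differing index, and kills the crossing and nested cases separately. Your step establishing (after orienting $\mathcal D$) that $C_0\cap\partial=D_0\cap\partial$ and $C_\ell\cap\partial=D_m\cap\partial$, and hence that $C_0$ and $D_0$ are nested, is correct (modulo treating zero-weight edges, which one may discard), and the induction bookkeeping ($m\ge 2$ when $\ell\ge 2$, ruling out the reversed match via the shared end piece) works once the end pieces are identified.

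The genuine gap is exactly the step you flag as ``the main obstacle'': upgrading $C_0\subseteq D_0$ to $C_0=D_0$, and none of the mechanisms you sketch actually delivers it. (i) The parenthetical justification for ``no mincut strictly between consecutive prefixes'' (that it would force $\C(C_0,V\setminus X)=0$) does not follow from \Cref{lem:crossing-mincuts}. (ii) The weight-overflow argument fails: if $D_0\supseteq P_k$ crosses $C_{k+1}$ and $Z=D_0\cap C_{k+1}$, then \Cref{lem:crossing-mincuts} gives $\C(C_k,C_{k+1}\setminus Z)=0$ and $\C(Z,(V\setminus D_0)\cap(V\setminus C_{k+1}))=0$, so the only portion of $C_{k+1}$'s boundary that crosses $D_0$'s cut is $\C(Z,C_{k+1}\setminus Z)\le\lambda/2$; one can write down weight assignments in which every $\lambda/2$ identity of property~\ref{prop:chain-neighbor-degree} holds with no overflow, so no contradiction arises from counting weights. (iii) Property~\ref{prop:extend-a-component-cannot-cross} applied to a shared boundary edge $e$ only pins down $C_0=D_0$ when $X_e\not\supseteq X$, which need not hold (e.g.\ $X_e$ may be $V$). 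What does close the case is the minimality of the edge certificate $f$ with $X_f=C_k\cup C_{k+1}$: $f$ has one endpoint in $C_k\subseteq D_0$ and, by the crossing-lemma zero above, its other endpoint must lie in $Z$; hence $D_0\cap X_f=C_k\cup Z$ (a mincut by \Cref{lem:crossing-mincuts}, or $D_0$ itself if $D_0\subseteq X_f$) is a mincut strictly contained in $X_f$ containing both endpoints of $f$, contradicting minimality of $X_f$. This eliminates the crossing case, forces $D_0$ to be a union of a prefix of $\mathcal C$'s pieces, and then the property~\ref{prop:components-are-maximal-chain} argument you already state for the sub-case $D_0=P_k$ finishes --- which is essentially the paper's Case~2. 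As written, your proof does not reach that point.
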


\begin{proof}
Let $e=(u, v)$ be any edge from the boundary $E(V\setminus X, C_0)$ with $v\in C_0$. (The existence of $e$ is guaranteed by property~\ref{prop:chain-edges-allow-to-leave-only-at-ends} of \Cref{lem:chain-property}.)
Let $(C'_0, C'_1, \ldots, C'_{\ell'})$ be any maximal chain certificate of $X$. Without loss of generality we have $v\in C'_0$ (reverse the chain if necessary).

Now, let $i$ be the smallest index such that $C_i\neq C'_i$. It is easy to see that $i=0$: otherwise there are two edges $e$ and $f$ such that $X_e=C_{i-1}\cup C_i$ and $X_f=C_{i-1}\cup C'_i$. 
By property~\ref{prop:chain-neighbor-degree}, the endpoints of both $e$ and $f$ belongs to $C_i\cap C'_i$. We obtain a contradiction as $C_{i-1}\cup (C_i\cap C'_i)$ is a smaller mincut for either $e$ or $f$ (or both). Thus, $C_0\neq C'_0$ but $C_0\cap C'_0\neq\emptyset$.
Depending on whether or not $C_0$ crosses with $C'_0$, there are two cases now:

\paragraph{Case 1:} If $C_0$ crosses $C'_0$, this implies a contradiction to \Cref{lem:crossing-mincuts} because $\C(C_0\cap C'_0, V\setminus (C_0\cap C'_0))=0$ but the edge $e$ connects $C_0\cap C'_0$ and $V\setminus X\subseteq V\setminus (C_0\cap C'_0)$. Therefore $C_i=C'_i$ for all $i$ and the chain is unique.

\paragraph{Case 2:} If $C_0$ and $C'_0$ do not cross, then without loss of generality we may assume $C_0\subset C'_0$. %
Suppose there exists an index $i>0$ such that $C'_0$ crosses with $C_i$ and let $i$ to be the smallest one among all such indices.
Then, for each index $j$ such that $0\le j < i$, we know that either $C_j\subsetneq C'_0$ or $C_j\cap C'_0=\emptyset$.
However the latter case is impossible: notice that $C_0\subseteq C'_0$ and $C_i\cap C'_0$, by property \ref{prop:chain-neighbor-degree}, $C_0$ and $C_i\cap C'_0$ are not connected in $C'_0$. By property \ref{prop:1} $C'_0$ should be a mincut and hence must be connected, a contradiction.

Thus, $C'_0$ should contain all the $C_j$ for $j < i$, and there exists an edge $e$ such that $X_e = C_{i-1}\cup C_i$. First observe that $C'_0$ doesn't contain both endpoints of $e$, otherwise $C'_0\cap (C_{i-1}\cup C_i)$ will be a smaller mincut containing $e$ for both cases that $C'_0$ crossing with $C_{i-1}\cup C_i$ and $C'_0\subset C_{i-1}\cup C_i$. Then  the edge $e$ connects $C'_0\setminus C_i$ and $C_i\setminus C'_0$ but this contradicts to \Cref{lem:crossing-mincuts} 
for $\C(C'_0\setminus C_i, C_i\setminus C'_0) = 0$ since $C'_0$ crosses $C_i$. Therefore $C'_0$ must be a subchain $(C_0, C_1,\ldots,C_k)$ of the chain $(C_0, C_1,\ldots,C_\ell)$, but this contradicts to property \ref{prop:components-are-maximal-chain} of \Cref{lem:chain-property} since it is not a maximal chain.
\end{proof}

\subsection{Reducing Cactus to Hierarchical Representation of Global Mincuts}
\label{sec:cactus-hierarchy}

The goal of this section is to reduce the problem of cactus construction to constructing a \emph{hierarchical representation} $H$ of mincuts based on chain certificates defined in \Cref{sec:gabow-chain}.\footnote{Gabow~\cite{gabow2016minset} defined
special directed graphs based on partial order sets called \emph{chain-trees}.
However, the chain-trees are not uniquely defined.
Gabow showed that a specific chain-tree can be algorithmically created based on a particular laminar collection of mincuts.
Here we define the hierarchical representation $H$ based on exactly the same collection as Gabow's.
There are two benefits introducing $H$: (1) the representation itself is based purely on the structural property of $G$, not algorithmically, and 
(2) with the preprocessing mentioned in \Cref{sec:cactus-preprocessing}, the representation is unique up to reversing the chains in the chain certificates.
}

\begin{lemma}[\cite{gabow2016minset}, Lemma 4.6 (i)]\label{lem:laminar}
Consider the collection $\collection$ of all mincuts $X$ such that either all chain certificates of $X$ are maximal, or $X$ has no chain certificate.
Then $\collection$ is laminar.\footnote{The collection defined corresponds to $\mathcal{F}^-$ in  \cite{gabow2016minset}.}
\end{lemma}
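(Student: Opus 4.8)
The plan is to argue by contradiction: assume two mincuts $X,Y\in\collection$ cross, and derive a contradiction from the certificate structure. First I would invoke \Cref{lem:crossing-mincuts}: since $X$ and $Y$ cross, the sets $A:=X\cap Y$, $B:=X\setminus Y$, $C:=Y\setminus X$ are all nonempty mincuts and $\C(B,C)=\C(A,V\setminus(X\cup Y))=0$; a short calculation from $\C(A)=\C(B)=\C(X)=\lambda$ then also forces $\C(A,B)=\C(A,C)=\lambda/2$. One immediate observation: neither $X$ nor $Y$ can have a vertex certificate, for if $X=X_v$ then $v$ lies in exactly one of the strictly smaller mincuts $A,B$, contradicting the minimality of $X_v$. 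Hence by \Cref{lem:certificate char} each of $X$ and $Y$ has an edge or a chain certificate, and since $X,Y\in\collection$ every chain certificate of $X$ or of $Y$ is maximal.

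The key structural step is to show that $X$'s certificate is split by the partition $X=A\sqcup B$ into two consecutive blocks, one equal to $A$ and the other to $B$. If $X=X_e$ has an edge certificate with $e=(u_1,u_2)$, then minimality of $X_e$ forbids both endpoints of $e$ from lying in $A$, or both in $B$, so one lies in each; then $(A,B)$ is a chain certificate of $X$ ($A,B$ are mincuts, hence carry certificates by \Cref{lem:certificate char}, and $A\cup B=X$ has the edge certificate $e$), which is the unique one by \Cref{lem:chain-uniqueness} and which is maximal because $X\in\collection$. If instead $X$ has a maximal chain certificate $(P_0,\dots,P_s)$, I would combine property~\ref{prop:chain-edges-allow-to-leave-only-at-ends} of \Cref{lem:chain-property} (edges leaving $X$ attach only to $P_0$ or $P_s$), the fact that the edges from $A$ leaving $X$ all land in $C$ (since $\C(A,V\setminus(X\cup Y))=0$), property~\ref{prop:extend-a-component-cannot-cross} of \Cref{lem:chain-property}, and property~\ref{prop:components-are-maximal-chain} of \Cref{lem:chain-property} to rule out $A$ cutting across a single block $P_i$ (were $A\cap P_i$ a proper mincut split of $P_i$, it would yield a non-maximal chain certificate of $P_i$), concluding $A=P_0\cup\dots\cup P_j$ and $B=P_{j+1}\cup\dots\cup P_s$ up to reversal. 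The same analysis for $Y$ writes $Y$'s certificate with $A$ and $C$ as complementary consecutive blocks, and by \Cref{lem:chain-uniqueness} the block decomposition of the shared piece $A$ agrees in both.

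Finally I would splice $X$'s certificate to $Y$'s along the common block $A$: list the blocks of $B$, then those of $A$, then those of $C$. This is a genuine chain — consecutive unions of blocks retain edge certificates (precisely those witnessing the two original chains, plus, where applicable, the edge certificates of $X$ and $Y$, plus one between $A$ and $C$ supplied by property~\ref{prop:chain-edges-allow-to-leave-only-at-ends} of \Cref{lem:chain-property}), and the blocks really form a path, not a branch, since any block is a mincut of value $\lambda$ and so cannot be chain-adjacent to three distinct blocks (using property~\ref{prop:chain-neighbor-degree} of \Cref{lem:chain-property}, which fixes adjacent-block weights at $\lambda/2$). Its union is $X\cup Y$, and $X$'s chain certificate sits inside it as a subchain that is proper because $C=Y\setminus X\neq\emptyset$ contributes at least one extra block. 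This contradicts the maximality of $X$'s chain certificate (and also the alternative that $X$ had no chain certificate, refuted above), so $\collection$ admits no crossing pair and is laminar.

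I expect the main obstacle to be the alignment step in the second paragraph — showing that when $X$ carries a maximal chain certificate, $A=X\cap Y$ must coincide with a prefix or a suffix of that chain rather than slicing through one of its blocks. This is where the properties of \Cref{lem:chain-property} have to be orchestrated carefully (in particular the interplay of property~\ref{prop:chain-neighbor-degree} with property~\ref{prop:components-are-maximal-chain}), and where a few degenerate configurations, most notably $X\cup Y=V$, must be dispatched by hand; it may be cleanest to settle the edge-certificate case in full, as sketched, and then reduce the chain-certificate case to it by descending through the components.
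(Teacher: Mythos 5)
Your overall strategy (crossing lemma, chain properties, maximality of certificates of members of $\collection$) is the right toolbox, and your first paragraph and your treatment of the edge-certificate case (building the chain certificate $(A,B)$ of $X$ with $A=X\cap Y$, $B=X\setminus Y$) match the paper. But the heart of your argument --- the ``alignment'' claim that $X\cap Y$ must coincide with a prefix or suffix union of the blocks of $X$'s maximal chain certificate, and the subsequent splicing of $X$'s and $Y$'s chains along $A$ --- is only asserted, and as you yourself note it is the main obstacle. As sketched it does not go through: ruling out $Y$ slicing a block $P_i$ by saying ``$A\cap P_i$ would yield a non-maximal chain certificate of $P_i$'' is not justified, since a crossing of $Y$ with $P_i$ does not by itself produce any chain certificate of $P_i$ (you would need an edge inside $P_i$ whose minimal mincut is exactly $P_i$ and which straddles the split); and the splicing step leaves unverified that the block decompositions of $A$ seen from $X$'s and $Y$'s chains agree (\Cref{lem:chain-uniqueness} applies only after you know the restriction of each chain to $A$ is itself a chain certificate of $A$, e.g.\ when $A$ is a single block of one chain but several blocks of the other this needs care), and that all junctions in the spliced sequence carry edge certificates. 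So there is a genuine gap at the decisive step.

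The paper's proof shows you do not need alignment or splicing at all; you already have every ingredient. It argues that \emph{any} mincut $X$ crossing another mincut $Y$ is excluded from $\collection$: first, exactly as you do, $X$ cannot have only a vertex certificate, and if $X=X_e$ then $(X\cap Y,\,X\setminus Y)$ is a chain certificate of $X$, so in all cases $X$ has a chain certificate. Then take any single edge $f\in E(X\cap Y,\,Y\setminus X)$ --- it exists because, as you computed, $\C(X\cap Y, Y\setminus X)=\lambda/2>0$. Both endpoints of $f$ lie in $Y$, so its minimal mincut satisfies $X_f\subseteq Y$, and in particular $X_f\not\supseteq X$. Since $f$ leaves $X$, property~\ref{prop:chain-edges-allow-to-leave-only-at-ends} of \Cref{lem:chain-property} says it leaves from an end block of $X$'s chain certificate, and property~\ref{prop:extend-a-component-cannot-cross} then forces $X_f$ to contain that end block and nothing else of $X$; hence $f$ extends $X$'s chain certificate to a strictly longer chain (the new block $X_f\setminus X$ is a mincut by \Cref{lem:crossing-mincuts}), so $X$ has a non-maximal chain certificate and $X\notin\collection$. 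In your edge-certificate case this same one-edge extension applied to $(A,B)$ already yields the contradiction, so the elaborate spliced chain is unnecessary; in the chain-certificate case it replaces the entire alignment argument.
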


\begin{proof}
We first show that for any two mincuts $A$ and $B$ that cross each other, then
both $A$ and $B$ have a chain certificate.
Suppose $A$ does not have a chain certificate, by \Cref{lem:certificate char} we know that $A$ has a vertex certificate $(A=X_v)$ or an edge certificate $(A=X_e)$. If $A=X_v$ for some $v\in V$ then by the crossing lemma either $A\setminus B$ or $A\cap B$ is a smaller mincut that contains $v$, a contradiction. If $A=X_e$ then $e$ must have one endpoint in $A\setminus B$ and another in $A \cap B$, otherwise one of the two sets contains $e$, which contradicts minimality of $X_e$. But this implies that $(A\setminus B, A\cap B)$ is a chain certificate for $A$, which is also a contradiction. The same proof applies to $B$.

Now, we show that if two mincuts $A$ and $B$ that cross each other, then none of $A$ and $B$ has a maximal chain certificate. 
Consider any edge $e\in E(A\cap B, B\setminus A)$.
Since both endpoint of $e$ are in $B$, we know that $X_e\subseteq B$. Now, since $A$ has a chain certificate, and $X_e$ does not contain entire $A$, by property~\ref{prop:extend-a-component-cannot-cross} from \Cref{lem:chain-property} we know that $e$ is an edge that extends the chain certificate of $A$. Hence, there is a chain certificate of $A$ that is not maximal. Same proof for $B$.

To conclude the statement, we notice that if there are two mincuts $A$ and $B$ crossing each other, then both $A$ and $B$ does not appear in $\collection$. So $\collection$ must be laminar.
\end{proof}

As a sanity check we note that a mincut with a vertex certificate has no chain certificate, so $X_u\in\collection$ for all $u\in V$.
Since $\collection$ is laminar on $V$, we know that $|\collection| = O(n)$ and the total length of chain certificates from mincuts in $\collection$ is also $O(n)$.

\paragraph{The Hierarchical Representation.} Let $\collection$ be the collection described in \Cref{lem:laminar}.
The collection $\collection$ naturally defines a \emph{hierarchy} tree $H$:
for each mincut $A\in\collection$ there is a node $v_A$ in $H$. Moreover,
the children of $v_A$ in $H$ are all nodes $v_B$ such that $B\in\collection$ is the maximal proper subset of $A$.

According to the definition of $H$ and \Cref{lem:chain-uniqueness}, the hierarchy $H$ is unique, but the chain certificates are unique up to reversion. 
In particular,
\Cref{lem:chain-uniqueness} implies that every mincut $X$ on the graph $G$ can be ``captured'' by $H$: if $X$ does not have a chain certificate, then $X\in\collection$. Otherwise, $X$ has a chain certificate. By extending this chain certificate to a maximal chain, we know that there exists a mincut $Y\in\collection$ such that the chain certificate of $X$ occurs as a subchain to $Y$'s maximal chain certificate.

We finish this subsection by showing that a hierarchy $H$ can be easily turned into a cactus.
Hence, once a hierarchy $H$ is formed with all certificates given, a cactus representation of graph $G$ can be constructed in linear time.

\begin{lemma}[\cite{gabow2016minset}, Section 4.5]\label{lem:hierarchy-to-cactus}
Given a graph $G=(V, E)$ and its corresponding hierarchy $H$ with certificates in $O(n)$ total size,
a cactus representation of $G$ can be constructed in $O(n)$ time. 
\end{lemma}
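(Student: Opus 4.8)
The plan is to reconstruct Gabow's linear‑time procedure~\cite{gabow2016minset} explicitly in the terminology of mincuts, processing the hierarchy tree $H$ from its leaves up to its root. The invariant I would maintain is: after a node $v_A$ (for a mincut $A\in\collection$) has been processed, we have built a partial cactus $K_A$ containing the images of all vertices of $A$ and one distinguished \emph{apex} node, such that the bipartitions of $K_A$ separating the apex from the rest correspond bijectively to the mincuts of $G$ contained in $A$, the non‑apex side being that mincut; the apex will later be glued into the parent's gadget. For a leaf $v_A$ of $H$, the preprocessing of \Cref{sec:cactus-preprocessing} forces $A=\{v\}$ for a single vertex $v$: since $X_u\in\collection$ for every $u$, a leaf has no proper mincut subset at all, so all vertices of $A$ share the minimal mincut $A$ and were merged. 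Thus $K_A$ is just the node of $v$, which also serves as its attachment point to the parent.

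For an internal node $v_A$ with children $v_{B_1},\dots,v_{B_k}$ in $H$, the gadget is determined by $A$'s certificate. If $A$ has a vertex or edge certificate but no chain certificate, one checks from the crossing lemma (and \Cref{lem:chain-property}) that the $B_j$ are pairwise disjoint and that $A\setminus\bigcup_j B_j$ is either empty or a single vertex $v$ with $X_v=A$; we then create an apex node --- the node of $v$ if that leftover vertex exists, and otherwise a fresh empty (Steiner) node --- and join the attachment point of each $K_{B_j}$ to it by a bridge (a star gadget). If instead $A$ has a chain certificate $(C_0,C_1,\dots,C_\ell)$ (necessarily maximal, by the definition of $\collection$), then $C_0,\dots,C_\ell$ are exactly the children of $v_A$, and we create a fresh empty apex and place it together with the attachment points of $K_{C_0},\dots,K_{C_\ell}$, in this cyclic order, on one new cycle of $\ell+2$ nodes. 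In either case the apex becomes the attachment point of $K_A$: deleting the (future) parent edge of the apex realizes the cut $A$ itself; deleting one bridge of a star gadget realizes some $B_j$; and deleting two edges of a cycle realizes a contiguous union $C_i\cup\dots\cup C_j$, which is a subchain and hence a mincut. Finally, at the root $v_V$: $V$ is not a mincut (so there is no parent edge and no ``cut $V$''), $V$ has no chain certificate, and the unique uncovered vertex is $r$ itself, so we identify all the $K_{B_j}$ attachment points with the node of $r$.

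For correctness I would argue both directions. Every mincut $X$ of $G$ appears: by \Cref{lem:certificate char} it has a vertex, edge, or chain certificate; if $X\in\collection$ it equals a processed $A$ and is realized by that node's parent edge (or, when $X$ is a single vertex, by its own bridge); if $X\notin\collection$ then by \Cref{lem:chain-uniqueness} its chain certificate extends uniquely to a maximal chain, so $X$ is a contiguous block of the chain of some $A\in\collection$ and is realized by a two‑edge deletion on that cycle gadget. Conversely every cut of the constructed object is a mincut: deleting one bridge peels off some $K_{B_j}$, deleting the parent edge of an apex peels off the corresponding $A$, deleting two cycle edges peels off a contiguous subchain, and any other edge set disconnects the structure into more than two pieces and so is not a cut. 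The result is a cactus because every created edge is either a bridge or lies on exactly one freshly created cycle. For the running time, $\collection$ is laminar on $V$ by \Cref{lem:laminar}, so $|\collection|=O(n)$ and the total chain‑certificate length is $O(n)$; the gadget and identifications at $v_A$ cost $O(k+\ell+1)$, so the whole traversal is $O(n)$ and the cactus has $O(n)$ nodes and edges. I expect the main obstacle to be the structural bookkeeping about $H$: showing the children of a chain‑certificate node are precisely its chain components, that a vertex/edge‑certificate node leaves at most one uncovered vertex, and --- the most easily overlooked point --- that empty Steiner apex nodes are genuinely necessary (for instance for an edge‑certificate mincut whose children are three or more singletons, where a star gadget with a Steiner center works but a cycle would create spurious cuts). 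Once these are pinned down the cut‑by‑cut verification is routine; full details are in Gabow~\cite[Section~4.5]{gabow2016minset}.
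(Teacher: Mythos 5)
Your construction is essentially the paper's own proof: you take the hierarchy $H$, replace the star at each chain-certificate node by a cycle through its chain parts in chain order (bridges at all other nodes), assign each vertex $u$ to the node of $X_u$, and verify both directions via \Cref{lem:certificate char}, \Cref{lem:chain-uniqueness}, and the subchain property, exactly as in \Cref{lem:hierarchy-to-cactus}. Your bottom-up apex/Steiner-node phrasing and explicit root handling are just a more detailed rendering of the same $O(n)$ gadget-per-node construction, so the proposal is correct and matches the paper's approach.
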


\begin{proof}
A cactus graph $P$ can be constructed from the hierarchy $H$ as follows. For each node $v_A$ on $H$ where the corresponding mincut $A\in\collection$ has a (maximal) chain certificate $(C_0, C_1, \ldots, C_\ell)$.
By property~\ref{prop:components-are-maximal-chain} of \Cref{lem:chain-property}, all parts $C_i$ either have a maximal chain certificate or a vertex/edge certificate. 
Any superset of $C_i$ that is proper in $A$ has a chain representation but they are always extendable.
Thus, for each $C_i$ there is a corresponding node $v_{C_i}$ in $H$ and they are children of $v_A$.
We replace this star $\{(v_A, v_{C_i})\}_{i=0}^\ell$ by a cycle of length $\ell+1$: $(v_A, v_{C_0}, v_{C_1}, \ldots, v_{C_\ell})$.

After all the replacement are done,
it is straightforward to verify that $P$ is a cactus.
Now we assign vertices $V$ to nodes in $P$. Since $X_u\in\collection$ for all $u\in V$, by simply assign each vertex $u\in V$ to the node $v_{X_u}$ then we are done.

All mincuts are preserved: a mincut with a chain certificate can be found by cutting two edges from the replaced cycle. A mincut $A$ without a chain certificate can be found by cutting the edge from $v_A$ with its parent on $P$. On the other hand, cutting a bridge or a pair of edges in the same cycle corresponds to a mincut too.
\end{proof}

Now the task of constructing a cactus representation reduces to computing a hierarchy representation $H$.

\subsection{Constructing a Hierarchical Representation}
\label{sec:cactus-algorithm}
\label{sec:solve-one-layer-cactus}
\label{sec:one level reduc}

In this section we describe an algorithm that constructs the hierarchy $H$  defined in \Cref{sec:cactus-hierarchy}.\footnote{Our algorithm is simpler than Karger and Panigrahi's algorithm because we do not reduce the problem into one-layer cactus construction problems.}

Assume that the graph has been preprocessed as described in \Cref{sec:cactus-preprocessing} such that every vertex has a distinct minimal mincut.
The algorithm constructs the hierarchy by processing all vertex/edge-certificated mincuts in the non-decreasing order of their sizes.

Two variables are introduced explicitly: a (partial) collection $\collection$ and a (partial) hierarchy forest $H$.
$\collection$ is initialized as an empty set and $H$ is initialized as an empty forest.
At any moment, the algorithm maintains a collection $\collection$ of 
disjoint mincuts with corresponding certificate.
In the meantime, the algorithm maintains a hierarchy forest $H$ with the invariant such that
there is always a bijection between $\collection$ and the roots of all trees in $H$.
Throughout processing the mincuts,
two subsets in $\collection$ may be merged (so the corresponding trees in $H$ may be merged at the root.)
Once a mincut $A\in\collection$ is removed, we guarantee that some superset $B$ of $A$ is added to $\collection$ and $v_A$ becomes a child of $v_B$ in $H$.

Before we describe the steps of the algorithm, we state the most important property that leads to the correctness of the algorithm.

\begin{lemma}[Inductive Correctness Guarantee]\label{inv:cactus-correctness-invariant}
Upon processing a mincut of size $t$, any mincut of size strictly less than $t$ can be uniquely ``represented'' in $H$. That is, let $X$ be a mincut with $|X|<t$. If $X$ has a chain certificate $C$, then there is a unique node $v_A$ in $H$ with a chain certificate containing $C$ as a subchain.
If $X$ does not have a chain certificate, then $X$ has a corresponding node $v_X\in H$.
\end{lemma}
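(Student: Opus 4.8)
The plan is to prove Lemma~\ref{inv:cactus-correctness-invariant} by induction on the size $t$ of the mincut being processed, maintaining it as an invariant of the algorithm described in this section. At the base case, before any mincut is processed, there are no mincuts of size $0$ (the empty set is not a proper mincut after preprocessing, or trivially there is nothing to represent), so the statement holds vacuously. For the inductive step, I would assume the invariant holds just before processing a mincut $A_0$ of size $t$, and show it continues to hold after processing all mincuts of size $t$, hence just before processing the next larger size. The heart of the argument is to verify that the algorithm's actions when processing each vertex/edge-certificated mincut of size exactly $t$ correctly (a)~insert $A_0$ into $\collection$ with the right certificate when $A_0\in\collection$, (b)~merge chains correctly when $A_0$ has an edge certificate $C_i\cup C_{i+1}$ that bridges two existing chain fragments already represented by roots of $H$, and (c)~make the appropriate root of $H$ a child of a newly created node when a maximal chain fragment becomes a proper subset of a larger mincut in $\collection$.

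The key structural tools are already in place. First I would invoke \Cref{lem:certificate char}: every mincut has a vertex, edge, or chain certificate, so every mincut of size $<t$ falls into one of the two cases in the statement. Second, \Cref{lem:chain-uniqueness} guarantees that a chain certificate is unique up to reversal, so the ``unique node $v_A$ containing $C$ as a subchain'' is well-defined: if $X$ has a chain certificate $C$, extend $C$ to a maximal chain; by \Cref{lem:laminar} and the definition of $\collection$ the associated mincut $A$ lies in $\collection$, and by \Cref{lem:chain-uniqueness} any maximal chain containing $C$ as a subchain is the same up to reversal, pinning down $v_A$. Third, \Cref{lem:chain-property} (especially properties~\ref{prop:chain-edges-allow-to-leave-only-at-ends}, \ref{prop:chain-neighbor-degree}, and \ref{prop:extend-a-component-cannot-cross}) tells us that the only way a new mincut of size $t$ can interact with previously represented mincuts is by extending a chain at one of its two ends or by sitting as a parent of one or more maximal chains sharing a common $\hat T$-parent (\Cref{lem:categorize-all-mincuts}); this is what makes the bookkeeping local and correct.

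Concretely, I would argue the inductive step as follows. Consider processing a mincut $M$ of size $t$ which has a vertex or edge certificate (these are the only mincuts the algorithm explicitly processes; mincuts with only non-maximal chain certificates are never processed, but they are ``represented'' as subchains of some larger $A\in\collection$, and I must check that the representing $A$ is created by the time we reach size $|A|\le$ the current or a later size). If $M=X_v$ has a vertex certificate: after preprocessing $M$ corresponds to a single new node, and by the inductive hypothesis all proper maximal sub-mincuts of $M$ of size $<t$ are already roots in $H$ with correct certificates; the algorithm attaches them as children of $v_M$, and since $M$ itself has no chain certificate, $M\in\collection$ and $v_M$ becomes a new root — exactly the ``no chain certificate'' case of the invariant for size up to $t$. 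If $M=X_e$ has an edge certificate, then by \Cref{lem:certificate char} applied to $M$ and by the construction in the proof of that lemma, $M=C_0\cup\dots\cup C_\ell$ is a chain whose pieces $C_i$ are (by property~\ref{prop:components-are-maximal-chain}) maximal chains or vertex/edge certificates, each of size $<t$ and hence already represented in $H$ by the inductive hypothesis; the algorithm either merges two existing root-chains into a longer chain (when $e$'s certificate bridges the ends of two represented chains) or promotes $M$ to a node whose children are the representing nodes of the $C_i$. In either case I must check: the merged/created object has a chain certificate containing each relevant $C$ as a subchain (immediate from how chains concatenate), and no two distinct nodes of $H$ end up claiming to contain the same chain as a subchain (this follows from \Cref{lem:chain-uniqueness}, since maximal chains are unique up to reversal). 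Finally, I would observe that a mincut $X$ of size exactly $t$ that has only a non-maximal chain certificate is represented because its maximal extension $A$ has size $\ge t$ but — crucially — $X$ being of size $<$ the \emph{next} processed size means $A$ is already in $\collection$ by the time the invariant needs to hold for sizes $\le t$; here I would lean on the fact that the algorithm processes certificated mincuts in non-decreasing size order and that every chain piece of $A$'s certificate has size $<|A|$, so $A$'s edge certificate, of size $|A|\le t'$ for the relevant threshold $t'$, gets processed in time.

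The main obstacle I anticipate is the chain-merging bookkeeping: ensuring that when an edge certificate of size $t$ arrives, the two ``ends'' it could attach to are precisely the current roots of $H$ (not buried as non-root nodes), and that after merging we do not leave stale roots behind. This requires carefully reading off from \Cref{lem:chain-property} that a chain, once a strictly larger mincut containing it exists, is never again a ``top-level'' object — which is exactly the remark following \Cref{lem:chain-uniqueness} (a subchain is preserved inside any longer chain, so the shorter one need not be stored separately). Making this precise — that the algorithm's root set always equals $\collection$, and $\collection$ at the moment of processing size $t$ consists exactly of the inclusion-maximal mincuts among all mincuts of size $<t$ together with the maximal-chain mincuts whose every chain piece has size $<t$ — is the delicate invariant-within-an-invariant that the bulk of the formal proof will have to nail down, using \Cref{lem:categorize-all-mincuts} to control which mincuts can be maximal and \Cref{lem:laminar} to guarantee the root set stays laminar throughout.
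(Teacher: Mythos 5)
Your overall plan — induction on the size being processed, supported by \Cref{lem:certificate char}, \Cref{lem:chain-property}, and \Cref{lem:chain-uniqueness}, plus the persistence observation that a chain inside a longer chain never needs to be stored separately — is the same skeleton as the paper's proof. However, the step that actually carries the lemma, namely the chain-certificate case for a mincut $X$ of size $<t$, is not proven in your proposal: you defer it to an auxiliary invariant (``the root set of $H$ equals $\collection$, and $\collection$ at time $t$ consists exactly of the inclusion-maximal mincuts of size $<t$ together with the maximal-chain mincuts whose every chain piece has size $<t$'') which you yourself flag as the bulk of the work, and which as stated is not correct: whether a chain has been assembled by time $t$ is governed by the sizes of the adjacent unions $C_i\cup C_{i+1}$ (these are the edge certificates the algorithm actually processes), not by the sizes of the individual pieces $C_i$, and a mincut that is inclusion-maximal among mincuts of size $<t$ may already have been absorbed into a longer chain whose union has size $\ge t$, so it need not appear as an element of $\collection$. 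A related slip is your well-definedness argument: you extend $C$ to a maximal chain of $G$ and claim via \Cref{lem:laminar} that the associated mincut already lies in $\collection$ — but the lemma concerns the \emph{partial} hierarchy at the moment of processing size $t$, where that globally maximal chain need not have been assembled yet; uniqueness must be established for whatever node of $H$ currently carries a chain containing $C$, not for the final object.

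The paper closes exactly this gap with a short local argument that makes your heavy bookkeeping invariant unnecessary. For a mincut $A$ of size $<t$ with chain certificate $(C_0,\dots,C_\ell)$, every adjacent union $C_i\cup C_{i+1}$ has an edge certificate and size $<t$, hence has already been processed. If $\ell=1$, the node is created by Line~\ref{line:add-new-chain}. If $\ell>1$, apply the induction hypothesis to the two strictly smaller overlapping subchains $A_l=C_0\cup\cdots\cup C_{\ell-1}$ and $A_r=C_1\cup\cdots\cup C_\ell$: each is represented uniquely by a node of $H$, and since the node representing $C_1$ is a child of both, those two nodes coincide, so $A$ is represented uniquely in that node; persistence then follows because a chain certificate can only be extended or concatenated. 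If you wish to keep your route instead, you would have to state and prove the exact contents of $\collection$ and of the root set of $H$ after each processed size, which is substantially more delicate than this two-subchain argument and is precisely the part your proposal leaves open.
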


This property explains the validity of certain steps in our algorithm.
As you can see, by \Cref{inv:cactus-correctness-invariant}, at the end the algorithm returns a correct hierarchy representation $H$ defined in \Cref{sec:cactus-hierarchy}.
Now we describe this high-level algorithm.

Let $\cL$ be the list of minimal mincuts of vertices and edges. The mincuts in $\cL$ is sorted by their size in increasing order. The algorithm processes mincuts in $\cL$ one by one.  
Let $X \in \cL$ be the current processing minimal mincut.
If there is already a mincut $A\in\collection$ such that $X\subseteq A$, then the algorithm does nothing. 
Now we assume the otherwise: $X$ is not contained in any mincut in $\collection$.
$X$ could be a minimal mincut of a vertex $u\in V$, or a minimal mincut of an edge $e\in E$.

Suppose that $X=X_u$ has a vertex certificate.
Using the crossing lemma (\Cref{lem:crossing-mincuts}), we deduce that the minimal mincut $X_u$ does not cross with any mincut.
In particular, $X_u$ does not cross any mincut in the current $\collection$.
In this case, we remove any mincut that is a subset of $X_u$ from $\collection$ and add $X_u$ to $\collection$. The hierarchy forest $H$ is updated accordingly, by creating a node $v_{X_u}$ and assign the nodes $v_A$ to be its children for all $A\subseteq X_u$ that was removed from $\collection$.

Suppose that $X=X_e$ has an edge certificate $e=(u, v)\in E$.
The algorithm checks if this edge helps extending or creating a chain.
We observe that right now $u$ and $v$ must belong to different mincuts in $\collection$, otherwise there is already a mincut $A\in\collection$ containing $X$.
Let the mincuts $A, B\in\collection$ such that $u\in A$ and $v\in B$.
If $X_e$ crosses with one of the mincut, say $A$, then we must have $X\setminus A$ be a mincut.
By \Cref{inv:cactus-correctness-invariant}, since $|X\setminus A| < |X|$, we know that the mincut $X\setminus A$ must have already been represented in the current hierarchy $H$,
which implies that there exists a mincut in the current collection $\collection$ that contains $X\setminus A$.
Since $v\in X\setminus A$ and by the assumption $v\in B\in\collection$,
we know that $X\setminus A\subseteq B$.

Therefore, only three cases can occur between $X$ and $A\cup B$: either $X=A\cup B$, $X\subsetneq A\cup B$, or $X\supsetneq A\cup B$.
In the first case where $X=A\cup B$, we know that a new chain is formed. We create a new node $v_X$ and set $v_A$ and $v_B$ be its children in $H$. The certificate of $v_X$ will be a chain certificate $(A, B)$.
In the second case where $X\subsetneq A\cup B$, we know that some chain can be extended (or possibly, two chains are concatenated.)
Without loss of generality we assume that $X$ cross with $A$.
Now there will be two sub-cases, either $X\setminus A=B$ or $X\setminus A\subsetneq B$. If $X\setminus A=B$ then
on the hierarchy $H$ we rename $v_A$ to be $v_{A\cup B}$ and then put $v_B$ as a new child of $v_{A\cup B}$. If $X\setminus A\subsetneq B$, we know that $e$ concatenates the two chains of $A$ and $B$. On the hierarchy $H$ it suffices to merge the two trees rooted at $v_A$ and $v_B$, and update the certificate of $v_{A\cup B}$.
In the third case where $X\supsetneq A\cup B$, similar to the vertex case,
the algorithm removes all mincuts that are contained in $X$. The hierarchy $H$ is updated accordingly.

We summarize this high-level algorithm in \Cref{alg:cactus-high-level}.

\begin{algorithm}[H]
\algnotext{EndIf}
\algnotext{EndFor}
    \caption{High-Level Approach for Constructing a Hierarchy Representation}
    \label{alg:cactus-high-level}
    \begin{algorithmic}[1]
    \Require{A graph $G=(V, E)$, tree packing $\T$, labels of minimal mincuts of all vertices and edges.}
    \Ensure{A hierarchy tree $H=(V_H, E_H)$. Every node on $H$ has a certificate.}
        \State $\cL \gets$ the list of minimal mincuts of all vertices and edges, sorted by their sizes.\label{line:cactus-sort-all-minimal-mincuts}
        \State $\collection\gets \emptyset$. \Comment{A disjoint collection of mincuts.}
        \State $H\gets \emptyset$. \Comment{A hierarchy forest with certificates.}
        \For{each vertex/edge minimal mincut $X \in \cL$} 
            \If{there does not exist $A\in\collection$ such that $X\subseteq A$}\label{line:cactus-if-already-done}
            \If{$X$ has a vertex certificate $u\in V$}
                \State $\texttt{AddNestingSuperset}(X_u)$.\label{line:add-nesting-superset-vertex}
            \Else \ \ (now {$X$ has an edge certificate $e=(u, v)\in E$})
                \State Let $A, B\in \collection$ such that $u\in A$ and $v\in B$.
                \If{$X=A\cup B$}\label{line:cactus-if-edge-cases}
                \State $\texttt{AddNewChain}(X_e, A, B)$.\label{line:add-new-chain} \Comment{Create a new chain $(A, B)$.}
                \ElsIf{$X\subsetneq A\cup B$}
                    \State Swap the role of $A$ and $B$ so that $A$ cross with $X$.\label{line:cactus-swap-role-so-A-cross-X}
                    \If{$X\setminus A=B$}\label{line:cactus-sub-cases}
                        \State $\texttt{ExtendChain}(X_e, A, B)$.
                        \label{line:extend-chain} \Comment{Extend $A$'s chain by adding $B$ at the end.}
                    \Else \ \ (now $X\setminus A \subsetneq B$)
                        \State $\texttt{ConcatChains}(X_e, A, B)$.\label{line:concatenate-chains}\Comment{Concat $A$ and $B$'s chain certificates via $X_e$.}
                    \EndIf
                \Else \ \ (now $X\supsetneq A\cup B$)
                \State $\texttt{AddNestingSuperset}(X_e)$. \label{line:add-nesting-superset-edge}
                \EndIf
            \EndIf
            \EndIf
        \EndFor
    \end{algorithmic}
\end{algorithm}

\paragraph{Correctness Proof.} We end this subsection by proving \Cref{inv:cactus-correctness-invariant}, which implies that \Cref{alg:cactus-high-level} does produce a correct hierarchy $H$ after processing all minimal mincuts of vertices and edges.

\begin{proof}[Proof of \Cref{inv:cactus-correctness-invariant}.]
Let's apply mathematical induction on $t$ and 
let $A$ be a mincut on $G$ of size $|A|<t$.
First, since a chain certificate can only be extended or concatenated,
it is straightforward to see that if $A$ is added to $\collection$, $A$ will be ``represented'' in any future moment.%
Notice that whenever $A$ has a vertex certificate or an edge certificate, either $A$ will be added to $\collection$ or is used for extending/concatenating a chain.

Now, assume that $A$ has a chain certificate $(C_0, C_1, \ldots, C_\ell)$ with $\ell \ge 1$.
Since $|A|<t$, for any $0\le i < \ell$ we know that $|C_i\cup C_{i+1}| < t$ as well.
Since $C_i\cup C_{i+1}$ has an edge certificate, we know that $C_i\cup C_{i+1}$ must have been processed already for all $i$.
If $\ell=1$, then Line~\ref{line:add-new-chain} correctly construct a chain.
If $\ell > 1$, then the mincuts $A_l := C_0\cup\cdots \cup C_{\ell-1}$ and $A_r := C_1\cup \cdots \cup C_{\ell}$ can be found uniquely in some node (say $v_l$ and $v_r$ respectively) on $H$, by the induction hypothesis of \Cref{inv:cactus-correctness-invariant}.
Since the node that represents $C_1$ can also be uniquely found as a child of both $v_l$ and $v_r$, we conclude that $v_l=v_r$. Therefore, $A$ can be found uniquely in $v=v_l=v_r$ too.
\end{proof}

\subsection{Efficient Implementation of \Cref{alg:cactus-high-level}}

Once the high-level idea is confirmed, the remaining parts of the implementation become relatively easier tasks.
There may be different ways to implement \Cref{alg:cactus-high-level}, and we give one of them in this subsection.

For $\collection$ the algorithm maintains an additional disjoint set data structure (with \textsc{Union} and \textsc{Find} operations).
For any mincut $A \in \collection$, we store (1) its size $|A|$, (2) one vertex $v\in A$, and (3) certificates of $A$.
If there are multiple certificates available for the same mincut, we store one certificate of each kind: vertex, edge, and chain.
For a chain certificate $(C_0, C_1, \ldots, C_{\ell})$, we assume that a doubly linked list of edges $(e_1, e_2, \ldots, e_{\ell})$ is stored in the memory where $X_{e_i}=C_{i-1}\cup C_i$ for all $1\le i\le \ell$. That said, the operations to chains (e.g., \texttt{AddNewChain} in Line~\ref{line:add-new-chain}, \texttt{ExtendChain} in Line~\ref{line:extend-chain}, and \texttt{ConcatChains} in Line~\ref{line:concatenate-chains}) can be implemented in a straightforward way in $O(1)$ time). 

\paragraph{Containment Queries.}

In \Cref{alg:cactus-high-level}, the algorithm is often required to test whether two given mincuts $A$ and $B$ satisfies $A\subseteq B$.
(Specifically, this operation is used to implement Line~\ref{line:cactus-if-already-done}, Line~\ref{line:cactus-if-edge-cases}, and Line~\ref{line:cactus-sub-cases}.)
This test is denoted by Karger and Panigrahi~\cite{karger2009near} as the \emph{containment query}: \textsc{Containment}$(A, B)$ returns \textsf{true} if and only if $A\subseteq B$.
In \cite{karger2009near} the authors 
 merely mentioned that the containment queries can be answered in $O(1)$ by 
 answering LCA queries in the corresponding tree.
 The authors did not describe an algorithm that answers containment queries --- it becomes highly non-trivial when $A$ or $B$ has a chain certificate.
 Fortunately, thanks to the crossing lemma, 
 most of the containment queries 
 in the high-level \Cref{alg:cactus-high-level}
 can be implemented by simply checking and comparing the sizes of the mincuts. In below, we describe the detailed implementations line by line.

\begin{description}
    \item[Line \ref{line:cactus-sort-all-minimal-mincuts}.]
    First of all, Line~\ref{line:cactus-sort-all-minimal-mincuts} in \Cref{alg:cactus-high-level} can be done efficiently in $O(m+n|\T|)$ time, by computing the sizes of the mincuts in $O(n|\T|)$ time and performing a bucket sort in $O(m+n)$ time.
    Notice that in order to correctly implement Line~\ref{line:add-nesting-superset-edge}, we require that the same minimal mincuts are listed together.
    This can be achieved by, e.g., breaking ties using the lexicographical order of the cut labels.
    Moreover, for the same mincut we process vertices first then the edges.
    
    \item[Line~\ref{line:cactus-if-already-done}.]
Since all mincuts in $\collection$ are disjoint and they are only replaced by supersets, it suffices to use a standard disjoint set data structure supporting membership queries.
In particular, $\textsc{Find}(x)$ returns the mincut in $\collection$ that contains $x$, or $\perp$ if such mincut does not exist.

To implement Line~\ref{line:cactus-if-already-done}, if $X=X_u$ is a minimal mincut of vertex $u$, then we know that $\textsc{Find}(u)=\perp$, and that the \textbf{if} statement is always evaluated to \textsf{true}.

If $X=X_e$ is a minimal mincut of an edge $e=(u, v)$. We observe that there exists $A\in\collection$ that contains $X$ if and only if $\textsc{Find}(u)=\textsc{Find}(v)$.

\item[Line~\ref{line:cactus-if-edge-cases}.]
Let $X=X_e$ with $e=(u, v)$. Suppose now that 
$\textsc{Find}(u)\neq \textsc{Find}(v)$ and there are two mincuts $A, B\in\collection$ with $u\in A$ and $v\in B$.
There are only three cases to distinguish: $X=A\cup B$, $X\subsetneq  A\cup B$, and $X\supsetneq A\cup B$.
Since $A\cap B=\emptyset$, it suffices to compare the size $|X|$ with $|A|+|B|$.

\item[Line~\ref{line:cactus-swap-role-so-A-cross-X}.]
To test whether $A\subseteq X$, we utilize the cut label of $X$ and the certificate of $A$.
Let $(type, v, w, T)$ be the cut label of $X$. 
Using a prebuilt data structure on $T$ it is easy to check whether a vertex belongs to $X$ in $O(1)$ time.

If $A=X_u$ has a vertex certificate, we know that $A\subseteq X$ for sure by definition of minimal mincut of $u$.
If $A=X_{e'}$ has an edge certificate $e'=(u', v')$, then definition of $X_{e'}$ we know that $A\subseteq X$ if and only if $u'\in X$ and $v'\in X$, and this can be tested in $O(1)$ time.
If $A$ has a chain certificate $(C_0, C_1, \ldots, C_\ell)$, we know that
by
property~\ref{prop:extend-a-component-cannot-cross} from \Cref{lem:chain-property}, $A$ cross with $X$ if and only if exactly one of $\{C_0, C_\ell\}$ is contained in $X$ but the other one does not. To test so, it suffices to choose an arbitrary vertex from each of $C_0$ and $C_\ell$ and test whether or not it belongs to $X$. This can be done in $O(1)$ time too.

\item[Line~\ref{line:cactus-sub-cases}.]
To test whether $X\setminus A = B$ or not, it suffices to check again if $B\subseteq X$ or $B$ cross with $X$. This can be achieved as described above (implementation of Line~\ref{line:cactus-swap-role-so-A-cross-X}).

\item[\texttt{AddNestingSuperset}.]
It is a bit challenging if we want to search for all mincuts that are currently in $\collection$ that is contained in the given mincut $X$ --- enumerating all vertices in $X$ and then using the disjoint set data structure takes too much time!

To cope with this, 
we handle
minimal mincuts for vertices and edges differently, and describe the implementation details below.

\paragraph{Vertex Case (Line~\ref{line:add-nesting-superset-vertex}).}
Let $X=X_v$ be the minimal mincut of a vertex $v\in V$.
    A cool trick is, we can implement this step utilizing the nesting relation tree $\hat{T}$ defined in \Cref{sec:nesting-relation-tree}, which requires $O(n|\T|)$ preprocessing time by \Cref{lem:cactus-compute-T-hat}.
    
    Let $C_v=\{u_1, u_2, \ldots\}$ be the children of $v$ on $\hat{T}$.
    We notice that upon processing $X_v$, all minimum mincuts of $u_i$ must have been processed already. Hence, by \Cref{inv:cactus-correctness-invariant}, for each $u_i$ there exists some mincut $A_i\in\collection$ that contains $X_{u_i}$.
    Therefore, it suffices to query the disjoint set data structure $|C_v|$ times to identify all mincuts that are covered by $X$.
    
    Since $\hat{T}$ is a tree, there will be exactly $n-1$ $\textsc{Find}$ calls and at most $n-1$ $\textsc{Union}$ calls to the disjoint data structure in total.

\paragraph{Edge Case (Line~\ref{line:add-nesting-superset-edge}).}
Unfortunately all minimal mincuts of edges does not have a hierarchy representation as $\hat{T}$ in the vertex case, so the method we use for the vertex case does not apply to this edge case\footnote{No pun intended.}.
However, this case can be solved easily by making sure we process a bunch of identical minimal mincuts at a time.
Consider the set $F$ of \emph{all} edges $f$ such that $X_f=X_e$.
We claim that the sub-collection of mincuts that contain endpoints to any $f\in F$ is exactly the set of all mincuts to be subsumed. The ``$\Leftarrow$'' direction is trivial, and  the ``$\Rightarrow$'' direction is true because $G[X]$ is connected, and any edge $f\in F$ connecting these mincuts whose minimal mincut has not been processed yet must have $X_f=X_e$.

\end{description}

In conclusion, we successfully proved \Cref{thm:reduction} by providing an algorithm that constructs a cactus in $O(m \alpha(m, n) + n|\T|)$ time. Notice that it is linear time on a not-so-sparse graph whenever $m=\Omega(n|\T|)$ and $|\T|=\Omega(\log n)$.

\section{Minimal Mincuts of Vertices: Proof of \Cref{lem:labeling-vertices}}
\label{sec:labeling-vertices}

\subsection{On the Missing Case in \cite{karger2009near}}

In \cite{karger2009near}, the key subroutine is computing the minimal mincut for vertices. Under their framework, they compute Type 1, Type 2-Comparable and Type 2-Incomparable minimal 2-respecting mincut for vertices. We believe that their approach is correct, but there seems to be a missing case for computing the minimal incomparable 2-respecting mincut of vertices.

In one of the cases where they compute the minimal incomparable mincut partner for each vertex $v$ (corresponds to \Cref{lem:computing-incomparable-vertex-partner}),
they define the ``outermost'' minimal minprecut parter of $v$. In Section 3.1 of \cite{karger2009near}, two copies of the tree $S$ and $T$ are maintained. One of them is the shrunk tree $S$ where the algorithm contracts processed boughs and produces the bough decomposition.
On the other hand, the algorithm does not shrink $T$ (see \cite[Figure 2]{karger2009near}).
The outermost minprecut is then defined and computed on the uncontracted tree $T$. 
Upon processing the bough $(v_1, v_2, \ldots, v_k)$ where $v_1$ is the lowest vertex, the algorithm examines all edges incident to any vertex in $v_k^\downarrow$ with a postorder traversal, and then dynamically maintains the outermost minimal minprecut partner.
However, in the case where the outermost partner lies in $v_{i}^\downarrow\setminus v_{i-1}^\downarrow$, this partner may become invalid once the algorithm visits $v_i$. Their algorithm did not describe how to update the ``outermost'' partner correctly in this case.

\begin{wrapfigure}{r}{5cm}
\vspace*{-1.5em}
\includegraphics[width=5cm]{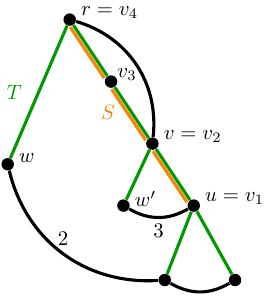}
\caption{A Missing Case.}\label{fig:incomparable minprecut example}
\vspace*{-6em}
\end{wrapfigure}

\paragraph{An Example.} Recall the definition of incomparable precut value $\C^\perp_v(w) = \C(w^\downarrow) - 2\C(v^\downarrow,w^\downarrow)$. Consider the graph in \Cref{fig:incomparable minprecut example} and the green spanning tree $T$, we have
\begin{alignat*}{3}
     &\C_u^\perp(w') &&= \C(w'^\downarrow) - 2\C(u^\downarrow,w'^\downarrow) = 4 - 2\times 3 = -2 ~. \\
     &\C_u^\perp(w) &&= \C(w^\downarrow) - 2\C(u^\downarrow,w^\downarrow) = 3 - 2\times 2 = -1 ~. \\
     &\C_v^\perp(w') &&= \infty~. \qquad (\text{since } w'\parallel v) \\
     &\C_v^\perp(w) &&= \C(w^\downarrow) - 2\C(v^\downarrow,w^\downarrow) = 3 - 2\times 2 = -1 ~.
\end{alignat*}

Therefore, $w$ is the unique outermost minimal minprecut of $v$, and $w'$ is the unique outermost minimal minprecut of $u$. But there doesn't exist any edge between $X = v^\downarrow\setminus u^\downarrow = \{v,w'\}$ and $w^\downarrow$, which turns out to be a missing case for Lemma 3.4 of \cite{karger2009near}. 

Note that maintaining the minprecuts on the uncontracted tree $T$ instead of the shrunk tree $S$ is necessary.
The reason is, it is possible to have a vertex $v$ whose all minprecut partners are processed in boughs of earlier phases. However, if the algorithm shrunk the bough after processing it, 
the minprecut values at vertices on that bough are no longer available.

In the next subsection, we provide a simpler and complete algorithm that uses a completely different approach compared to \cite{karger2009near}.

\subsection{Our Algorithm}
A natural question is this: can we add a self-loop on each vertex and reduce the problem of computing minimal mincut for vertices to computing minimal mincut for edges? The reason we cannot prove this way is that, for the incomparable case, the direction of the reduction is actually opposite. Recall that,  \Cref{lem:labeling-vertex-incomparable} computes the minimal incomparable 2-respecting mincut candidates for vertices, and we use it to prove \Cref{lem:labeling-edges-incomparable} which computes minimal incomparable 2-respecting candidates for edges. Therefore, proving \Cref{lem:labeling-vertex-incomparable} will be one of the main tasks in this section. The key insight is again exploiting the structural property of 2-respecting mincut and using top-tree to find the minimal one. (See the proof of \Cref{lem:computing-incomparable-vertex-partner} at the end of this subsection.)

On the other hand, we can use the self-loop idea to compute the minimal comparable 2-respecting mincut candidates for vertices using \Cref{lem:labeling-edges-comparable}, which does not involve in a circular proof.

\begin{corollary}
\label{cor:labeling-vertex-comparable}
    There is an algorithm that, given a spanning tree $T$ of $G=(V,E)$,
    in total time
    $O(m\log^2 n)$ computes, for every vertex $u\in V$ an comparable 2-respecting minimal mincut candidate $f(u) = (v_u,w_u)$ or $\NULL$ with the following guarantee:
    
    If there exists comparable 2-respecting cut that separating $u$ from root $r$, then $v_u^\downarrow\setminus w_u^\downarrow$ is such a mincut with smallest size.
\end{corollary}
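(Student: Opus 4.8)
\textbf{Plan for proving \Cref{cor:labeling-vertex-comparable}.}
The plan is to reduce the vertex problem to the edge problem via the self-loop trick and then invoke \Cref{lem:labeling-edges-comparable}. First I would construct an auxiliary graph $G'=(V, E')$ where $E' = E \cup \{ e_u : u \in V \}$ and each $e_u$ is a self-loop at $u$ with weight $0$ (equivalently, a parallel edge, or we may simply think of the self-loop as specifying the ``edge'' whose minimal mincut we wish to query). The crucial observation is that for a self-loop $e_u$ at $u$, a 2-respecting cut ``separates $e_u$ from $r$'' exactly when it separates $u$ from $r$; since both endpoints of $e_u$ equal $u$, the minimal mincut of $e_u$ in $G'$ is precisely the minimal mincut of the vertex $u$. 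Moreover, adding zero-weight self-loops does not change the value $\lambda$ of any cut, nor the structure of 2-respecting cuts, so $T$ remains a valid spanning tree and ``right tree'' has the same meaning. Hence applying \Cref{lem:labeling-edges-comparable} to $G'$ and $T$ yields, for each self-loop $e_u$, a vertex pair $(v_u, w_u)$ such that whenever $T$ is the right tree for $e_u$ (i.e., the minimal mincut of $u$ is a comparable 2-respecting cut of $T$), we have that $v_u^\downarrow \setminus w_u^\downarrow$ is the minimal mincut of $u$.

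The remaining steps are bookkeeping. I would note that $|E'| = m + n = O(m)$ (assuming $G$ is connected, so $m \ge n-1$), so the running time of \Cref{lem:labeling-edges-comparable} applied to $G'$ is $O(m \log^2 n)$, matching the claimed bound. For the output specification, I would set $f(u) = (v_u, w_u)$ when this pair corresponds to an actual comparable 2-respecting mincut of $u$ — which can be verified in $O(1)$ time per vertex using \Cref{eqn:comparable-2-respecting-mincut} once $\C(v^\downarrow)$ and the relevant precut quantities are precomputed — and set $f(u) = \NULL$ otherwise. If $T$ is not the right tree for $e_u$, then either $(v_u, w_u)$ is $(\NULL, \NULL)$ or it is some arbitrary comparable 2-respecting mincut; in the latter case we would still detect (by comparing its value against $\lambda$) whether it is a genuine mincut, but since we only claim a guarantee in the ``right tree'' case, no correctness issue arises. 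The guarantee ``if there exists a comparable 2-respecting cut separating $u$ from $r$'' is exactly the statement that $T$ is the right tree for $e_u$ combined with the minimality claim, so the conclusion transfers directly.

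I do not expect a genuine obstacle here; the corollary is essentially immediate from \Cref{lem:labeling-edges-comparable}. The only subtlety worth being careful about is the self-loop convention: we must make sure that the algorithm of \Cref{lem:labeling-edges-comparable} is well-defined on self-loops (or degenerate parallel edges), which it is, since a self-loop $e_u$ contributes $\lca_{e_u} = \LCA(u,u) = u$, and all the precut and $\MinP$ machinery in \Cref{sec:label-comparable} goes through verbatim with this value. One should also confirm that the reduction is not circular: \Cref{lem:labeling-edges-comparable} is proved independently of any statement about vertices (its proof in \Cref{sec:label-comparable} uses only the path-decomposition framework and dynamic-tree primitives), so invoking it to establish \Cref{cor:labeling-vertex-comparable} introduces no dependency cycle — in contrast to the incomparable case, where the dependency runs the other way and \Cref{lem:labeling-vertex-incomparable} must be proved from scratch.
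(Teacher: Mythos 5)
Your proposal is correct and matches the paper's own proof of \Cref{cor:labeling-vertex-comparable}: both add a self-loop $e_u$ at each vertex $u$, apply \Cref{lem:labeling-edges-comparable} to the augmented graph, and use the fact that the minimal mincut of $e_u$ coincides with the minimal mincut of $u$ to set $f(u)=f(e_u)$. The extra bookkeeping you include (runtime accounting, $\lca_{e_u}=u$, and the non-circularity remark) is consistent with, though not spelled out in, the paper's shorter argument.
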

\begin{proof}
    Given a graph $G = (V,E)$, we insert a self-loop edge $e_u$ on each vertex $u\in V$. Denote the new graph as $G' = (V,E')$. By \Cref{lem:labeling-edges-comparable}, given the graph $G'$ and the spanning tree $T$, there is an algorithm computing a comparable 2-respecting minimal mincut candidate $f(e) = (v_e,w_e)$ for each edge $e\in E'$ with the following guarantee: if the minimal mincut of $e$ is a comparable 2-respecting cut of $T$, then $v_e^\downarrow\setminus w_e^\downarrow$ is the minimal mincut of $e$. Therefore, for each vertex $u\in V$, the comparable 2-respecting minimal mincut candidate of $u$ can be set as $f(u) = f(e_u)$, since the minimal mincut of $u$ is the same as the minimal mincut of $e_u$.
\end{proof}

We shall prove \Cref{lem:labeling-vertices} which computes minimal 2-respecting mincut candidates for vertices. %
The proof is essentially the same as \Cref{sec:put-together}, except that we use the corresponding subroutines computing each type of 2-respecting minimal mincut for vertices.

\begin{proof}[Proof of \Cref{lem:labeling-vertices}]
Given a spanning tree $T$, there are three types of 2-respecting cut of $T$. For each type, we compute the minimal mincut candidates for vertices.
\begin{description}[itemsep=0pt]
\item[Type 1.] By \Cref{lem:labeling-vertex-1-respecting}, the algorithm computes the minimal 1-respecting mincut candidate for every vertex $v$.
\item[Type 2-Comparable.] By \Cref{cor:labeling-vertex-comparable}, the algorithm computes the minimal comparable 2-respecting mincut candidates for every vertex $v$.
\item[Type 2-Incomparable.] By \Cref{lem:labeling-vertex-incomparable}, the algorithm computes the minimal incomparable 2-respecting mincut candidates for every vertex $v$.
\end{description}

For each vertex $v$, we get three minimal mincut candidates as above. If all of the three candidates are $\NULL$, then the minimal 2-respecting mincut of $v$ respects to $T$ is $\NULL$. Otherwise, the minimal 2-respecting mincut of $v$ respects to $T$ is the mincut with the smallest size among these minimal mincut candidates.

Since the algorithm for each case runs in $O(m\log^2 n)$ time and the comparing time is constant for each vertex, the whole algorithm runs in $O(m\log^2 n)$ total time.
\end{proof}

In the rest of this section, we focus on proving \Cref{lem:labeling-vertex-incomparable}, which exploits the following lemma as a key subroutine.

\begin{lemma}
\label{lem:computing-incomparable-vertex-partner}
There is an algorithm that, given a graph $G=(V,E)$ and a spanning tree $T\in {\cal T}$, in total time $O(m\log^2 n)$ computes, for every vertex $v\in V$, a vertex called \emph{minimal incomparable mincut partner} $r_v\in V\cup \{\NULL\}$ with the following guarantee: if there exists incomparable mincut partner of $v$, then $r_v$ is the incomparable mincut partner of $v$ with the smallest subtree size $r_v^\downarrow$.
\end{lemma}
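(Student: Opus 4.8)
The plan is to reduce the problem to a collection of path-based subproblems via the balanced path decomposition $\cP$ and \Cref{lem:reduc to path}, and then on each path use a dynamic-tree/top-tree data structure to maintain incomparable precut values while climbing the path. Fix a tree $T\in\T$. For a fixed vertex $v$, the incomparable mincut partners $w$ of $v$ are exactly the incomparable vertices $w\perp v$ with $\CI_v(w)=\CI_v=\lambda-\C(v^\downarrow)$, and among all such $w$ we want the one minimizing $|w^\downarrow|$. As in \Cref{sec:label-incomparable}, for any path $P=(v_1,\dots,v_k)$ with $v_1$ deepest, I maintain in a dynamic tree an array $val[\cdot]$ so that after processing $v_i$ we have $val[w]=\CI_{v_i}(w)=\C(w^\downarrow)-2\C(v_i^\downarrow,w^\downarrow)$ for all $P$-outer $w\perp v_i$. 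The invariant is recovered incrementally: moving from $v_{i-1}$ to $v_i$ costs $\AddP(u,-2\C(v,u))$ over edges $(v,u)$ with $v\in v_i^\downarrow\setminus v_{i-1}^\downarrow$, in $O(d(v_i^\downarrow\setminus v_{i-1}^\downarrow)\log n)$ time, exactly as in the proof of \Cref{lem:label-incomparable-path}.

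The new ingredient is the tie-breaking on subtree size. Once the invariant holds at $v_i$, I want the $P$-outer vertex $w\perp v_i$ with $val[w]$ minimum, breaking ties by smallest $|w^\downarrow|$. This is precisely what $\MinTreeDown$ provides (after separating $P^\downarrow$ by $\Cut(v_k,\mathrm{parent}(v_k))$ and setting the ancestors of $v_k$ to $\infty$ via $\AddP$, so that only $P$-outer incomparable candidates remain in play). I check whether $\C(v_i^\downarrow)+\MinTreeDown$-value $=\lambda$; if so, $\MinTreeDown$ returns a $P$-outer mincut partner $w$ of $v_i$ with the smallest $|w^\downarrow|$ among $P$-outer partners, and I record it as a candidate for $r_{v_i}$. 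Crucially, by \Cref{lem:outer property}, for any incomparable pair $v\perp w$ at least one of $v,w$ is outer with respect to the other's path; so if the true minimal incomparable partner of $v$ lives on path $P_w$, then running this procedure on $P_w$ with the roles reversed — i.e., for each vertex $w$ on $P_w$, we find the best $P_w$-outer partner on $v$'s side — will produce $w$ as the candidate $r_v^{(P_w)}$ wherever $v$ happens to be. Hence after running the path procedure over all $P\in\cP$, each vertex $v$ has accumulated at most $O(\log n)$ candidates (one per path intersecting $v^\uparrow$), and we take the one with smallest subtree size. This step costs $O(\log n)$ per vertex.

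The total running time is $O(m\log n)$ preprocessing (building the dynamic/top-tree and initializing $val[w]=\C(w^\downarrow)$ as in \cref{note:preprocess}), plus $O(d(P^\downarrow)\log n)$ per path $P$, which sums to $O(m\log^2 n)$ by \Cref{eqn:sum of paths} (invoked through \Cref{lem:reduc to path}), plus $O(m\log n)$ for the final per-vertex aggregation. I expect the main obstacle to be the correctness argument for the tie-breaking: I need to verify that a $\MinTreeDown$ query — which returns \emph{some} minimum-value vertex with the smallest subtree size over the \emph{entire} remaining tree, not restricted to incomparable partners — actually returns a valid incomparable mincut partner whenever one exists, and returns the globally subtree-minimal one. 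This requires arguing (i) that comparable vertices $w\parallel v_i$ are correctly excluded (they sit on the path from $v_i$ to the root, which we have set to $\infty$, or inside $v_i^\downarrow$, which lies in $P^\downarrow$ and was cut off), and (ii) that whenever a mincut partner exists, the minimizer of $val$ over the remaining tree really is an incomparable partner (since any vertex with $val[w]=\lambda-\C(v_i^\downarrow)$ that is neither in $P^\downarrow$ nor an ancestor of $v_k$ is automatically incomparable to $v_i$, and $\C(v_i^\downarrow\cup w^\downarrow)=\C(v_i^\downarrow)+val[w]=\lambda$ so it is a genuine mincut). Pinning down these exclusions precisely, and confirming that the subtree-size tie-break is preserved under the $-\epsilon|w^\downarrow|$ perturbation trick from \Cref{lem:data structures}, is where the care is needed; the rest follows the template of \Cref{sec:label-incomparable} closely.
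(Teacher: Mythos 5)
Your overall template (path decomposition, maintaining $val[w]=\CI_{v_i}(w)$ with \AddP{} updates while climbing the path, checking $\C(v_i^\downarrow)+val=\lambda$, and using the subtree-size tie-break of $\MinTreeDown$) matches the paper, and the implementation concerns you flag at the end are resolvable. The genuine gap is your restriction to \emph{$P$-outer} candidates: you separate the whole subtree $P^\downarrow$ by $\Cut(v_k,\mathrm{parent}(v_k))$, so the query at $v_i$ can only ever return a $P$-outer partner, while the smallest-subtree partner $w^*$ of $v_i$ may well lie inside $P^\downarrow$ (hanging off the path above $v_i$). Your patch via \Cref{lem:outer property} does not close this hole: when the procedure later processes the path $P_{w^*}$ and reaches $w^*$, the query there returns the smallest-subtree $P_{w^*}$-outer partner \emph{of $w^*$}, which need not be $v_i$ (a vertex can have many incomparable mincut partners, e.g.\ on a cycle), and no step of your algorithm records $w^*$ as a candidate for $v_i$ — the "roles reversed" sentence is not an algorithm for delivering that information, and unlike the edge case of \Cref{lem:label-incomparable-path} there is no witness-set/BST mechanism here to route $w^*$ to the right outer vertices. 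Consequently $v_i$ may end up only with a strictly larger $P_v$-outer candidate, violating the smallest-subtree guarantee of the lemma; your count of "$O(\log n)$ candidates per vertex" also has no basis, since $v$ lies on a single path and receives exactly one direct candidate.

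The paper's proof avoids the issue by never cutting off $P^\downarrow$: the \AddP{} updates in fact maintain $val[w]=\CI_{v_i}(w)$ for \emph{all} $w\perp v_i$, including vertices inside $P^\downarrow$ that are incomparable to $v_i$; comparable vertices are neutralized by initializing $val[w]=\infty$ for every $w\parallel v_1$ (all ancestors of $v_i$ are ancestors of $v_1$), and the descendants of $v_i$ are removed by a single $\Cut(v_i,\mathrm{parent}(v_i))$ just before the query. Then one $\MinTreeDown(\mathrm{parent}(v_i))$ over the remaining tree directly returns the globally smallest-subtree incomparable partner, with no outer/non-outer case split at all. If you drop the $\Cut(v_k,\mathrm{parent}(v_k))$ step and adopt this per-vertex cut plus the $\infty$-initialization for $w\parallel v_1$, your argument goes through; as written, it does not.
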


With \Cref{lem:computing-incomparable-vertex-partner}, we can efficiently compute minimal incomparable 2-respecting mincut candidates for vertices, since all the minimal incomparable 2-respecting mincut candidates for vertices are of the form $v^\downarrow\cup r_v^\downarrow$ for some $v$. We shall prove \Cref{lem:labeling-vertex-incomparable}, and defer the proof of \Cref{lem:computing-incomparable-vertex-partner} to the end of this subsection. 

\begin{proof}[Proof of \Cref{lem:labeling-vertex-incomparable}]
Observe that if an incomparable 2-respecting mincut contains $u$, it will also contain all the descendants of $u$. Hence, the minimal incomparable 2-respecting mincut of $u$ is either the minimal incomparable 2-respecting mincut of the parent of $u$ or the minimal incomparable 2-respecting mincut $u^\downarrow\cup r_u^\downarrow$. Therefore, we can find the minimal incomparable 2-respecting mincut of all the vertices using a one-time depth-first-search after computing all the $r_v$ in \Cref{lem:computing-incomparable-vertex-partner}.
\end{proof}

The algorithm for \Cref{lem:computing-incomparable-vertex-partner} is the main technical contribution of this subsection. We highlight that top-tree is again the right data structure for exploiting the structural property: using $\MinTreeDown$ we can find the partner with minprecut value and break tie by finding the one with smallest subtree size, which meets exactly the criteria of minimal incomparable mincut partner.

\begin{proof}[Proof of \Cref{lem:computing-incomparable-vertex-partner}]
We will use the reduction to path from \Cref{lem:reduc to path}.
For any $P\in \cP$, define $g(e, P)=r_v$ if $v\in P$ and $e$ is the tree edge with $v$ being the lower vertex, otherwise $g(e, P)=\NULL$.

Given a path $P=(v_1, v_2, \ldots, v_k)$ from the path decomposition with $v_1$ being the deepest vertex, our algorithm will process $v_i$  starting from $i=1, 2, \ldots, k$.
We will maintain the invariant that once we process the vertex $v_i$ the incomparable precut values $\CI_{v_i}(w)$ for all $w \perp v_i$ can be accessed via $val[w]$. %

Next we show how to maintain the invariant. 
In the preprocessing step before the path $P$ was given, we set $val[w] = \C(w^\downarrow)$ for each vertex $w\perp v_1$ and $\infty$ for $w\parallel v_1$, which can be computed in $O(m\log n)$ time.
Now we start from the deepest vertex $v_1$, the algorithm needs to add $-2\C(v_1^\downarrow, w^\downarrow)$ to each $val[w]$ so that $val[w] = \CI_{v_1}(w)$.
To achieve this efficiently, we create a dynamic tree on $T$ (\Cref{lem:data structures}).
For each edge $(u,u')$ where $u\in v_1^\downarrow$,
we invoke $\AddP(u', 2\C(u,u'))$ so that 
two times the weight of the edge $(u,u')$ is added to $val[w]$
for each $w\in u'^\uparrow$.

Then, the algorithm scans through the rest of vertices $v_2,v_3,\cdots v_k$ on the path one by one. Suppose the algorithm reaches $v_i$ now.
With the invariant after processing $v_{i-1}$,
it suffices to add $2\C(v_i^\downarrow\setminus v_{i-1}^\downarrow, w)$ to $val[w]$ for each $w\perp v_i$ by invoking $\AddP(u', 2\C(u, u'))$ for each edge $(u, u')$ where $u\in v_i^\downarrow\setminus v_{i-1}^\downarrow$. (These edges can be found in $O(d(v_i^{\downarrow}\setminus v_{i-1}^\downarrow))$ time using a DFS from $v_i$ without searching the subtree rooted at $v_{i-1}$.) 
Therefore, in $O(d(v_i^\downarrow\setminus v_{i-1}^\downarrow)\log n)$ time, $val[w]$ are updated to $\CI_{v_i}(w)$ for all $w\perp v_i$.

After obtaining $\CI_{v_i}(w)$ values, we compute the minimal incomparable mincut partner $r_{v_i}$ by the following dynamic tree operations. We first invoke $\Cut(v_i,\mathrm{parent}({v_i}))$; then $r_{v_i}$ can be found by $\MinTreeDown(\mathrm{parent}({v_i}))$; finally invoke $\Link(v_i,\mathrm{parent}({v_i}))$ to restore the tree.

From the discussion above, we have an algorithm that, given a path $P$, computes $g(e, P)$ for all $e \in E(P^\downarrow)$ in $O(d(P^\downarrow)\log n)$ time.
By plugging in the path decomposition \Cref{lem:reduc to path}, 
we obtain an algorithm that computes $r_v$ of all vertices $v\in V$ in $O(m\log^2 n)$ total time, because the preprocessing time is $t_p=O(m+n\log n)$ for computing $\C(w^\downarrow)$ and building the dynamic tree.
\end{proof}

\section{The Algorithm from \cite{karger2009near} Runs in $\Omega(m\log^4 n)$ Time}
\label{sec:kp-needs-more-time}

Here, we give an explanation of why the algorithm by \cite{karger2009near} takes $\Omega(m\log^4 n)$ time.  
For the first log factor, their algorithm randomly generates $\Theta(\log n)$ graphs from the original input graph as follows: in each copy, an edge with weight $w$ is contracted with probability $\min\{\frac{w}{2\lambda}, 1\}$. Note that each contracted graph could still contain $\Theta(m)$ edges even if we increase the contraction probability by any constant (e.g.~a complete graph with unit-weight edges). 

Their algorithm then spends at least $\Theta(m \log^3 n)$ time on each contracted graph, which is our time-bound. 
These three log factors come from (1) there are $\Theta(\log n)$ trees in the tree packing, (2) for each tree, there are $\Theta(\log n)$ phases in the bough decomposition, (3) for each phase, we need to process $\Theta(m)$ edges using dynamic tree data structure, each of which takes $\Theta(\log n)$ time. 
Therefore, in total, the algorithm by \cite{karger2009near} takes $\Omega(m\log^4 n)$ time, while ours avoids generating the randomized $\Theta(\log n)$ copies of the graph and takes only $O(m\log^3 n)$ time.

\end{document}